\documentclass[11pt,a4paper]{article}

\usepackage[utf8]{inputenc}   %
\usepackage[T1]{fontenc}      %

\usepackage{amsmath}
\usepackage{amsthm}
\usepackage{amsfonts}
\usepackage{geometry}
\usepackage{authblk}
\usepackage{fancyhdr}
\usepackage{graphicx} %
\usepackage{caption}

\usepackage{subcaption} %
\usepackage{xurl} %

\newtheorem*{thm*}{Theorem}
\newtheorem{prp}{Proposition}[section]

\newtheorem{lmm}[prp]{Lemma}

\usepackage{todonotes}

\def \FF {{\mathcal F}}

\def \Id {\mbox{Id}}

\usepackage{booktabs}
\usepackage{multirow}
\usepackage{longtable}
\usepackage{needspace}
\usepackage{array}

\DeclareMathOperator*{\argmax}{arg\,max}
\DeclareMathOperator*{\argmin}{arg\,min}
\DeclareMathOperator*{\shiv}{Sv}
\DeclareMathOperator*{\torad}{TRD}
\DeclareMathOperator{\score}{\mathsf{score}}

\theoremstyle{plain}
\newtheorem{theorem}{Theorem}

\newtheorem*{corollary*}{Corollary}

\theoremstyle{definition}
\newtheorem{definition}[theorem]{Definition}
\newtheorem*{definition*}{Definition}

\newtheorem*{hypothesis*}{Hypothesis}

\usepackage{hyperref}
\hypersetup{
    colorlinks=true,
    linkcolor=blue, %
    filecolor=magenta,      
    urlcolor=cyan,
    citecolor=black, %
    pdfpagemode=FullScreen, %
    }

\begin{document}

\title{Network node immunization: improving Netshield algorithm through random rooted forests}

\author[1]{Luca Avena}
\author[2]{Alexandre Gaudillière}
\author[3]{Irina Gurewitsch}
\author[4]{Adoré Randriamandroso}
\author[5]{Alessio Troiani}
\affil[1]{Department of Mathematics and Computer Science ``Ulisse Dini'',
  University of Florence, Florence, Italy.
  \textit{Corresponding author}: \texttt{luca.avena@unifi.it}}
\affil[2]{Aix Marseille Univ, CNRS, I2M, Marseille, France}
\affil[3]{\texttt{irina\_gur@web.de}}
\affil[4]{University of Fianarantsoa, Madagascar;
  Centre de Recherche sur l'Enseignement des Mathématiques (CREM) Laboratory}
\affil[5]{Department of Mathematics and Computer Science,
  University of Perugia, Perugia, Italy}

\date{June 2nd, 2026}

\maketitle

\begin{abstract}
We are interested in the so-called multiple-node immunization problem for complex networks under attack by a viral agent. It consists in identifying and removing a set of nodes of size $k$ in a graph to maximize the impeding of virus spread. A few approaches have been proposed in the literature based on numerical and theoretical insights on how classical models for virus spread evolve on graphs.
Based on the analysis of these models, the maximal eigenvalue of the adjacency matrix of the graph has become a classical measure of how resilient the network is. Thus, a clear, well-explored approach for multiple-node immunization consists of identifying a set of $k$ nodes in such a way that the reduced network, obtained by removing these nodes, has a minimal largest eigenvalue. This spectral optimization problem turns out to be a computationally hard problem for which only greedy algorithms offer good solutions at efficient computational time. 
Among those, the so-called Netshield algorithm
represents one of the reference choices. The latter is, in fact, a clearly defined algorithm aiming at optimizing a certain sub-modular functional, called shield-value, which approximates the original optimization problem.  
We propose here a novel procedure, based on random walk kernels and related random spanning forests, to build a new algorithm, referred to as K-shield, which enhances Netshield searching performance at the same computational complexity. We give theoretical insights behind this novel method, which could also be used for other optimization problems, and then test it via numerical showcase experiments on various benchmarks.
\end{abstract}

\noindent\textbf{Keywords:} Node immunization, eigendrop, graph Laplacian, epidemic models, random forests.

\noindent\textbf{2020 Mathematics Subject Classification:} 05C69; 90C35; 05C50; 60J28; 05C05; 05C81; 92D30.

\section{Introduction: multiple-node immunization, SIS \& maximal eigendrop}\label{sec:intro}
In the last decades, we have been witnessing an enormous growth of the scientific literature concerning problems of information spreading or diffusion in huge heterogeneous network models. These types of problems emerge naturally in a variety of applied fields such as epidemiology, social-studies, cyber-security, engineering, information theory, and neuroscience, see e.g. \cite{BBV08,DM10,KMS17,P-SCvMV15}.
Within this broad framework, we are interested in the so-called {\em multiple-node immunization} problem which can be described as follows.\\
\noindent{\bf Multiple-node immunization.}
Consider a big finite graph on $n$ nodes (where nodes might be thought of as e.g. individuals in a given population, or cities in a certain country, or computer hardware's linked to each others), assume that a virus or a malicious rumor is spreading within the graph, for a given $k\leq n$, the goal is in identifying the {\em best subset of $k$ nodes} to be removed from the network to stop the viral agent invading the entire vertex set. Despite this multiple-node immunization problem is simple to describe and is clearly of great interest for many real-world problems, it is far from obvious even to agree on what the right concept of the best subset of $ k$-nodes to be identified and immunized is.
In fact, any reasonable proposal would depend on how one describes the evolution of the contagion and 
on the underlying geometrical structure.
As such, this field can still be considered in active development, and we are far from having a clear understanding of the best strategy to implement, as well as efficient and scalable computational methods.
Yet, given the relevance of the problem, several approaches have emerged from quite different standpoints, either driven by real-world data or by the analysis of stylized mathematical models of information spread

\noindent{\bf SIS on graphs.} A key celebrated mathematical model for virus spread in a graph, which has gained popularity both in pure and applied fields, is the so-called {\em SIS dynamics} or {\em contact process}\footnote{SIS stands for Susceptible-Infected-Susceptible as this is the type of evolution a given "individual/node can experience". While the contact process is the classical name within the realm of so-called interacting particle systems as studied in probability theory, see e.g.\cite{liggett1985interacting}}.
From a microscopic perspective, this SIS dynamics on a graph\footnote{We will restrict to undirected graphs without multiple edges and self-loops.} $G=(V, E)$ can be described as the continuous time Markov chain on the state space $\{0,1\}^V$ (where the two possible states $0$ and $1$ of a node stand for susceptible and infected, respectively), where the allowed transitions are only those from a given configuration $\eta\in \{0,1\}^V$  to a new one $\eta^x \in \{0,1\}^V$, for $x\in V$, which coincides with the vector $\eta$ everywhere except at $x$ (i.e. $\eta^x(x)=1-\eta(x)$ and $\eta^x(y)=\eta(y)$ for any $y\in V\setminus\{x\}$). These allowed transitions occur at rate
\begin{equation}\label{CPrates}
    q(\eta,\eta^x) := \delta \eta(x) + 
(1-\eta(x)) \left[ \displaystyle\sum_{y: (y,x)\in E} \eta(y)\beta(y,x) \right],
\end{equation}
where $\delta>0$ is a positive 
parameter which represents a {\em healing rate}, while $\{\beta(x,y)\in(0,\infty): (x,y)\in E\}$ is a given collection of positive {\em infections rates}, which we assume to be symmetric, $\beta(x,y)=\beta(y,x)$. In words, the above dynamics is saying that an infected node $x$ transmits the virus to a neighboring node $y$  at rate $\beta(x,y)$, while recovery happens spontaneously across individuals at homogeneous\footnote{We could have allowed for non-homogeneous healing rates $(\delta_x)_{x\in V}\in (0,\infty)^V$; the homogeneous choice is just for simplicity of exposition and to give a neat interpretation to the data sets on which we run the experiments.} rate $\delta$.

\noindent{\bf Largest eigenvalue \& eigendrop maximization.}
The Markov evolution of the SIS dynamics defined in~\eqref{CPrates} depends on the initial distribution of the infected nodes. Yet, as the graph is finite and the configuration with all zeros (i.e., no infections present) represents the only absorbing state in $\{0,1\}^V$, it follows from basic theory that for our Markov process, regardless of the starting configuration and the graph structure, in finite time the infections will disappear as the system will reach the absorbing state. However, from a practical perspective, one would wish to control the time the system takes to heal completely as a function of the topology, the size of the network, and the infection parameters. In this direction, it is straightforward to obtain quantitative bounds accounting for the graph structure encoded in the adjacency matrix. Indeed, given the graph $G=(V, E)$  where the SIS is defined, and the collections of infection rates associated to $E$ as in \eqref{CPrates}, we call $A=A(\beta)$ the symmetric $|V|\times|V|$ weighted adjacency matrix with entries 
$a_{x,y}:=\beta(x,y)\mathrm{1}_{\{(x,y)\in E\}}$
and\footnote{Adding non-trivial diagonal terms to the adjacency matrix is equivalent to considering a non-homogeneous healing rate $\delta$.} $a_{x, x} = 0$.
Let us further denote by
$\vec{\rho}(t):=(\rho_x(t))_{x\in V}$ the vector of the means of the infected sites 
$\rho_x(t):=\mathbb{E}_{\mu}[\eta_t(x)]$ at arbitrary time $t$ when starting from an arbitrary distribution $\mu$ on $\{0,1\}^V$. It is standard to show\footnote{Via a simple domination from above of the contact process by a similar process which would allow for more than one particle per site with transitions almost as in \eqref{CPrates} except that in the first term, responsible for the infection growth, $(1-\eta(x))$ gets replaced by $1$, see e.g. \cite{P-SCvMV15}} that 
$\vec{\rho}(t)\leq \vec{\rho}(0)e^{(A-  \delta Id)t}$. Thus, if we denote by $\lambda_{\rm{max}}$ the largest eigenvalue of $A$, for any node $x\in V$, 
if $\lambda_{\rm{max}}<\delta$, $\rho_x(t)$ decays exponentially fast to $0$, as $t$ grows. This general observation on the SIS dynamics on graphs suggests that one can take $\lambda_{\rm{max}}$ as a measure of how vulnerable the finite network is to the viral agent spread (the larger $\lambda_{\rm{max}}$ is, the more vulnerable the whole
graph) and in fact this has become one of the clear and standard measures of network resilience in the literature, see \cite{Chack08, WvM08, MKE17, HvHP26}. It is worth stressing that there are other ways to study epidemiological models and data on graphs and hence various other measures of 
network vulnerability have been proposed and explored in the literature, see \cite{CP08, Pellis10, Ma16, vM17, TCTE-R21, H24, HvHP26}.
Getting back to the multiple-node immunization problem above, if one takes this $\lambda_{\rm{max}}$ view based on the contact process in measuring the network vulnerability, it is thus natural to face and phrase the starting $k$-node immunization problem in terms of the following optimization problem. Given a subset $S\subset V$ of cardinality $k\leq |V|$, denote by $\lambda(S)$ the largest eigenvalue of the weighted adjacency matrix as above, but associated to the graph obtained from the original one $G$ by removing the nodes in $S$ and all edges incident to these nodes, and call {\em eigendrop associated to } $S\subset V$
the difference $\lambda_{\rm{max}}-\lambda(S)$. The goal is to find among all subsets $S$ of nodes of cardinality $k$ those minimizing $\lambda(S)$, or equivalently, maximizing the eigendrop. In other words, one looks for 
\begin{equation}\label{SeigenGoal} S_*= \argmin_{S\subset V: |S|=k}\lambda(S) = 
\argmax_{S\subset V: |S|=k}\left[\lambda_{\rm{max}}-\lambda(S) \right].
\end{equation}
This standpoint on epidemiological models and a clear optimization perspective for multiple-node immunization were adopted by \cite{Chen16}, among others, and will also be our starting point. In fact, our main contribution of this paper will be to offer a new randomized algorithm which improves the performance of the clear algorithmic procedures proposed and analyzed in \cite{Chen16}.

 \noindent{\bf Paper content structure \& our contribution}
The rest of the paper is organized as follows. 
In Section~\ref{sec:preliminaries}, we briefly recall the analysis of the eigendrop optimization problem through the so-called \emph{shield value functional} presented by Chen et al. in \cite{Chen16}, and the related deterministic greedy Netshield algorithm they propose, which serves as our starting point.
Section \ref{sec:Kforest} is devoted to introducing what we will refer to as the \emph{Kirchhoff forest}, and its key properties in relation to immunization and sampling, as derived in \cite{AG18, ACGM18}. The latter is a certain random spanning rooted forest associated with an arbitrary graph, and it represents the main random combinatorial object on which we will build to propose a novel algorithm to improve the search and exceed the performance of Netshield.
This novel algorithm, which we will call \emph{K-shield algorithm}, represents the main contribution of the present paper; it can be seen as an enrichment of Netshield by adding a proper randomized search through the Kirchhoff forest, without altering the computational cost of the algorithm. This new \emph{K-shield algorithm} is presented in Section \ref{sec:OurAlgorithm}, where we further discuss its advantages and, in what sense and for what datasets, one should expect improvements for the eigendrop maximization problem with respect to the use of the sole Netshield.
In Section~\ref{sec:experiments}, we present numerical results by collecting various experiments on different synthetic and real-world network benchmarks to showcase the practical performances of this novel method. The concluding Sections~\ref{sec:Netmore} and~\ref{sec:Kmore} offer a more elaborate mathematical discussion on some technical details, respectively, in relation to the properties of  Netshield and of the Kirchhoff forest. Finally, Appendix~\ref{sec:graphs_details} collects details on the data sets where we run the experiments and an exhaustive list of tables reporting all the outcomes of the performed experiments, whereas Appendix~\ref{sec:charts} provides a graphical representation of the results presented in Section~\ref{sec:experiments}.

\section{Brief recap on ShieldValue and Netshield algorithm}\label{sec:preliminaries}

The shield value $\shiv(S)$ of a set of nodes $S \subset V$, as studied in \cite{Chen16}, is defined as 
\begin{equation}
\label{ShieldValue}
\shiv(S) := \sum_{x\in S} 2\lambda_{\rm{max}} u(x)^2 -\sum_{x,y \in S} a_{x, y}u(x)u(y),
\end{equation}
where $u$ represents the (Perron-Frobenius) normalized right-eigenvector associated to $\lambda_{\rm{max}}$. As it turns out, $\shiv(S)$ gives an upper bound for the eigendrop $\lambda_{\rm{max}} - \lambda(S)$, and it can actually be modified 
by considering the \emph{adjusted shield value}: 
\begin{equation}\label{ASV}
        \widetilde\shiv(S) := {\shiv(S) - \sum_{x \in S} \lambda_{\rm{max}}u(x)^2 \over 1 - \sum_{x \in S} u(x)^2},
\end{equation}
so that, the following bounds are true for any subset $S \subset V$:

\begin{equation}\label{Shieldbounds}
    \lambda_{\rm{max}} - \lambda(S) \leq \widetilde\shiv(S) \leq \shiv(S).
\end{equation}
For completeness, a proof of this inequality is presented in Lemma \ref{clo} in Section \ref{sec:Netmore}.

\medskip\par\noindent
It is worth noticing that the computational cost of both $\shiv(S)$ and $\widetilde\shiv(S)$
is in $O(k^2)$ for $S$ of size $k$ and we recall from~\cite{Chen16}
that $\shiv$ is a \emph{submodular function}. These facts together with \eqref{Shieldbounds} 
 suggest ---see for example~\cite{krause2014submodular}---
to use a greedy algorithm
in order to build a set $S$ of given size $k$ 
with a large shield value,
which may be associated with a large eigendrop once removed. This greedy procedure is what in \cite{Chen16} is called the Netshield algorithm
which can be implemented in such a way that its complexity results in $O(m + nk)$ operations,
with $m$ being the edge number of the adjacency matrix $A$, and $n$ and $k$, respectively, the cardinality of the vertex set of the graph and of the set of nodes $S$ to be removed. 
See Table \ref{algo:Netshield} and Section \ref{sec:Netmore} for more details.

\section{Brief recap on Kirchhoff forests and their set of roots}\label{sec:Kforest}
We are now ready to introduce what we call \emph{Kirchhoff forest}, a random spanning rooted forest of a given graph whose law has nice properties which will serve us to propose the novel algorithm in Section~\ref{sec:OurAlgorithm} to improve the search performances of Netshield at the same computational cost, see Section \ref{sec:Keigen-complexity}.

Consider a graph $G=(V, E)$ with finite vertex set $V$ and set of edges $E$, and  
let $\mathcal{F}$ denote the space of rooted spanning forests of $G$. A rooted spanning forest $F \in \mathcal{F}$ of $G$ is
a collection of vertex-disjoint rooted trees spanning its vertex set $V$, where a rooted tree is seen as a collection of
directed edges pointing towards the root. That is, a rooted forest $F$ is a subset of $E$ such that:
\begin{itemize} \item each vertex has at most one outgoing edge in $F$;
\item if there exists a directed path in $F$ from vertex $x$ to vertex $y$, then no such path exists from $y$ to $x$.
\end{itemize}
The set of roots of $F$, denoted as $\rho(F)\subseteq V$, is constituted by those vertices in $F$ without an outgoing edge. See Fig.\ref{torus} for an example. 
\begin{figure}
\begin{center}
\setlength{\unitlength}{1.5cm}
\begin{picture}(6,3)
\linethickness{0.01mm}
\multiput(1,0)(1,0){5}{\line(0,1){3}}

\linethickness{0.01mm}
\multiput(1,0)(0,1){4}
{\line(1,0){4}}

\put(3,1){\circle*{0.15}}
\put(2,3){\circle*{0.15}}
\put(4,2){\circle*{0.15}}
\put(5,2){\circle*{0.15}}
\put(5,1){\circle*{0.15}}

\linethickness{0.5mm}
\put(1,0){\vector(1,0){0.55}}
\put(1.55,0){\line(1,0){0.45}}

\linethickness{0.5mm}
\put(1,2){\vector(1,0){0.55}}
\put(1.55,2){\line(1,0){0.45}}

\linethickness{0.5mm}
\put(3,3){\vector(1,0){0.55}}
\put(3.55,3){\line(1,0){0.45}}

\linethickness{0.5mm}
\put(2,0){\vector(1,0){0.55}}
\put(2.55,0){\line(1,0){0.45}}

\linethickness{0.5mm}
\put(4,0){\vector(-1,0){0.55}}
\put(3.45,0){\line(-1,0){0.45}}

\linethickness{0.5mm}
\put(4,1){\vector(-1,0){0.55}}
\put(3.45,1){\line(-1,0){0.45}}

\linethickness{0.5mm}
\put(5,3){\vector(-1,0){0.55}}
\put(4.45,3){\line(-1,0){0.45}}

\linethickness{0.5mm}
\put(1,1){\vector(0,1){0.55}}
\put(1,1.55){\line(0,1){0.45}}

\linethickness{0.5mm}
\put(5,0){\vector(0,1){0.55}}
\put(5,0.55){\line(0,1){0.45}}

\linethickness{0.5mm}
\put(2,2){\vector(0,1){0.55}}
\put(2,2.55){\line(0,1){0.45}}

\linethickness{0.5mm}
\put(3,0){\vector(0,1){0.55}}
\put(3,0.55){\line(0,1){0.45}}

\linethickness{0.5mm}
\put(2,1){\vector(0,-1){0.55}}
\put(2,0.45){\line(0,-1){0.45}}

\linethickness{0.5mm}
\put(1,3){\vector(0,-1){0.55}}
\put(1,2.45){\line(0,-1){0.45}}

\linethickness{0.5mm}
\put(4,3){\vector(0,-1){0.55}}
\put(4,2.45){\line(0,-1){0.45}}

\linethickness{0.5mm}
\put(3,2){\vector(0,-1){0.55}}
\put(3,1.45){\line(0,-1){0.45}}
\end{picture}

\end{center}
\caption{An example of a spanning rooted forest $F$ with 5 roots
on a $2$-dimensional $5 \times 4$ square-grid graph. The forest $F$ is the subgraph depicted in boldface and is composed of 5 rooted trees. Edge orientation is represented by oriented arrows. Note that edges in each tree are directed towards the corresponding root (black circle) and that a single disconnected root is considered a degenerate tree.
}
\label{torus}
\end{figure}

\begin{definition}[\bf{Kirchhoff forest of intensity $q$}]\label{KforestDef}. Given a finite graph $G=(V,E)$ and a weight function $w: E\rightarrow [0,\infty)$.
Fix a positive parameter $q > 0$ and let $\Phi_q$ be the random
variable with values in $\FF$ with law:
$$\mathbb{P}(\Phi_q = F) = \frac{q^{|\rho(F)|} w(F)}
{Z(q)},\qquad F \in \FF,$$ 
where $$w(F) := \prod_{e\in F}w(e),$$ stands for the forest weight, $|\rho(F)|$ denotes the cardinality of the set of roots (or equivalently the number of trees in $F$), and $Z(q)$ is a normalizing constant. 
\end{definition}
Note from the definition that the tuning parameter $q>0$ is such that for large $q$, $\Phi_q$ tends to favor the presence of many trees (or roots), while for small $q$ few trees will typically be present in $\Phi_q$.

The random spanning forest in Def. \ref{KforestDef} is a generalization of the celebrated Uniform Spanning Tree (UST), see e.g. \cite{B16}, which is recovered in the constant-weight case (say $w(e)\equiv 1$) in the limit $q\to 0$. As such, this object shares several structural and sampling properties with the UST, on top, the presence of the tunable parameter $q>0$ makes it a flexible tool to partition the vertex set of a network into random sub-blocks and to identify subsets of well-distributed vertices, see \cite[Theorem 1]{AG18}, from the perspective of the continuous time random walk associated to the given graph. We postpone to Section \ref{sec:Kmore} some in-depth mathematical discussion on the link with this random walk and classical basic sampling, see Section \ref{forestSampling}, and on key results which shed light on why Kirchhoff forests are suited for immunization purposes, especially for heterogeneous networks, see Section \ref{trace}. 
In the remainder of this section, we discuss further some basic theoretical features that will be relevant within the immunization context for the K-shield algorithm proposed in Section \ref{sec:OurAlgorithm}.

\subsection{A determinantal sub-set $S$ of nodes for those Susceptible to be kept}\label{sec:roots}

We have seen that the Netshield algorithm aims at optimizing the Shieldvalue functional, which builds on the idea that the optimal set(/s) $S$ should have nodes with a high eingenscore and at the same time that are in some sense far apart from each other, which would correspond to the two terms, respectively, in the right-hand side of Eq. \eqref{ShieldValue}.

By analogy, we can exploit Kirchhoff forest to identify other random sets of points having similar collective features (i.e. ``high eingenscore and repelling points")
which might be good candidates to beat in terms of eingendrop, the deterministic set identified by the greedy Netshield. 
Indeed, two of the fascinating observables of the Kirchhoff forest $\Phi_q$ are represented by its set of roots
$\rho(\Phi_q)$ and its complements set 
$\rho(\Phi_q)^{c}=V\setminus\rho(\Phi_q)$. These complementary random subsets of points in $V$ have nice properties that we next describe and that can help us move out of the regions where Netshield searches.

The first interesting property is that,  see \cite[Prop. 2.2]{AG18} for a proof, for any given $q>0$, the random set of roots $\rho(\Phi_q)$ of the forest  $\Phi_q$ is a \emph{determinatal point process} with \emph{kernel}
\begin{equation}\label{kernel}
		K_{q}(x,y):=\mathbb{P}_x\bigl(X(T_q) = y\bigr), \quad x, y \in V,
\end{equation}
where $\{X(t)\}_{t\geq 0}$ is the RW (Random Walk) described below \eqref{graphLaplacian} and $T_q$ is an exponential independent random variable of rate $q$.  
This determinantal property means that the law of the random subset of nodes $\rho(\Phi_q) \subseteq V$ can be characterized by the following statistics:
	\begin{equation}\label{detRoots}
		{\mathbb P}\left( A\subseteq \rho(\Phi_q) \right)
		= {\rm det} \left[K_{q}\right]_{A},\qquad \text{ for any } A\subseteq V,
	\end{equation}
with $\left[K_{q}\right]_{A}$ being the restriction of the matrix $K_q$ in Eq.\eqref{kernel} to the set of indices in $A$.
Notice that the entries $(x,y)$ of the matrix $K_{q}$ have a neat probabilitstic interpretation, that is, Eq.\eqref{kernel}
represents the probability that the RW $X$, when starting from $x$, is at location $y$ when observed at an independent exponential time $T_q$. Determinantal point processes have appeared in many contexts in probability theory and statistical physics . Point processes of this type have an elegant algebraic closed structure that allows explicit computation of their statistics in terms of determinants of the associated restricted kernel, and, more relevant for our discussion, they have a non-trivial correlation structure. In particular, in the symmetric adjacency matrix setting that we are considering (i.e. $w(x,y)=w(y,x)$), an immediate consequence of \eqref{detRoots} is that pairs of points in $\rho(\Phi_q)$ tend to repel each other, being \emph{negatively correlated}. This can be seen by setting $A=\{x,y\}$ in Eq.~\eqref{detRoots} and noticing that this implies that
\begin{equation}\label{NegCorr}	{\mathbb P}\left( \{x,y\}\subset \rho(\Phi_q)\right)%
\leq %
 {\mathbb P}\left( x\in \rho(\Phi_q)\right){\mathbb P}\left( y\in \rho(\Phi_q)\right).\end{equation}

As it turns out in general, also the complementary set 
of a determinantal point process is itself a determinantal point process, see e.g. \cite{TBUA23}.
Thus the negative correlation in~\eqref{NegCorr} holds true also for the complementary set $\rho(\Phi_q)^{c}=V\setminus\rho(\Phi_q)$. 

If we aim at immunizing a given number of nodes in a network of individuals to impede the global spread of a viral agent, it is reasonable to try to immunize infected individuals that are not too close (in graph distance) to each other, as well as to keep far apart from each other, those individuals that are susceptible. In this respect, the fact that both $\rho(\Phi_q)$ and $\rho(\Phi_q)^{c}$ have negative correlations suggests that we may use one of these two sets for the susceptibles and the other for those to be immunized. This way, we in particular guarantee that points to be immunized should be far apart from each other (analogously to the second term in the r.h.s. of \eqref{ShieldValue}).
At this point, we may ask which of the two sets to use to identify those to be kept and those to be immunized. As in the Shieldvalue one looks for points ``with a high eingenscore" (see first term in the second term in the r.h.s. of \eqref{ShieldValue}) in building the set to be immunized. While using Kirchhoff forests, it is natural to set $\rho(\Phi_q)$ for those susceptible to be kept and $\rho(\Phi_q)^{c}$ for those to be immunized.
Indeed, the set $\rho(\Phi_q)^{c}$ tends to be concentrated on points with a high weighted degree. This is clear, for example, when looking at the probability that the complementary root-set is constituted by a single node $x\in V$, which is indeed proportional to the weighted degree of $x$, that is, if \begin{equation}\label{wDeg}
w(x):=\displaystyle\sum_{(x,y)=e\in E}w(e),
\end{equation} 
by summing the forest probabilities in \eqref{KforestDef} over all $F\in \mathcal{F}$ such that $\rho(F)=\{x\}$ , 
we obtain that ${\mathbb P}(\rho(\Phi_q)^{c}=\{x\})\propto w(x)$. These first observations inspired us to try to search for a good set to be immunized by means of $\rho(\Phi_q)^{c}$.
From a computational cost point of view, it is also much more convenient to sample a small $\rho(\Phi_q)^c$ of size $k$ with $n - k$ trees than a small $\rho(\Phi_q)$ of size $k$ with $k$ trees.
In applications, we typically seek to remove a small number $k$ of nodes. To this end we will leverage on Wilson's sampling algorithm (see Section~\ref{sec:Kmore} ) that is more efficient for large $q$ (i.e., many trees).

Yet, it still remains far from obvious how to use this random set to identify a set of cardinality $k$ that could be better than the deterministic one output by Netshield for the original maximal eigendrop problem. There are, in fact, two main issues which we have not yet discussed. First, for a given $q$, the random set $\rho(\Phi_q)^{c}$ has not a fixed cardinality while we wish a set of a given size $|V|-k$. For this first problem, it turns out that the mean size of $\rho(\Phi_q)^c$ is known and given by $|V| -\sum_{i < |V|}q/(q+\lambda_i)$, see \cite[Prop. 2.1]{AG18}, with $\lambda_i$'s denoting the eigenvalues of the graph Laplacian matrix in \eqref{graphLaplacian}. Thus, to solve this first problem, one can use this knowledge on the mean number of roots, as done in \cite[Theorem 2, Section 2.3]{AG18}, to cook up an efficient procedure to sample a forest with exactly the desired number of roots. In Figure~\ref{fig:Karate2}, we present a first qualitative plot on a specific benchmark, just to show that the distribution on the set of roots with a given cardinality tends to have peaks in correspondence to sets with good eigendrop.
In practice, we will not use this forest $\Phi_q$ conditioned to a given root number, but an approximation of this conditioned forest that is faster to sample.

\begin{figure}
    \centering
    \includegraphics[width=1\linewidth]{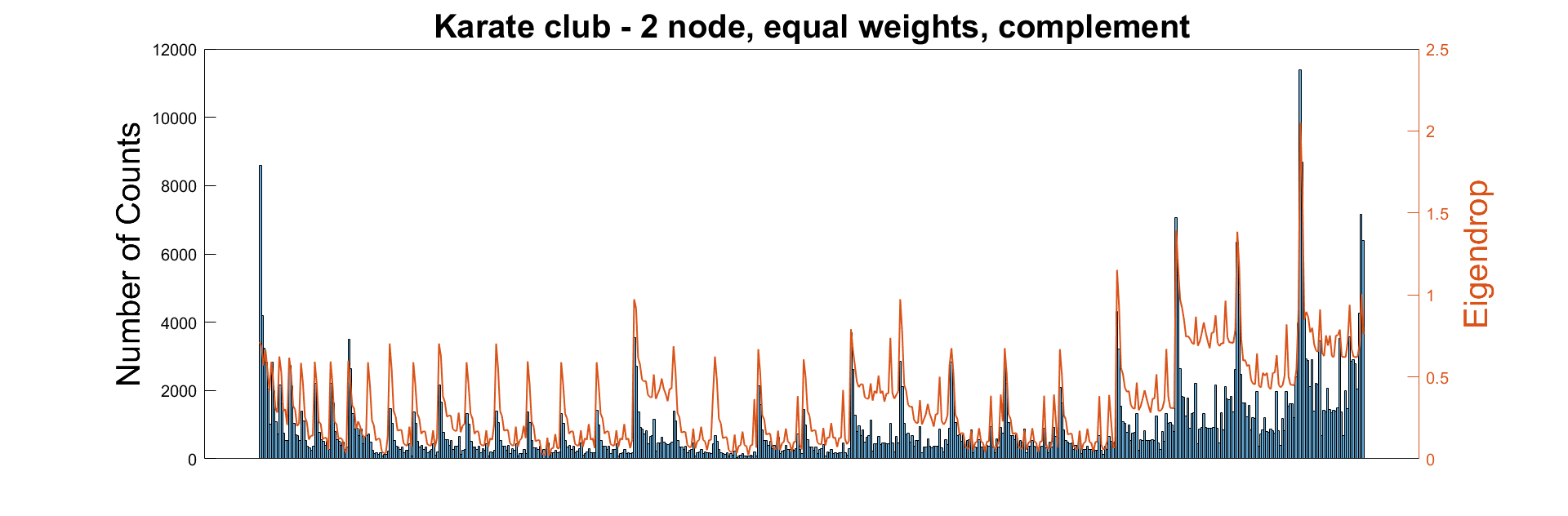}
    \caption{An empirical distribution of the pair $\rho(\Phi_q)^c$
    conditioned to $|\rho(\Phi_q)| = n - 2$
    together with its associated eigendrop
    for the 34 vertex ``karate club'' network.}\label{fig:Karate2}
\end{figure}

The second problem is that even if we can sample the set of roots with a given cardinality, it still remains a random subset of points, and thus, one should clarify how many and which samples of this special random set to use to possibly find better sets than the output of Netshield.
As discussed in Section \ref{sec:OurAlgorithm}, where we present the K-shield algorithm, to solve this other issue, we will prescribe computationally \emph{cheap} tests on \emph{a few} samples of the forest, where \emph{a few} and \emph{cheap} will be specified by keeping in mind that we do not want the algorithm to exceed the magnitude of the running time of Netshield.  
Before moving into these algorithmic considerations, let us anticipate that the idea of identifying the complement of the set of roots for the set of nodes to be immunized might indeed be a good idea to make the network less vulnerable. Indeed, as we will see in Section \ref{trace}, removing from a graph this complementary set has in mean the property that the more heterogeneous the spectrum of the considered graph, the stronger is the decay of the average infection rate.

\section{K-shield algorithm: a novel method to improve Netshield search }\label{sec:OurAlgorithm}
Our new algorithm (K-shield) complements Netshield search by providing two additional (randomly selected) sets of nodes to test for removal from the graph $\mathcal{G}$ whose eigendrop may be higher than the eigendrop associated with the set of nodes removed with Netshield. These two additional sets of nodes are determined by first sampling a collection of candidate sets through the complement of the set of roots in the Kirchhoff forest, and, among these, by selecting the \emph{best two} on the basis of the value of two proxies of the eigendrop: the \emph{adjusted shield value} defined in~\eqref{ASV}) and the \emph{total rate drop}.
The $\torad$ (Total Rate Drop) associated with a set $S$ is the sum of the jump rates along the edges incident to the nodes of set $S$:
\begin{equation}\label{fabienne}
    \begin{split}
   \torad(S)&:=\sum_x w(x) - \sum_{x, y \not\in S} w(x, y)
        = \sum_{x, y \in S} w(x, y) + 2 \sum_{x \in S, y \not\in S} w(x, y)\\
        &= \sum_{x, y \in S} w(x, y) + 2 \left[\sum_{x \in S} w(x) - \sum_{x, y \in S} w(x, y)\right]= 2 \sum_{x \in S} w(x) - \sum_{x, y \in S} w(x, y)
    \end{split}
\end{equation}

In particular, K-shield selects $\phi = \phi(n, m, k)$ sets of nodes $S_{1}, S_{2}, \dots S_{\phi}$, each with cardinality $k$, and for each of these sets computes the adjusted shield value and the total rate drop. Then the sets associated with the highest adjusted shield value $S_{s,k}$ and the highest total rate drop $S_{r,k}$ are selected, together with the set determined by Netshield $S_{N,k}$. Finally, the eigendrop associated with the removal of these three sets of nodes (the one selected by Netshield and the two additional ones selected by K-shield) is computed, and the set with the largest eigendrop is removed from $\mathcal{G}$.

Instead of sampling 
$S_1, ~ \dots, S_d$ according to
the law of $\Phi_q$ conditioned to $|\rho(\Phi_q)| = n - k$,
we use the coupled forest algorithm presented in~\cite[Theorem 2, Section 2.3]{AG18},
which samples a forest trajectory $t \geq 0 \mapsto \Phi_{1 / t}$
issued form the forest $\Phi_\infty$ made of $n$ trivial trees,
with time marginals distributed like $\Phi_q$ at all times $t = 1 / q$
and that crosses all sets
$\mathcal F_k = \left\{ F \in \mathcal F : |\rho(F)| = k \right\}$.

Since the numerical cost of sampling $\Phi_q$ along this coupled forest algorithm
further studied in \cite{BCGMQT25} is in
$$
    O\bigg(\sum_x {q + w(x) \over q}\biggr)
    = O\biggl(\sum_{j < n} {q + \lambda_j \over q}\biggr)
    = O\biggl(n\biggl(1 + {\bar\lambda \over q}\biggr)\biggr)
$$
with $\lambda_j$, $j < n$, the eigenvalues of the graph Laplacian $L$ in \eqref{graphLaplacian},  and
\begin{equation}
    \bar\lambda
    = {1 \over n} \sum_{j < n} \lambda_j
    = {1 \over n} {\rm Tr}(L),
\end{equation}
and since
$$
    E\bigl[|\rho(\Phi_q)|\bigr] = \sum_{j < n} {q \over q + \lambda_j},
$$
by stopping it when it reaches $\mathcal F_{n - k}$ we get an approximated
sample of $\Phi_q$ conditioned to $|\rho(\Phi_q)| = n - k$
at a numerical cost in $O(n)$ for
\begin{equation}\label{eq:julie}
    n - k \geq E[|\rho(\Phi_{\bar\lambda})|]
    = \sum_{j < n} {\bar\lambda \over \bar\lambda + \lambda_j}.
\end{equation}
By setting $\alpha = \max_{x \in V} w(x)$
and using $\lambda_j \leq 2\alpha$, this condition is in particular satisfied for
\begin{equation}
    k \leq \biggl(1 - {\bar\lambda \over 4\alpha}\biggr) n.
\end{equation}
Note that $\bar\lambda / \alpha$
is the ratio between the mean and the largest weighted degree in the network.

\subsection{Complexity of the K-shield algorithm}\label{sec:Keigen-complexity}

As mentioned above and discussed in Section~\ref{sec:Netmore}, the total algorithmic complexity of the Netshield algorithm is $O(m + nk)$ where $n$ is the number of nodes in the graph, $m$ is the number of unoriented edges, and $k$ is the number of nodes to remove.
The computational cost of finding the largest eigenvalue and the associated eigenvector of the adjacency matrix $A$ of some graph with $m$ edges is $O(m)$.

By assuming \eqref{eq:julie}, the cost of sampling a rooted random forest with Wilson’s algorithm, and, hence, a set of $k$ nodes to remove from $\mathcal{G}$ is $O(n)$. Further, the cost of determining the adjusted shield value associated with the graph obtained by removing from $\mathcal{G}$ the nodes in the set $S$, with $|S| = k$, is $O(k^{2})$, once the eigenvector associated with the largest eigenvalue of the adjacency matrix has been computed. Therefore, the total computational cost of determining the adjusted shield value for a graph obtained by removing $k$ nodes sampled with Wilson’s algorithm is $O(n + k^2)$. This means that sampling $\frac{m + nk}{n + k^2}$ sets $S_{1}, S_{2}, \dots S_{\frac{m +nk}{n+k^{2}}}$ using Wilson algorithm and computing, for each of these sets, the associated adjusted shield value has the same computational cost of Netshield.
Moreover, note that the total rate drop associated with the removal of $k$ nodes from $\mathcal{G}$ can be computed in $O(k^{2})$ steps (see \eqref{fabienne}) so that the total cost of the algorithm stays $O(n + k^{2})$.

\subsection{Comparison: Netshield \& K-shield algorithms}\label{sec:comparison}

While nodes with a high degree tend to be selected for removal by the K-shield algorithm,
Netshield uses a finer criterion, which is related to the Perron-Frobenius eigenvector.
However, this vector tends to be strongly localized and leads Netshield to remove nodes in the same region, whereas the K-shield algorithm follows a more global approach
with a much stronger ability to select nodes in far-apart regions.
For this reason, it performs better on graphs with community structures
where Netshield will tend to act on one community only. 
To fix  this issue, one can instead implement, at a higher numerical cost, the Neshield+ algorithm (see \cite{Chen16})
that recomputes the Perron-Frobenius eigenvector of the reduced graph after each removal
of a set of size $b \leq k$ nodes.
This leads, at higher numerical cost, to significant improvement of performances, and the same is true 
with the same kind of comparison for the analogous K-shield+ version 
of the K-shield algorithm.
However, both algorithms tend to perform better for small $b$ and lead to simple deterministic
greedy algorithms with higher computational cost in $O(km)$ for $b = 1$:
for Netshield+, one removes at each step the node with the largest shield value
and for the K-shield+ algorithm, the node with the highest adjusted shield-value
or total rate drop, depending on which of the two corresponds to the largest eigendrop.

\section{Numerical results}\label{sec:experiments}
To keep the asymptotic complexity of K-shield equal to the complexity of Netshield, 
we sampled
$\phi = 1+4\frac{2 m+n k}{n+k^2}$ rooted forests in a single run of K-shield
(note that this number is in $O(\frac{m + nk}{n + k^{2}})$).

In our experiments, we considered several benchmark graphs with a symmetric adjacency matrix that are studied in the literature, together with some “synthetic” graphs we created to evaluate our algorithm in some specific circumstances.
Appendix~\ref{sec:graphs_details} provides a description of the benchmark graphs and the details on how to obtain the graphs data.

Further, to assess how the K-shield performs on graphs with a strong community structure, we tested the algorithm on a synthetic graph (denoted by ``communities 01'' obtained by joining, with a clique of size $5$, five communities consisting of $5$ independent Erdös-Rényi graphs with, respectively, 20, 25, 28, 30, and 30 vertices and edge probability $p = 0.7$ (see Fig.~\ref{fig:community_graph}).

 \begin{figure}
     \centering
     \caption{An illustration of the toy community graph}\label{fig:community_graph}
     \includegraphics[width=0.5\linewidth]{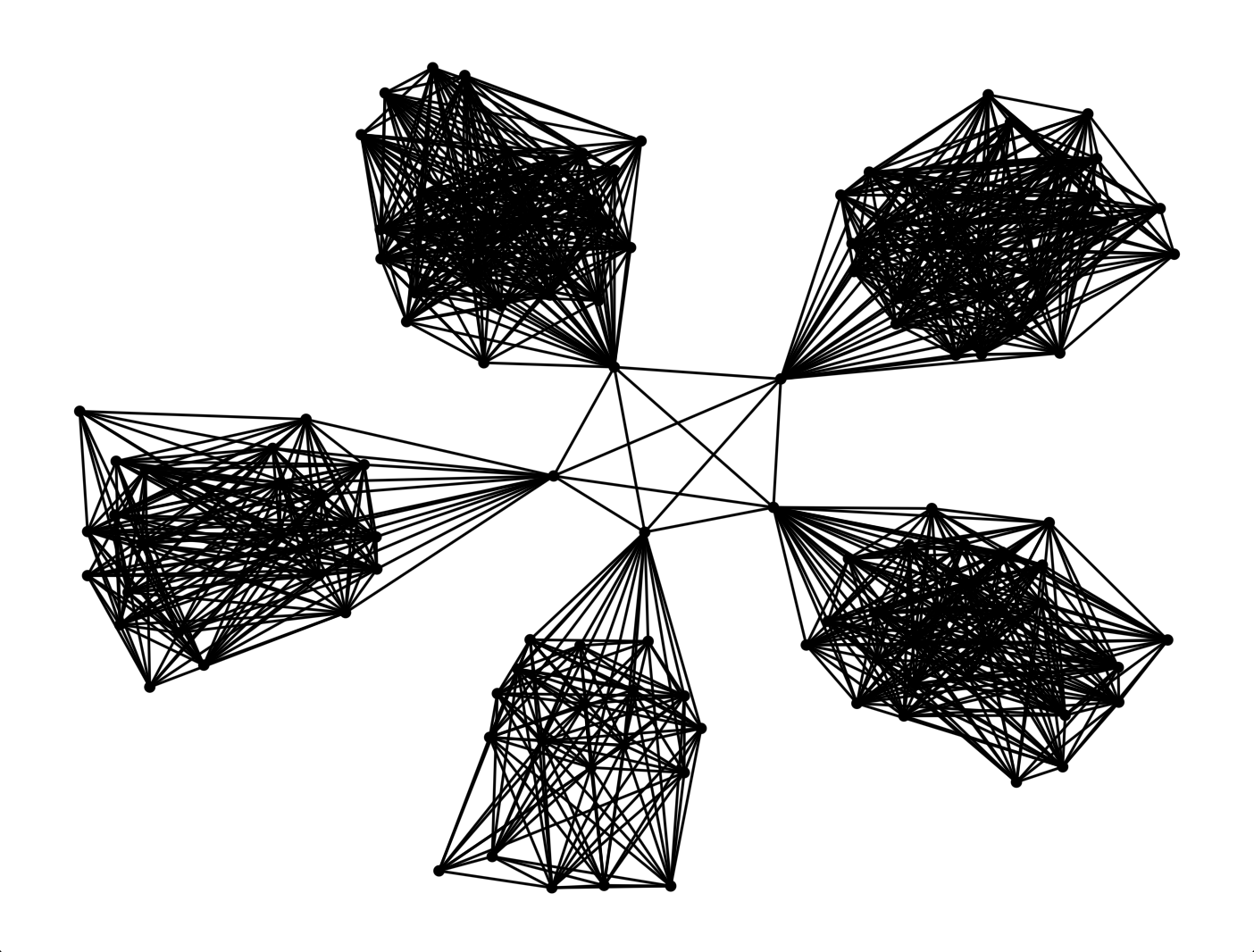}
 \end{figure}

We tested K-shield for several values of the number of ``removed nodes'': 
$k \in \{1, 2, 3, 5, N/10, N/4\}$ where $N$ is the number of nodes in the graph, and
with $N/10$ and $N/4$ rounded to the nearest integer.

To collect statistics on the performance of the K-shield, we ran it 400 times for each graph and each value of $k$. We recall that, for each graph and each value of $k$, one run requires sampling a given number of forests with Wilson's algorithm.

In Table~\ref{tab:eigendrop_long} we show the proportion of times the largest eigendrop is obtained considering the set $S_{N, k}$ (Netshield set of nodes), $S_{r, k}$, (set of nodes with highest total rate drop) or $S_{s, k}$ (set of nodes with highest adjusted shield value) and the average value of the eigendrop for each of these sets (normalized to the largest eigenvalue $\lambda$ of the adjacency matrix of the graph). Our results show that Netshield remains a valid choice for trying to immunize a network. However, it is quite common that the result of Netshield can be improved by using the K-shield. In particular, our analysis suggests that the K-shield is a viable option to obtain a more effective immunization, especially in the case of weighted graphs and graphs that present a strong community structure. 

For clarity, the same data is also provided in the form of a chart for a selection of the graphs taken into account\footnote{the charts for all the graphs taken into account are provided in the appendix (Appendix~\ref{sec:charts})}.

\setlength{\LTleft}{10pt}
\setlength{\LTright}{\fill}
\setlength{\LTcapwidth}{0.9\linewidth}
\small

\begin{longtable}{@{}lccccccc@{}}
\caption{Origin best eigendrop and normalized average eigendrop for each graph and $k$.The columns referring to the origin of best eigendrop give the proportion of times the largest eigendrop is obtained by removing the set of nodes chosen by Netshield (NS), those with highest total rate drop (TRD) and those with the highest adjusted shield value (ASV). The columns referring to the average eigendrop provide, for the same set of nodes, the average value of the eigendrop over the $1+4\frac{2 m+n k}{n+k^2}$ sampled forests. Values are normalized to the largest eigenvalue $\lambda_{\rm{max}}$ of the weighted adjacency matrix of the graph. Graphs marked with the ``(w)'' flag have weighted edges, whereas graphs marked with the ``(uw)'' flag have unweighted edges.}
\label{tab:eigendrop_long}\\
\toprule
\multirow{2}{*}{graph} & \multirow{2}{*}{$k$} & \multicolumn{3}{c}{origin best eigendrop} & \multicolumn{3}{c}{average eigendrop} \\
\cmidrule(lr){3-5}\cmidrule(lr){6-8}
& & TRD & ASV & NS & TRD & ASV & NS \\
\midrule
\endfirsthead

\toprule
\multirow{2}{*}{graph} & \multirow{2}{*}{$k$} & \multicolumn{3}{c}{origin best eigendrop} & \multicolumn{3}{c}{average eigendrop} \\
\cmidrule(lr){3-5}\cmidrule(lr){6-8}
& & TRD & ASV & NS & TRD & ASV & NS \\
\midrule
\endhead

\midrule
\multicolumn{8}{r}{continued on next page} \\
\endfoot

\bottomrule
\endlastfoot

\multirow{5}{*}{paper (uw)} & 1 & 0.003 & 0.728 & \textbf{1.000} & 0.029 & 0.104 & \textbf{0.131} \\*
 & 2 & 0.170 & 0.698 & \textbf{1.000} & 0.079 & 0.115 & \textbf{0.131} \\*
 & 3 & 0.790 & \textbf{0.877} & 0.595 & 0.163 & \textbf{0.167} & 0.131 \\*
 & 4 & 0.158 & 0.158 & \textbf{1.000} & 0.379 & 0.344 & \textbf{1.000} \\*
 & 5 & 0.378 & 0.378 & \textbf{1.000} & 0.593 & 0.565 & \textbf{1.000} \\

\midrule
\multirow{5}{*}{conference 1 (w)} & 1 & 0.980 & 0.980 & \textbf{1.000} & 0.206 & 0.207 & \textbf{0.209} \\*
 & 2 & 0.445 & 0.360 & \textbf{0.907} & 0.205 & 0.213 & \textbf{0.214} \\*
 & 3 & 0.450 & 0.258 & \textbf{0.562} & \textbf{0.226} & 0.221 & 0.214 \\*
 & 5 & \textbf{0.562} & 0.268 & 0.350 & \textbf{0.291} & 0.250 & 0.215 \\*
 & 25 & \textbf{0.522} & 0.258 & 0.412 & \textbf{0.639} & 0.560 & 0.631 \\

\midrule
\multirow{5}{*}{conference 2 (w)} & 1 & 0.985 & 0.985 & \textbf{1.000} & 0.490 & 0.490 & \textbf{0.490} \\*
 & 2 & \textbf{0.657} & 0.375 & 0.168 & 0.474 & \textbf{0.500} & 0.492 \\*
 & 3 & \textbf{0.627} & 0.468 & 0.015 & 0.485 & \textbf{0.505} & 0.492 \\*
 & 5 & 0.100 & 0.048 & \textbf{0.873} & 0.495 & 0.522 & \textbf{0.564} \\*
 & 26 & 0.203 & 0.043 & \textbf{0.780} & 0.673 & 0.657 & \textbf{0.726} \\

\midrule
\multirow{5}{*}{conference 3 (w)} & 1 & \textbf{1.000} & \textbf{1.000} & \textbf{1.000} & \textbf{0.715} & \textbf{0.715} & \textbf{0.715} \\*
 & 2 & \textbf{0.975} & 0.175 & 0.000 & \textbf{0.798} & 0.732 & 0.715 \\*
 & 3 & \textbf{0.978} & 0.037 & 0.000 & \textbf{0.814} & 0.730 & 0.715 \\*
 & 5 & \textbf{0.760} & 0.035 & 0.207 & \textbf{0.825} & 0.746 & 0.818 \\*
 & 24 & \textbf{0.757} & 0.107 & 0.155 & \textbf{0.897} & 0.843 & 0.885 \\

\midrule
\multirow{5}{*}{airport 1 (uw)} & 1 & 0.500 & 0.980 & \textbf{1.000} & 0.033 & 0.033 & \textbf{0.033} \\*
 & 2 & 0.175 & 0.175 & \textbf{1.000} & 0.065 & 0.065 & \textbf{0.066} \\*
 & 3 & 0.033 & 0.033 & \textbf{1.000} & 0.095 & 0.095 & \textbf{0.098} \\*
 & 5 & 0.000 & 0.000 & \textbf{1.000} & 0.150 & 0.150 & \textbf{0.163} \\*
 & 12 & 0.000 & 0.000 & \textbf{1.000} & 0.321 & 0.321 & \textbf{0.388} \\

\midrule
\multirow{5}{*}{airport 2 (w)} & 1 & 0.055 & 0.943 & \textbf{1.000} & 0.091 & 0.116 & \textbf{0.118} \\*
 & 2 & 0.245 & 0.245 & \textbf{1.000} & 0.167 & 0.177 & \textbf{0.210} \\*
 & 3 & 0.033 & 0.033 & \textbf{1.000} & 0.219 & 0.225 & \textbf{0.299} \\*
 & 5 & 0.000 & 0.000 & \textbf{1.000} & 0.290 & 0.299 & \textbf{0.422} \\*
 & 125 & 0.005 & 0.005 & \textbf{0.998} & 0.982 & 0.977 & \textbf{0.994} \\

\midrule
\multirow{6}{*}{airport 3 (uw)} & 1 & 0.080 & 0.443 & \textbf{1.000} & 0.013 & 0.014 & \textbf{0.014} \\*
 & 2 & 0.010 & 0.010 & \textbf{1.000} & 0.023 & 0.024 & \textbf{0.027} \\*
 & 3 & 0.000 & 0.000 & \textbf{1.000} & 0.031 & 0.033 & \textbf{0.041} \\*
 & 5 & 0.000 & 0.000 & \textbf{1.000} & 0.046 & 0.048 & \textbf{0.067} \\*
 & 186 & 0.000 & 0.000 & \textbf{1.000} & 0.643 & 0.651 & \textbf{0.766} \\*
 & 464 & 0.515 & \textbf{0.647} & 0.195 & 0.886 & \textbf{0.888} & 0.882 \\

\midrule
\multirow{6}{*}{airport 4 (w)} & 1 & 0.245 & 0.320 & \textbf{0.728} & 0.019 & 0.021 & \textbf{0.025} \\*
 & 2 & 0.000 & 0.000 & \textbf{1.000} & 0.028 & 0.031 & \textbf{0.056} \\*
 & 3 & 0.000 & 0.000 & \textbf{1.000} & 0.036 & 0.039 & \textbf{0.086} \\*
 & 5 & 0.000 & 0.000 & \textbf{1.000} & 0.051 & 0.055 & \textbf{0.140} \\*
 & 798 & 0.000 & 0.000 & \textbf{1.000} & 0.869 & 0.867 & \textbf{0.946} \\*
 & 1994 & \textbf{0.568} & 0.388 & 0.193 & \textbf{0.963} & 0.961 & 0.962 \\

\midrule
\multirow{6}{*}{rfid (w)} & 1 & 0.998 & 0.998 & \textbf{1.000} & 0.256 & 0.256 & \textbf{0.256} \\*
 & 2 & \textbf{0.958} & 0.172 & 0.155 & \textbf{0.311} & 0.306 & 0.308 \\*
 & 3 & \textbf{0.652} & 0.210 & 0.440 & \textbf{0.355} & 0.331 & 0.340 \\*
 & 5 & \textbf{0.890} & 0.388 & 0.072 & \textbf{0.441} & 0.395 & 0.349 \\*
 & 8 & \textbf{0.542} & 0.182 & 0.448 & \textbf{0.532} & 0.465 & 0.529 \\*
 & 19 & 0.018 & 0.015 & \textbf{0.983} & 0.750 & 0.713 & \textbf{0.836} \\

\midrule
\multirow{5}{*}{UKfaculty (w)} & 1 & 0.958 & 0.958 & \textbf{1.000} & 0.191 & 0.191 & \textbf{0.194} \\*
 & 2 & \textbf{0.825} & 0.455 & 0.347 & \textbf{0.213} & 0.206 & 0.207 \\*
 & 3 & \textbf{0.755} & 0.343 & 0.203 & \textbf{0.232} & 0.221 & 0.209 \\*
 & 5 & \textbf{0.863} & 0.480 & 0.020 & \textbf{0.272} & 0.252 & 0.210 \\*
 & 8 & \textbf{0.858} & 0.390 & 0.000 & \textbf{0.320} & 0.287 & 0.218 \\

\midrule
\multirow{6}{*}{enron (w)} & 1 & 0.998 & 0.998 & \textbf{1.000} & 0.389 & 0.389 & \textbf{0.389} \\*
 & 2 & \textbf{0.828} & 0.110 & 0.155 & \textbf{0.494} & 0.389 & 0.389 \\*
 & 3 & \textbf{0.848} & 0.102 & 0.102 & \textbf{0.540} & 0.413 & 0.389 \\*
 & 5 & \textbf{0.885} & 0.130 & 0.025 & \textbf{0.605} & 0.438 & 0.389 \\*
 & 18 & \textbf{0.640} & 0.220 & 0.200 & \textbf{0.755} & 0.647 & 0.741 \\*
 & 46 & \textbf{0.600} & 0.177 & 0.295 & \textbf{0.860} & 0.814 & 0.855 \\

\midrule
\multirow{6}{*}{communities 01 (uw)} & 1 & 0.090 & 0.487 & \textbf{0.912} & 0.003 & 0.026 & \textbf{0.026} \\*
 & 2 & \textbf{0.507} & 0.087 & 0.435 & 0.021 & \textbf{0.027} & 0.027 \\*
 & 3 & \textbf{0.662} & 0.165 & 0.207 & \textbf{0.033} & 0.030 & 0.027 \\*
 & 5 & \textbf{0.730} & 0.287 & 0.030 & \textbf{0.053} & 0.041 & 0.027 \\*
 & 13 & \textbf{0.845} & 0.195 & 0.000 & \textbf{0.119} & 0.084 & 0.027 \\*
 & 33 & \textbf{0.902} & 0.172 & 0.000 & \textbf{0.272} & 0.215 & 0.075 \\

\end{longtable}
\normalsize

\begin{figure}
    \centering
    \caption{Graph: ``enron'' (weighted)$ $}
    \begin{subcaptionblock}{0.45\textwidth}
        \includegraphics[width=\textwidth]{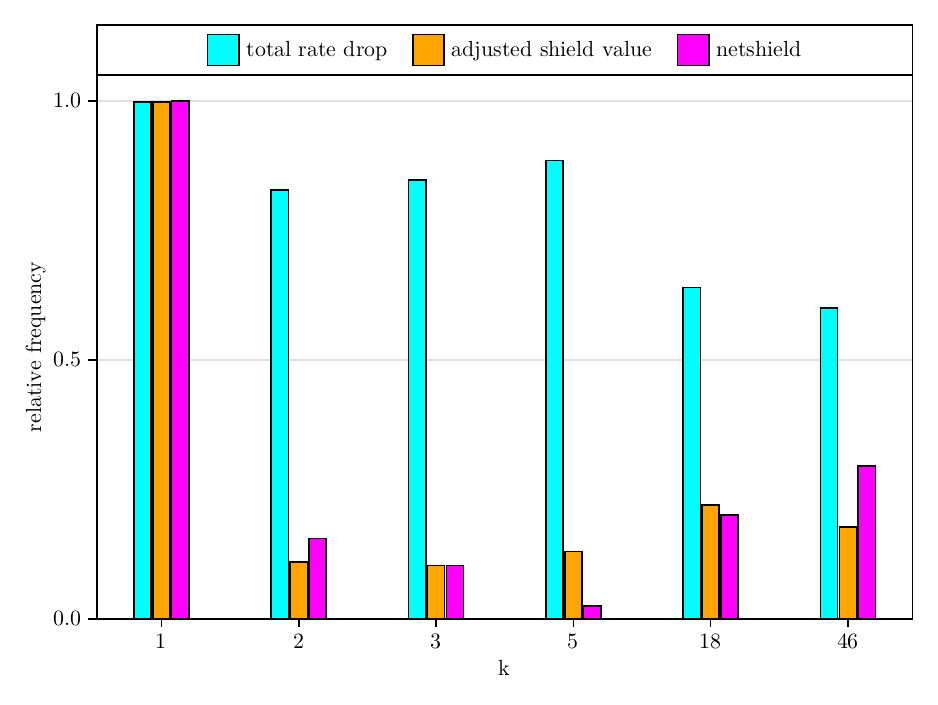}
        \caption{Origin of best eigendrop}
    \end{subcaptionblock}
    \begin{subcaptionblock}{0.45\textwidth}
        \includegraphics[width=\textwidth]{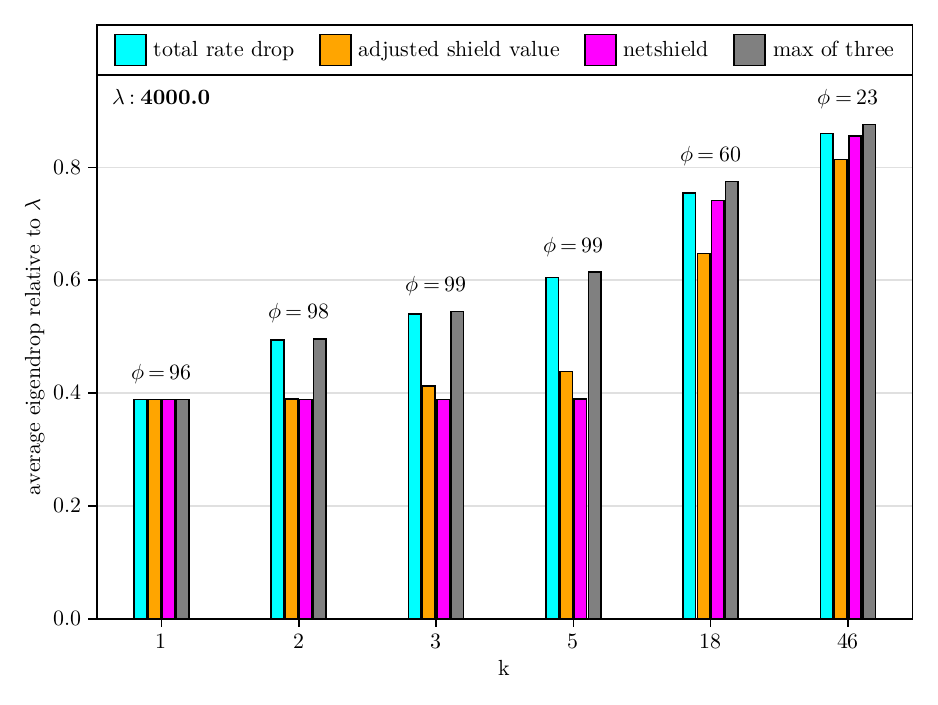}
        \caption{Average eigendrop}
    \end{subcaptionblock}
\end{figure}

\begin{figure}
    \centering
    \caption{Graph: ``communities 01'' (non-weighted)$ $}
    \begin{subcaptionblock}{0.45\textwidth}
        \includegraphics[width=\textwidth]{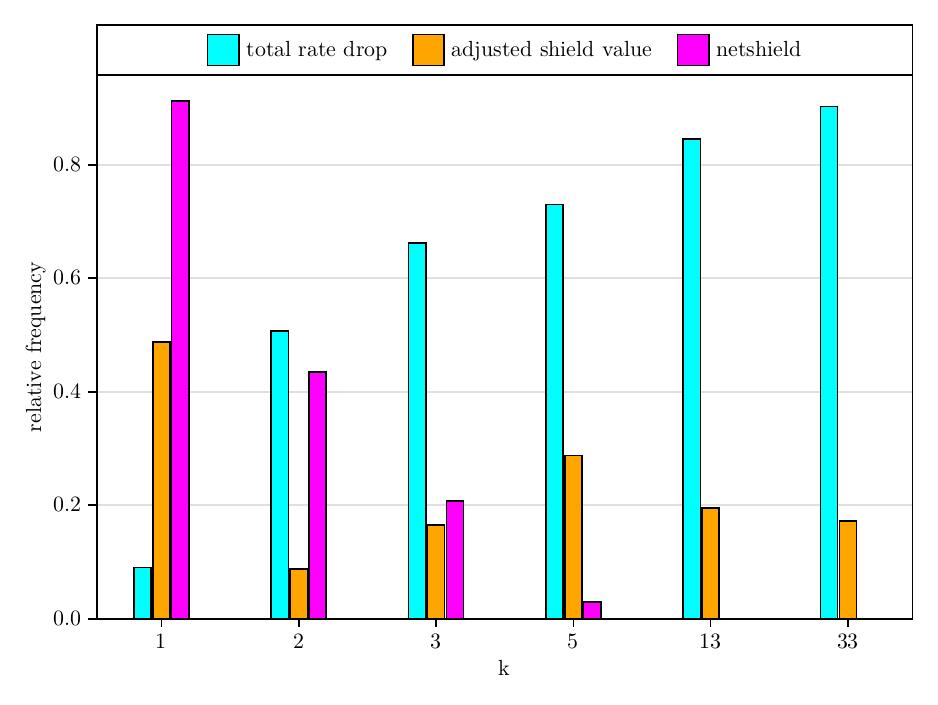}
        \caption{Origin of best eigendrop}
    \end{subcaptionblock}
    \begin{subcaptionblock}{0.45\textwidth}
        \includegraphics[width=\textwidth]{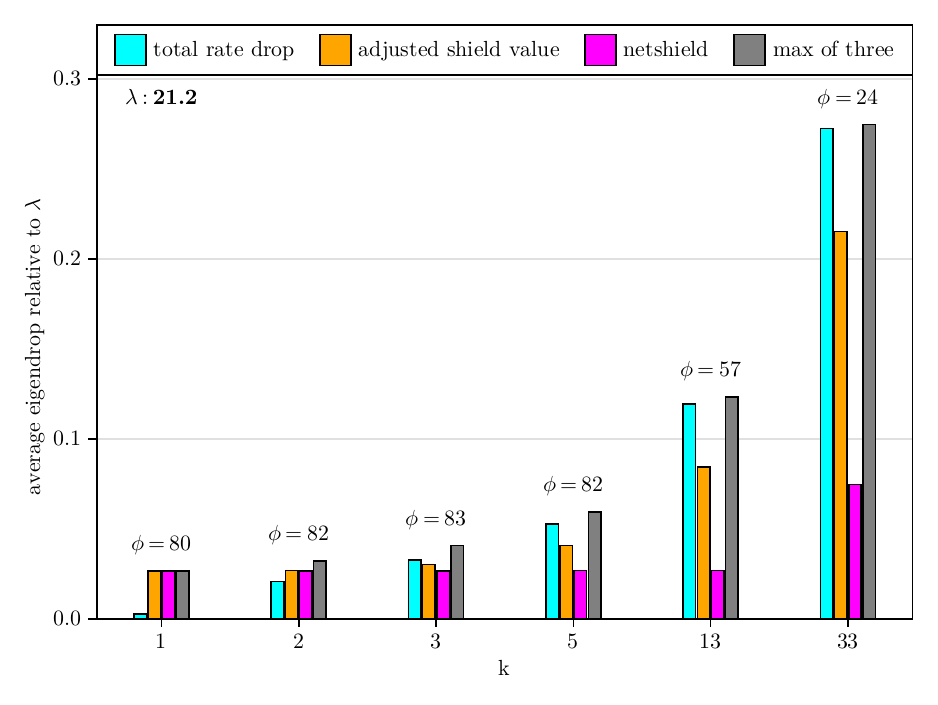}
        \caption{Average eigendrop}
    \end{subcaptionblock}
    \label{fig:Graph_communities}
\end{figure}

\subsection{Eigendrop distribution}\label{sec:eigendropDistribution}
We also considered the distribution of the eigendrop as a function of the number of removed nodes $k$ and compared it with the eigendrop obtained with Netshield for the same $k$. For this purpose, we considered, for each value of $k$, the eigendrop of 3600 forests sampled with Wilson's algorithm.
In Table~\ref{tab:eigendrop_quantiles_long} we present the quantiles of the eigendrop (normalized to the one obtained with the Netshield algorithm) associated with the removal of the complement of the root set sampled with Wilson's algorithm (also in this case, for some graphs, the results are provided in graphical form as well for the sake of clarity). 
We recall that computing the eigendrop for any set of nodes sampled with Wilson's algorithm is computationally costly, and these findings are provided only to give an insight into the behavior of the K-shield.
It is, however, worth mentioning that each time the eigendrop is better than that of Netshield for only 5\% of Kirchoff forests produced by Wilson's algorithm, it is sufficient to sample 60 forests to get some improvement of the Netshield result with probability .95 at least: 
$$
    \biggl(1 - {1 \over 20}\biggr)^{60} \leq e^{-3} \leq {1 \over 20}.
$$
Since the number of sampled forests is fixed, such an algorithm has computational cost $O(m)$.
From Table~\ref{tab:eigendrop_quantiles_long} it is possible to see that the best values (higher quantiles) of the eigendrop found with this procedure tend to be better than the eigendrop obtained with Netshield, especially in those graphs where the ``community structure'' is (or, at least, is likely to be) stronger.

\setlength{\LTleft}{\fill}
\setlength{\LTright}{\fill}
\setlength{\LTcapwidth}{0.9\linewidth}
\small

\begin{longtable}{@{}lcccc@{}}
\caption{Quantiles of the eigendrop obtained with K-shield algorithm (KS) for each graph and $k$. Values are normalized to the eigendrop obtained with Netshield (NS) for the same graph and $k$. The quantiles for each graph and $k$ have been determined by computing the eigendrop correspondig to 3600 set of nodes to immunize sampled with Wilson's algorithm. Graphs marked with the ``(w)'' flag have weighted edges, whereas graphs marked with the ``(uw)'' flag have unweighted edges.}
\label{tab:eigendrop_quantiles_long}\\
\toprule
\multirow{2}{*}{graph} & \multirow{2}{*}{$k$} & \multicolumn{3}{c}{quantiles of KS eignedrop} \\
\cmidrule(lr){3-5}
& & 0.9 & 0.95 & 0.99 \\
\midrule
\endfirsthead

\toprule
\multirow{2}{*}{graph} & \multirow{2}{*}{$k$} & \multicolumn{3}{c}{quantiles of KS eignedrop} \\
\cmidrule(lr){3-5}
& & 0.9 & 0.95 & 0.99 \\
\midrule
\endhead

\midrule
\multicolumn{5}{r}{continued on next page} \\
\endfoot

\bottomrule
\endlastfoot

\multirow{5}{*}{paper (uw)} & 1 & 0.421 & 1.000 & 1.000 \\*
 & 2 & \textbf{1.000} & \textbf{1.000} & \textbf{1.000} \\*
 & 3 & \textbf{1.323} & \textbf{1.885} & \textbf{1.885} \\*
 & 4 & 0.248 & 0.248 & 1.000 \\*
 & 5 & 0.386 & 0.386 & 1.000 \\

\midrule
\multirow{6}{*}{conference 1 (w)} & 1 & 0.408 & 1.000 & \textbf{1.000} \\*
 & 2 & 0.978 & 0.983 & \textbf{1.000} \\*
 & 3 & 0.989 & 0.998 & \textbf{1.260} \\*
 & 5 & \textbf{1.039} & \textbf{1.327} & \textbf{1.777} \\*
 & 10 & \textbf{1.889} & \textbf{2.066} & \textbf{2.460} \\*
 & 25 & \textbf{1.002} & \textbf{1.052} & \textbf{1.149} \\

\midrule
\multirow{6}{*}{conference 2 (w)} & 1 & 0.517 & 0.958 & \textbf{1.000} \\*
 & 2 & 0.958 & \textbf{1.001} & \textbf{1.049} \\*
 & 3 & \textbf{1.003} & \textbf{1.025} & \textbf{1.061} \\*
 & 5 & 0.914 & 0.925 & 0.965 \\*
 & 10 & 0.968 & 0.990 & \textbf{1.025} \\*
 & 26 & 0.959 & 0.980 & \textbf{1.010} \\

\midrule
\multirow{6}{*}{conference 3 (w)} & 1 & 1.000 & \textbf{1.000} & \textbf{1.000} \\*
 & 2 & \textbf{1.000} & \textbf{1.011} & \textbf{1.142} \\*
 & 3 & \textbf{1.024} & \textbf{1.112} & \textbf{1.143} \\*
 & 5 & 1.000 & \textbf{1.000} & \textbf{1.026} \\*
 & 10 & \textbf{1.029} & \textbf{1.032} & \textbf{1.043} \\*
 & 24 & \textbf{1.009} & \textbf{1.029} & \textbf{1.040} \\

\midrule
\multirow{5}{*}{airport 1 (uw)} & 1 & 0.963 & 0.971 & 1.000 \\*
 & 2 & 0.906 & 0.940 & 0.977 \\*
 & 3 & 0.859 & 0.898 & 0.953 \\*
 & 5 & 0.816 & 0.848 & 0.905 \\*
 & 12 & 0.769 & 0.788 & 0.827 \\

\midrule
\multirow{6}{*}{airport 2 (w)} & 1 & 0.620 & 0.763 & \textbf{1.000} \\*
 & 2 & 0.562 & 0.675 & 0.890 \\*
 & 3 & 0.514 & 0.611 & 0.730 \\*
 & 5 & 0.518 & 0.570 & 0.695 \\*
 & 50 & 0.849 & 0.865 & 0.893 \\*
 & 125 & 0.984 & 0.991 & 0.997 \\

\midrule
\multirow{6}{*}{airport 3 (uw)} & 1 & 0.762 & 0.880 & 0.949 \\*
 & 2 & 0.653 & 0.736 & 0.856 \\*
 & 3 & 0.581 & 0.653 & 0.772 \\*
 & 5 & 0.529 & 0.582 & 0.686 \\*
 & 186 & 0.805 & 0.820 & 0.856 \\*
 & 464 & \textbf{1.002} & \textbf{1.010} & \textbf{1.022} \\

\midrule
\multirow{6}{*}{airport 4 (w)} & 1 & 0.551 & 0.852 & \textbf{1.046} \\*
 & 2 & 0.395 & 0.472 & 0.615 \\*
 & 3 & 0.317 & 0.366 & 0.498 \\*
 & 5 & 0.267 & 0.318 & 0.411 \\*
 & 798 & 0.919 & 0.926 & 0.938 \\*
 & 1994 & \textbf{1.006} & \textbf{1.008} & \textbf{1.012} \\

\midrule
\multirow{6}{*}{rfid (w)} & 1 & 0.962 & 1.000 & \textbf{1.000} \\*
 & 2 & 0.878 & 0.925 & \textbf{1.015} \\*
 & 3 & 0.896 & 0.929 & \textbf{1.054} \\*
 & 5 & \textbf{1.059} & \textbf{1.132} & \textbf{1.263} \\*
 & 8 & 0.857 & 0.922 & \textbf{1.047} \\*
 & 19 & 0.835 & 0.878 & 0.945 \\

\midrule
\multirow{6}{*}{UKfaculty (w)} & 1 & 0.534 & 0.603 & \textbf{1.000} \\*
 & 2 & 0.707 & 0.966 & \textbf{1.071} \\*
 & 3 & 0.968 & \textbf{1.051} & \textbf{1.144} \\*
 & 5 & \textbf{1.134} & \textbf{1.219} & \textbf{1.401} \\*
 & 8 & \textbf{1.326} & \textbf{1.423} & \textbf{1.539} \\*
 & 20 & 0.909 & 0.955 & \textbf{1.028} \\

\midrule
\multirow{6}{*}{enron (w)} & 1 & 0.857 & \textbf{1.000} & \textbf{1.000} \\*
 & 2 & 1.000 & 1.000 & \textbf{1.413} \\*
 & 3 & 1.000 & \textbf{1.024} & \textbf{1.423} \\*
 & 5 & \textbf{1.117} & \textbf{1.423} & \textbf{1.726} \\*
 & 18 & \textbf{1.016} & \textbf{1.041} & \textbf{1.076} \\*
 & 46 & \textbf{1.012} & \textbf{1.026} & \textbf{1.054} \\

\midrule
\multirow{6}{*}{communities 01 (uw)} & 1 & 0.991 & 0.995 & \textbf{1.015} \\*
 & 2 & \textbf{1.127} & \textbf{1.393} & \textbf{1.606} \\*
 & 3 & \textbf{1.447} & \textbf{1.678} & \textbf{2.423} \\*
 & 5 & \textbf{2.401} & \textbf{2.575} & \textbf{2.778} \\*
 & 13 & \textbf{4.676} & \textbf{5.018} & \textbf{5.637} \\*
 & 33 & \textbf{3.717} & \textbf{3.876} & \textbf{4.153} \\

\end{longtable}
\normalsize

\begin{figure}
    \centering $ $
    \caption{Graph: ``enron'' (weighted) - eigendrop distribution (Netshield eigendrop is in red)}
      \begin{subcaptionblock}{0.45\textwidth}
          \includegraphics[width=\textwidth]{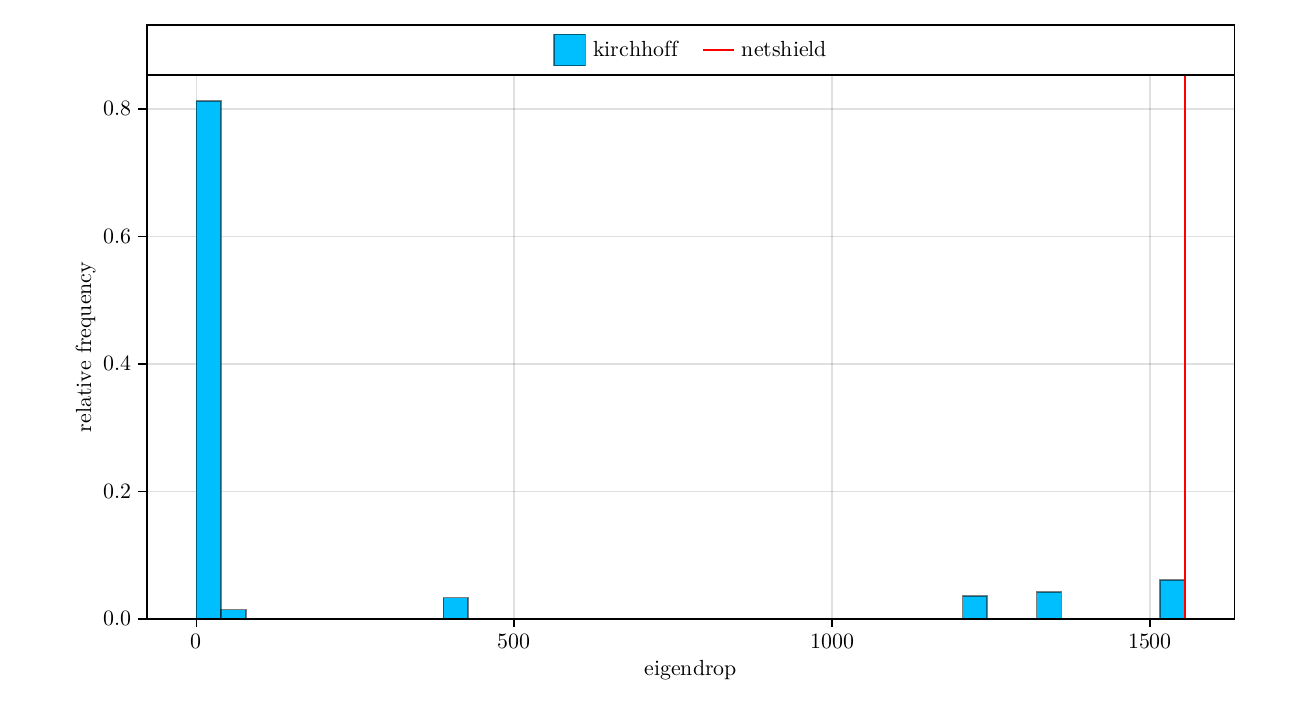}
          \caption{$k = 1$}
      \end{subcaptionblock}
      \begin{subcaptionblock}{0.45\textwidth}
          \includegraphics[width=\textwidth]{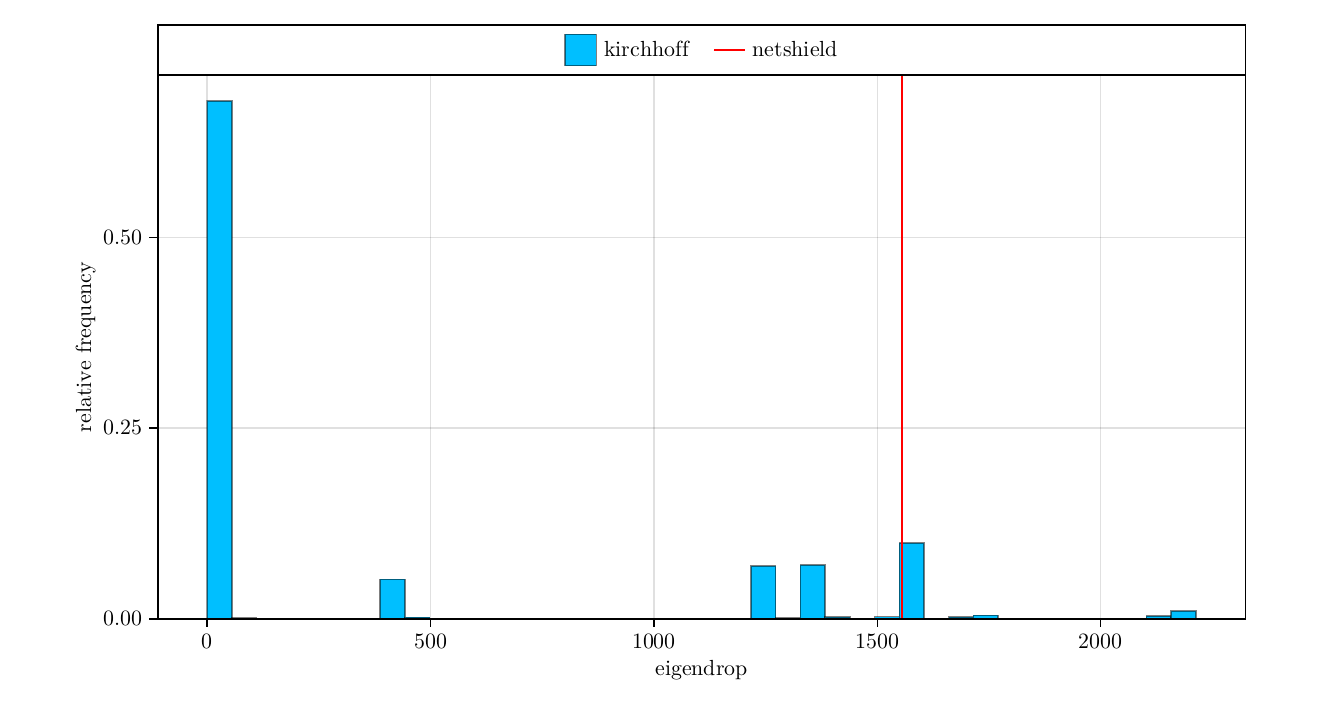}
          \caption{$k = 2$}
      \end{subcaptionblock}
      \\
      \begin{subcaptionblock}{0.45\textwidth}
          \includegraphics[width=\textwidth]{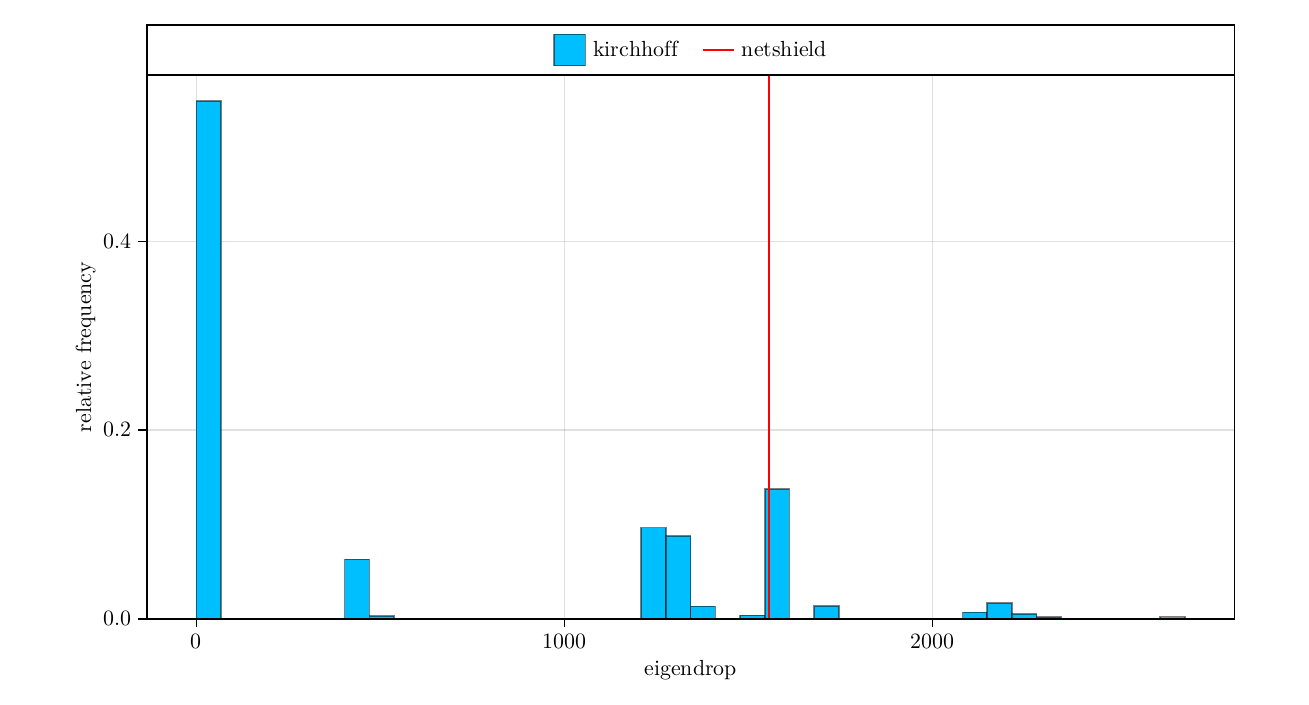}
          \caption{$k = 3$}
      \end{subcaptionblock}
      \begin{subcaptionblock}{0.45\textwidth}
          \includegraphics[width=\textwidth]{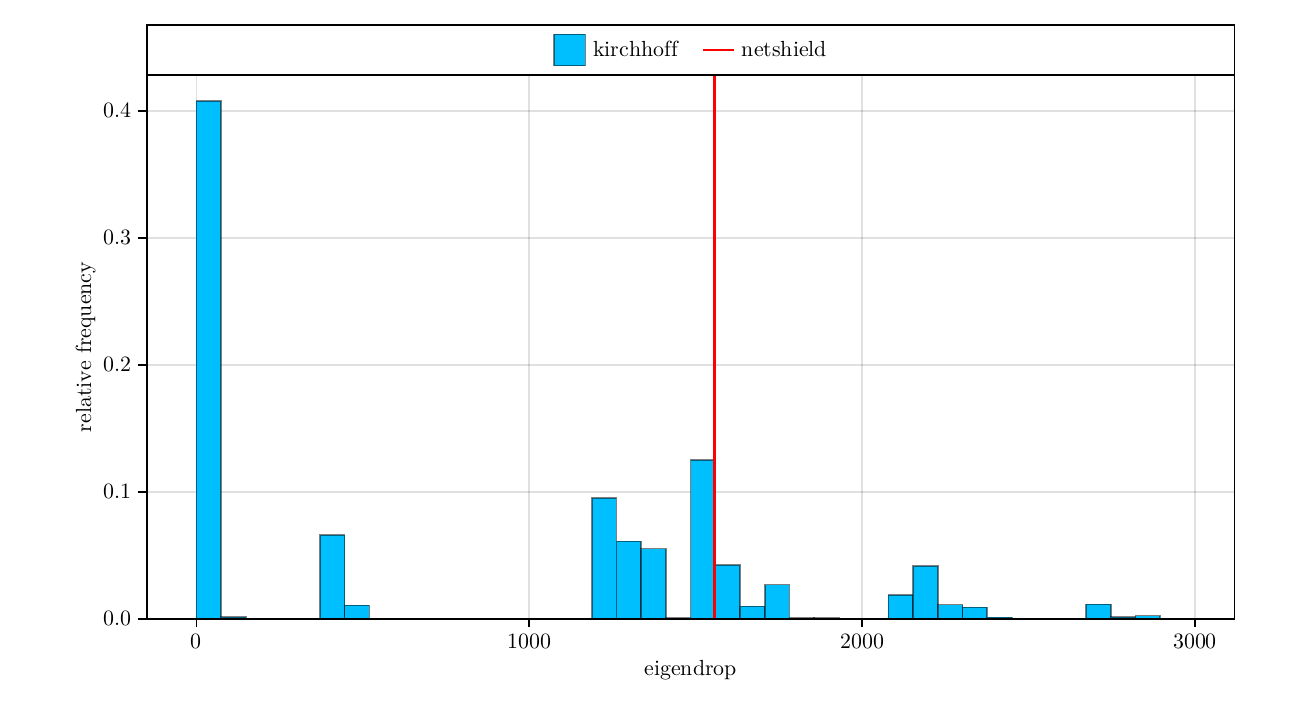}
          \caption{$k = 5$}
      \end{subcaptionblock}
      \\
      \begin{subcaptionblock}{0.45\textwidth}
          \includegraphics[width=\textwidth]{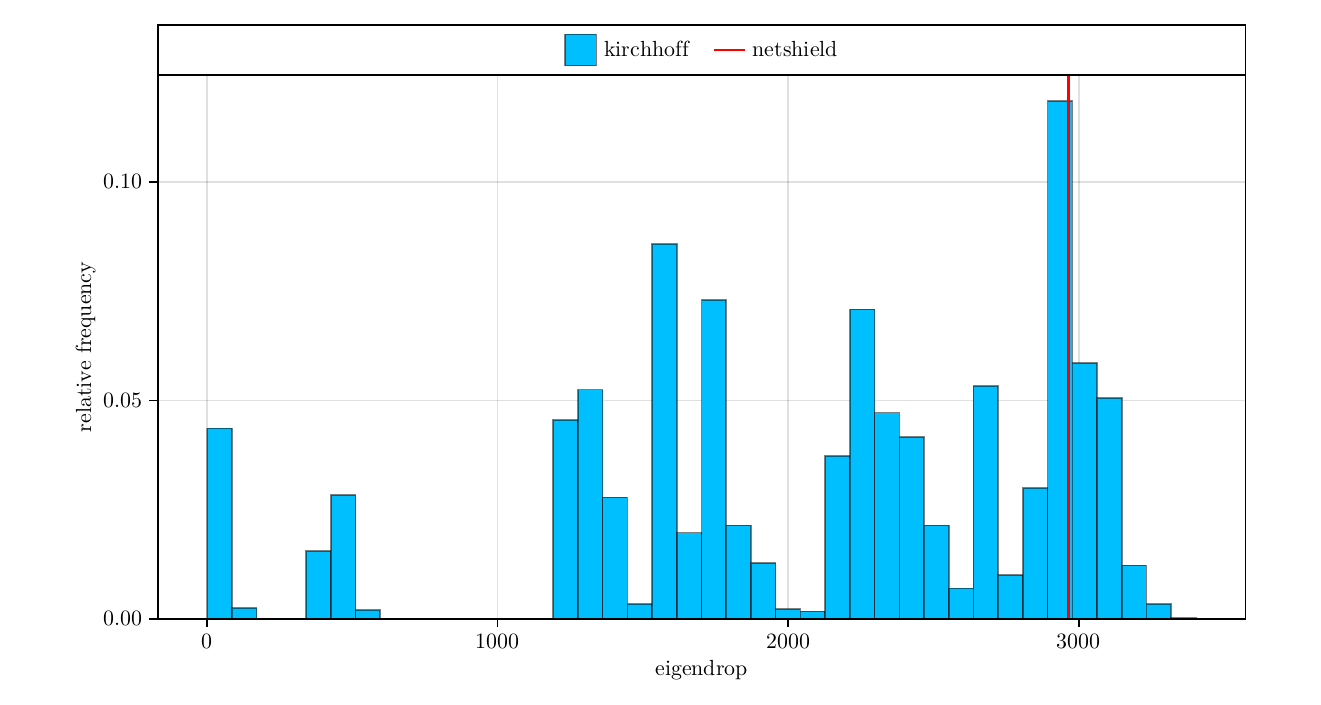}
          \caption{$k = 18$}
      \end{subcaptionblock}
      \begin{subcaptionblock}{0.45\textwidth}
          \includegraphics[width=\textwidth]{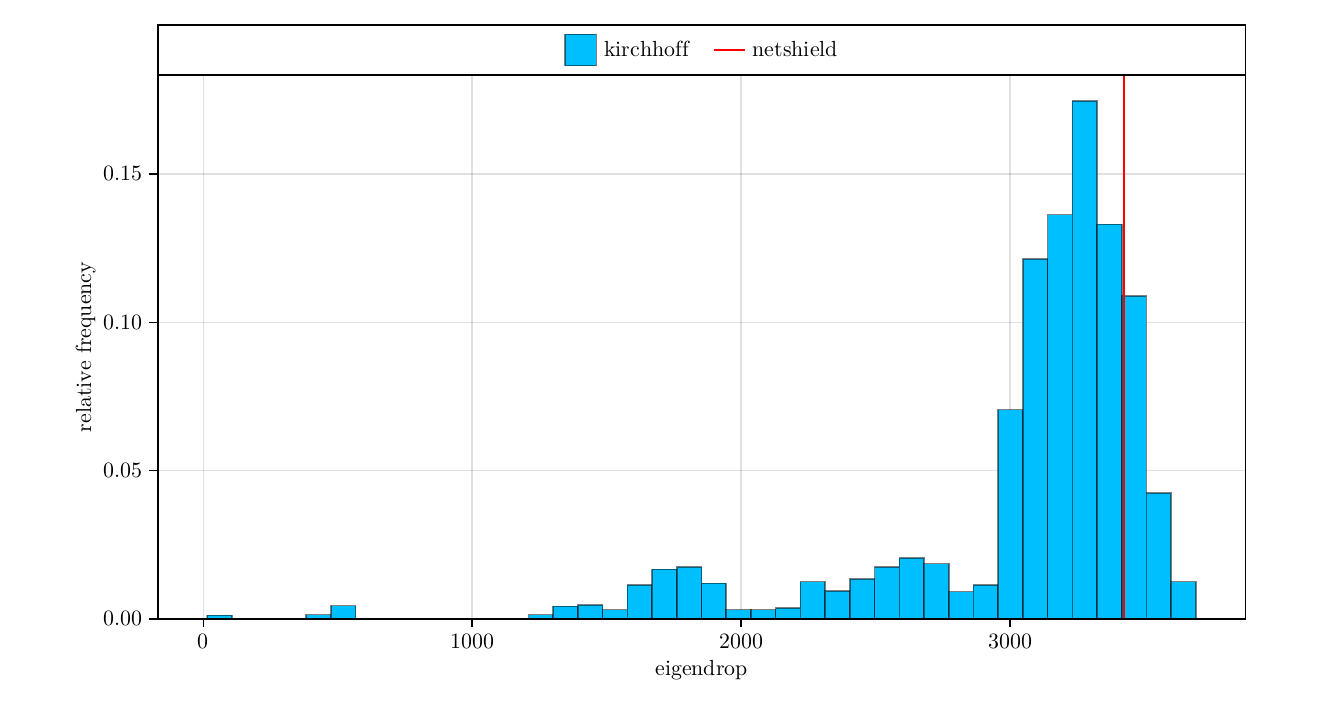}
          \caption{$k = 46$}
      \end{subcaptionblock}
\end{figure}

\begin{figure}
    \centering $ $
    \caption{Graph: ``communities 01'' (non-weighted) - eigendrop distribution}
      \begin{subcaptionblock}{0.45\textwidth}
          \includegraphics[width=\textwidth]{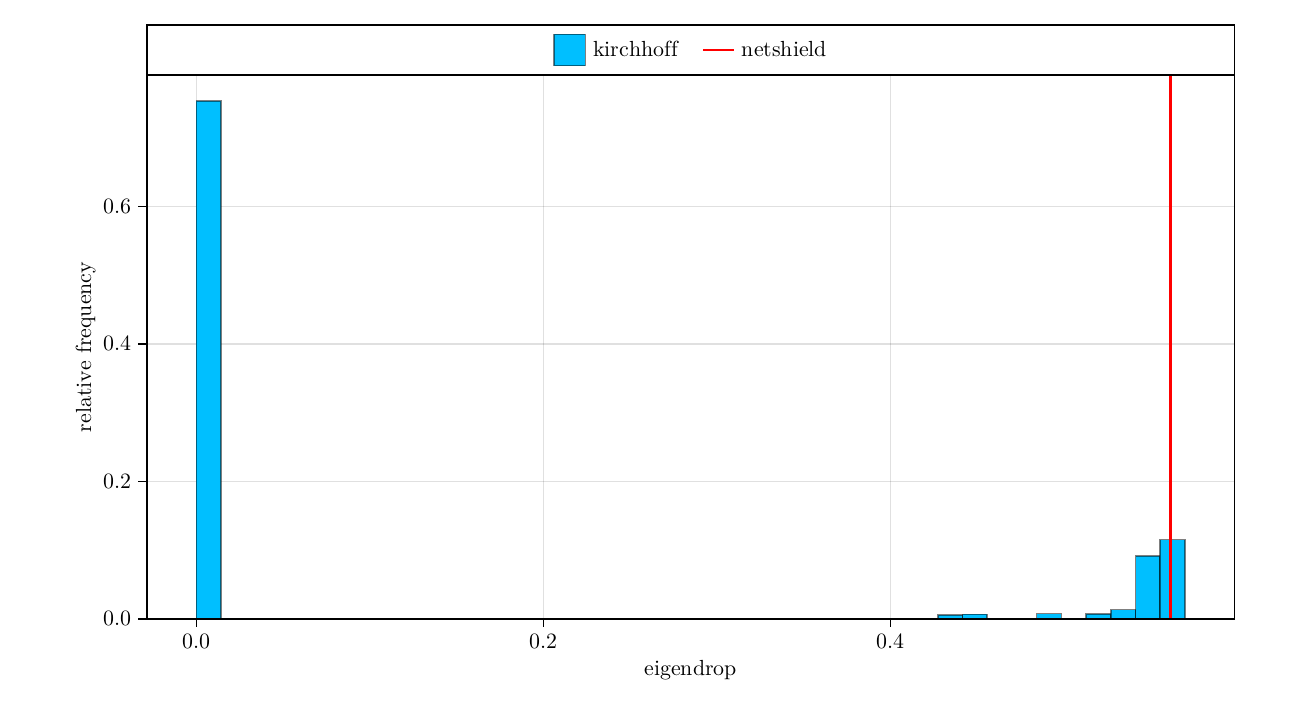}
          \caption{$k = 1$}
      \end{subcaptionblock}
      \begin{subcaptionblock}{0.45\textwidth}
          \includegraphics[width=\textwidth]{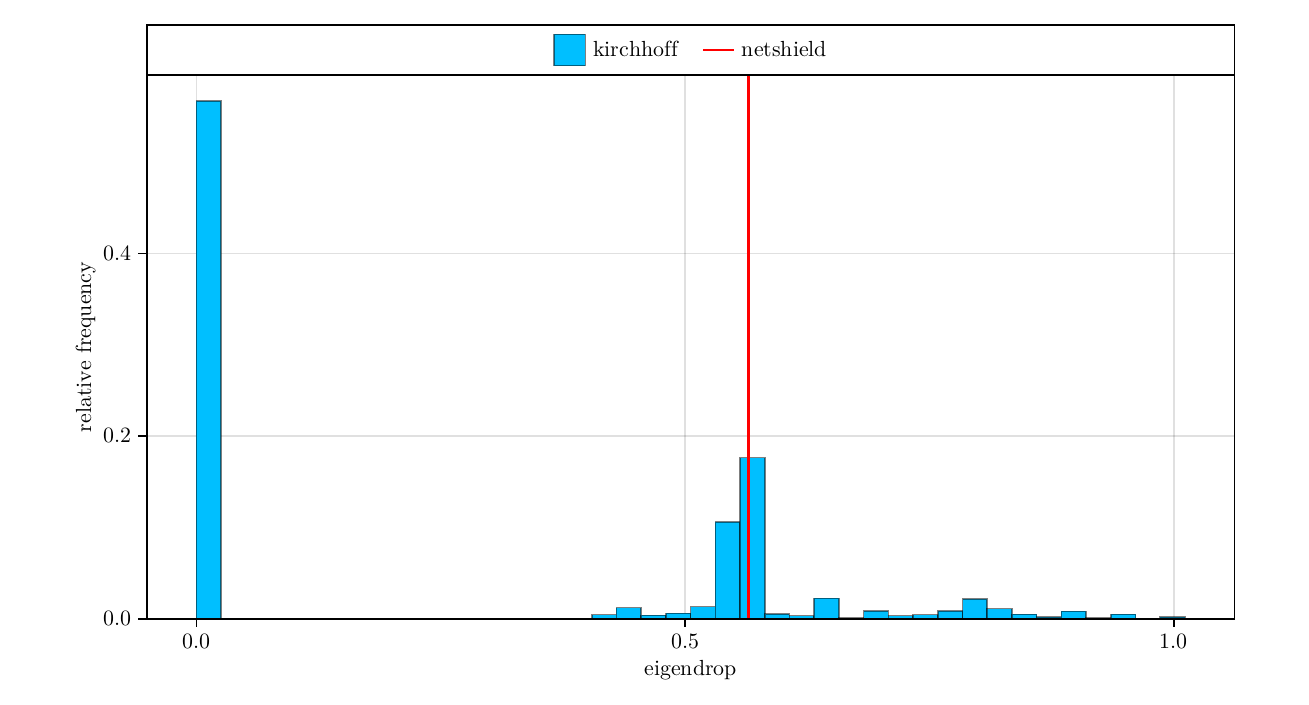}
          \caption{$k = 2$}
      \end{subcaptionblock}
      \\
      \begin{subcaptionblock}{0.45\textwidth}
          \includegraphics[width=\textwidth]{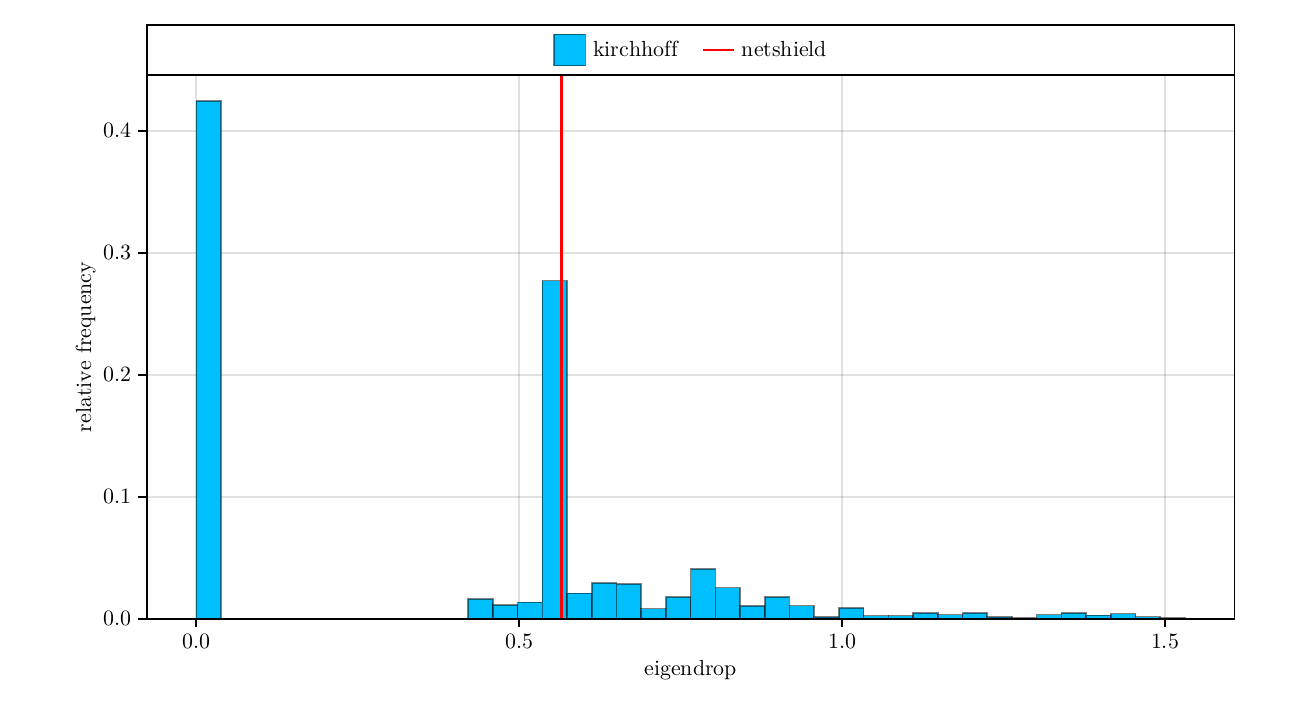}
          \caption{$k = 3$}
      \end{subcaptionblock}
      \begin{subcaptionblock}{0.45\textwidth}
          \includegraphics[width=\textwidth]{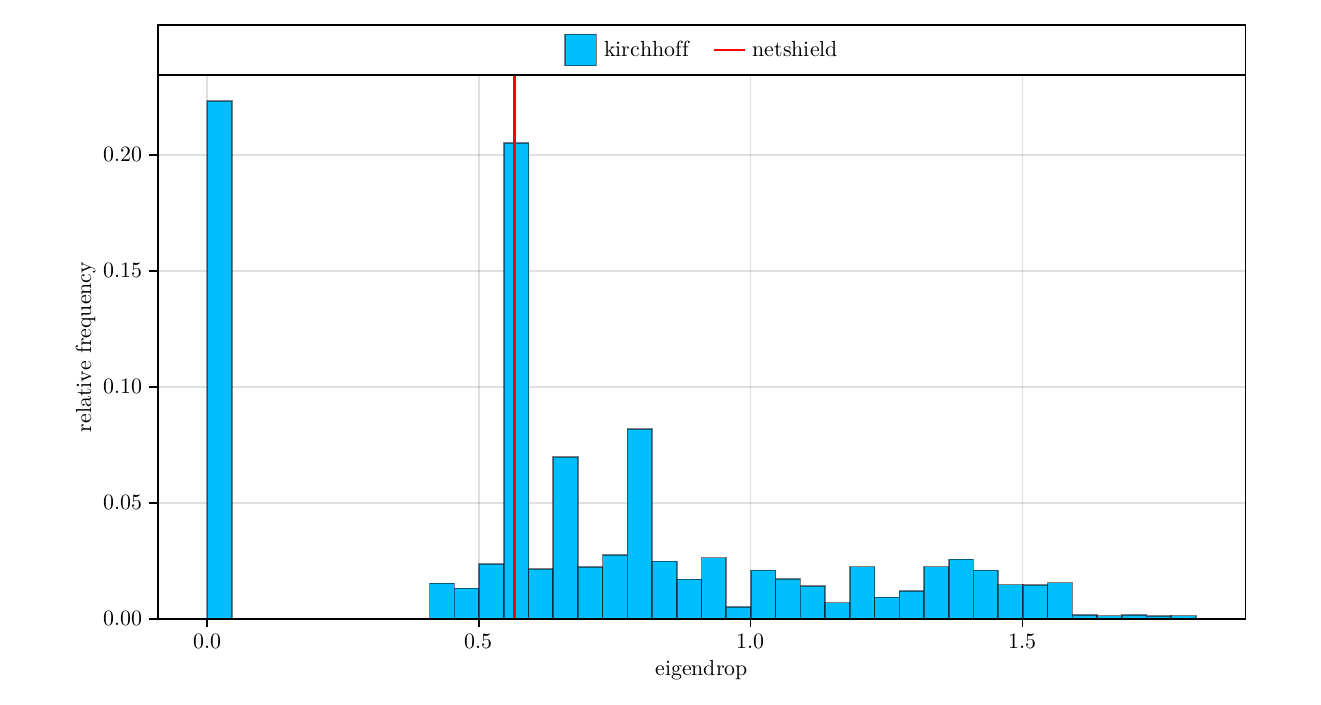}
          \caption{$k = 5$}
      \end{subcaptionblock}
      \\
      \begin{subcaptionblock}{0.45\textwidth}
          \includegraphics[width=\textwidth]{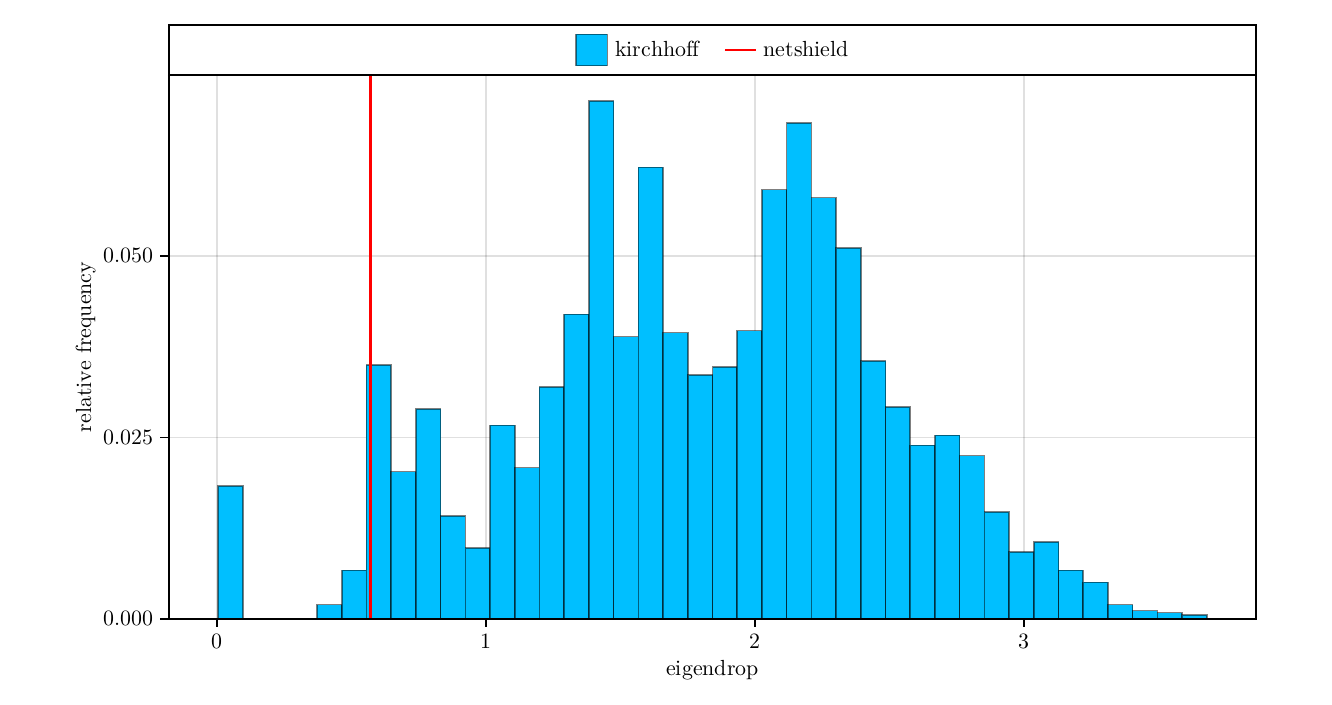}
          \caption{$k = 13$}
      \end{subcaptionblock}
      \begin{subcaptionblock}{0.45\textwidth}
          \includegraphics[width=\textwidth]{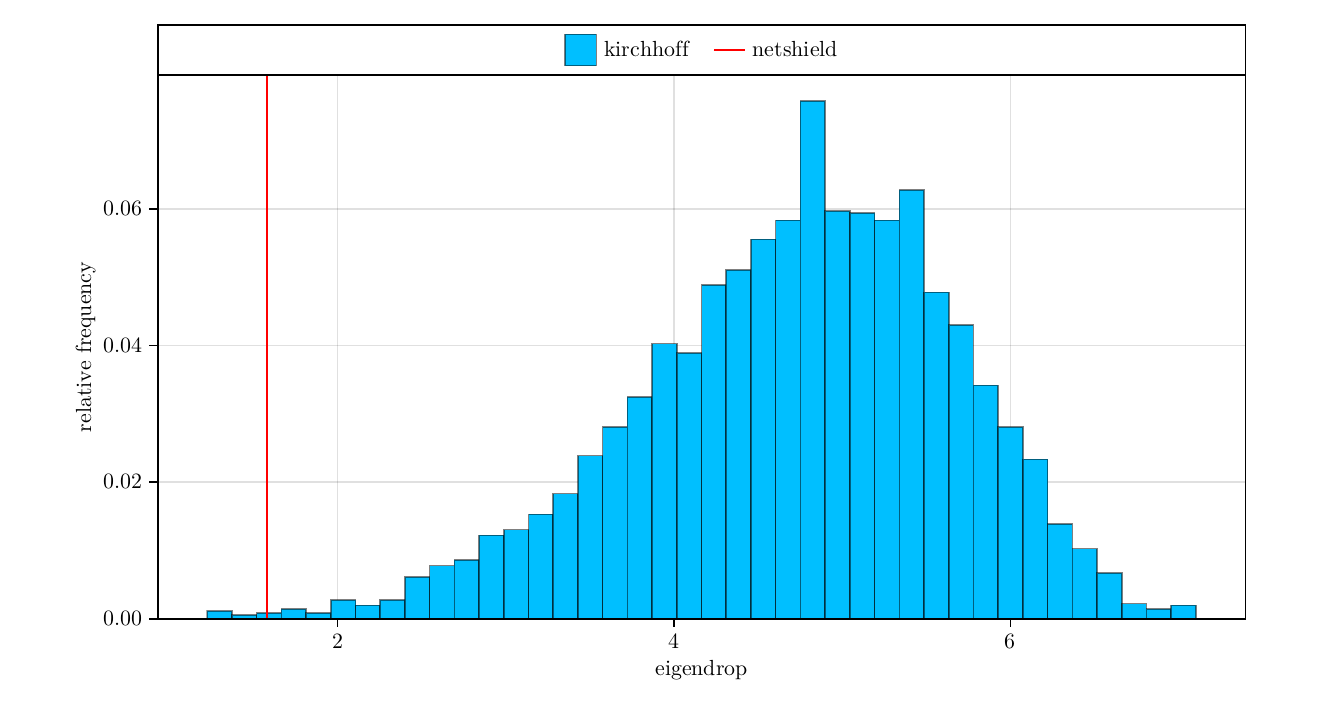}
          \caption{$k = 33$}
      \end{subcaptionblock}
\end{figure}

\section{Some mathematical details on Shieldvalue and Netshield}\label{sec:Netmore}
We show in the following lemma the validity of the bounds presented in \eqref{Shieldbounds}.

\begin{lmm}\label{clo}
For all $S \subset V$ it holds
$$
    \lambda_{\rm{max}} - \lambda(S) \leq \widetilde\shiv(S) \leq \shiv(S).
$$
\end{lmm}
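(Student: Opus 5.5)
The plan is a Rayleigh-quotient argument on the reduced graph, using the Perron–Frobenius eigenvector $u$ restricted to $V\setminus S$ as the test vector. Write $u_S$ for the vector equal to $u$ on $S$ and $0$ elsewhere, set $v:=u-u_S$ (the restriction of $u$ to $S^c$), and let $s:=\sum_{x\in S}u(x)^2$. Since $\|u\|=1$, we have $v^Tv=1-s$. The quantity $\lambda(S)$ is the largest eigenvalue of the principal submatrix $A_{S^c}$. Because $v$ is supported on $S^c$, the variational characterization gives
$$
\lambda(S)\;\geq\;\frac{v^TAv}{v^Tv}\qquad\text{whenever } s<1 .
$$

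The key computation expands $v^TAv$ using $Au=\lambda_{\rm max}u$ and the symmetry of $A$:
$$
v^TAv \;=\; u^TAu-2u_S^TAu+u_S^TAu_S \;=\; \lambda_{\rm max}-2\lambda_{\rm max}\sum_{x\in S}u(x)^2+\sum_{x,y\in S}a_{x,y}u(x)u(y).
$$
By the definition \eqref{ShieldValue}, this equals $\lambda_{\rm max}-\shiv(S)$. Hence
$$
\lambda_{\rm max}-\lambda(S)\;\le\;\lambda_{\rm max}-\frac{\lambda_{\rm max}-\shiv(S)}{1-s}\;=\;\frac{\shiv(S)-\lambda_{\rm max}s}{1-s}\;=\;\widetilde\shiv(S).
$$
This is the first inequality. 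The degenerate case $s=1$ means $u$ vanishes off $S$, so \eqref{ASV} is undefined there; I would simply exclude it, or adopt the convention $\widetilde\shiv(S)=+\infty$ in that case.

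For the second inequality, clearing the positive denominator $1-s$ shows that $\widetilde\shiv(S)\le\shiv(S)$ is equivalent to $s\,\shiv(S)\le s\,\lambda_{\rm max}$. When $s=0$ this holds with equality. When $s>0$ it reduces to $\shiv(S)\le\lambda_{\rm max}$, which by the identity above is the same as $v^TAv\ge 0$. This last fact holds because $A$ has nonnegative entries and the Perron–Frobenius vector $u$ can be chosen nonnegative, so $v\ge 0$ entrywise.

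The main point requiring care is not any single estimate but getting the algebraic identity $v^TAv=\lambda_{\rm max}-\shiv(S)$ exactly right, including the factor $2$ in the cross term and the sign of the quadratic term on $S$. Beyond that, one only has to check the edge cases $s\in\{0,1\}$ and note that $u$ is nonnegative.
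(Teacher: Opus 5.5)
Your proof is correct and is essentially the paper's argument: both use the Rayleigh-quotient bound $\langle u_R, A_{RR}u_R\rangle \le \lambda(S)\|u_R\|^2$ together with the identity $\langle u_R, A_{RR}u_R\rangle = \lambda_{\rm{max}} - \shiv(S)$. The only difference in derivation is that the paper expands $\langle u, Au\rangle$ blockwise and substitutes $A_{SR}u_R = \lambda_{\rm{max}}u_S - A_{SS}u_S$, whereas you expand $(u-u_S)^TA(u-u_S)$. Your explicit check that $\lambda_{\rm{max}} - \shiv(S) = v^TAv \ge 0$, using nonnegativity of $A$ and $u$, supplies a step the paper uses tacitly in its final inequality $\widetilde\shiv(S) \le \shiv(S)$.
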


\medskip\par\noindent\emph{Proof of Lemma~\ref{clo}.} In the following, denote by \(A_{BC}\) the submatrix of the adjacency matrix \(A\) obtained by selecting the rows in the subset of vertices \(B\) and the columns in the subset of vertices \(C\).
$$
    A = \left(
        \begin{array}{cc}
            A_{RR} & A_{RS}\\
            A_{SR} & A_{SS}
        \end{array}
    \right)
$$
We write in the same way $u_B$ for the vector obtained by selecting the coordinates in $B$ of a column vector $u$. For $u$ the normalized Perron-Frobenius vector associated with $\lambda_{\rm{max}}$ 
and with $R$ the complementary of $S$ we get from $Au = \lambda_{\rm{max}} u$
$$
    A_{SR}u_R + A_{SS} u_S = \lambda_{\rm{max}}u_S
$$
and
\begin{align*}
    \lambda_{\rm{max}}
    &= \langle u_R, A_{RR} u_R\rangle + 2\langle u_S, A_{SR} u_R\rangle + \langle u_S, A_{SS} u_S\rangle \\
    &= \langle u_R, A_{RR} u_R\rangle + 2\langle u_S, \lambda_{\rm{max}}u_S - A_{SS}u_S\rangle + \langle u_S, A_{SS} u_S\rangle \\
    &\leq \lambda(S) \|u_R\|^2 + 2\lambda_{\rm{max}}\|u_S\|^2 - \langle u_S, A_{SS} u_S\rangle \\
    &= \lambda(S) \|u_R\|^2 + \shiv(S),
\end{align*}
so that
$$
    \lambda(S)
    \geq {\lambda_{\rm{max}} - \shiv(S) \over \|u_R\|^2}
    \geq \lambda_{\rm{max}} - \shiv(S)
$$
and taking the opposite before adding $\lambda_{\rm{max}}$
$$
    \lambda_{\rm{max}} -  \lambda(S)
    \leq  \lambda_{\rm{max}} + {\shiv(S) - \lambda_{\rm{max}} \over \|u_R\|^2}
    = {\shiv(S) - \lambda_{\rm{max}} \|u_S\|^2 \over 1 - \|u_S\|^2}
    \leq \shiv(S).
$$
\qed

Table \ref{algo:Netshield} below describes the Netshield algorithm in a slightly different form with respect to the original description given in \cite{Chen16} to optimize the related complexity costs.

\begin{table}[h!]
\centering
\begin{tabular}{p{.8\textwidth}}
    \hline \\
    \textbf{Netshield Algorithm} \\
    \hline
    \textbf{Input}: the $n\times n$ adjacency matrix $A$; integer $k$\\
		\textbf{Output}: a subset $S$ with $k$ nodes\\
		Compute: \begin{tabular}[t]{l}
			$\lambda_{\rm{max}}$ largest eigenvalue of $A$\\
			$u$  its corresponding eigenvector\\ 
		\end{tabular}\\
		Initialize $S = \emptyset$\\
		\textbf{For} $j$ from $1$ to $n$\\
		\quad\begin{tabular}{|l}
			$\score(j) = \left(2 \cdot \lambda_{\rm{max}} - A(j , j)\right) \cdot u(j)^2$\\
		\end{tabular}\\
		\textbf{End For}\\
		\textbf{For} $iter$ from $1$ to $k$\\
		\quad\begin{tabular}{|l}
			$i = \argmax_{j \not\in S} \score(j)$\\
			$S = S\cup\left\lbrace i \right\rbrace$\\
            \textbf{If $|S| < k$}\\
            \quad\begin{tabular}{|l}
                \textbf{For} $j \not\in S$\\
                \quad\begin{tabular}{|l}
                        $\score(j) = \score(j) - 2 A(i, j)u(i)u(j)$\\
                \end{tabular}\\
                \textbf{End For}\\
            \end{tabular}\\
            \textbf{End If}
		\end{tabular}\\
		\textbf{End For}\\
		Return $S$\\
        \hline
\end{tabular}
\caption{Netshield Algorithm}
\label{algo:Netshield}
\end{table}

\newpage We see from the algorithm description that its complexity is in $O(m + nk)$ operations, as $m$ is the number of edges of the adjacency matrix $A$:
the cost of the computation of $\lambda$ and $u$
is in $O(m)$ by using the power method;
and there are $k$ iterations of cost in $O(n)$.

As mentioned above, our description in Table \ref{algo:Netshield}  is slightly different from the implementation originally suggested, see \cite{Chen16}, where the score table -- rather than its variation -- was directly computed at each iteration, which led to an extra matrix vector product and a complexity in $O(m + nk^2)$.

\section{Kirchhoff forest: mathematical structure, sampling and immunization }\label{sec:Kmore}
\subsection{Forests, random walks, loop-erasure and Wilson's algorithm }\label{forestSampling}

Given a finite graph $G=(V, E)$ and a weight function $w: E\rightarrow [0,\infty)$, it is called {\em graph Laplacian matrix} the $|V|\times|V|$ symmetric\footnote{The restriction to undirected weighted graphs is only for simplicity of exposition and because for the node-immunization application we have only considered symmetric graph benchmarks. However, it is worth stressing that an analogous immunization approach could also be developed in a directed graph setting. %
} matrix $L=(L_{x,y})_{x,y\in V}$
with entries: 
\begin{equation}\label{graphLaplacian}
L_{x,y}=\begin{cases}
    -w(e) & \text{if } (x,y)=e\in E \\
    \displaystyle\sum_{(x,y)=e\in E}w(e) & \text{if } x=y\\
    0 & \text{else}.
\end{cases}
\end{equation}
Such a matrix and its spectrum are intimately linked to the Kirchhoff forest; in fact, the normalizing constant (or partition function) in Def. \ref{KforestDef} can be expressed in terms of $L$ as follows: 
\begin{equation}\label{PartFn}
Z(q) =\det\left[q \Id+ L\right],
\end{equation} 
This relation is referred to in the literature as (generalized) \emph{matrix tree-theorem} and it can be traced back to the seminal work of Kirchhoff on electrical networks \cite{K1847}, which also explains our choice for the name of such a random forest. 

We call \emph{Random Walk (RW) associated to a graph} $G=(V,E)$ with weight function $w$, the continuous-time Markov chain 
$X =\{X(t): t\geq 0\}$, 
with state space $V$ and infinitesimal generator given by $-L$. The relevance of the RW $X$ is due to the fact that it can be used to efficiently sample the Kirchhoff forest $\Phi_q$ as well as to analyze its local statistics.
Indeed, to sample  $\Phi_q$ for a given $q$, one can use the celebrated Wilson's algorithm \cite{W96} which recursively builds the random forest by launching successive loop-erased trajectories of $X$ until exhaustion of the vertex set as briefly described next. 

Let us first recall that given a RW $X$, it is called the associated loop-erased RW, the process obtained by erasing cycles as soon as they appear along the trajectory of $X$, so that the resulting trajectories are self-avoiding. It is worth stressing that the trajectories of such a loop-erased process have closed statistics that can be expressed in terms of the determinant of $L$, see \cite{M00}.  
\\
\noindent {\bf Wilson's algorithm:}
\emph{\begin{enumerate}
\item[(1)] Let $T_q$ be an independent exponential clock of rate $q>0$. Fix an arbitrary vertex $x_0\in V$ and launch from there a loop-erased version of the RW $X$ until the clock $T_q$ rings and record the self-avoiding path $\gamma$ in $V$ covered by this first loop-erased trajectory, so that $\gamma$ is a finite sequence of distinct vertices.
\item[(2)] Take next any arbitrary vertex $x_1$ in $V\setminus\{x\in\gamma\}$ and launch, again, a loop-erased version of $X$ until the minimum between a new independent time $T_q$ and the first time that this loop-erased walk hits a node in $\gamma$. 
\item[(3)] Update the subset of nodes in $\gamma$ from step (1) by adding to it those vertices covered by the loop-erased trajectory launched in step (2).
\item[(4)] iterate steps (2) and (3) until exhaustion of the vertex set $V$.
\end{enumerate}}

It is not difficult to see that when the above algorithm stops, it outputs a random spanning sub-graph with no cycles, which corresponds exactly to the forest with the law as in Def.\ref{KforestDef}, see\cite{W96,AG18}.
What is quite remarkable in Wilson's algorithm is that in steps (1) and (2) above, the choice of the starting vertex is arbitrary. Such an \emph{exchangeable} property makes Wilson's algorithm also a powerful tool to analyze the interesting local features of the forest. As mentioned in Section \ref{sec:OurAlgorithm}, a dynamical variant of Wilson's algorithm  that allows to sample in one run a collection of forests having $k$ roots with $k$ ranging from $1$ to $n=|V|$ has been studied in \cite[Theorem 2]{AG18}.

\subsection{Forest immunization and heterogeneous graphs}\label{trace}

As mentioned in the introduction, depending on the perspective, we may consider other measures of network vulnerability than the largest eigenvalue of the adjacency matrix of the graph.
Another classical approach to impede the spread of a malicious agent in a network is to try to limit the presence of hubs, namely nodes with a high weighted degree (as defined in \eqref{wDeg}).

By adapting this perspective for the multiple-node immunization original problem, a natural proposal would then be to look for a set of nodes of a given size to be removed which makes minimal \emph{the average weighted degree}\footnote{Another similar related measure would be the maximal weighted degree $w_{max}=\max_{x\in V} w(x)$. Note that on any arbitrary simple weighted graph $\lambda_{\rm{max}}\leq w_{max}$.} of the remaining network. Kirchhoff forests turn out to have an interesting property concerning this idea of reducing the influence of the (weighted) hubs, which we next describe. Denote by 
\begin{equation}\label{wDegAv}
\overline{w}:=\frac{1}{|V|}\sum_{x\in V} w(x),
\end{equation}
the average weighted degree associated with the starting weighted graph $(G, w)$. Notice that if the weights of the graph represent the infection rates of an underlying SIS as defined in \eqref{CPrates}, then such an average weighted degree can be interpreted as the average infection rate.
Analogously, for a given set $R \subset V$, $V$ being the vertex set of $G$, let $\overline{[w]_R}$ denote the
average weighted degree for the \emph{reduced} weighted network $(G', [w]_R)$, where $G'$ is obtained from $G$ by removing all vertices in the set $S:=V\setminus R$ and edges incident to the nodes in $S$, and $[w]_R$ is the restriction of the weight function $w$ on the resulting reduced edge set.

Let us further introduce the so-called \emph{trace process} associated with a subset $R\subset V$. Formally, this is the continuous-time Markov chain, say $X^{(R)}$, with state space $R$ and with infinitesimal generator given by 
the $|R|\times|R|$ matrix $-L^{(R)}$ defined as
the Schur complement of $[-L]_R$ in $-L$, where we recall that $-L$ is minus the graph Laplacian matrix associated to the original graph $G$ in \eqref{graphLaplacian}, and $[-L]_R$ is its restriction to the set $R$.
See \cite[Lemma 8]{ACGM20} for more details.
In words, the trace process on $R$ moves as the original RW $X$ on $G$ inside the set $R$, while it can be thought as of a consistent infinite speed version of $X$ when the trajectories of $X$ cover points outside $R$.
In particular, such a RW $X^{(R)}$ is associated to the weighted reduced graph $(G'', w^{(R)})$, where the vertex set of $G''$ is $R$, the weight function 
$w^{(R)}(x,y)$ for $x,y\in S$ is determined by the off-diagonal entries of the matrix $-L^{(R)}$ above and the edge set of $G''$ is determined by the pair of points in $R$ for which $w^{(R)}(x,y)>0$. Thus, by comparison between the trace process on $R$ and the process $X$ restricted to $R$, or equivalently, between the weighted graphs $(G'', w^{(R)})$ and $(G', [w]_R)$, it follows by construction that $w^{(R)}(x,y)\geq [w]_R(x,y)$ for any $x,y\in S$ and in particular, for the corresponding average weighted degrees, the following domination holds: 

\begin{equation}\label{TraceComp}
\overline{w^{(R)}}:= \frac{1}{| R|}\sum_{x\in R} w^{(R)}(x)\geq \overline{[w]_R}:=\frac{1}{|R|}\sum_{x\in R} [w]_R(x) . 
\end{equation}

In view of this domination and the above discussion, while searching for good sets of vertices to be immunized (i.e., removed) for the multiple-node original problem, one may look to those sets $S\subseteq V$ of given size such that the average weighted degree of the trace process on $R=V\setminus S$ is minimal.

We can now finally describe the interesting property of Kirchhoff forests.
Consider for the set $S$ to be immunized/removed the random set $\rho(\Phi_q)^{c}$
(i.e., the complement of the set of roots in Kirchhoff forest) conditioned to have a given cardinality $k\leq |V|$. Under this standpoint, it turns out that at least in mean, Kirchhoff forest does a good job as a function of $k$ in lowering the average weighted degree if the considered graph has a heterogeneous spectrum, contrary to the complete graph. More precisely, if $|V|=n$ and we want to immunize/remove $k\leq n-1$ nodes, it turns out, see \cite[Prop. 20]{ACGM20} that, in mean, the average weighted degree of the trace process mentioned above has the following explicit expression
\begin{equation}\label{ak} 
    a_k:= \mathbb{E} \left[ \overline{w^{(\rho(\Phi_q)^c)}} \mid 
                            |\rho(\Phi_q)^{c}|=k 
                      \right]
        =\frac{k+1}{n-k}\frac{b_{k+1}}{b_k}, \qquad k\leq n-1,
\end{equation}
with $b_k$ given by
\begin{equation}\label{bk}
    b_k:=\mathbb{P}( |\rho(\Phi_q)|=n-k )=\mathbb{P}( |\rho(\Phi_q)^c|= k ),
\end{equation}
representing the coefficients of order $n-k$ of the partition function in \eqref{PartFn}, that is:
$$Z(q)=q\prod_{k=1}^n (q+\lambda_i)=\sum_{k=0}^n b_k q^{n-k},$$
$\lambda_i$'s standing for the positive (real) eigenvalues of the (symmetric) graph Laplacian in \eqref{graphLaplacian}.
By noticing that $b_n=0$ since with probability one Kirchhoff forest contains at least one tree (equiv. root), and using the classical Newton's inequalities applied twice to the coefficients $b_k$'s of the polynomial $Z(q)$, one gets that whatever the graph is, the average infection rate in \eqref{ak} is decreasing in $k$, indeed:
$$\frac{a_k}{a_{k-1}}\leq 1- \frac{1}{(n-k)^2}.$$
Furthermore, Newton's inequalities also guarantee that the above inequality is an identity if and only if the eigenvalues $\lambda_i$'s above coincide, which is the case only if the underlying graph is the complete graph. In conclusion, immunizing with the complementary set of Kirchhoff forest has, in mean, the property that \emph{the more heterogeneous the spectrum of the considered graph, the stronger is the decay of the average infection rate}.

\section*{Acknowledgment}

Insert the Acknowledgment text here.
L. Avena was supported by NWO Gravitation Grant 024.002.003-NETWORKS. 
The work of AT has been partially supported by PRIN 2022 PNRR: "RETINA: REmote sensing daTa INversion with multivariate functional modeling for essential climAte variables characterization" (Project Number: P20229SH29, CUP: J53D23015950001) funded by the European Union under the Italian National Recovery and Resilience Plan (NRRP) of NextGenerationEU, under the Italian Ministry of University and Research (MUR)

\bibliographystyle{abbrv}
\bibliography{biblio}

\appendix

\section{Graphs details}\label{sec:graphs_details}
The benchmark graphs considered in the paper that have been used to test the K-shield algorithm are described below:
\begin{description}
    \item [karate]: Zachary's karate club graph with 34 nodes and 77 unoriented edges introduced in \cite{zachary1977information}; the graphs; data obtained from \url{https://networks.skewed.de/net/karate}; 
    \item [Conference graphs]: a set of three graphs (labeled as ``conference~1'', ``conference~2'' and ``conference~3'') concerning the face-to-face interaction between pair of participants during the ACM Hypertext 2009 conference; each graph describes the interaction in one day of the conference; the weight of edge $\{i, j\}$ is proportional to total time individuals $i$ and $j$ interacted in a given day; the graphs has been first studied in \cite{Isella2011qo}; the data is obtained from that available at \url{http://www.sociopatterns.org/datasets/hypertext-2009-dynamic-contact-network/} and the edge weights, for each day of conference, are obtained by counting the number of 20 seconds interactions between two nodes available in the file \url{http://www.sociopatterns.org/files/datasets/003/ht09_contact_list.dat.gz}.
    \item [enron]: a graph with $183$ nodes related to the mail exchanges at Enron; the data has been made public by the U.S. Department of Justice; the weight of edge $\{i, j\}$ is given by the number of mail exchanges between $i$ and $j$; data retrieved from the "igraphdata" package for R (see \url{https://cran.r-project.org/web/packages/igraphdata/index.html});
    \item [rfid]: $75$ nodes graph (studied in \cite{vanhems2013estimating}) representing workers in a geriatric unit of a hospital in Lyon (France) in 2012; the weight of edge $\{i, j\}$ is given by the number of encounters between $i$ and $j$; data retrieved from the "igraphdata" package for R;
    \item [UK faculty]: graph with $81$ nodes each representing a faculty in a UK university; the the weighted adjacency matrix $W$ of the graph gives, for each $i$ and $j$ the strength of the connection between individuals $i$ and $j$; in our tests we considered the symmetrized version $W'$of the weighted adjacency matrix defined as $W' = W + W^{T}$; the graph has been introduced in \cite{nepuszFuzzyCommunitiesConcept2008}; data retrieved from the "igraphdata" package for R;
    \item [airports graphs]: a set of four graphs containing airport-related data:
        \begin{description}
            \item [airport~1]: a subset of ``airport~2'' obtained considering the $50$ airports (nodes) with higher degree (the edges of the graph are not weighted)
            \item [airport~2]: (also referred to as ``USairport500'' in the literature) the nodes of the graph represent the 500 busiest commercial airports in the United States with edges $\{ i, j \}$ denoting the presence of at least one scheduled flight between airport $i$ and $j$ (in 2002); the weight of edge $\{ i, j \}$ is given by the number of seats available between $i$ and $j$; this dataset has been studied in \cite{colizza2007reaction} and the data has been retrieved from \url{http://opsahl.co.uk/tnet/datasets/USairport_2010.txt}
            \item [airport~3]: graph with $1574$ nodes representing the complete set of US airports in 2010; an edge between nodes $i$ and $j$ exists if passengers flights have been operated between $i$ and $j$; the edges are not weighted; data was obtained from \url{[tnet-format](http://opsahl.co.uk/tnet/datasets/USairport_2010.txt)} where all weights have been set equal to $1$.
            \item [airport~4]: a graph with $7976$ nodes representing world airports considered in \cite{opsahl2010node}; the weight of edge $\{ i, j \}$ is given by the number of routes between $i$ and $j$; data has been downloaded from \url{http://opsahl.co.uk/tnet/datasets/openflights.txt}
        \end{description}
\end{description}

\section{Graphical representation of obtained results}\label{sec:charts}

\subsection{Origin of best eigendrop}\label{sec:figs-originBestEigendrop}

\begin{figure}
    \centering
    \caption{Graph: ``paper'' (non-weighted)$ $}
    \begin{subcaptionblock}{0.45\textwidth}
        \includegraphics[width=\textwidth]{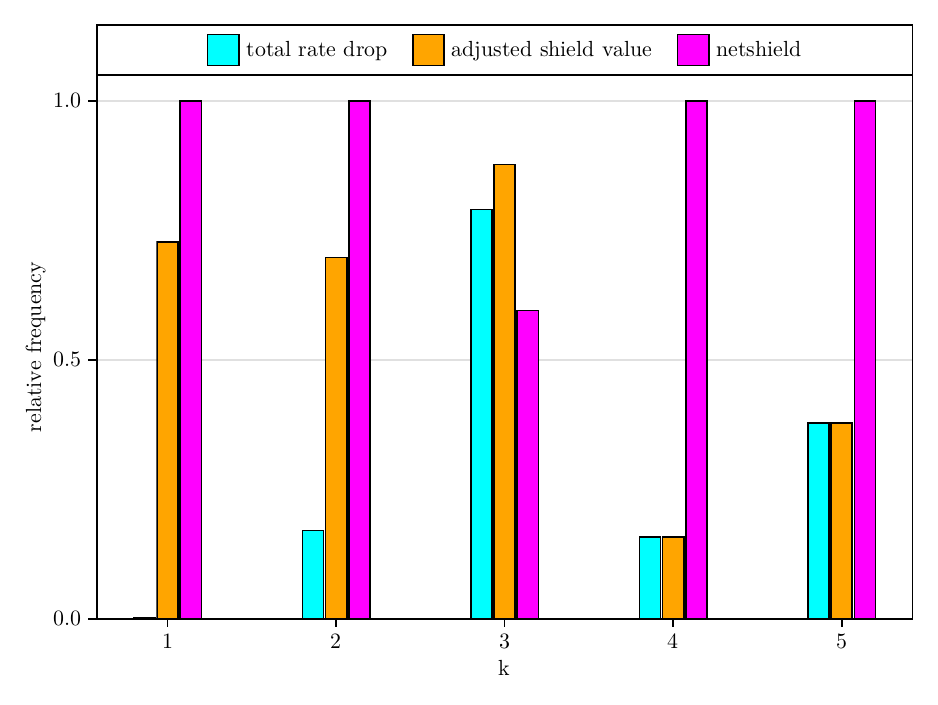}
        \caption{Origin of best eigendrop}
    \end{subcaptionblock}
    \begin{subcaptionblock}{0.45\textwidth}
        \includegraphics[width=\textwidth]{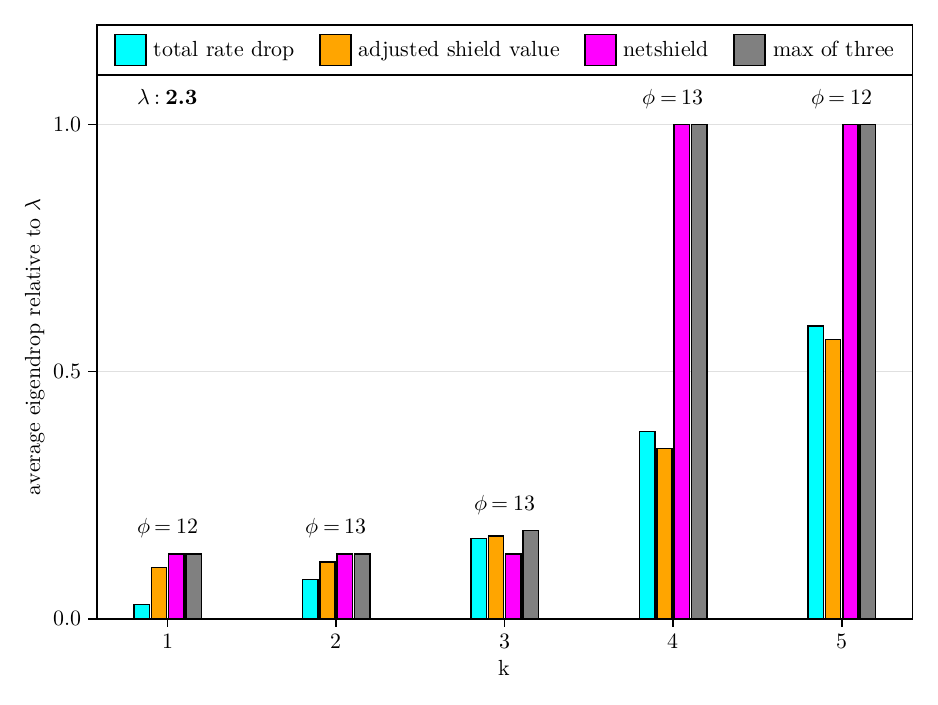}
        \caption{Average eigendrop}
    \end{subcaptionblock}
\end{figure}

\begin{figure}
    \centering
    \caption{Graph: ``karate club'' (non-weighted)$ $}
    \begin{subcaptionblock}{0.45\textwidth}
        \includegraphics[width=\textwidth]{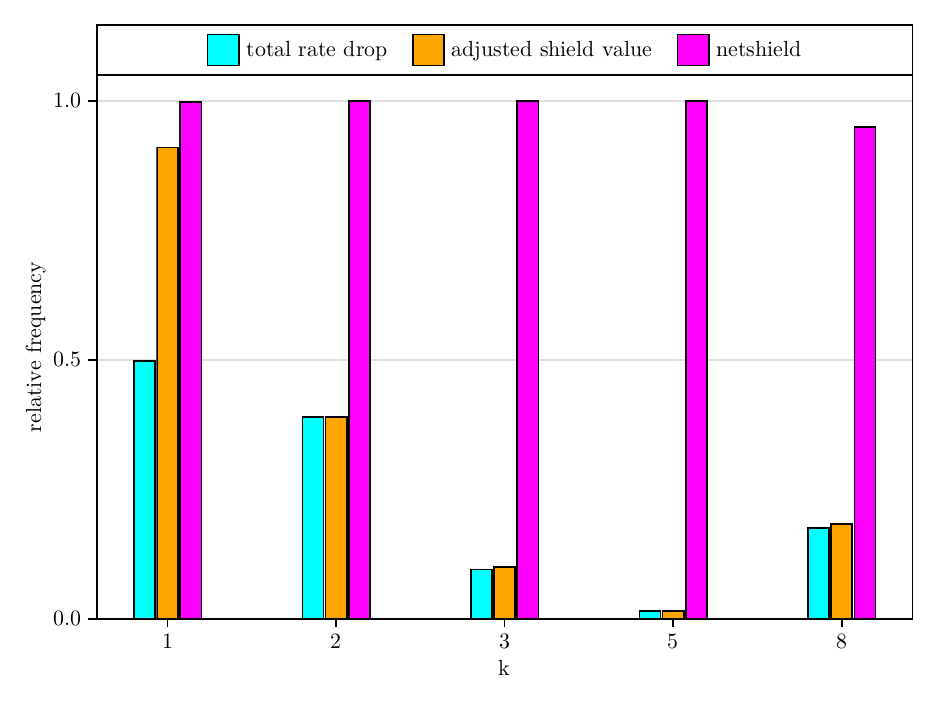}
        \caption{Origin of best eigendrop}
    \end{subcaptionblock}
    \begin{subcaptionblock}{0.45\textwidth}
        \includegraphics[width=\textwidth]{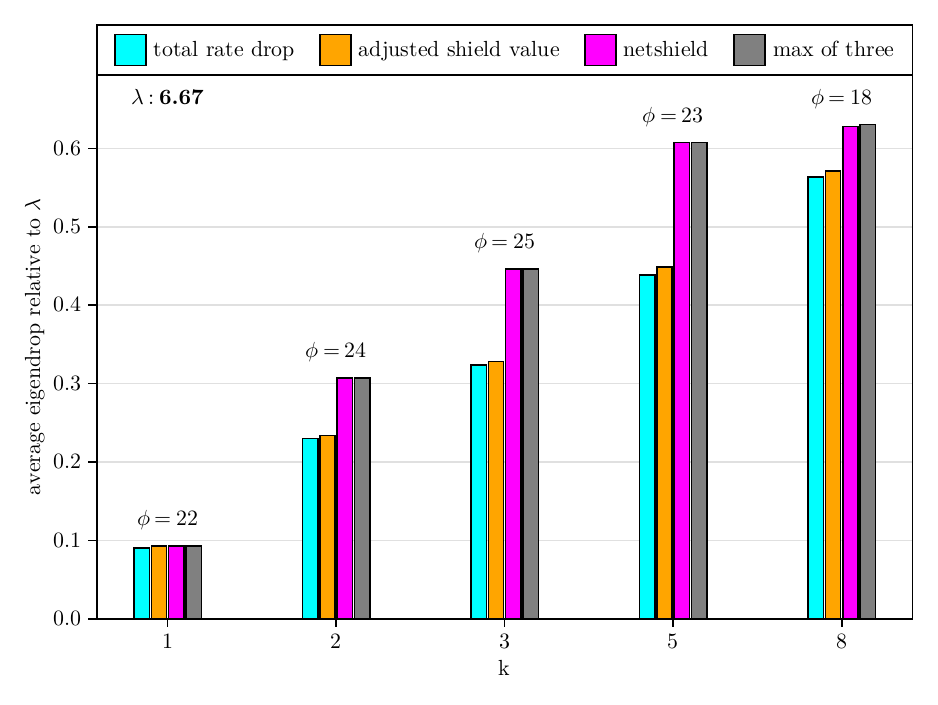}
        \caption{Average eigendrop}
    \end{subcaptionblock}
\end{figure}

\begin{figure}
    \centering
    \caption{Graph: ``conference 1'' (non-weighted)$ $}
    \begin{subcaptionblock}{0.45\textwidth}
        \includegraphics[width=\textwidth]{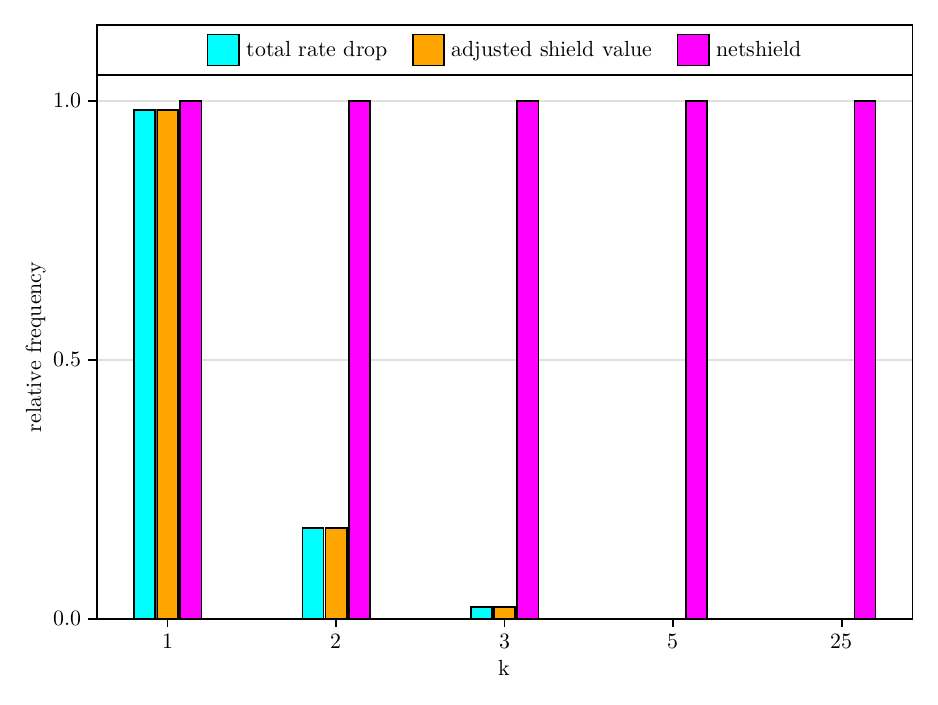}
        \caption{Origin of best eigendrop}
    \end{subcaptionblock}
    \begin{subcaptionblock}{0.45\textwidth}
        \includegraphics[width=\textwidth]{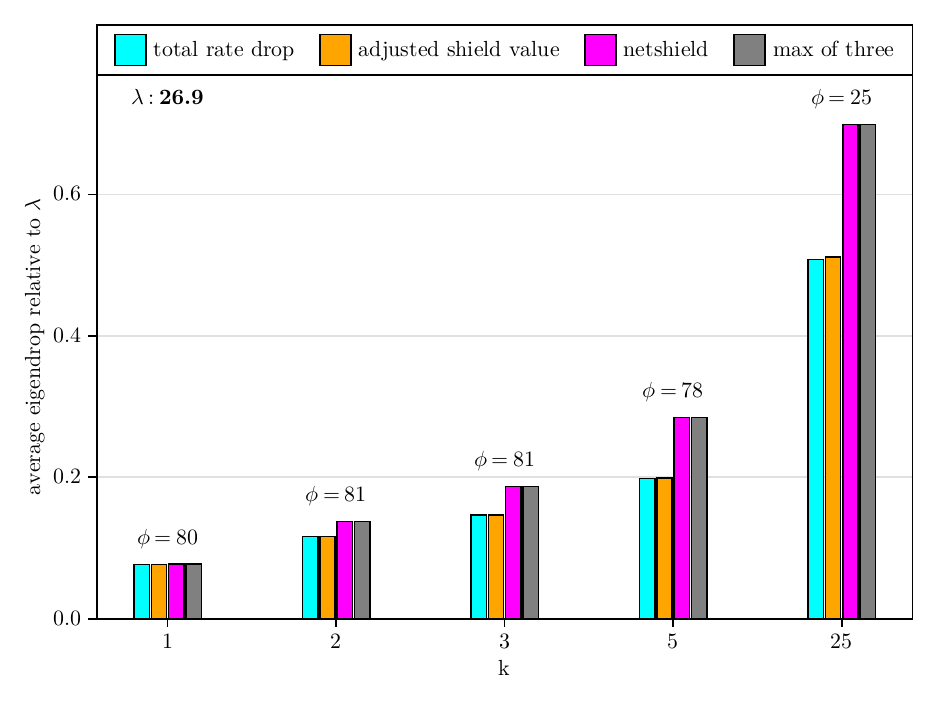}
        \caption{Average eigendrop}
    \end{subcaptionblock}
\end{figure}

\begin{figure}
    \centering
    \caption{Graph: ``conference 1'' (weighted)$ $}
    \begin{subcaptionblock}{0.45\textwidth}
        \includegraphics[width=\textwidth]{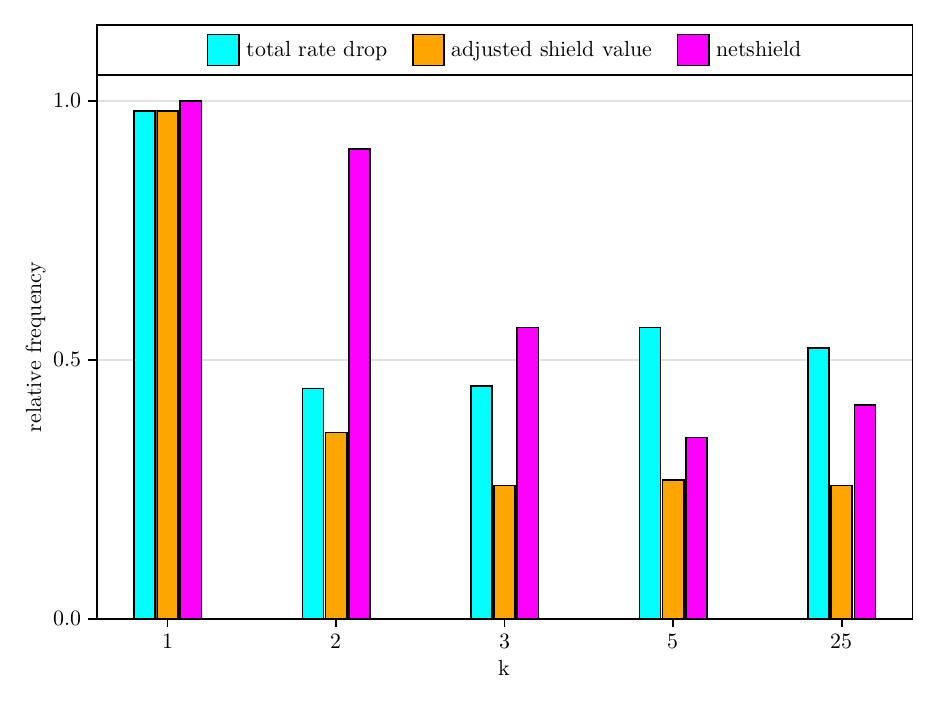}
        \caption{Origin of best eigendrop}
    \end{subcaptionblock}
    \begin{subcaptionblock}{0.45\textwidth}
        \includegraphics[width=\textwidth]{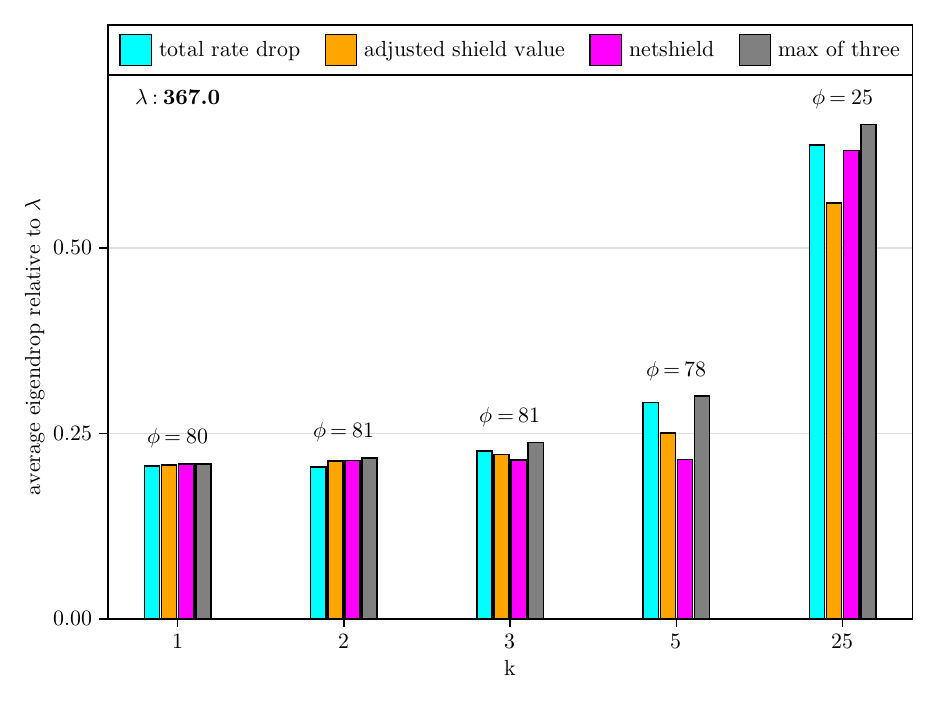}
        \caption{Average eigendrop}
    \end{subcaptionblock}
\end{figure}

\begin{figure}
    \centering
    \caption{Graph: ``conference 2'' (non-weighted)$ $}
    \begin{subcaptionblock}{0.45\textwidth}
        \includegraphics[width=\textwidth]{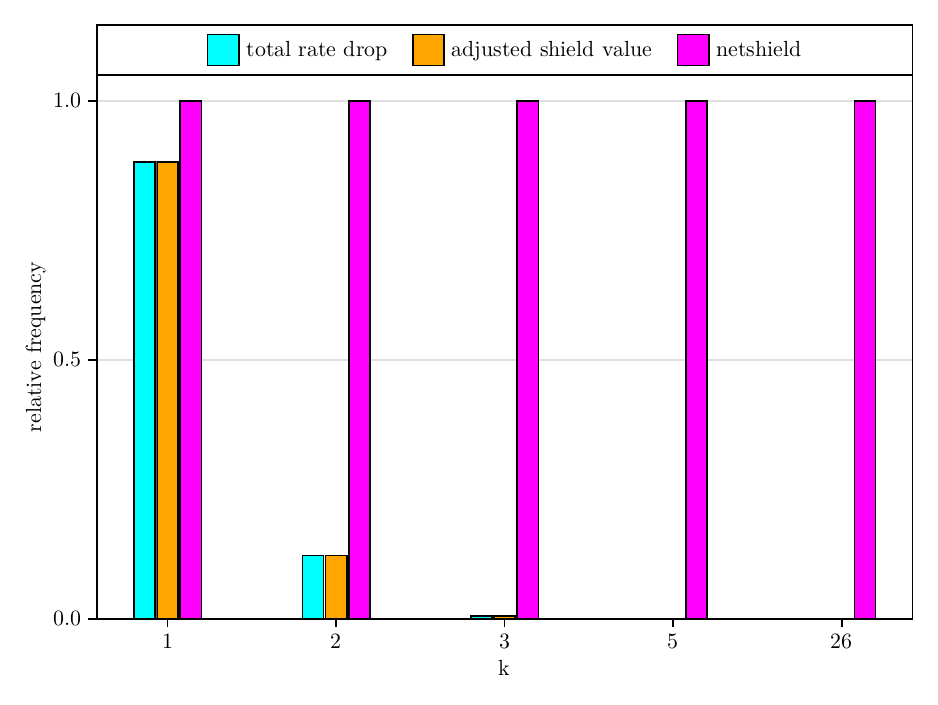}
        \caption{Origin of best eigendrop}
    \end{subcaptionblock}
    \begin{subcaptionblock}{0.45\textwidth}
        \includegraphics[width=\textwidth]{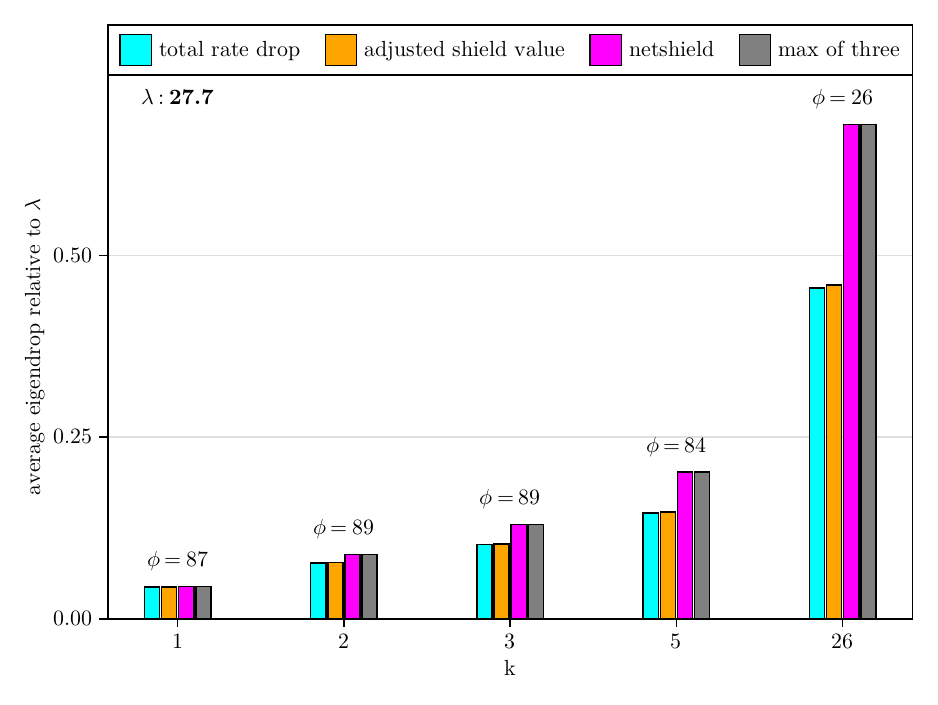}
        \caption{Average eigendrop}
    \end{subcaptionblock}
\end{figure}

\begin{figure}
    \centering
    \caption{Graph: ``conference 2'' (weighted)$ $}
    \begin{subcaptionblock}{0.45\textwidth}
        \includegraphics[width=\textwidth]{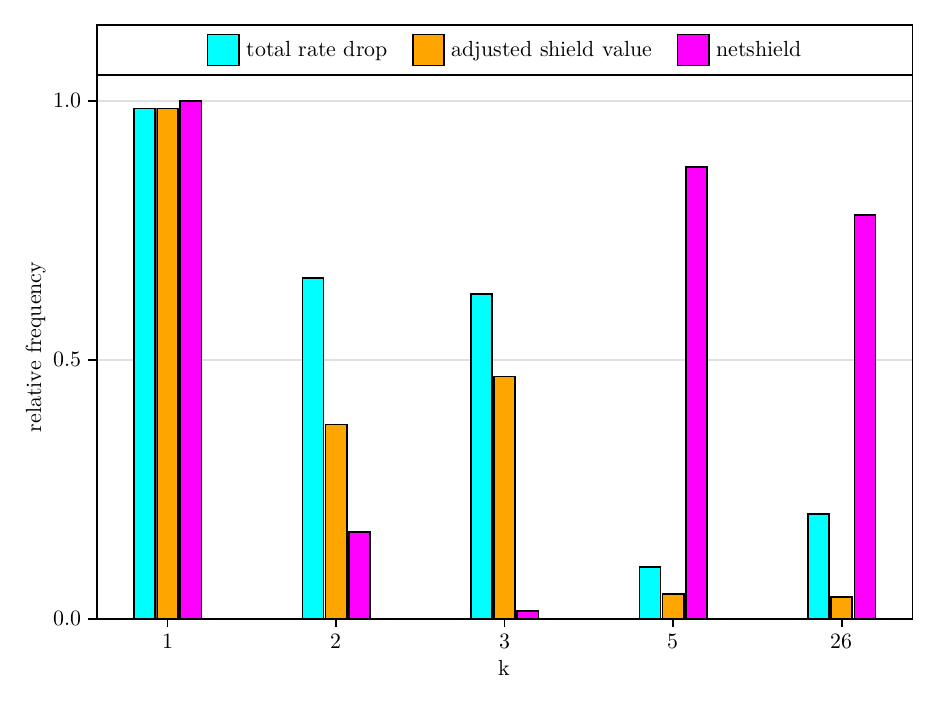}
        \caption{Origin of best eigendrop}
    \end{subcaptionblock}
    \begin{subcaptionblock}{0.45\textwidth}
        \includegraphics[width=\textwidth]{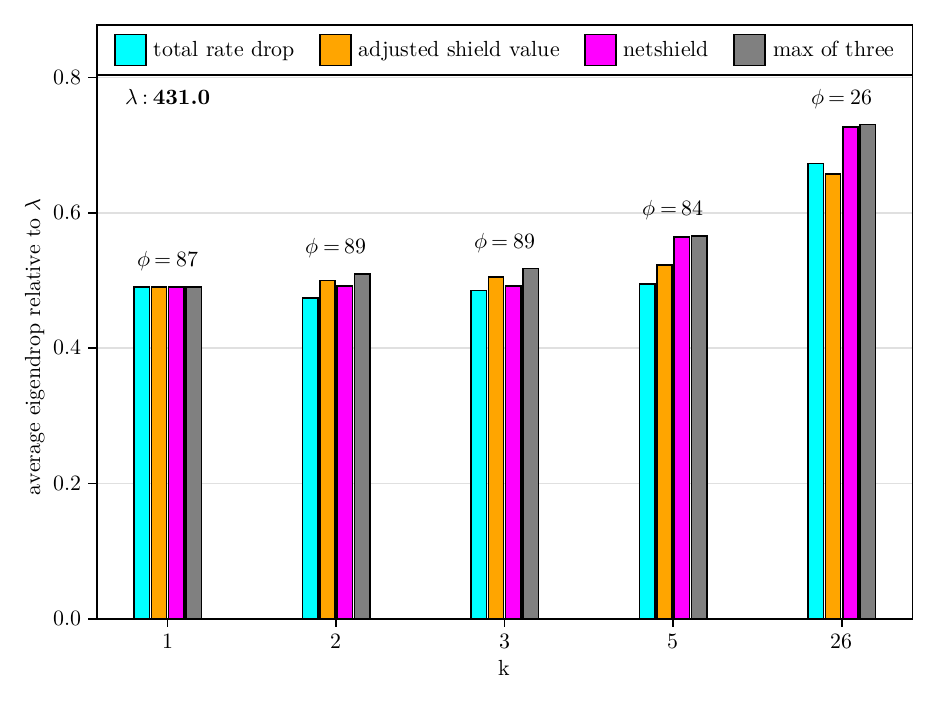}
        \caption{Average eigendrop}
    \end{subcaptionblock}
\end{figure}

\begin{figure}
    \centering
    \caption{Graph: ``conference 3'' (non-weighted)$ $}
    \begin{subcaptionblock}{0.45\textwidth}
        \includegraphics[width=\textwidth]{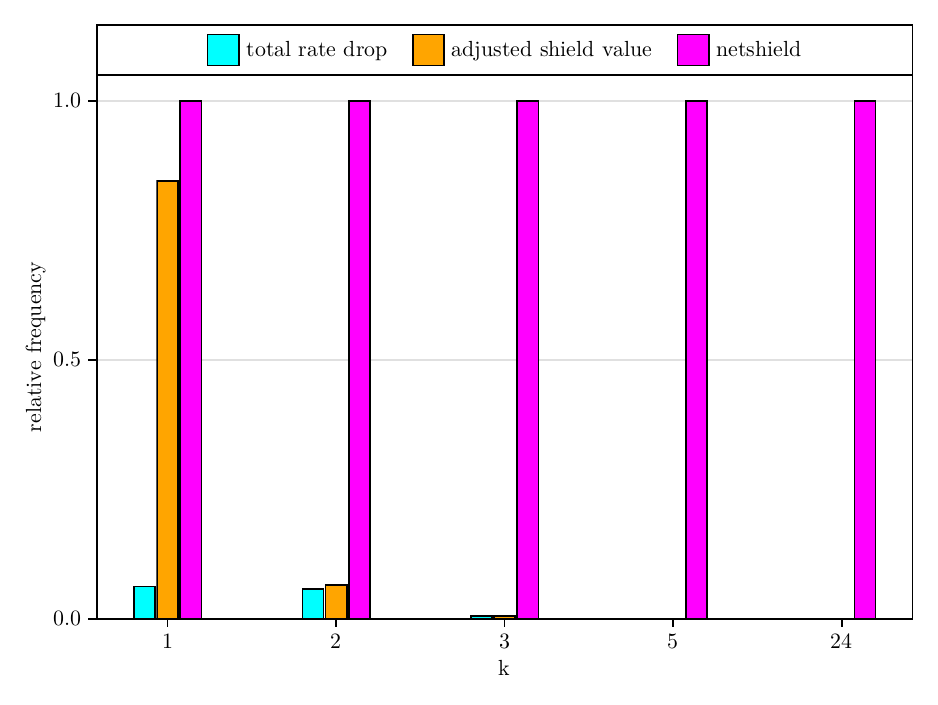}
        \caption{Origin of best eigendrop}
    \end{subcaptionblock}
    \begin{subcaptionblock}{0.45\textwidth}
        \includegraphics[width=\textwidth]{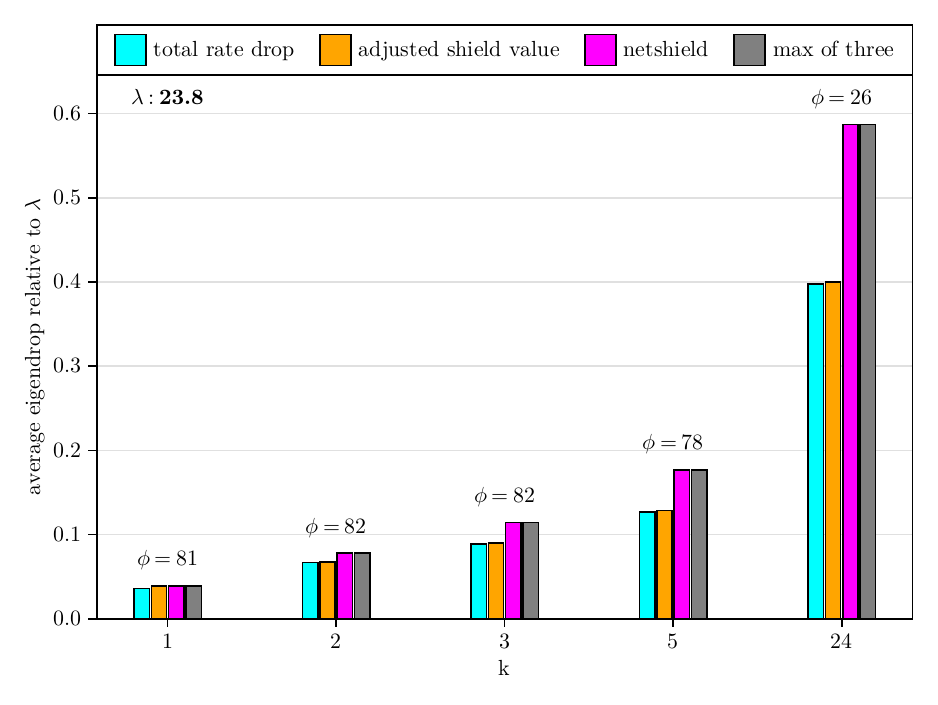}
        \caption{Average eigendrop}
    \end{subcaptionblock}
\end{figure}

\begin{figure}
    \centering
    \caption{Graph: ``conference 3'' (weighted)$ $}
    \begin{subcaptionblock}{0.45\textwidth}
        \includegraphics[width=\textwidth]{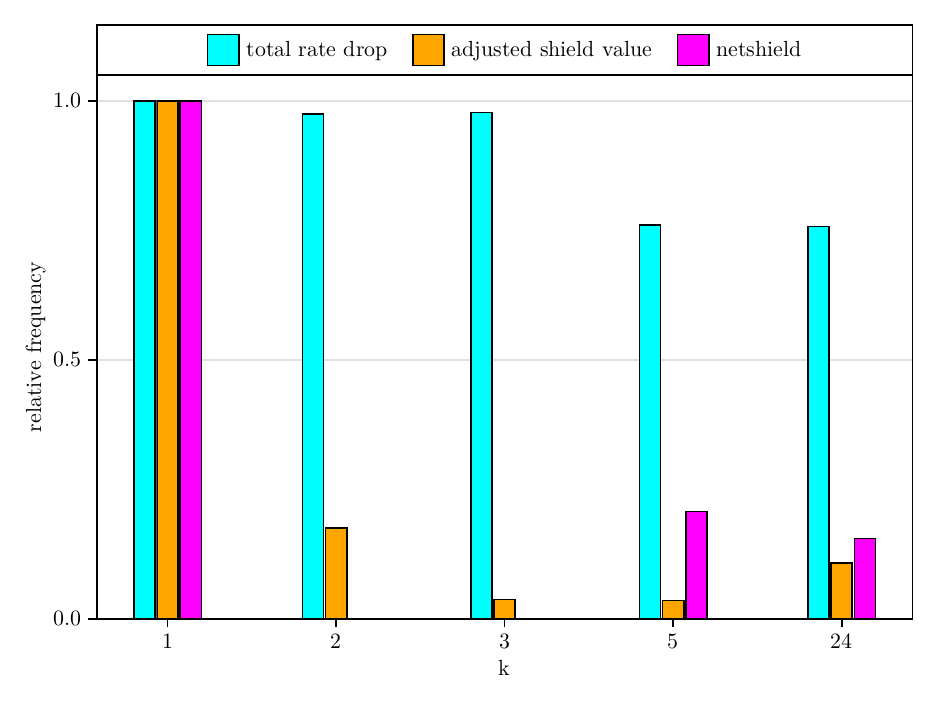}
        \caption{Origin of best eigendrop}
    \end{subcaptionblock}
    \begin{subcaptionblock}{0.45\textwidth}
        \includegraphics[width=\textwidth]{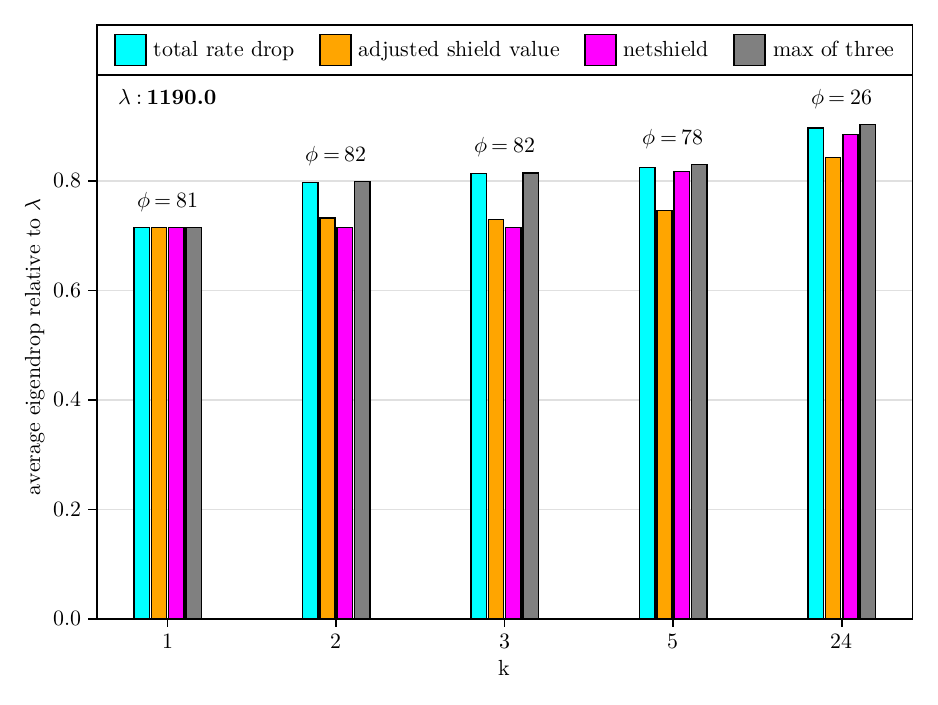}
        \caption{Average eigendrop}
    \end{subcaptionblock}
\end{figure}

\begin{figure}
    \centering
    \caption{Graph: ``airport 1'' (non-weighted)$ $}
    \begin{subcaptionblock}{0.45\textwidth}
        \includegraphics[width=\textwidth]{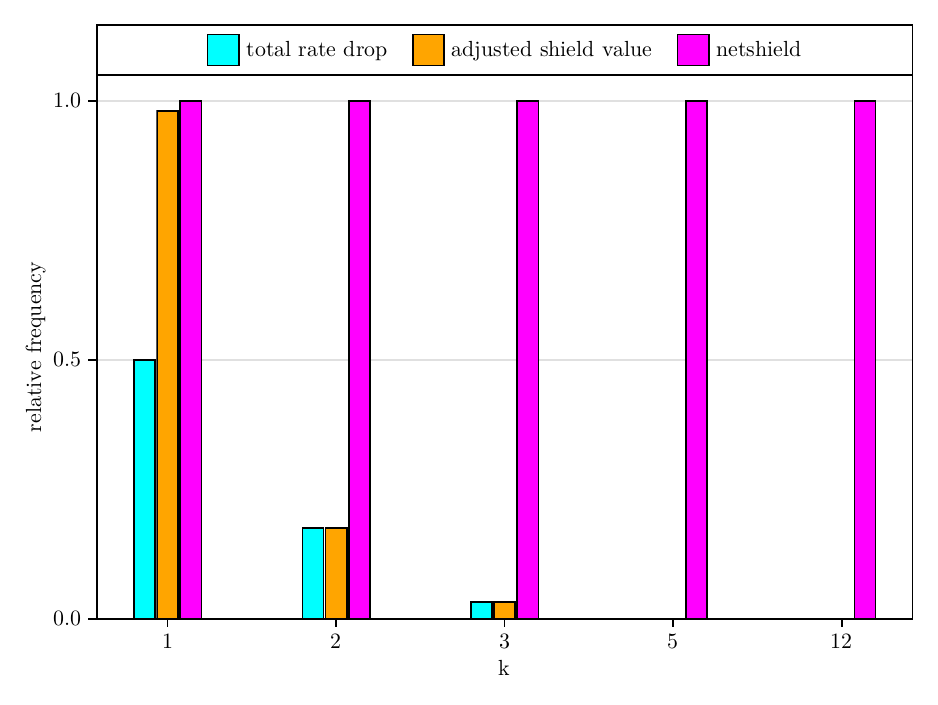}
        \caption{Origin of best eigendrop}
    \end{subcaptionblock}
    \begin{subcaptionblock}{0.45\textwidth}
        \includegraphics[width=\textwidth]{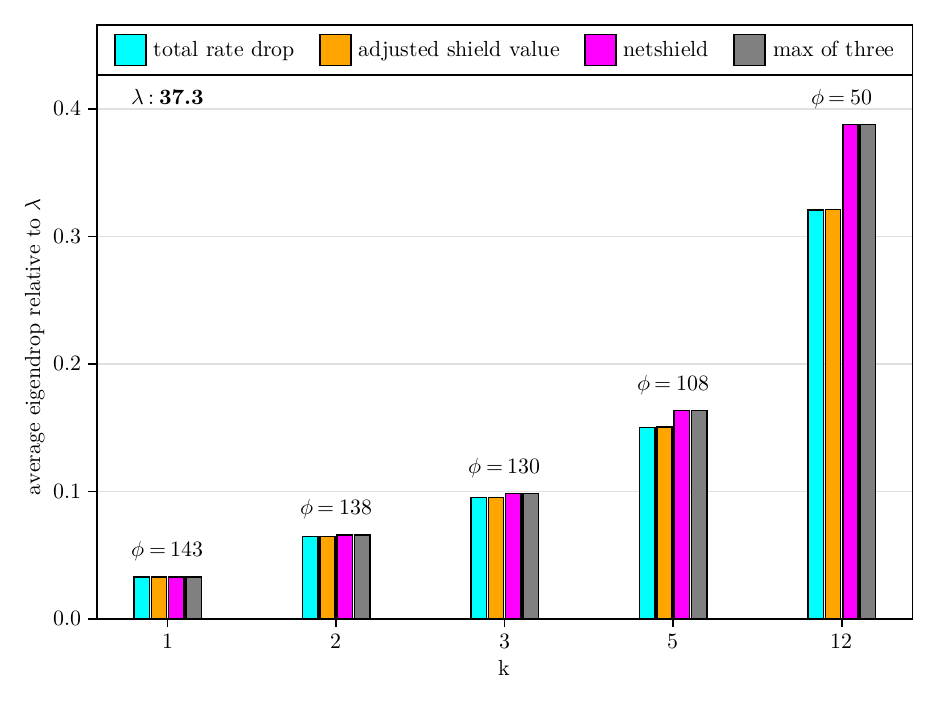}
        \caption{Average eigendrop}
    \end{subcaptionblock}
\end{figure}

\begin{figure}
    \centering
    \caption{Graph: ``airport 2'' (non-weighted)$ $}
    \begin{subcaptionblock}{0.45\textwidth}
        \includegraphics[width=\textwidth]{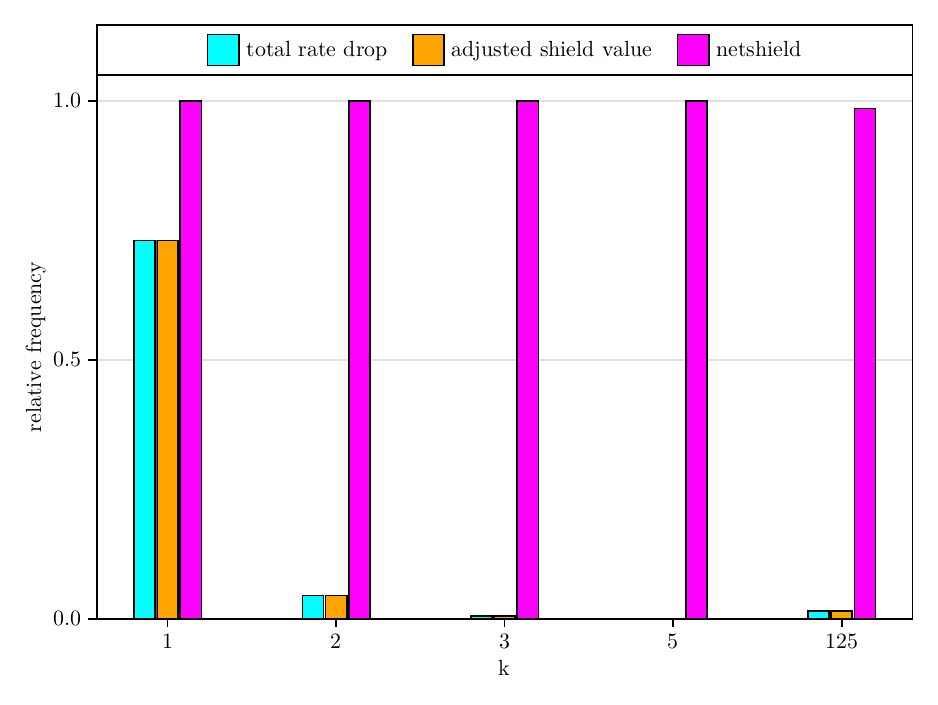}
        \caption{Origin of best eigendrop}
    \end{subcaptionblock}
    \begin{subcaptionblock}{0.45\textwidth}
        \includegraphics[width=\textwidth]{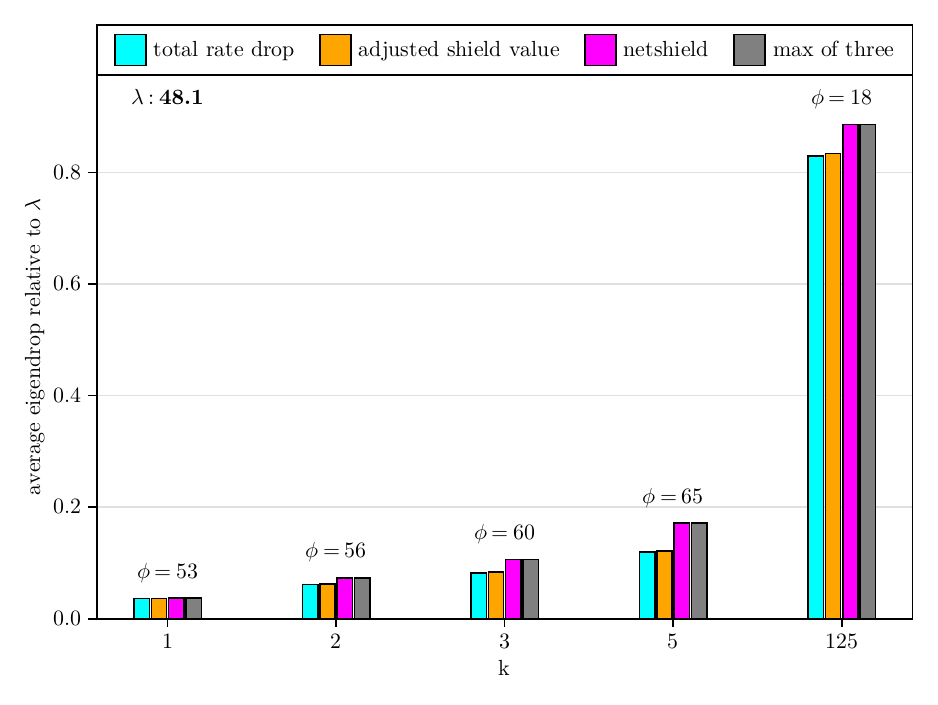}
        \caption{Average eigendrop}
    \end{subcaptionblock}
\end{figure}

\begin{figure}
    \centering
    \caption{Graph: ``airport 2'' (weighted)$ $}
    \begin{subcaptionblock}{0.45\textwidth}
        \includegraphics[width=\textwidth]{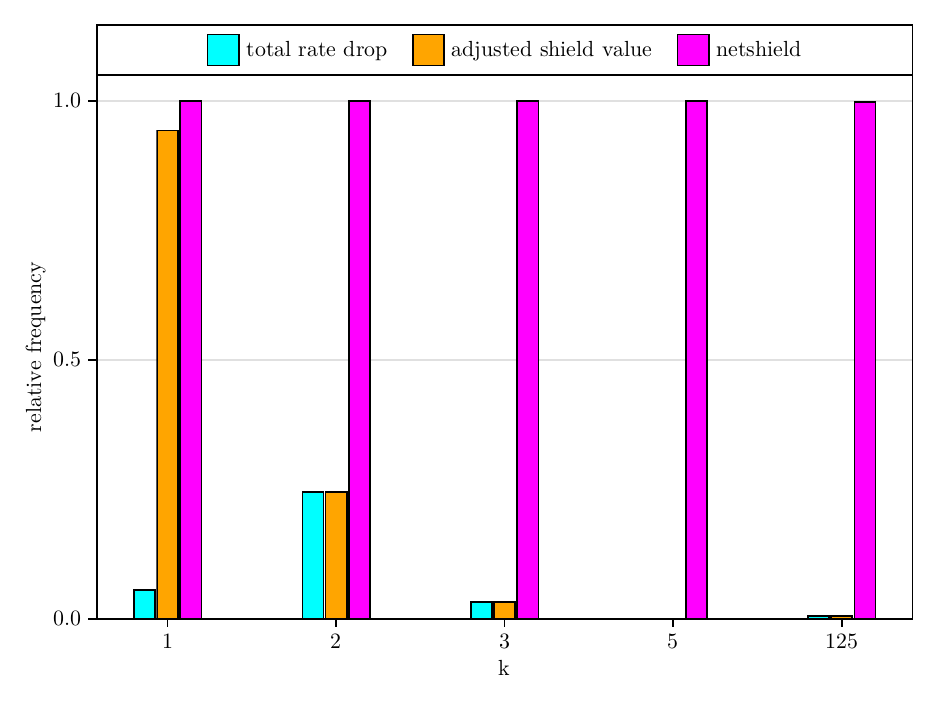}
        \caption{Origin of best eigendrop}
    \end{subcaptionblock}
    \begin{subcaptionblock}{0.45\textwidth}
        \includegraphics[width=\textwidth]{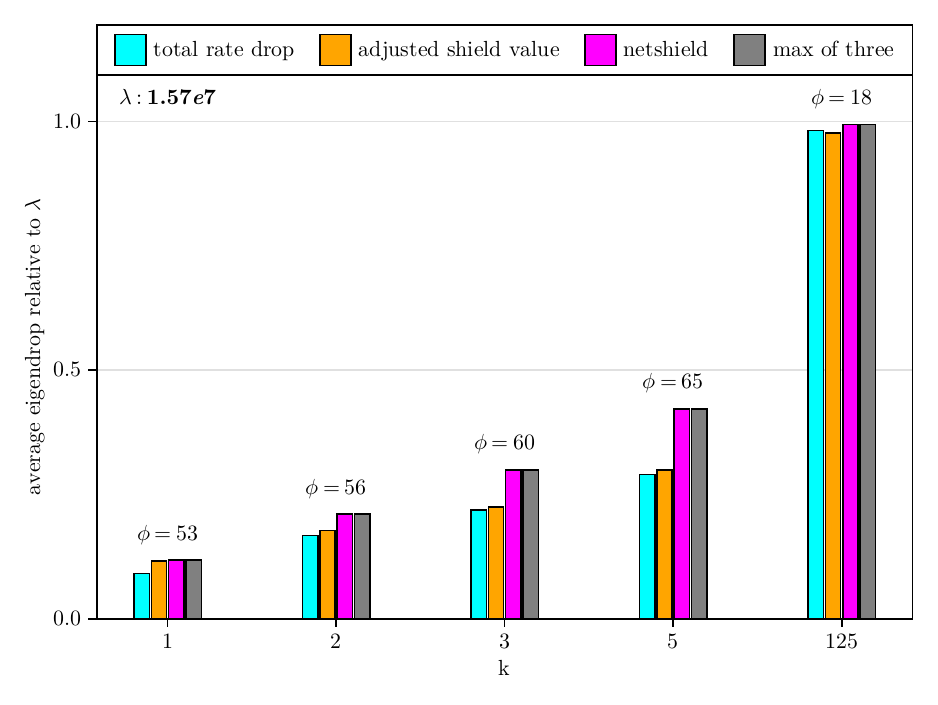}
        \caption{Average eigendrop}
    \end{subcaptionblock}
\end{figure}

\begin{figure}
    \centering
    \caption{Graph: ``airport 3'' (non-weighted)$ $}
    \begin{subcaptionblock}{0.45\textwidth}
        \includegraphics[width=\textwidth]{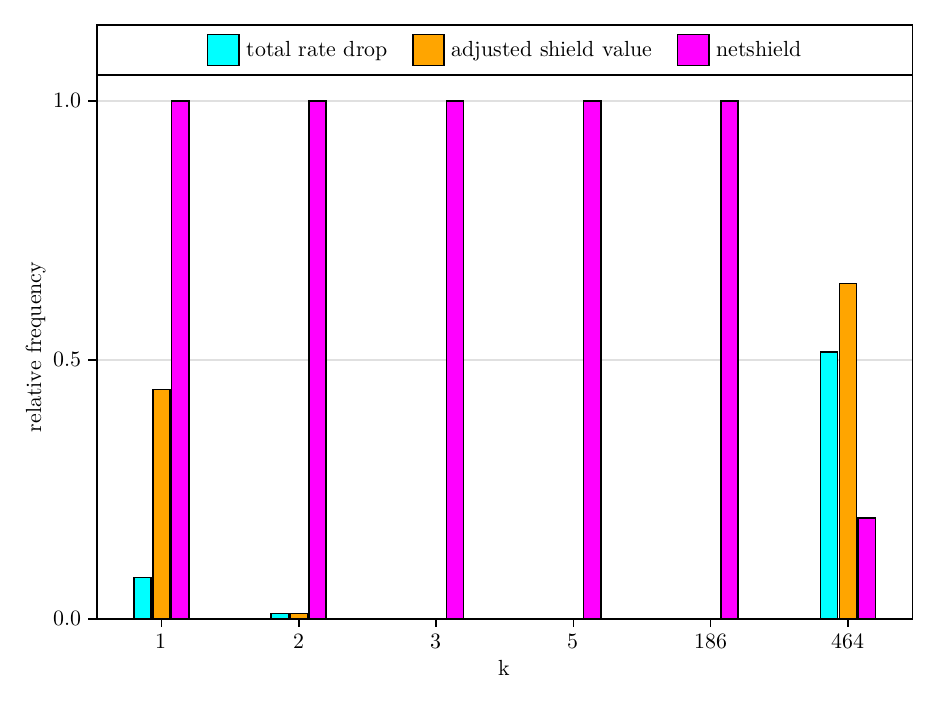}
        \caption{Origin of best eigendrop}
    \end{subcaptionblock}
    \begin{subcaptionblock}{0.45\textwidth}
        \includegraphics[width=\textwidth]{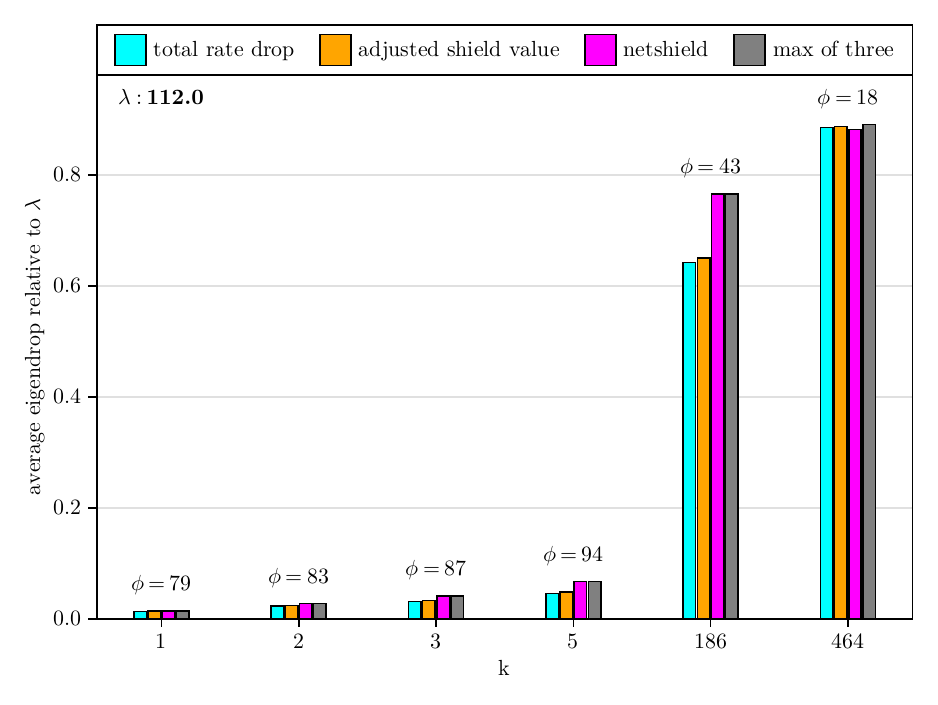}
        \caption{Average eigendrop}
    \end{subcaptionblock}
\end{figure}

\begin{figure}
    \centering
    \caption{Graph: ``airport 4'' (non-weighted)$ $}
    \begin{subcaptionblock}{0.45\textwidth}
        \includegraphics[width=\textwidth]{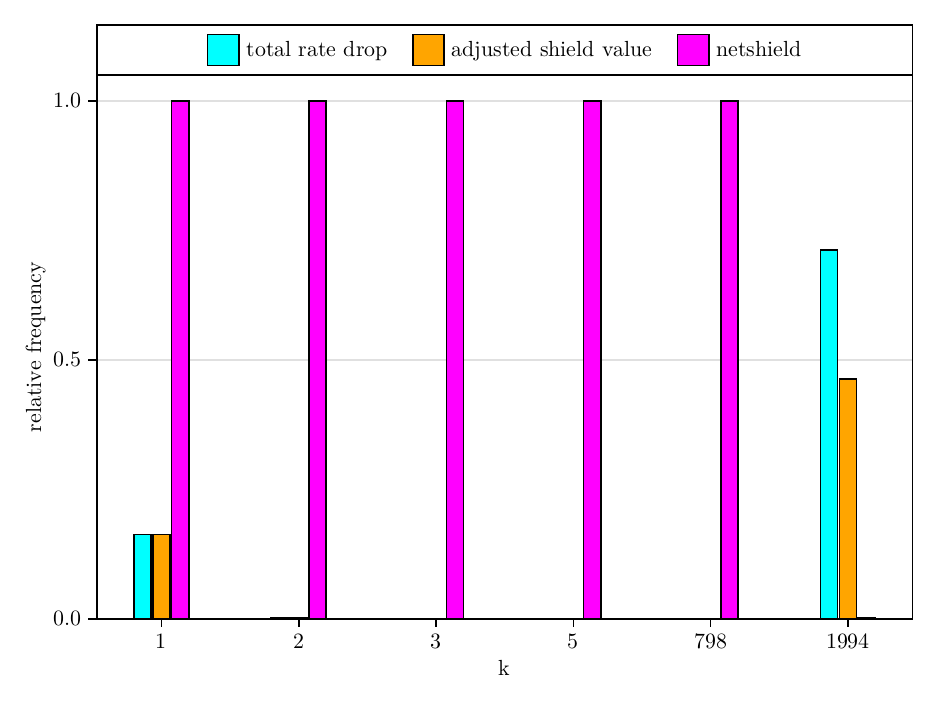}
        \caption{Origin of best eigendrop}
    \end{subcaptionblock}
    \begin{subcaptionblock}{0.45\textwidth}
        \includegraphics[width=\textwidth]{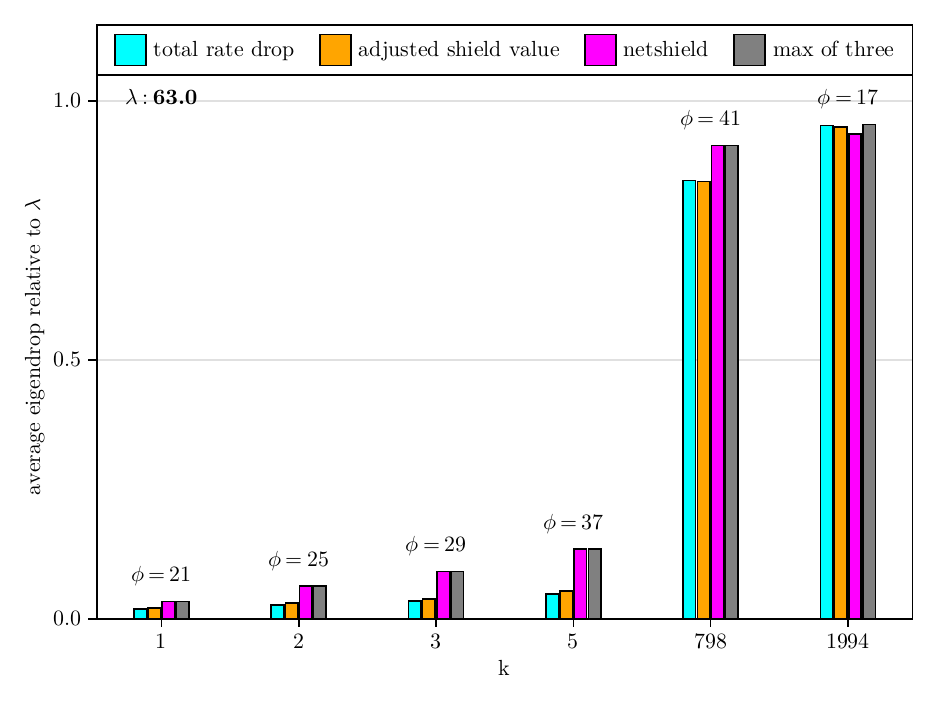}
        \caption{Average eigendrop}
    \end{subcaptionblock}
\end{figure}

\begin{figure}
    \centering
    \caption{Graph: ``airport 4'' (weighted)$ $}
    \begin{subcaptionblock}{0.45\textwidth}
        \includegraphics[width=\textwidth]{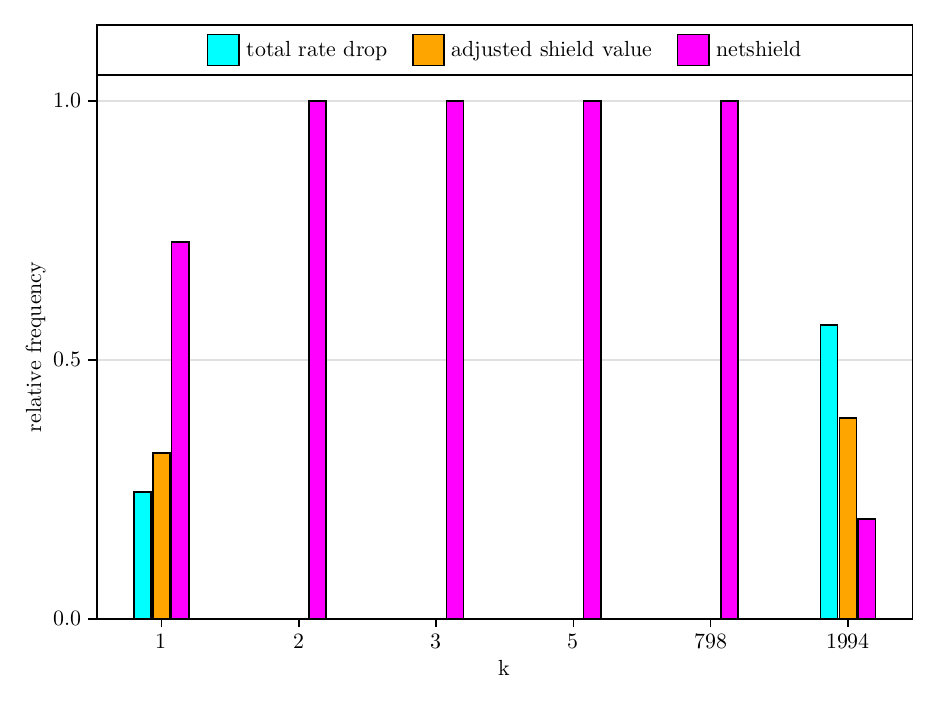}
        \caption{Origin of best eigendrop}
    \end{subcaptionblock}
    \begin{subcaptionblock}{0.45\textwidth}
        \includegraphics[width=\textwidth]{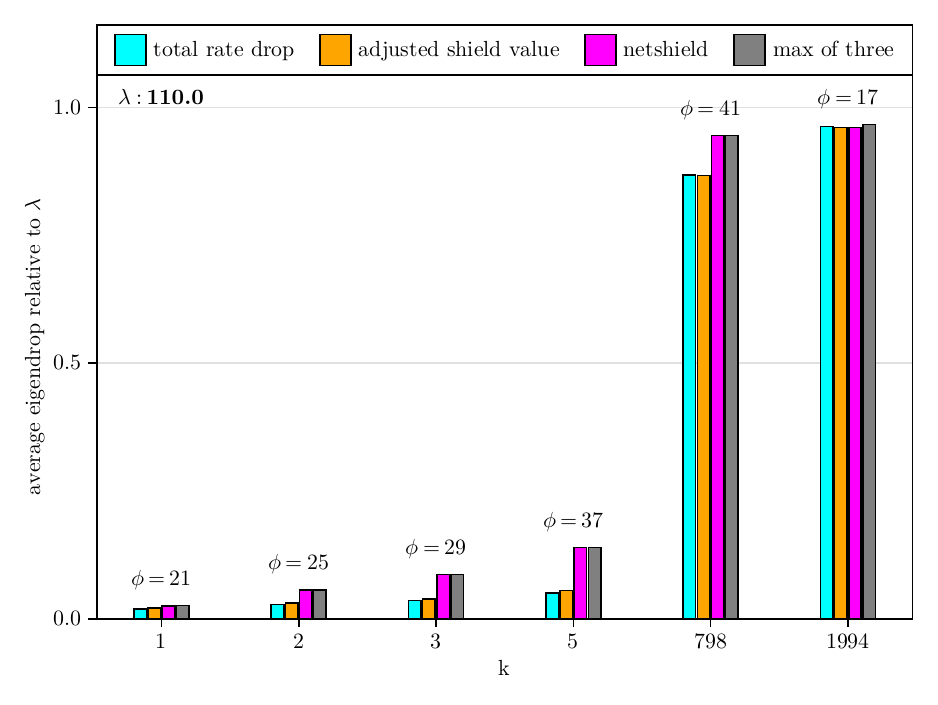}
        \caption{Average eigendrop}
    \end{subcaptionblock}
\end{figure}

\begin{figure}
    \centering
    \caption{Graph: ``rfid'' (non-weighted)$ $}
    \begin{subcaptionblock}{0.45\textwidth}
        \includegraphics[width=\textwidth]{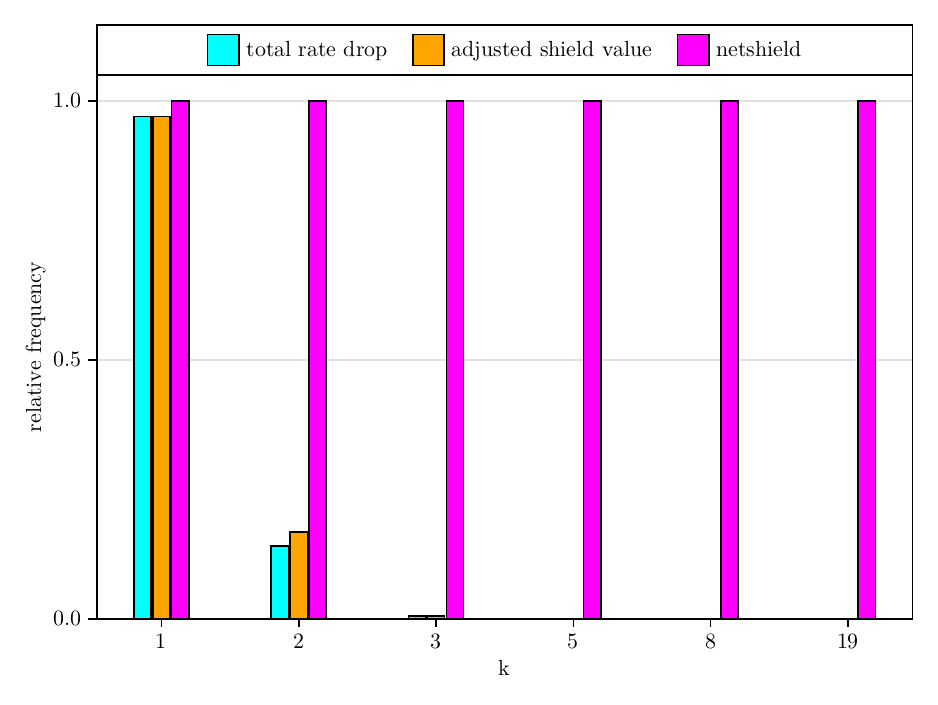}
        \caption{Origin of best eigendrop}
    \end{subcaptionblock}
    \begin{subcaptionblock}{0.45\textwidth}
        \includegraphics[width=\textwidth]{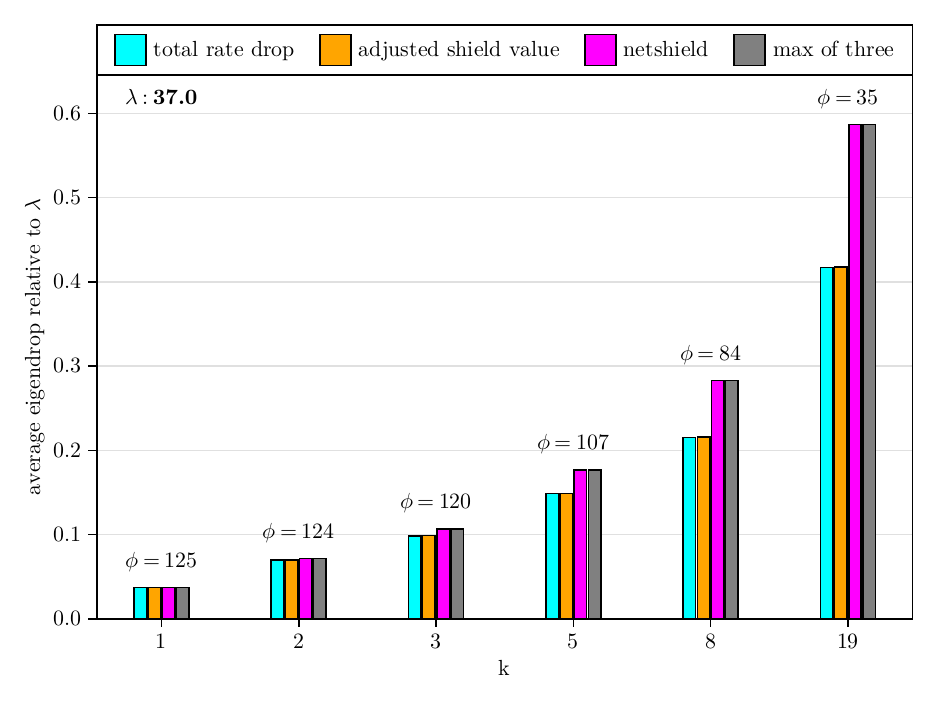}
        \caption{Average eigendrop}
    \end{subcaptionblock}
\end{figure}

\begin{figure}
    \centering
    \caption{Graph: ``rfid'' (weighted)$ $}
    \begin{subcaptionblock}{0.45\textwidth}
        \includegraphics[width=\textwidth]{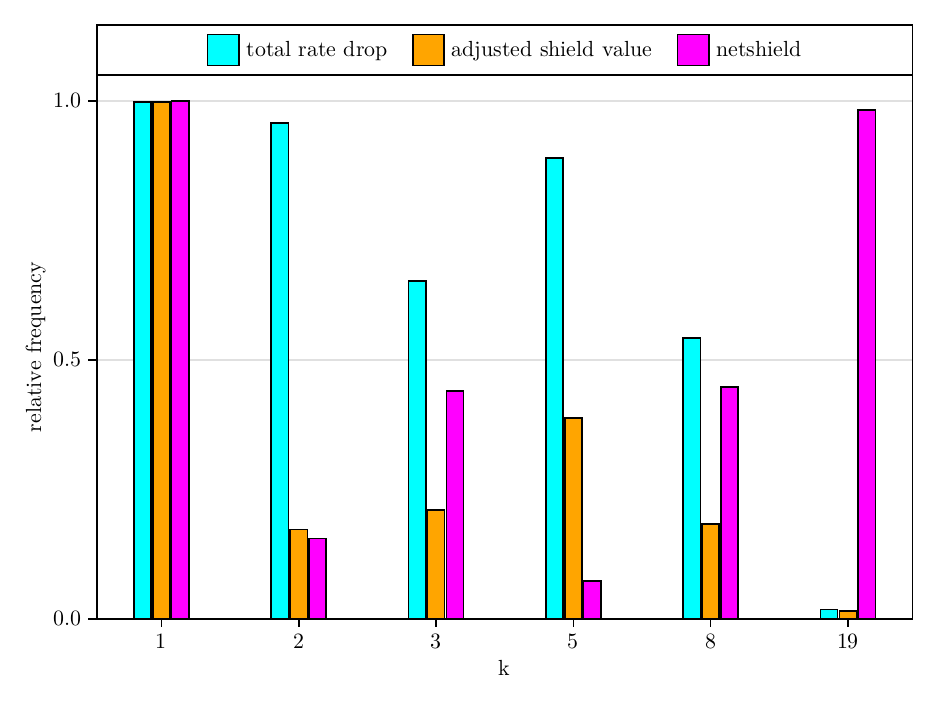}
        \caption{Origin of best eigendrop}
    \end{subcaptionblock}
    \begin{subcaptionblock}{0.45\textwidth}
        \includegraphics[width=\textwidth]{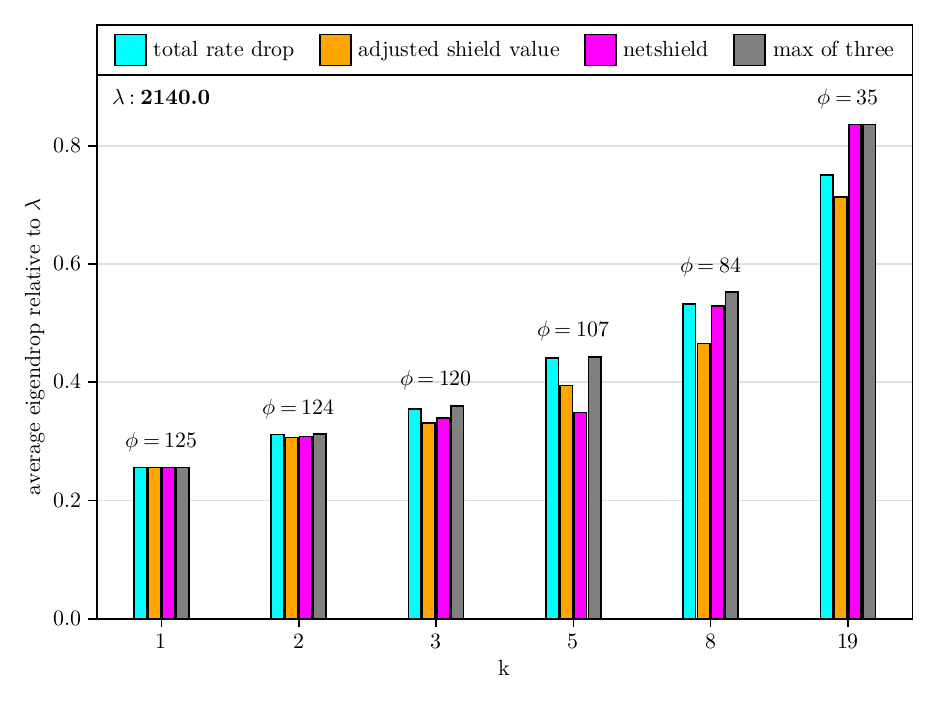}
        \caption{Average eigendrop}
    \end{subcaptionblock}
\end{figure}

\begin{figure}
    \centering
    \caption{Graph: ``UKfaculty'' (non-weighted)$ $}
    \begin{subcaptionblock}{0.45\textwidth}
        \includegraphics[width=\textwidth]{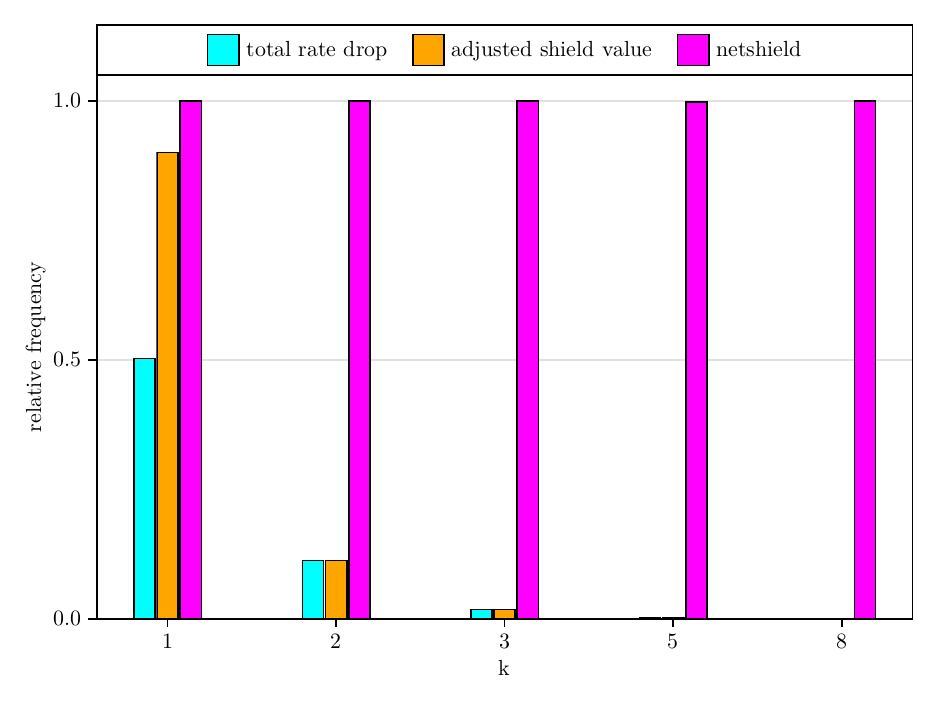}
        \caption{Origin of best eigendrop}
    \end{subcaptionblock}
    \begin{subcaptionblock}{0.45\textwidth}
        \includegraphics[width=\textwidth]{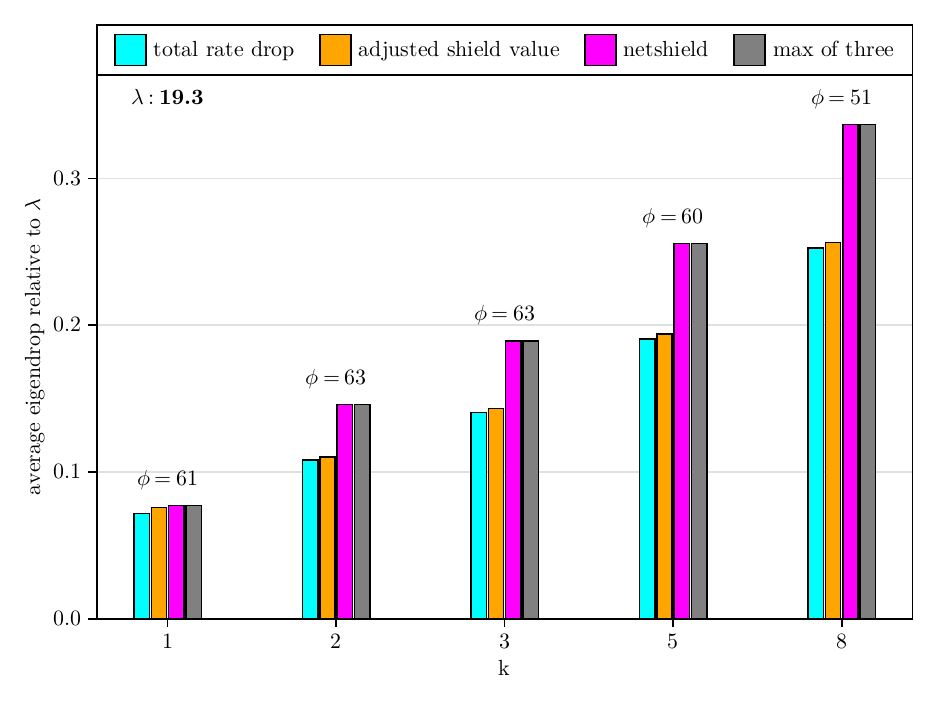}
        \caption{Average eigendrop}
    \end{subcaptionblock}
\end{figure}

\begin{figure}
    \centering
    \caption{Graph: ``UKfaculty'' (weighted)$ $}
    \begin{subcaptionblock}{0.45\textwidth}
        \includegraphics[width=\textwidth]{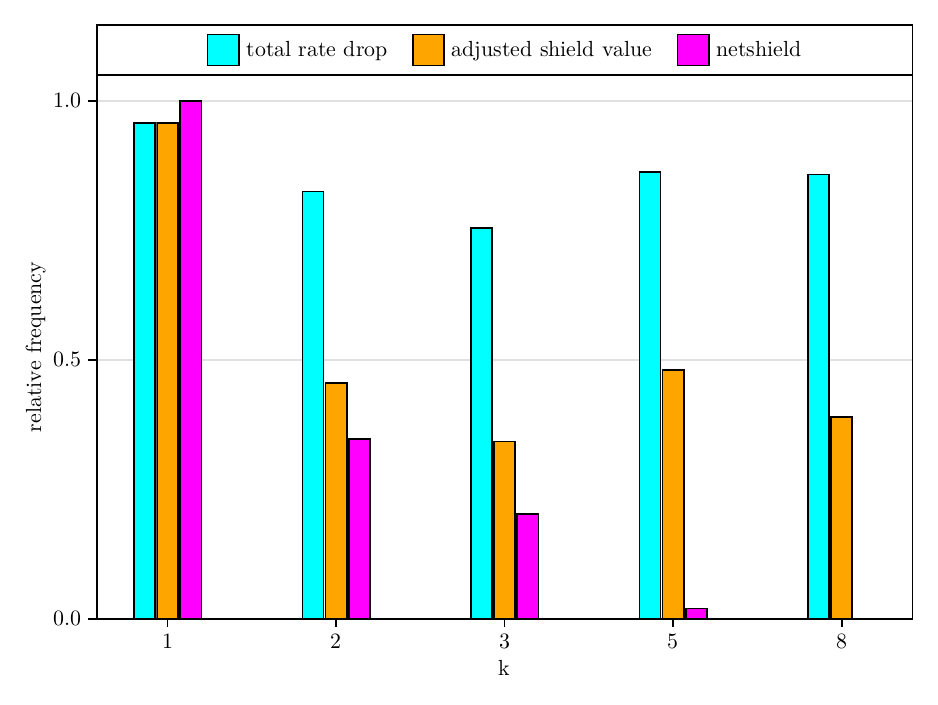}
        \caption{Origin of best eigendrop}
    \end{subcaptionblock}
    \begin{subcaptionblock}{0.45\textwidth}
        \includegraphics[width=\textwidth]{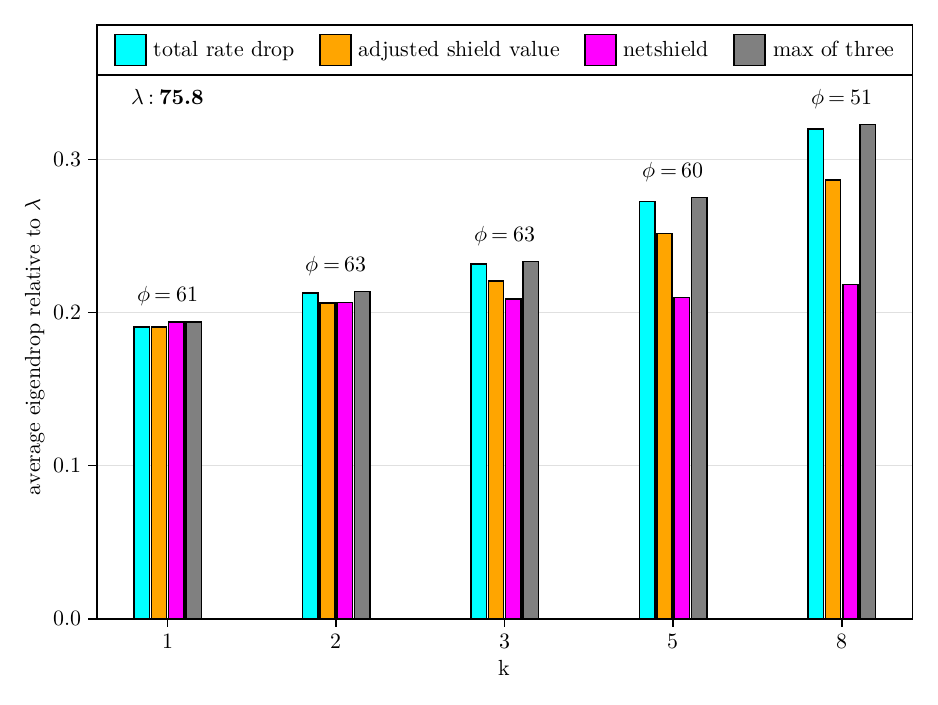}
        \caption{Average eigendrop}
RAN-eigendrop-origin    \end{subcaptionblock}
\end{figure}

\begin{figure}
    \centering
    \caption{Graph: ``enron'' (non-weighted)$ $}
    \begin{subcaptionblock}{0.45\textwidth}
        \includegraphics[width=\textwidth]{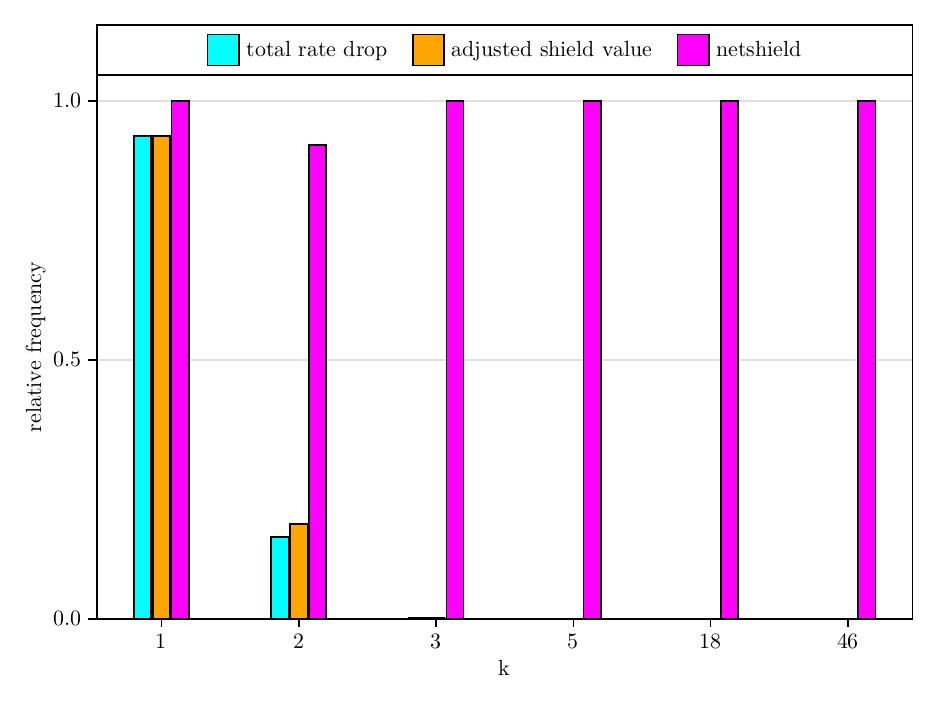}
        \caption{Origin of best eigendrop}
    \end{subcaptionblock}
    \begin{subcaptionblock}{0.45\textwidth}
        \includegraphics[width=\textwidth]{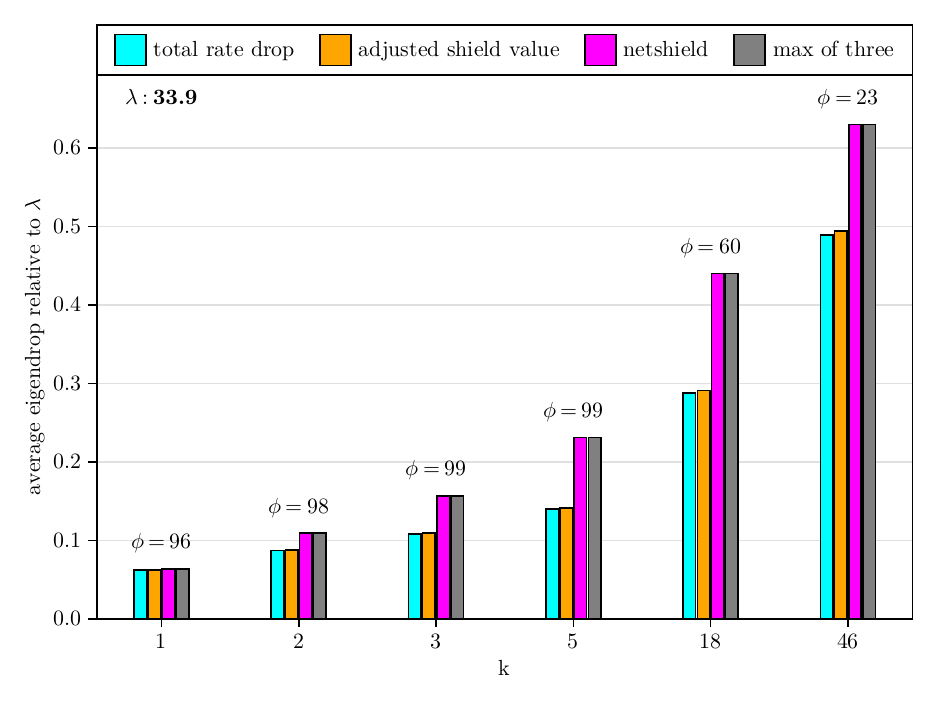}
        \caption{Average eigendrop}
    \end{subcaptionblock}
\end{figure}

\subsection{Eigendrop distribution}\label{sec:figs-eigendropDistribution}

\begin{figure}
    \centering $ $
    \caption{Graph: ``karate club'' (non-weighted) - eigendrop distribution}
      \begin{subcaptionblock}{0.45\textwidth}
          \includegraphics[width=\textwidth]{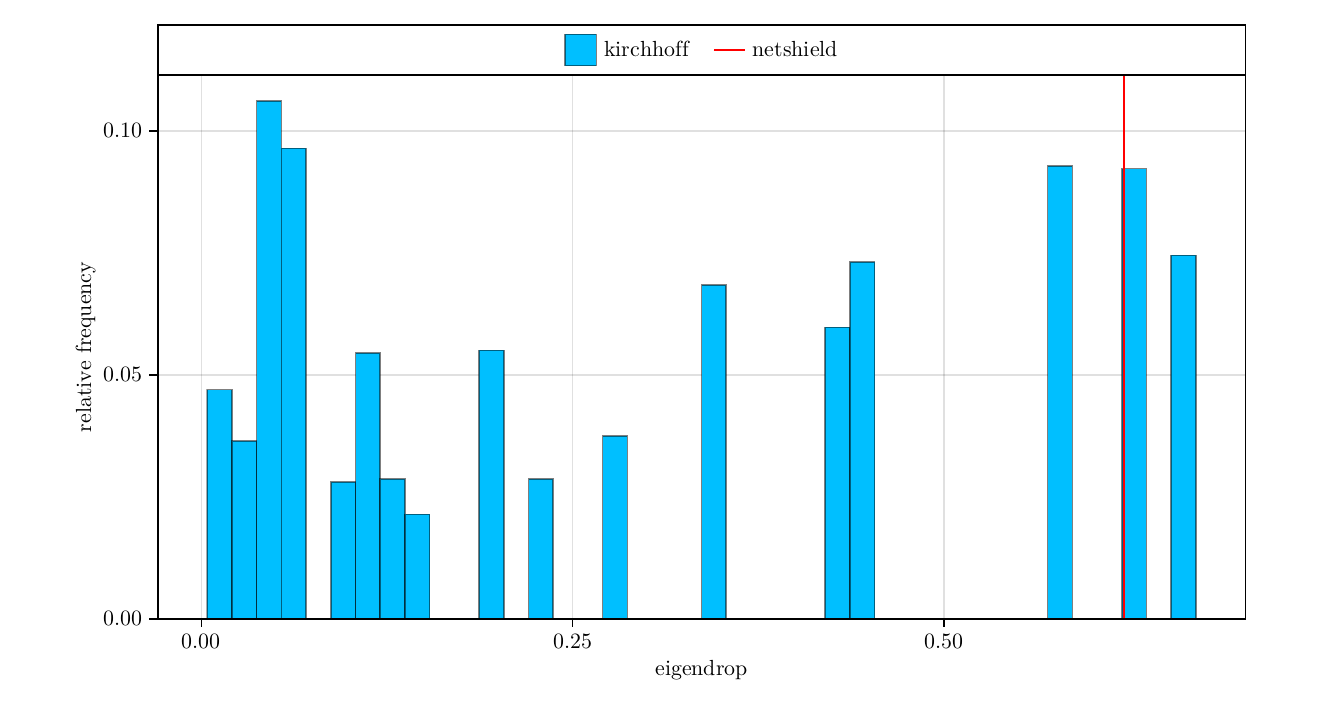}
          \caption{$k = 1$}
      \end{subcaptionblock}
      \begin{subcaptionblock}{0.45\textwidth}
          \includegraphics[width=\textwidth]{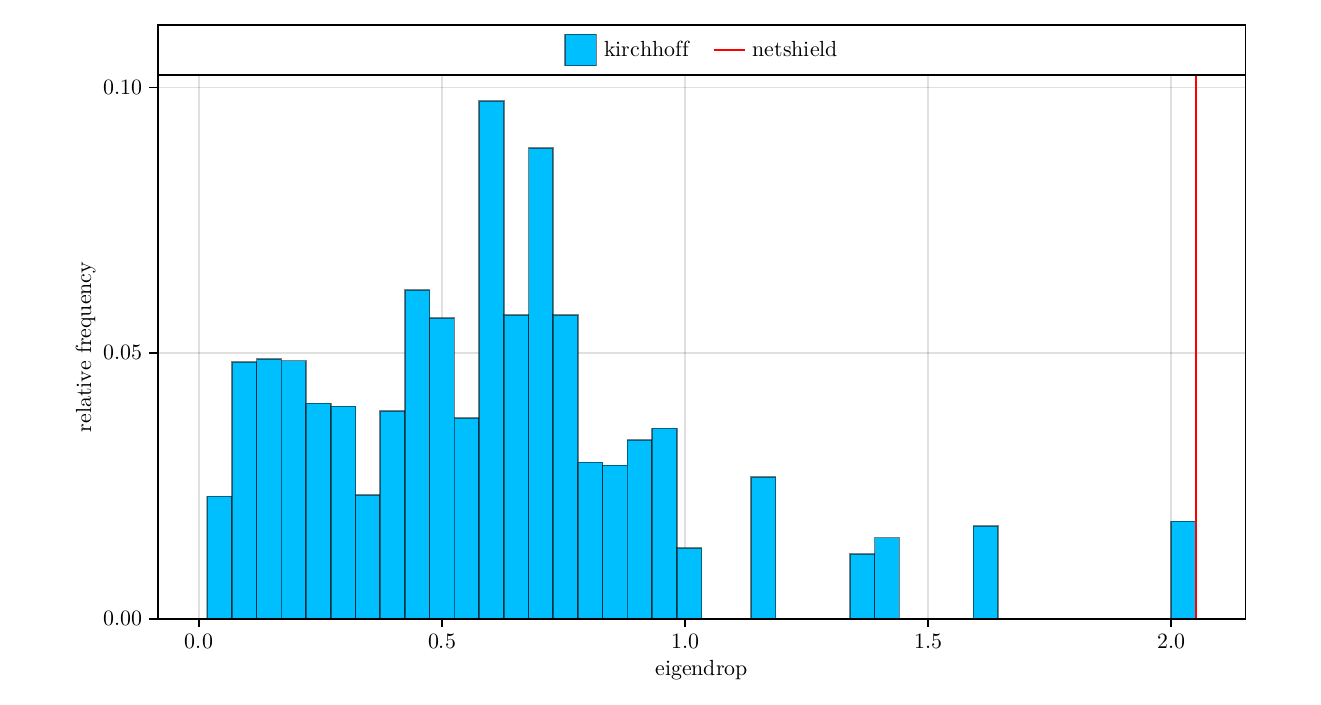}
          \caption{$k = 2$}
      \end{subcaptionblock}
      \\
      \begin{subcaptionblock}{0.45\textwidth}
          \includegraphics[width=\textwidth]{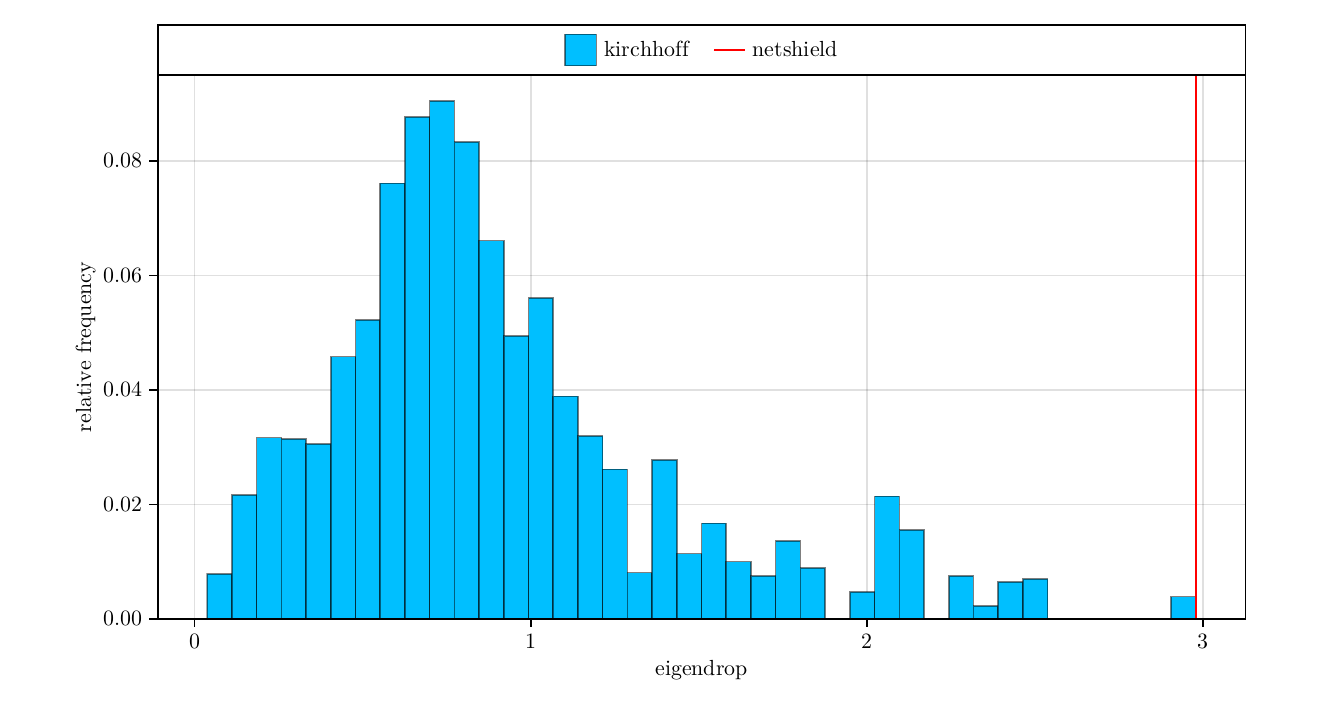}
          \caption{$k = 3$}
      \end{subcaptionblock}
      \begin{subcaptionblock}{0.45\textwidth}
          \includegraphics[width=\textwidth]{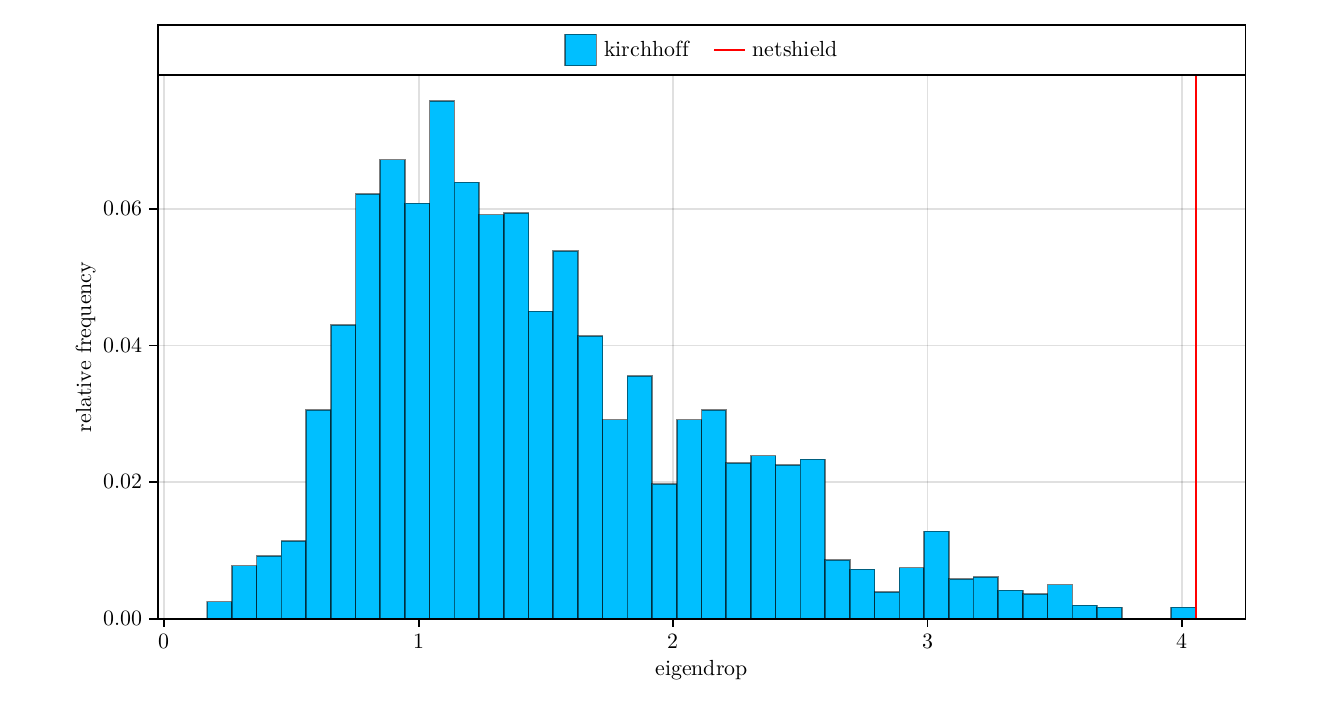}
          \caption{$k = 5$}
      \end{subcaptionblock}
      \\
      \begin{subcaptionblock}{0.45\textwidth}
          \includegraphics[width=\textwidth]{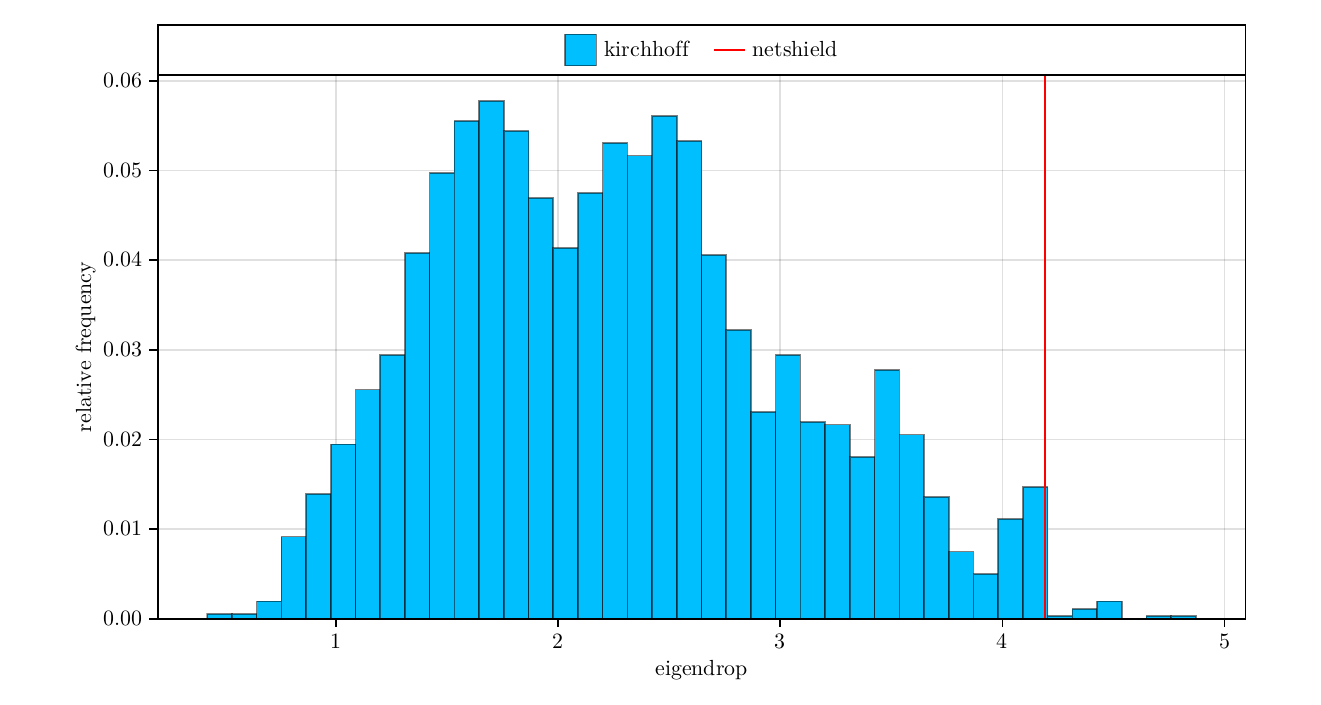}
          \caption{$k = 8$}
      \end{subcaptionblock}
\end{figure}

\begin{figure}
    \centering $ $
    \caption{Graph: ``conference 1'' (non-weighted) - eigendrop distribution}
      \begin{subcaptionblock}{0.45\textwidth}
          \includegraphics[width=\textwidth]{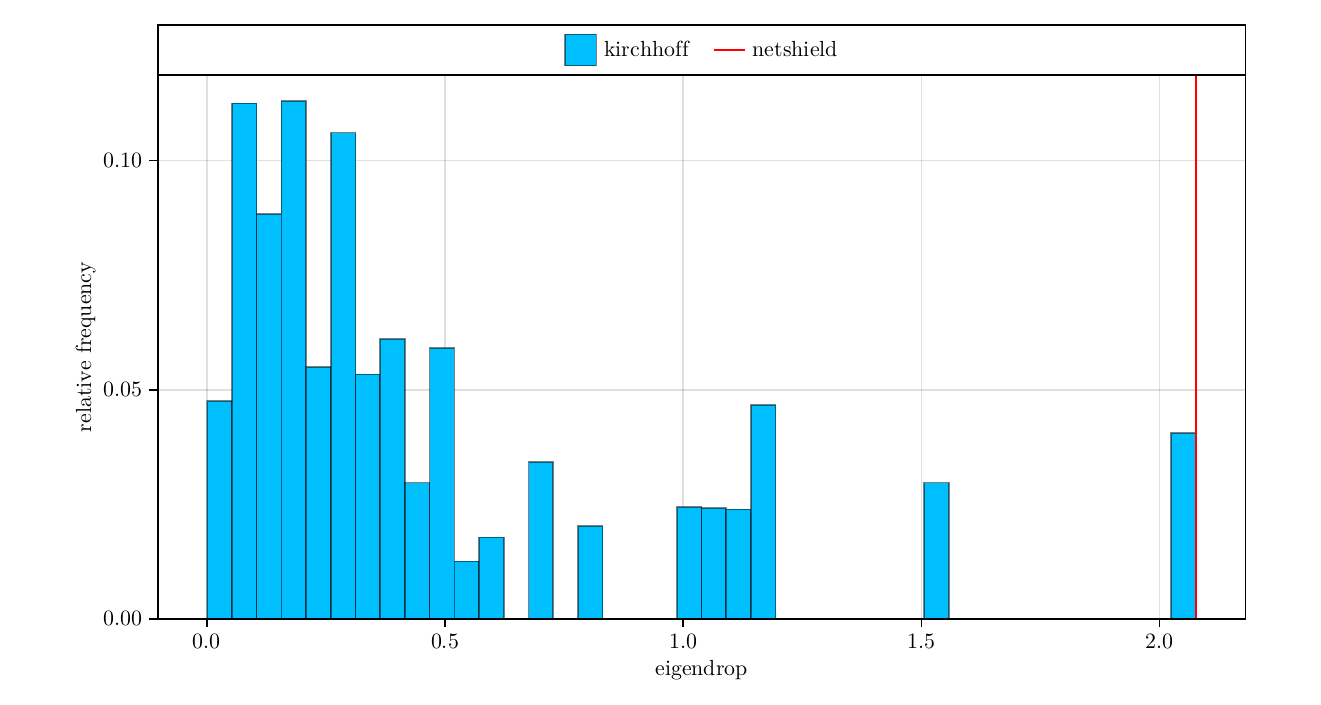}
          \caption{$k = 1$}
      \end{subcaptionblock}
      \begin{subcaptionblock}{0.45\textwidth}
          \includegraphics[width=\textwidth]{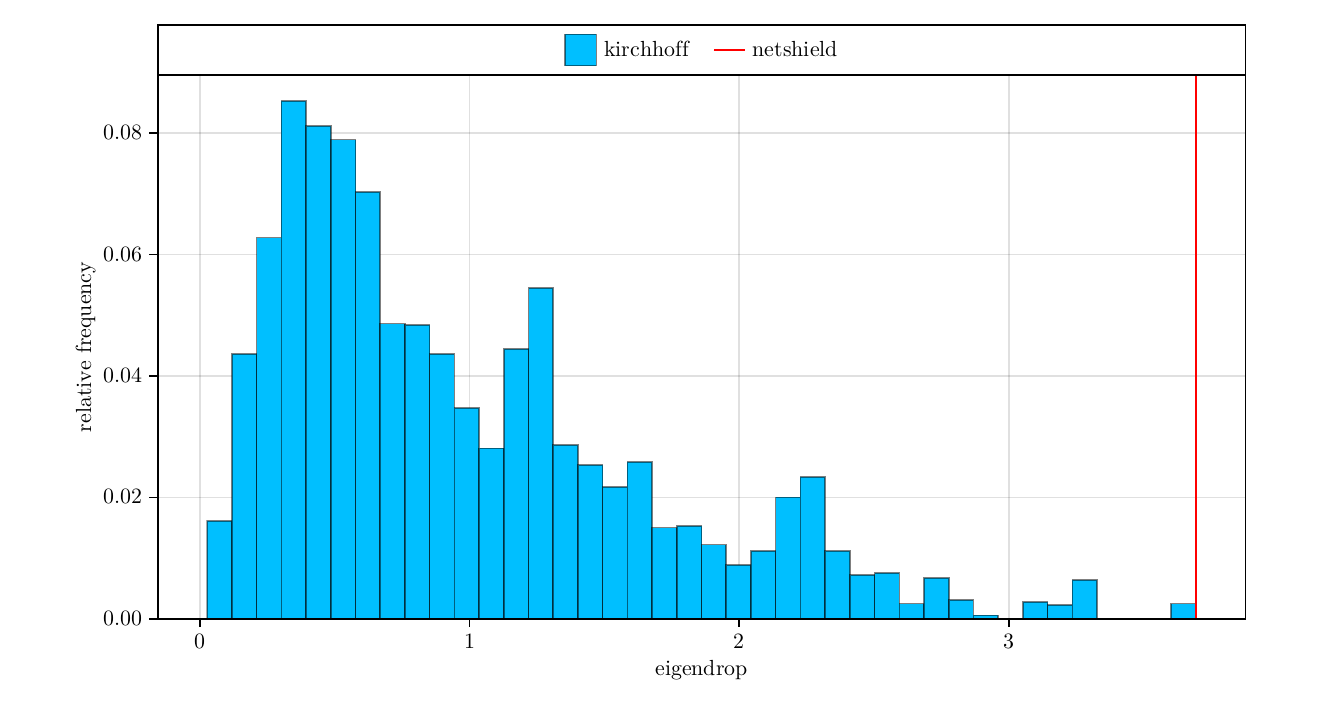}
          \caption{$k = 2$}
      \end{subcaptionblock}
      \\
      \begin{subcaptionblock}{0.45\textwidth}
          \includegraphics[width=\textwidth]{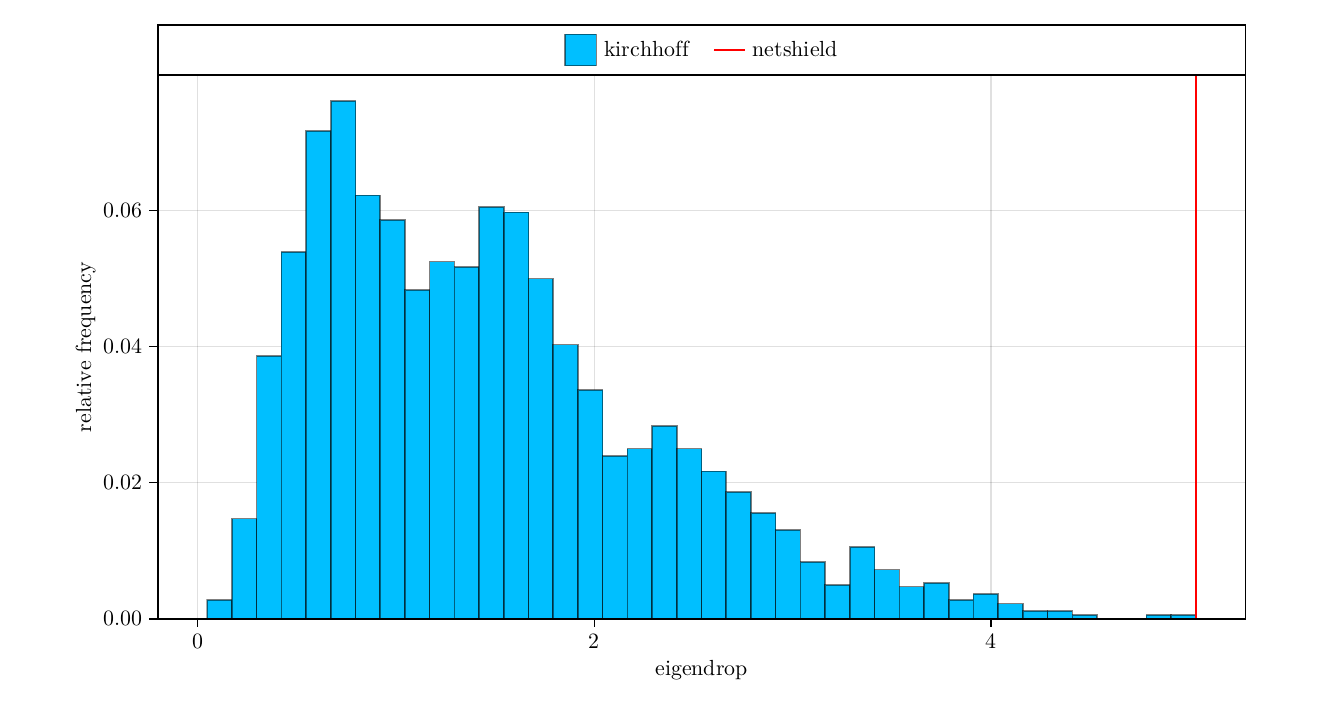}
          \caption{$k = 3$}
      \end{subcaptionblock}
      \begin{subcaptionblock}{0.45\textwidth}
          \includegraphics[width=\textwidth]{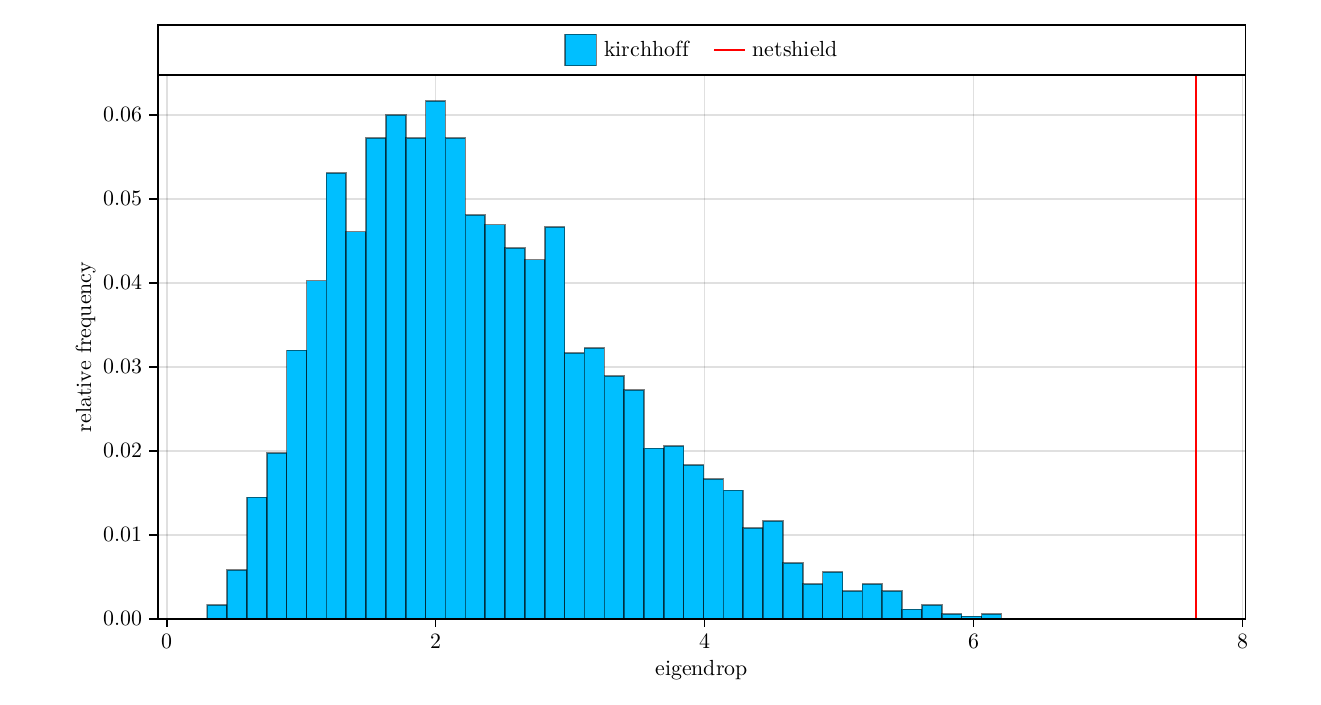}
          \caption{$k = 5$}
      \end{subcaptionblock}
      \\
      \begin{subcaptionblock}{0.45\textwidth}
          \includegraphics[width=\textwidth]{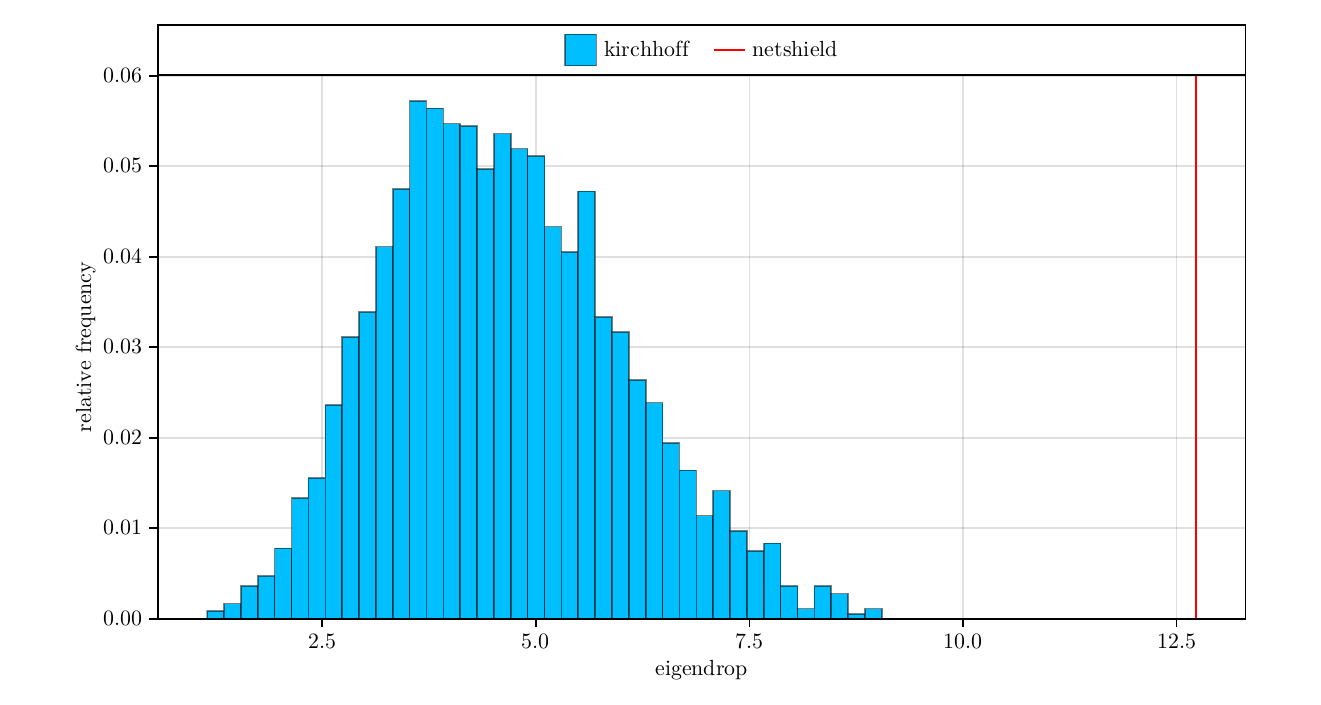}
          \caption{$k = 10$}
      \end{subcaptionblock}
      \begin{subcaptionblock}{0.45\textwidth}
          \includegraphics[width=\textwidth]{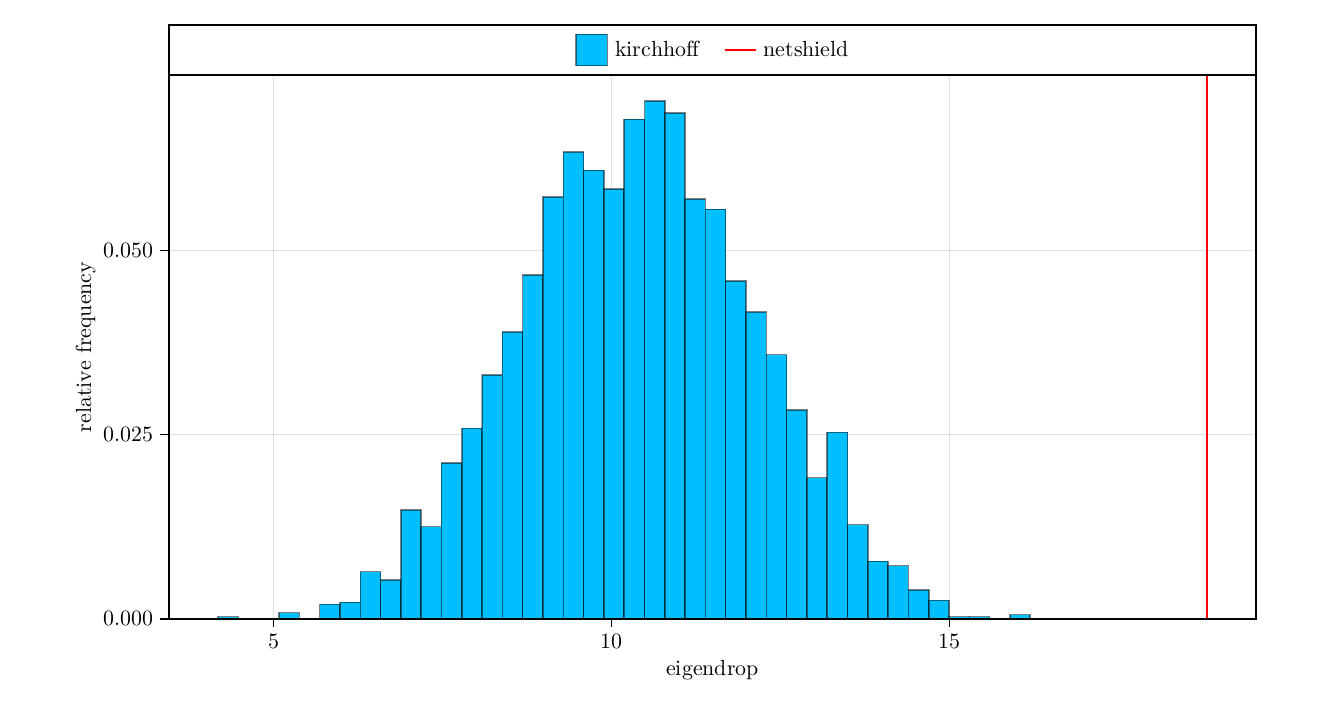}
          \caption{$k = 25$}
      \end{subcaptionblock}
\end{figure}

\begin{figure}
    \centering $ $
    \caption{Graph: ``conference 1'' (weighted) - eigendrop distribution}
      \begin{subcaptionblock}{0.45\textwidth}
          \includegraphics[width=\textwidth]{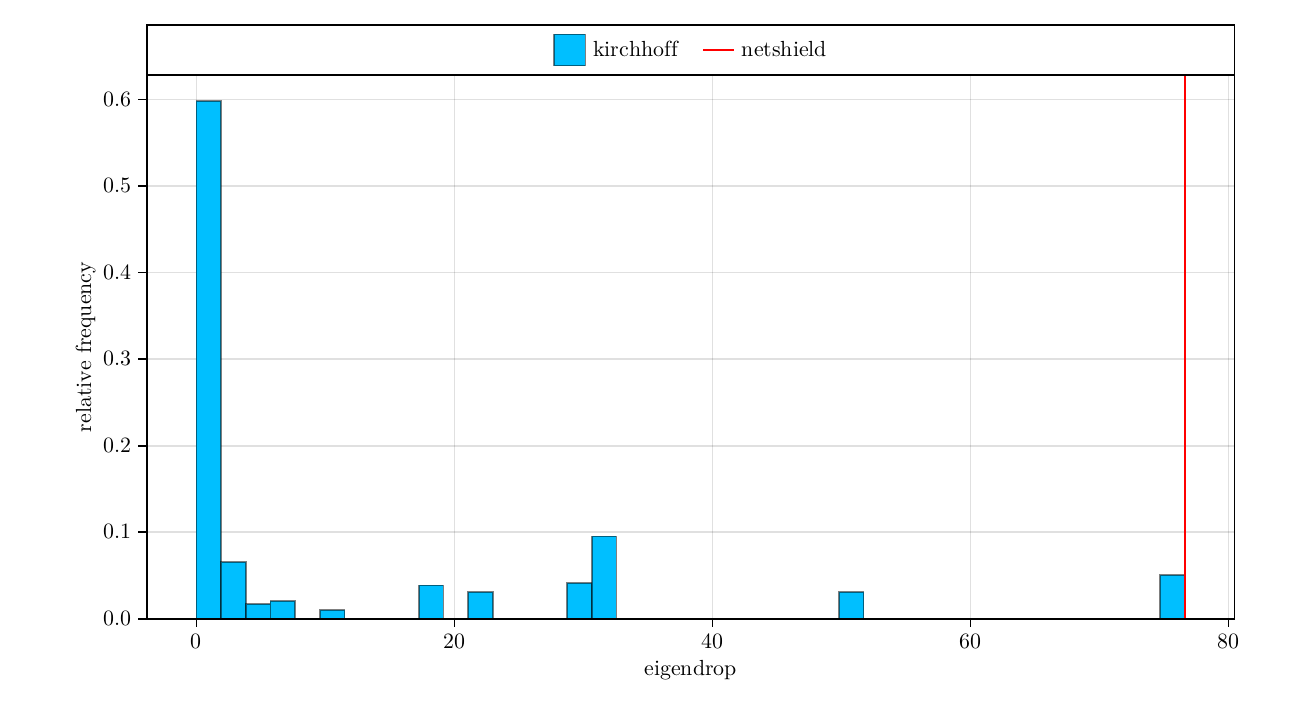}
          \caption{$k = 1$}
      \end{subcaptionblock}
      \begin{subcaptionblock}{0.45\textwidth}
          \includegraphics[width=\textwidth]{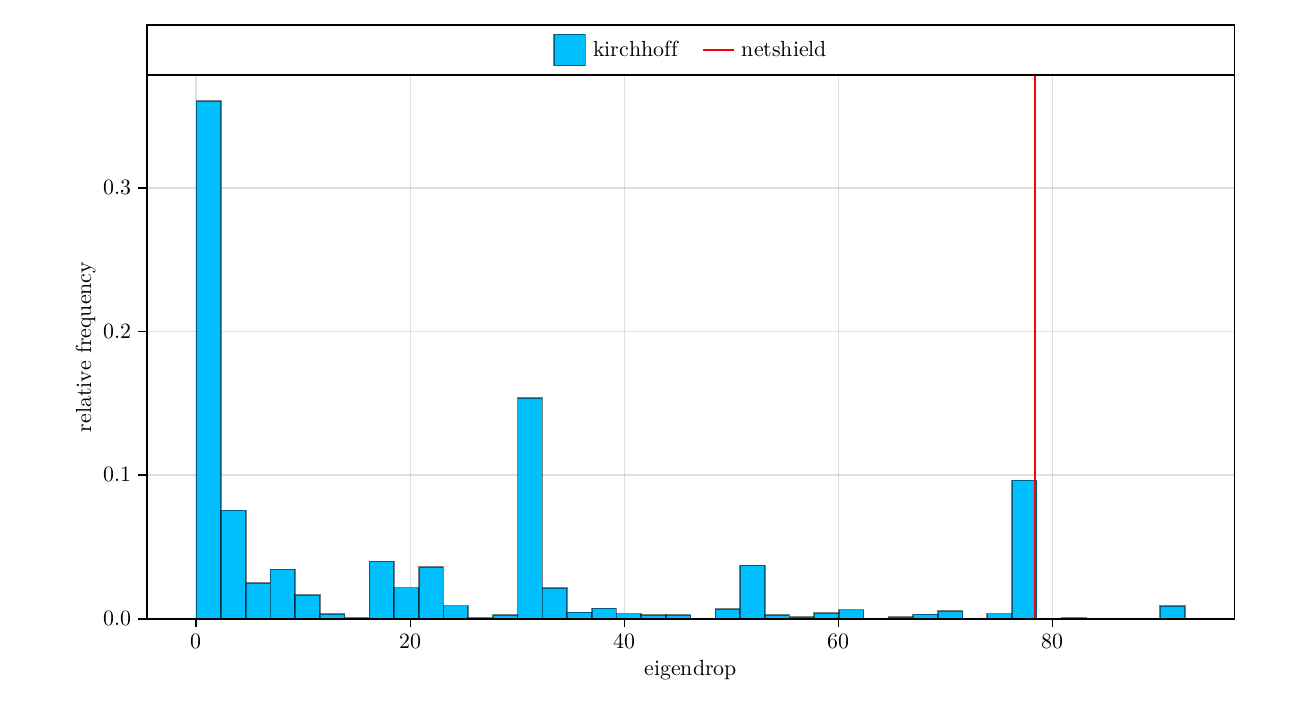}
          \caption{$k = 2$}
      \end{subcaptionblock}
      \\
      \begin{subcaptionblock}{0.45\textwidth}
          \includegraphics[width=\textwidth]{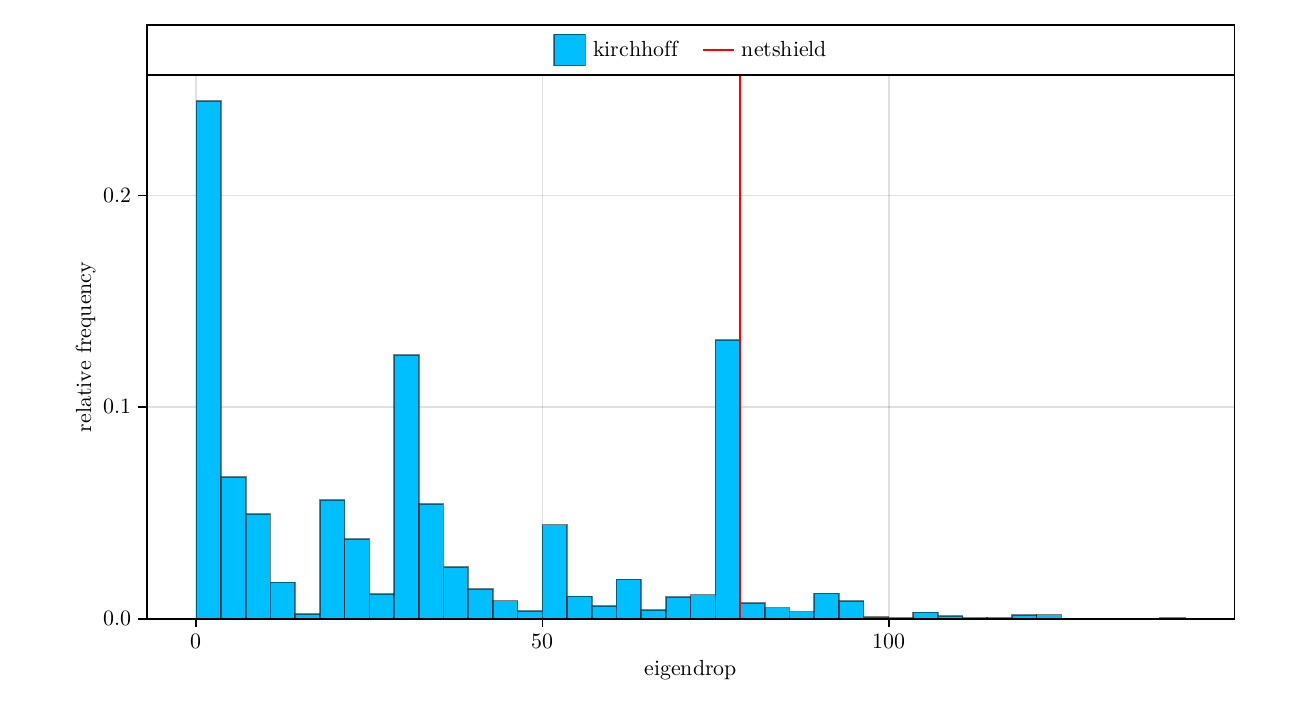}
          \caption{$k = 3$}
      \end{subcaptionblock}
      \begin{subcaptionblock}{0.45\textwidth}
          \includegraphics[width=\textwidth]{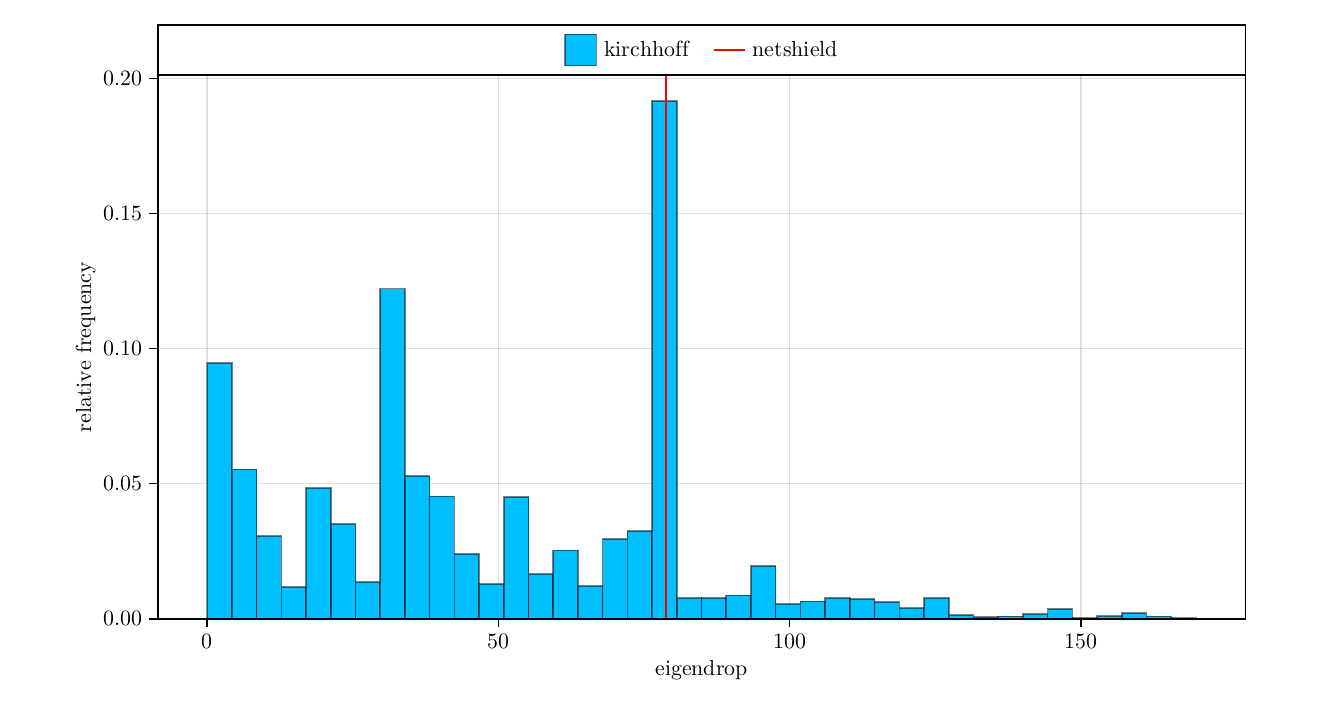}
          \caption{$k = 5$}
      \end{subcaptionblock}
      \\
      \begin{subcaptionblock}{0.45\textwidth}
          \includegraphics[width=\textwidth]{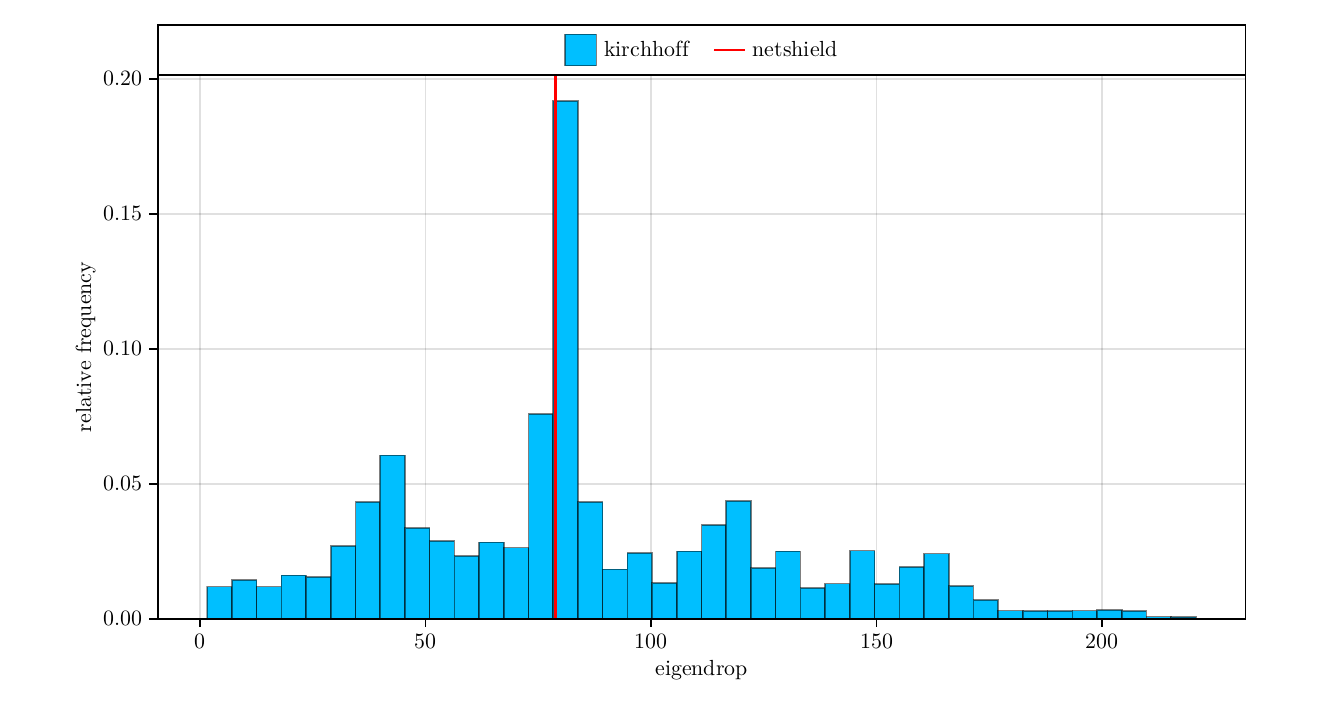}
          \caption{$k = 10$}
      \end{subcaptionblock}
      \begin{subcaptionblock}{0.45\textwidth}
          \includegraphics[width=\textwidth]{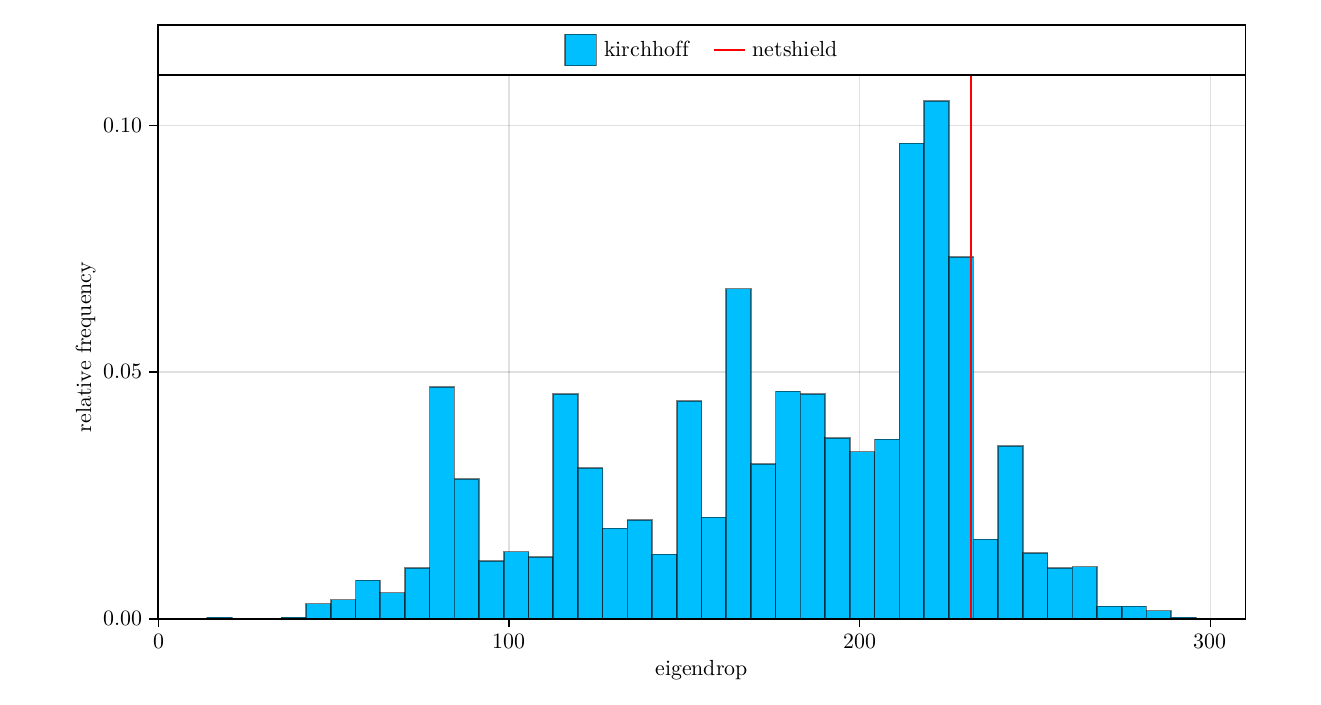}
          \caption{$k = 25$}
      \end{subcaptionblock}
\end{figure}

\begin{figure}
    \centering $ $
    \caption{Graph: ``conference 2'' (non-weighted) - eigendrop distribution}
      \begin{subcaptionblock}{0.45\textwidth}
          \includegraphics[width=\textwidth]{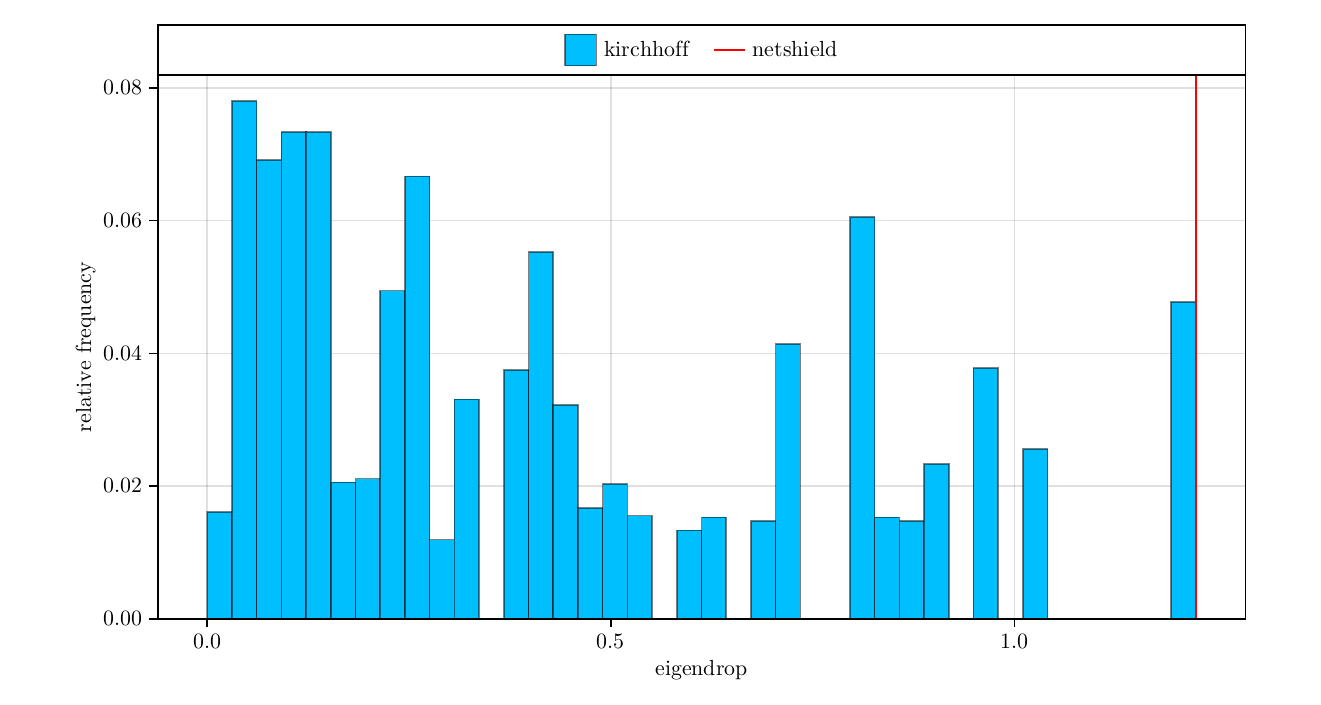}
          \caption{$k = 1$}
      \end{subcaptionblock}
      \begin{subcaptionblock}{0.45\textwidth}
          \includegraphics[width=\textwidth]{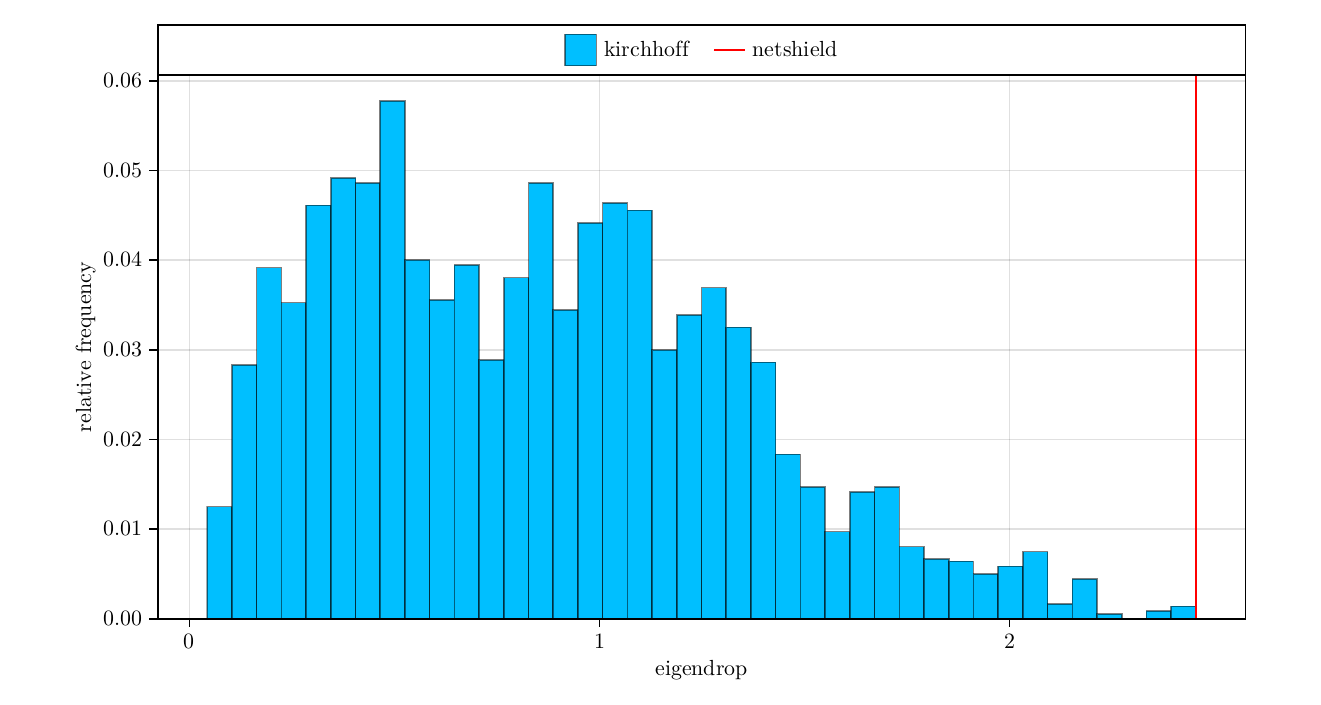}
          \caption{$k = 2$}
      \end{subcaptionblock}
      \\
      \begin{subcaptionblock}{0.45\textwidth}
          \includegraphics[width=\textwidth]{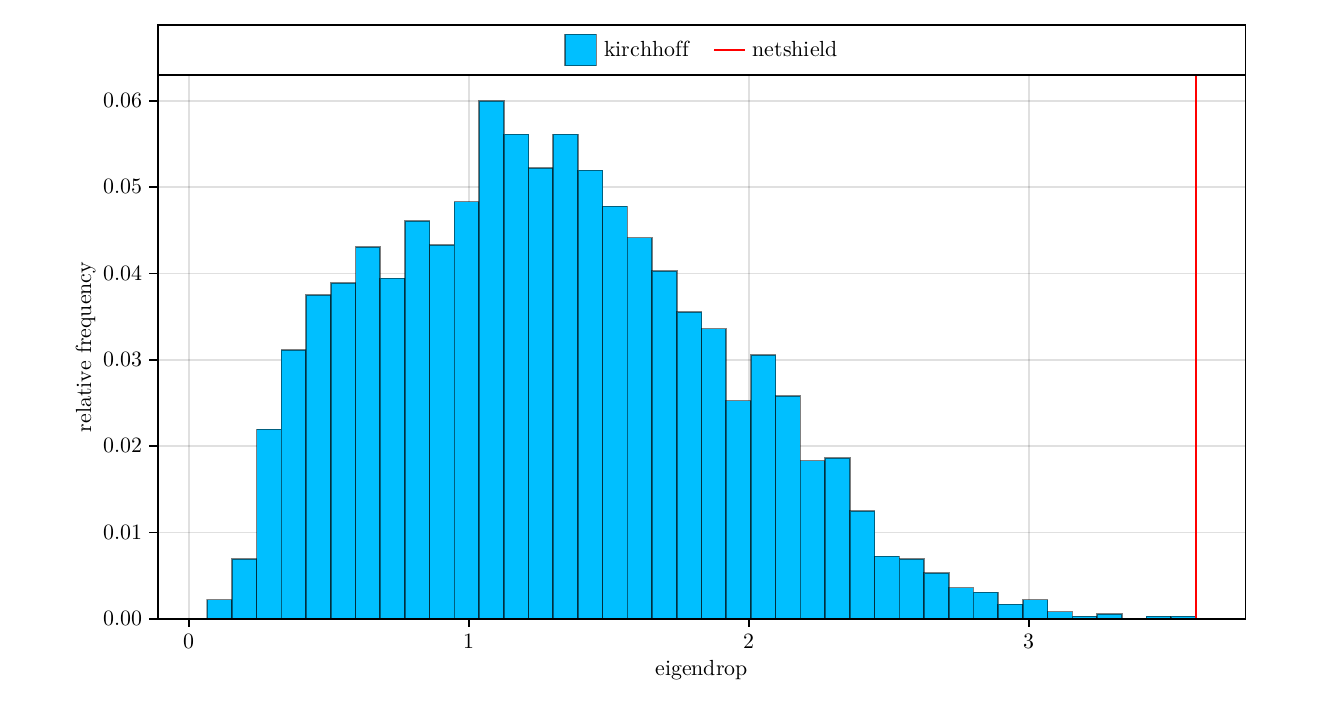}
          \caption{$k = 3$}
      \end{subcaptionblock}
      \begin{subcaptionblock}{0.45\textwidth}
          \includegraphics[width=\textwidth]{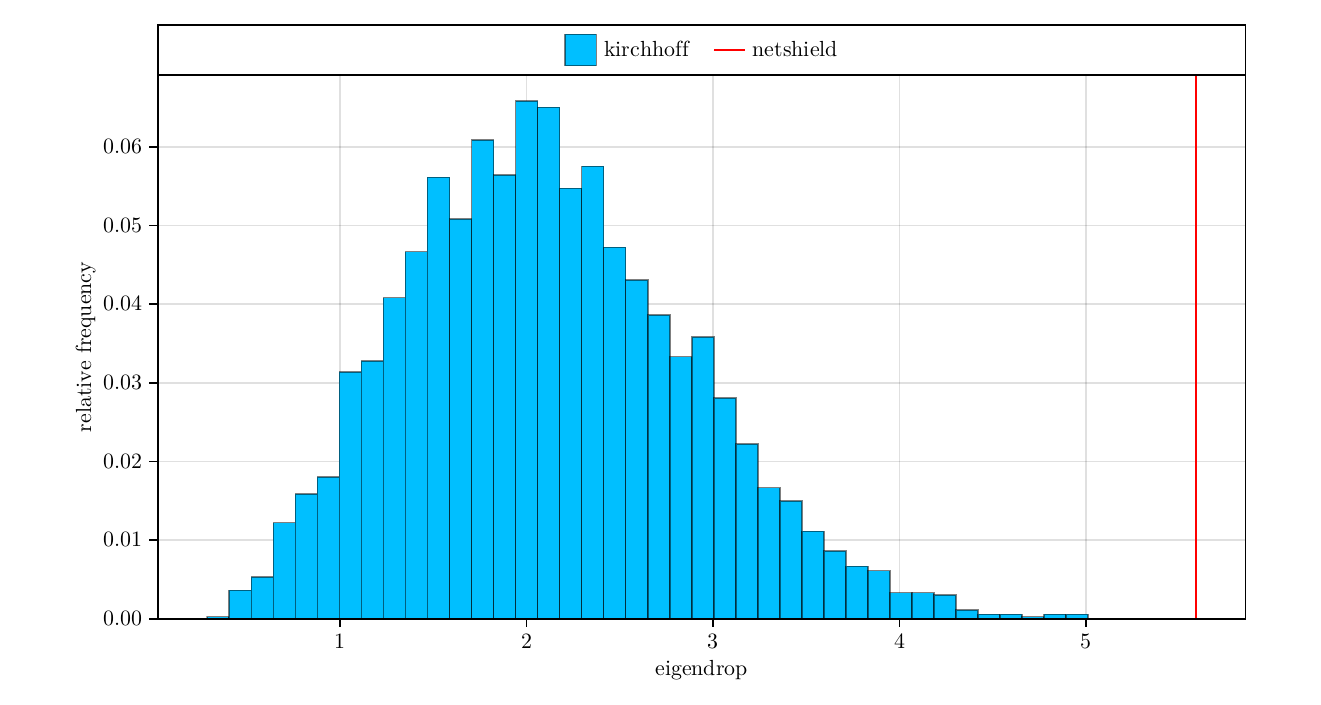}
          \caption{$k = 5$}
      \end{subcaptionblock}
      \\
      \begin{subcaptionblock}{0.45\textwidth}
          \includegraphics[width=\textwidth]{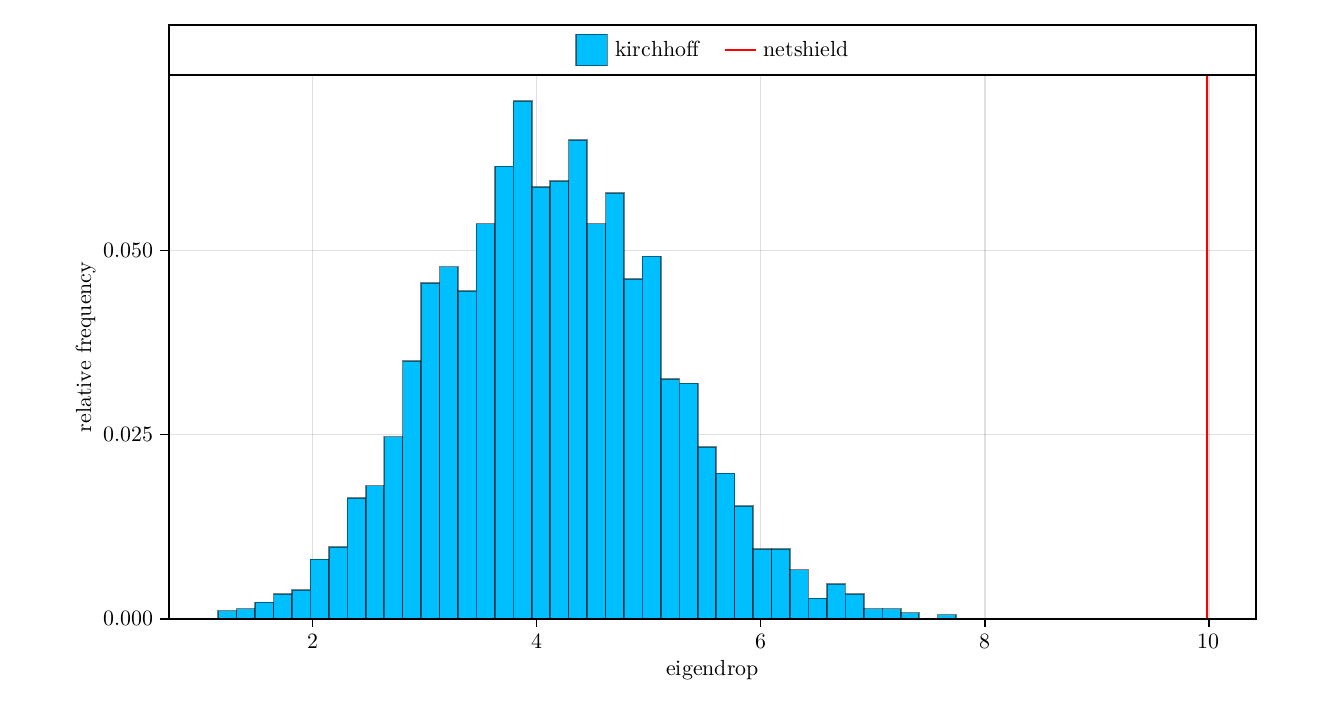}
          \caption{$k = 10$}
      \end{subcaptionblock}
      \begin{subcaptionblock}{0.45\textwidth}
          \includegraphics[width=\textwidth]{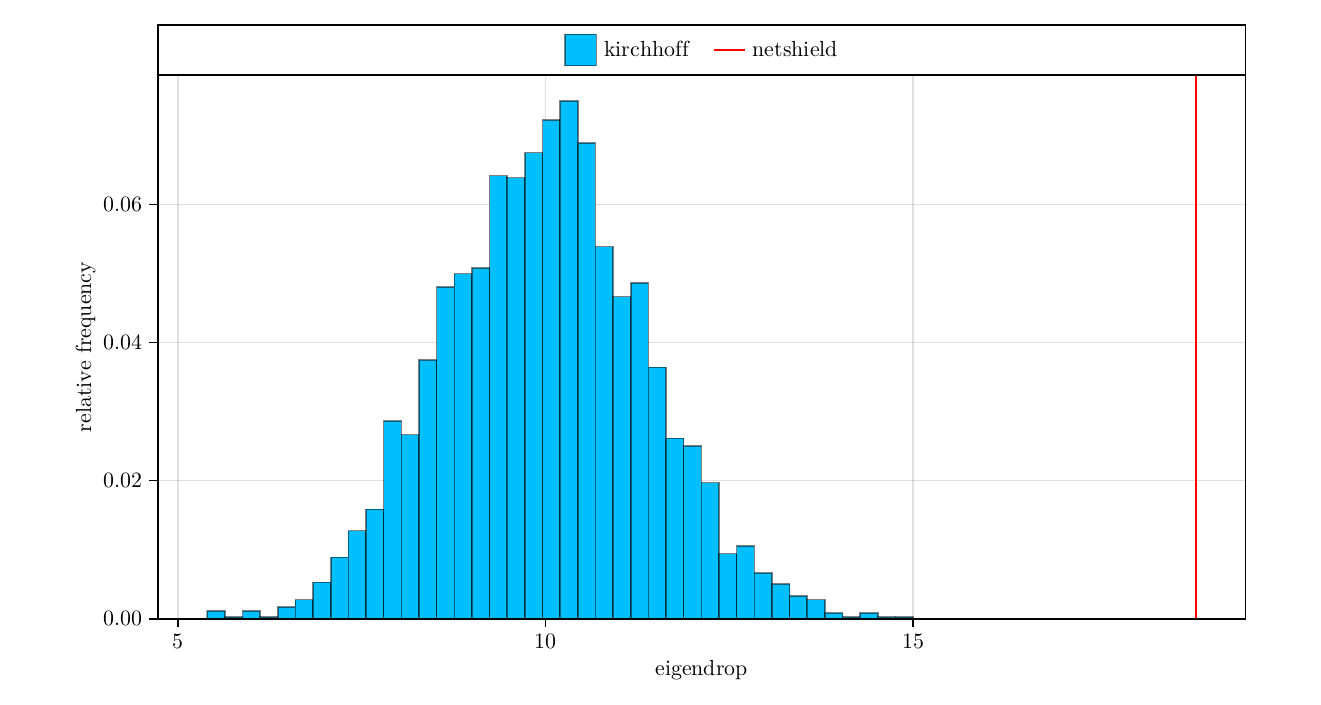}
          \caption{$k = 26$}
      \end{subcaptionblock}
\end{figure}

\begin{figure}
    \centering $ $
    \caption{Graph: ``conference 2'' (weighted) - eigendrop distribution}
      \begin{subcaptionblock}{0.45\textwidth}
          \includegraphics[width=\textwidth]{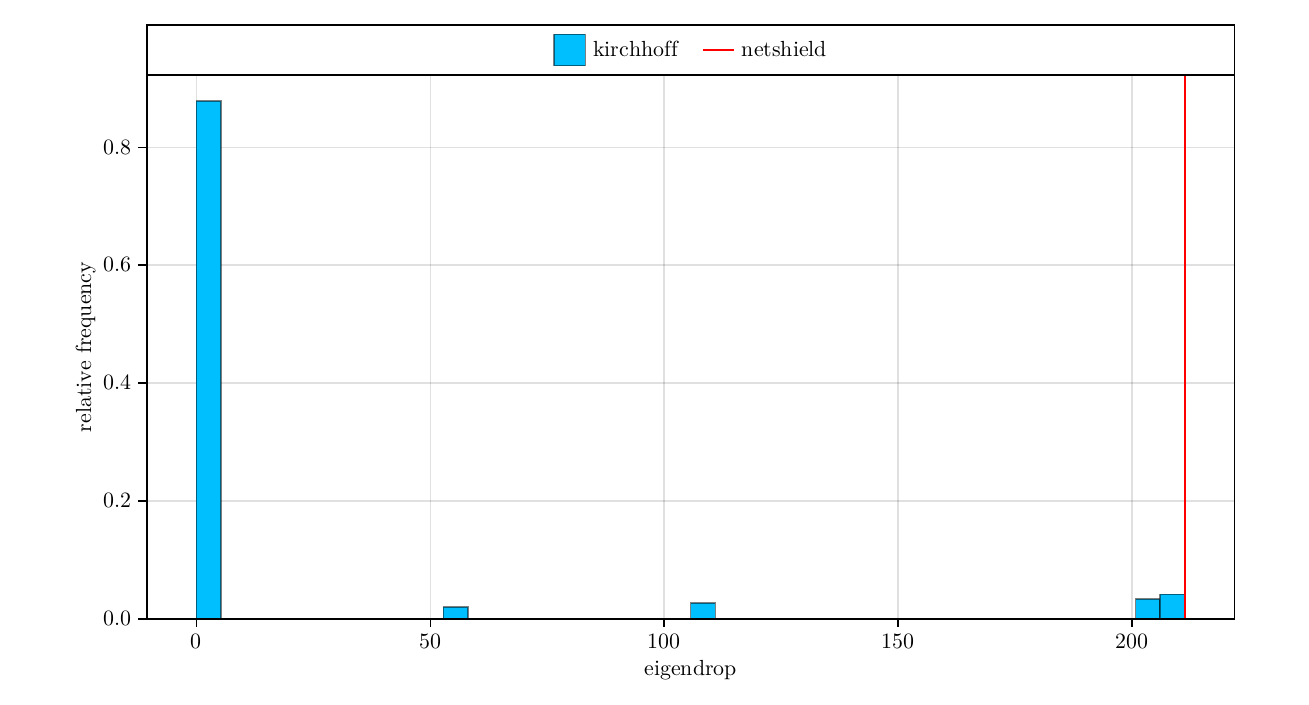}
          \caption{$k = 1$}
      \end{subcaptionblock}
      \begin{subcaptionblock}{0.45\textwidth}
          \includegraphics[width=\textwidth]{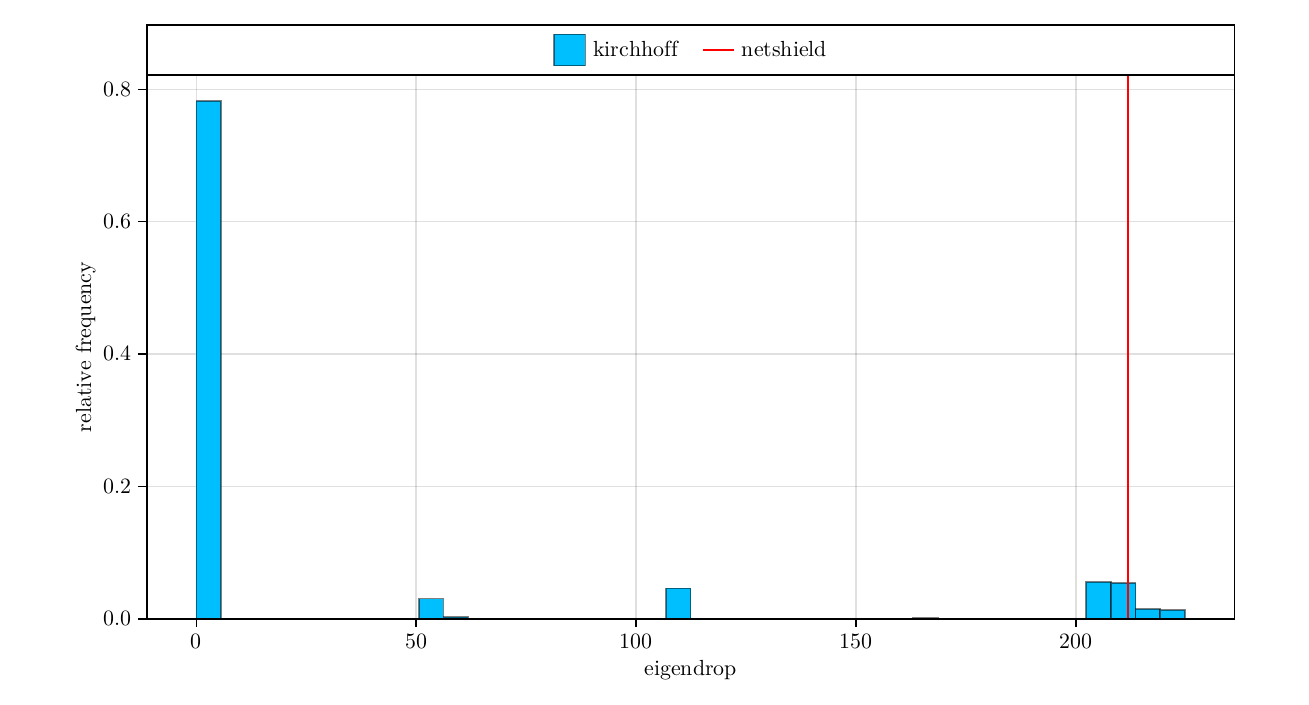}
          \caption{$k = 2$}
      \end{subcaptionblock}
      \\
      \begin{subcaptionblock}{0.45\textwidth}
          \includegraphics[width=\textwidth]{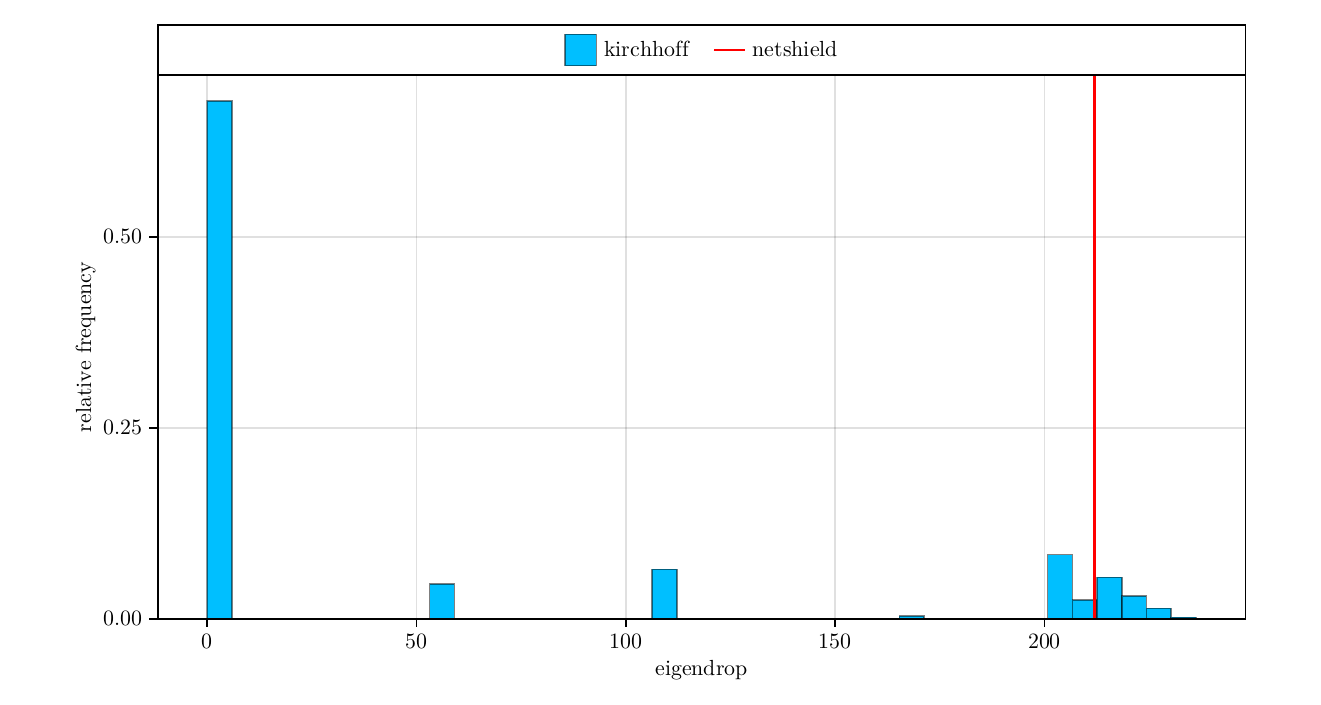}
          \caption{$k = 3$}
      \end{subcaptionblock}
      \begin{subcaptionblock}{0.45\textwidth}
          \includegraphics[width=\textwidth]{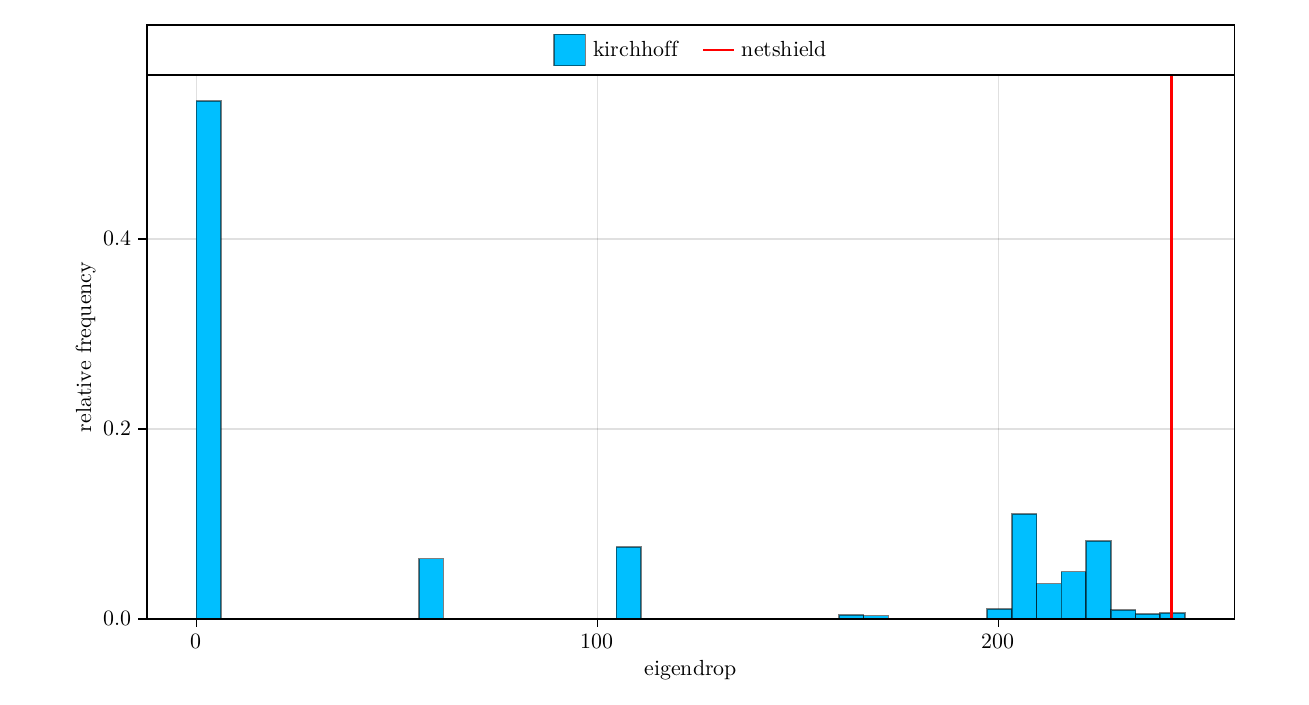}
          \caption{$k = 5$}
      \end{subcaptionblock}
      \\
      \begin{subcaptionblock}{0.45\textwidth}
          \includegraphics[width=\textwidth]{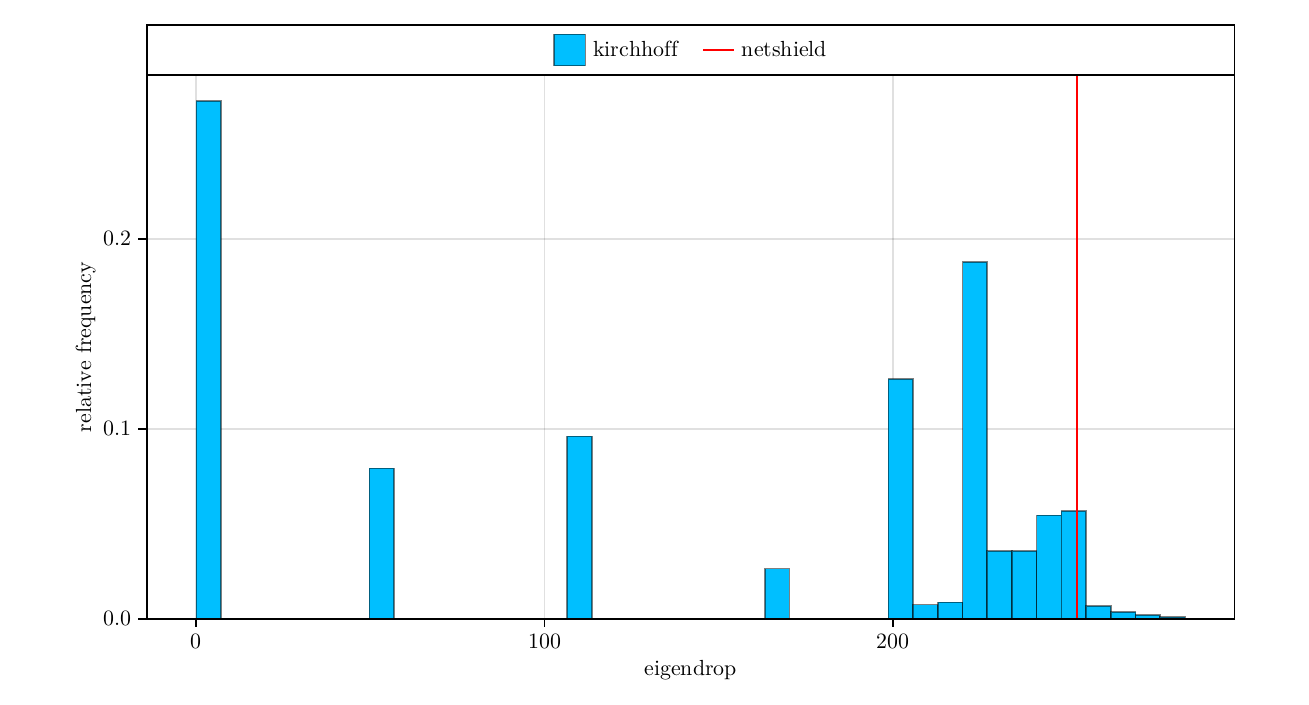}
          \caption{$k = 10$}
      \end{subcaptionblock}
      \begin{subcaptionblock}{0.45\textwidth}
          \includegraphics[width=\textwidth]{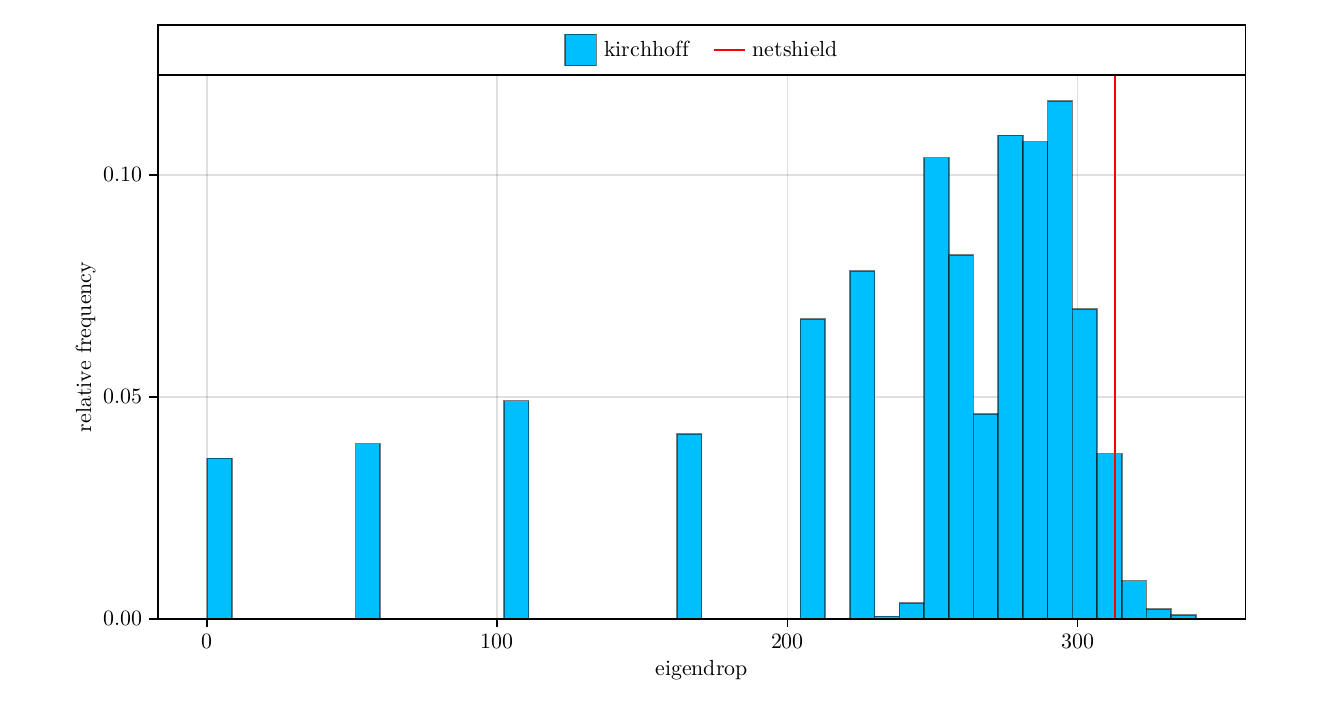}
          \caption{$k = 26$}
      \end{subcaptionblock}
\end{figure}

\begin{figure}
    \centering $ $
    \caption{Graph: ``conference 3'' (non-weighted) - eigendrop distribution}
      \begin{subcaptionblock}{0.45\textwidth}
          \includegraphics[width=\textwidth]{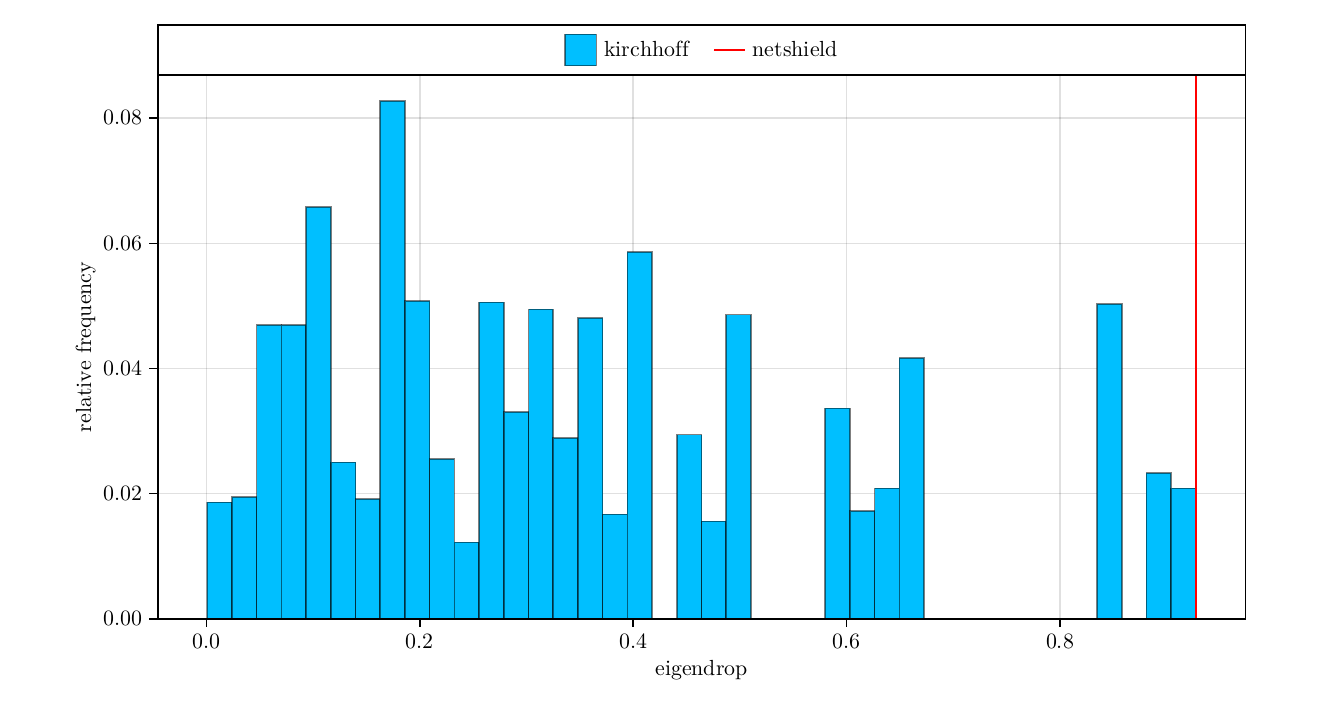}
          \caption{$k = 1$}
      \end{subcaptionblock}
      \begin{subcaptionblock}{0.45\textwidth}
          \includegraphics[width=\textwidth]{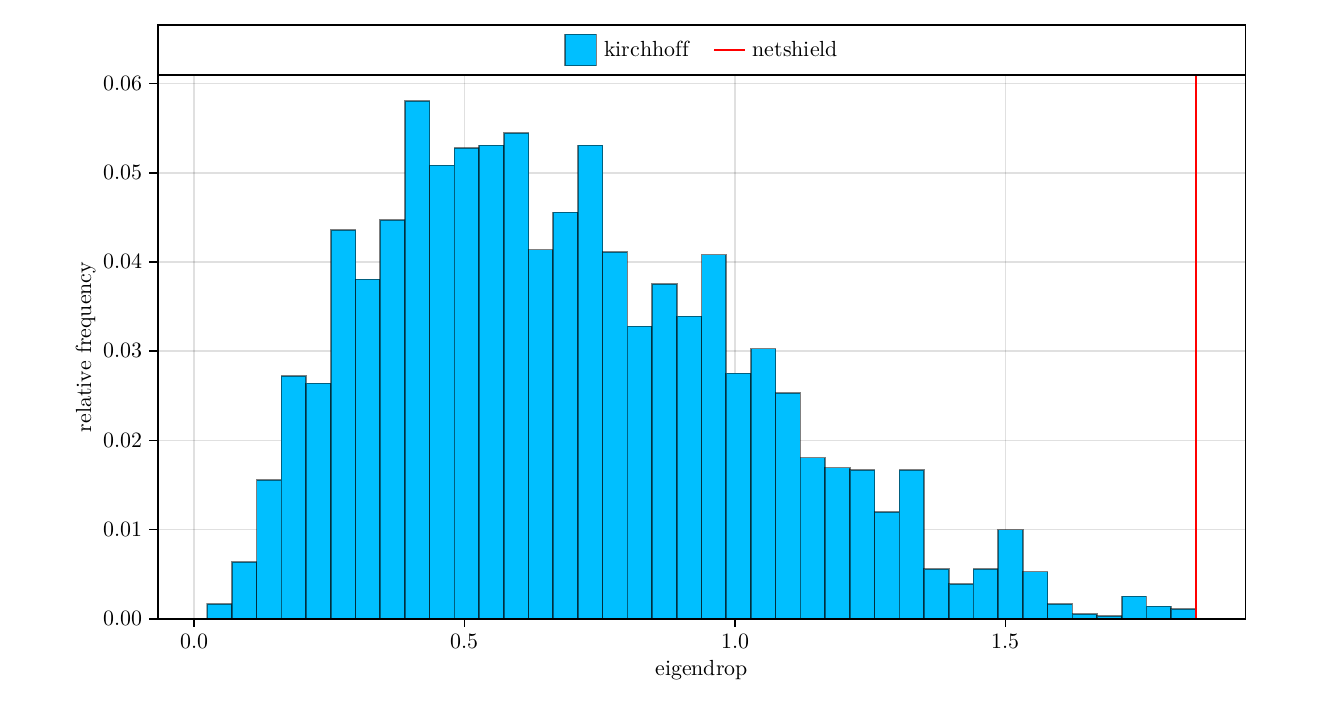}
          \caption{$k = 2$}
      \end{subcaptionblock}
      \\
      \begin{subcaptionblock}{0.45\textwidth}
          \includegraphics[width=\textwidth]{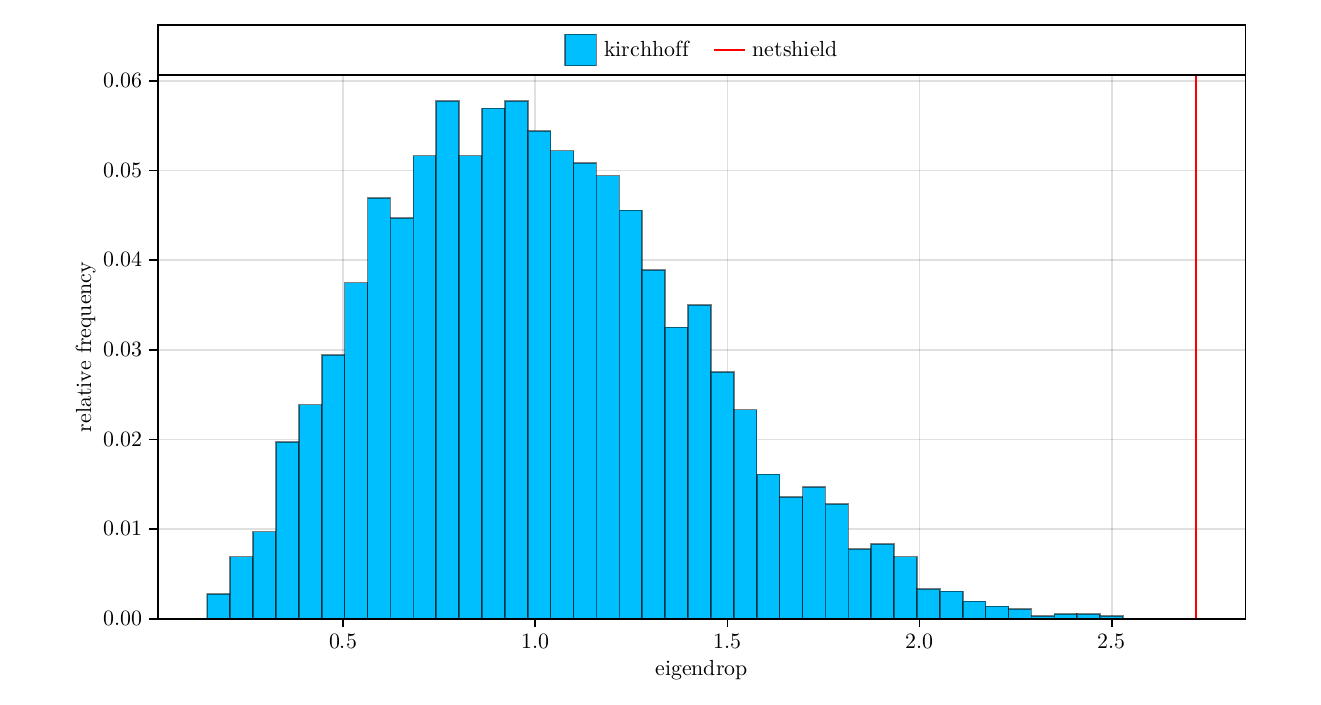}
          \caption{$k = 3$}
      \end{subcaptionblock}
      \begin{subcaptionblock}{0.45\textwidth}
          \includegraphics[width=\textwidth]{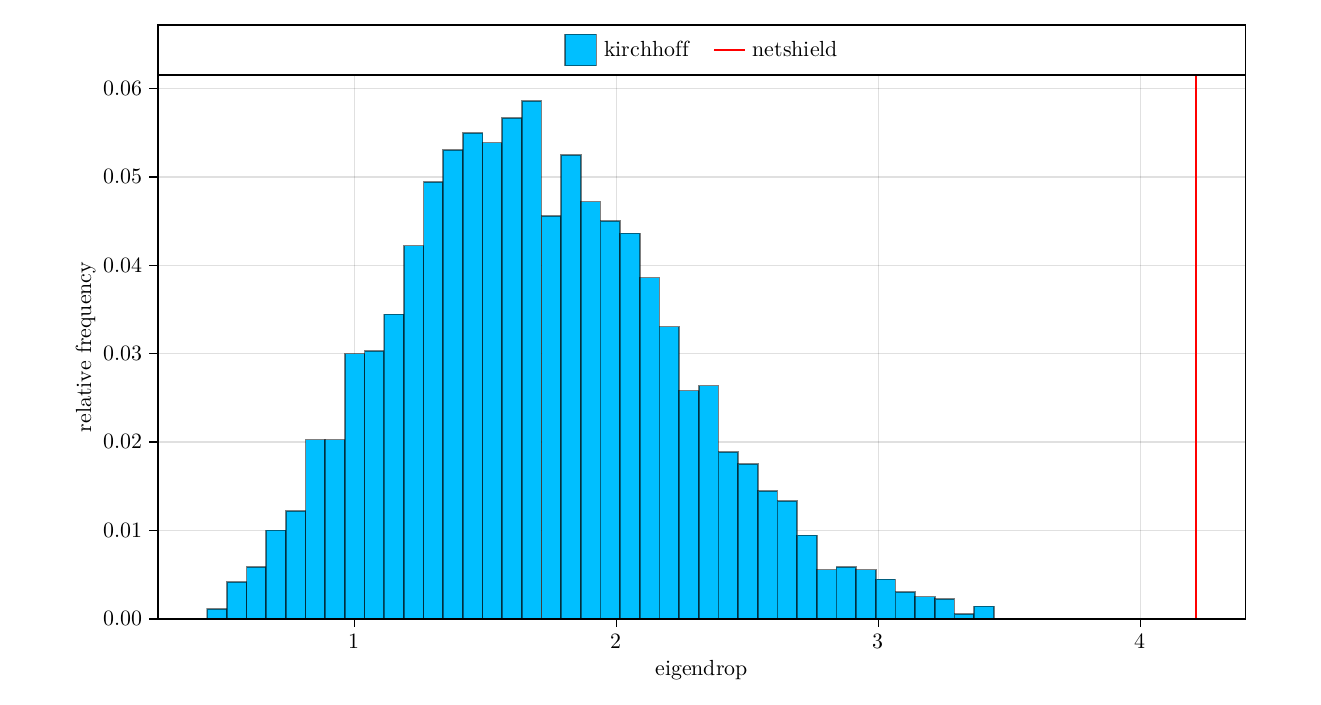}
          \caption{$k = 5$}
      \end{subcaptionblock}
      \\
      \begin{subcaptionblock}{0.45\textwidth}
          \includegraphics[width=\textwidth]{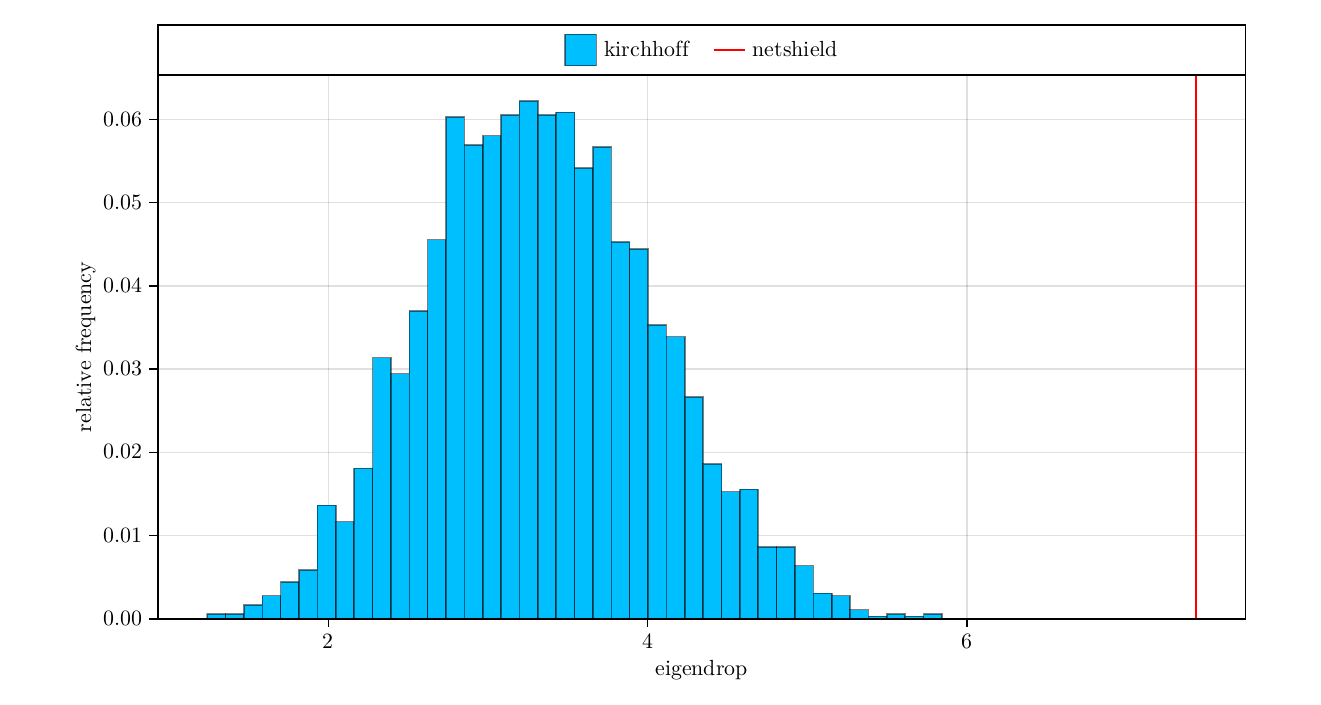}
          \caption{$k = 10$}
      \end{subcaptionblock}
      \begin{subcaptionblock}{0.45\textwidth}
          \includegraphics[width=\textwidth]{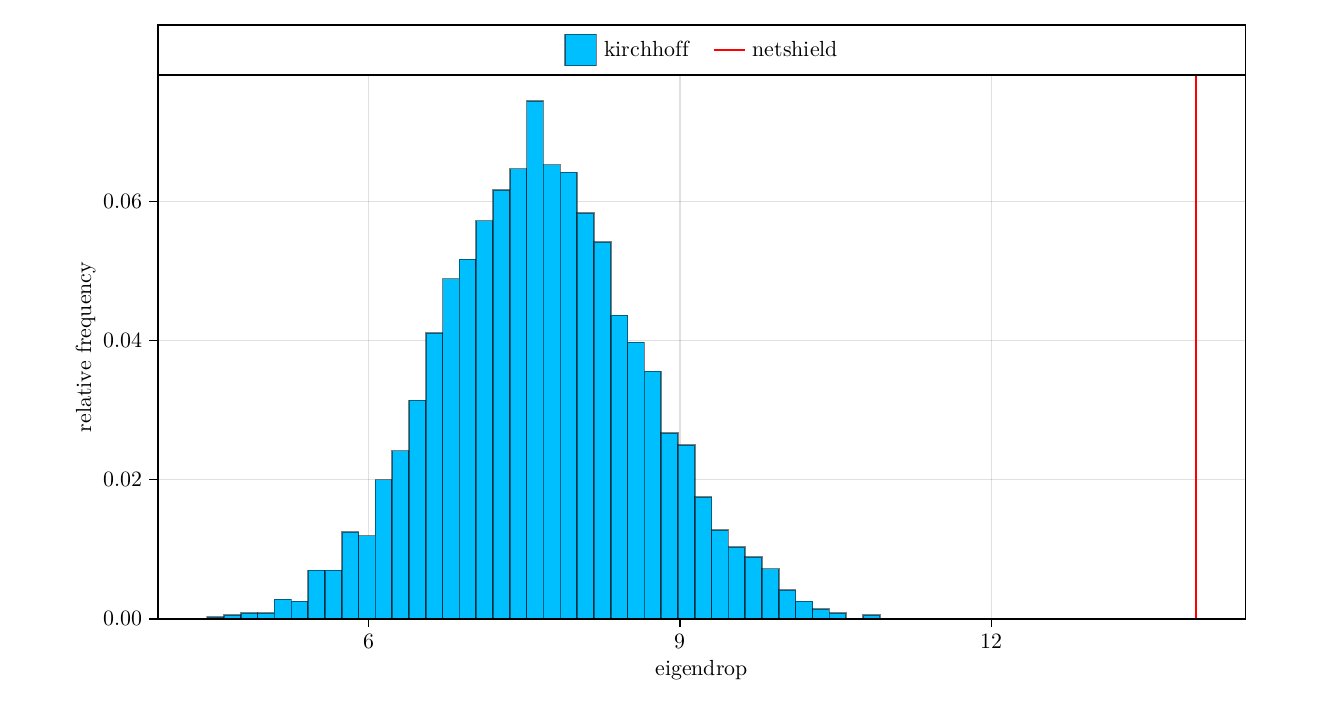}
          \caption{$k = 24$}
      \end{subcaptionblock}
\end{figure}

\begin{figure}
    \centering $ $
    \caption{Graph: ``conference 3'' (weighted) - eigendrop distribution}
      \begin{subcaptionblock}{0.45\textwidth}
          \includegraphics[width=\textwidth]{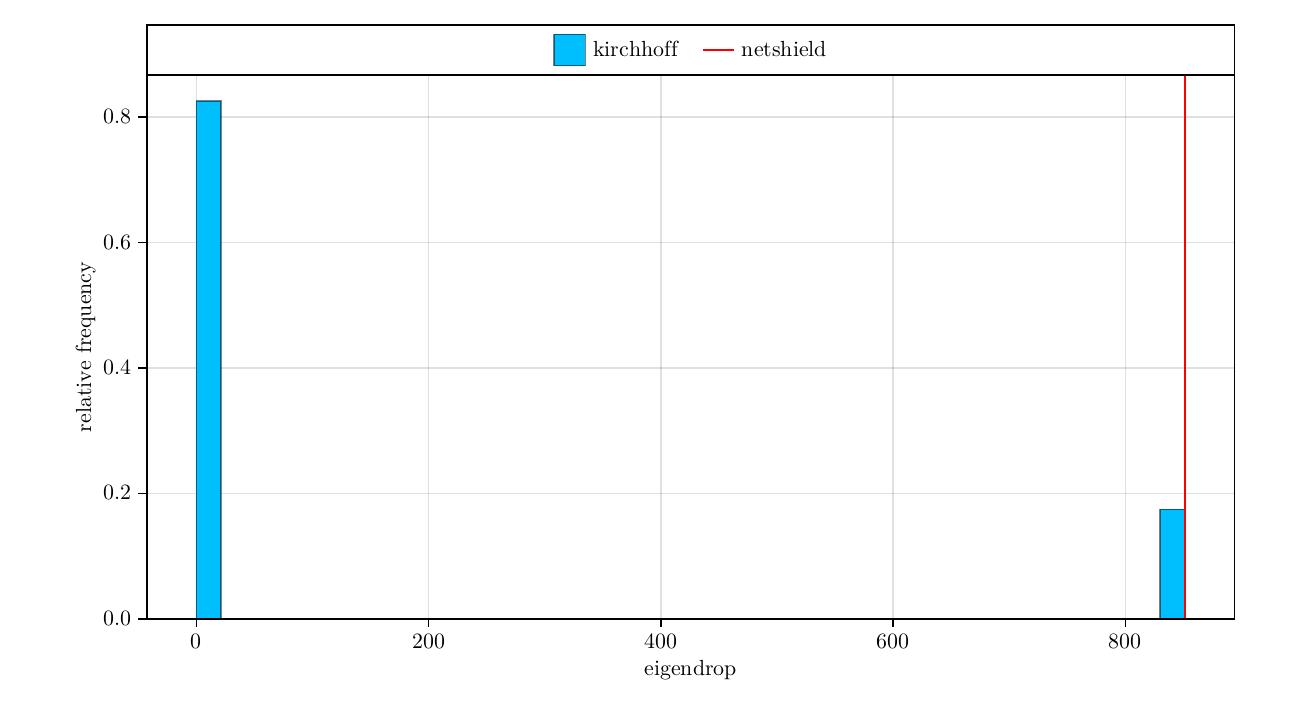}
          \caption{$k = 1$}
      \end{subcaptionblock}
      \begin{subcaptionblock}{0.45\textwidth}
          \includegraphics[width=\textwidth]{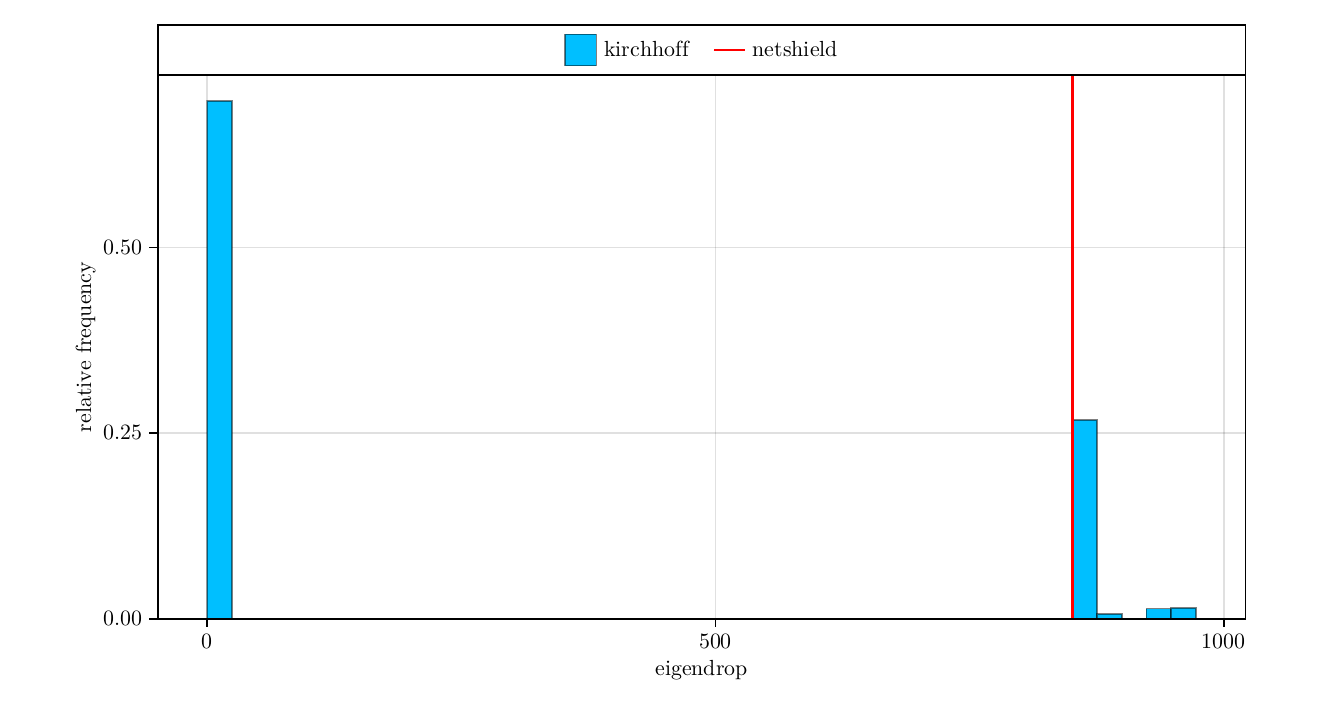}
          \caption{$k = 2$}
      \end{subcaptionblock}
      \\
      \begin{subcaptionblock}{0.45\textwidth}
          \includegraphics[width=\textwidth]{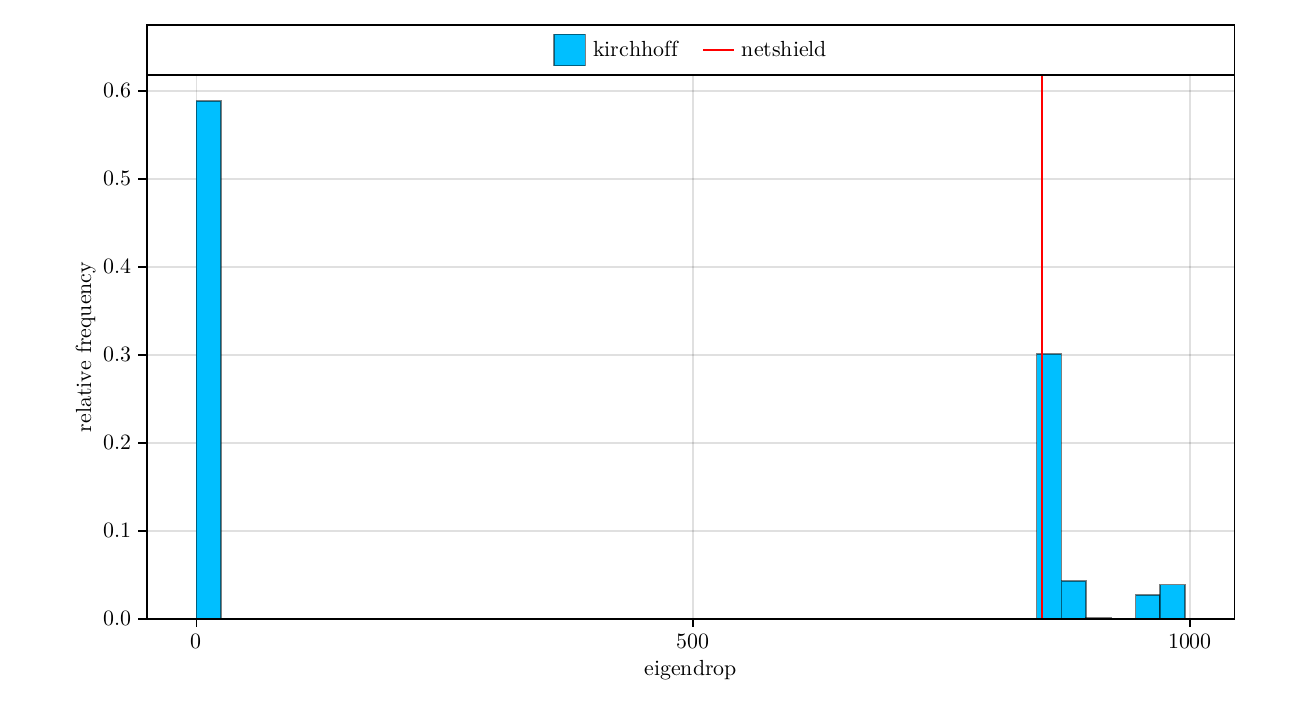}
          \caption{$k = 3$}
      \end{subcaptionblock}
      \begin{subcaptionblock}{0.45\textwidth}
          \includegraphics[width=\textwidth]{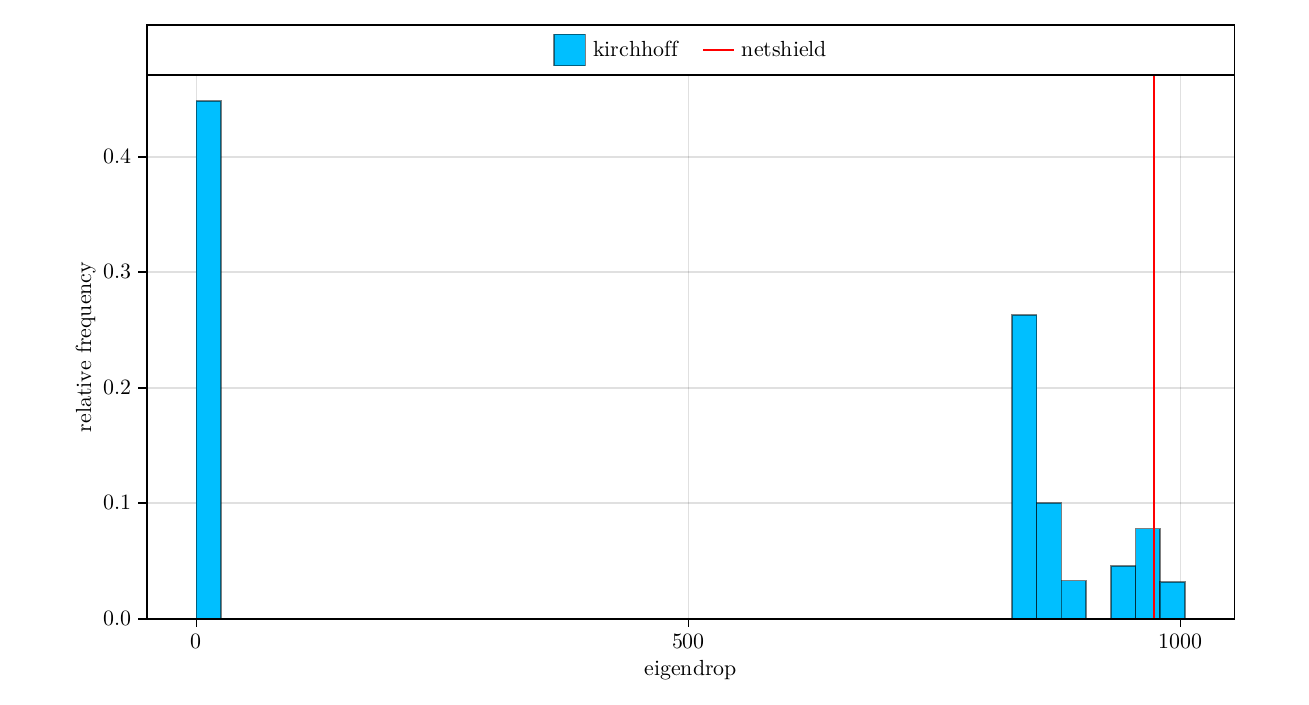}
          \caption{$k = 5$}
      \end{subcaptionblock}
      \\
      \begin{subcaptionblock}{0.45\textwidth}
          \includegraphics[width=\textwidth]{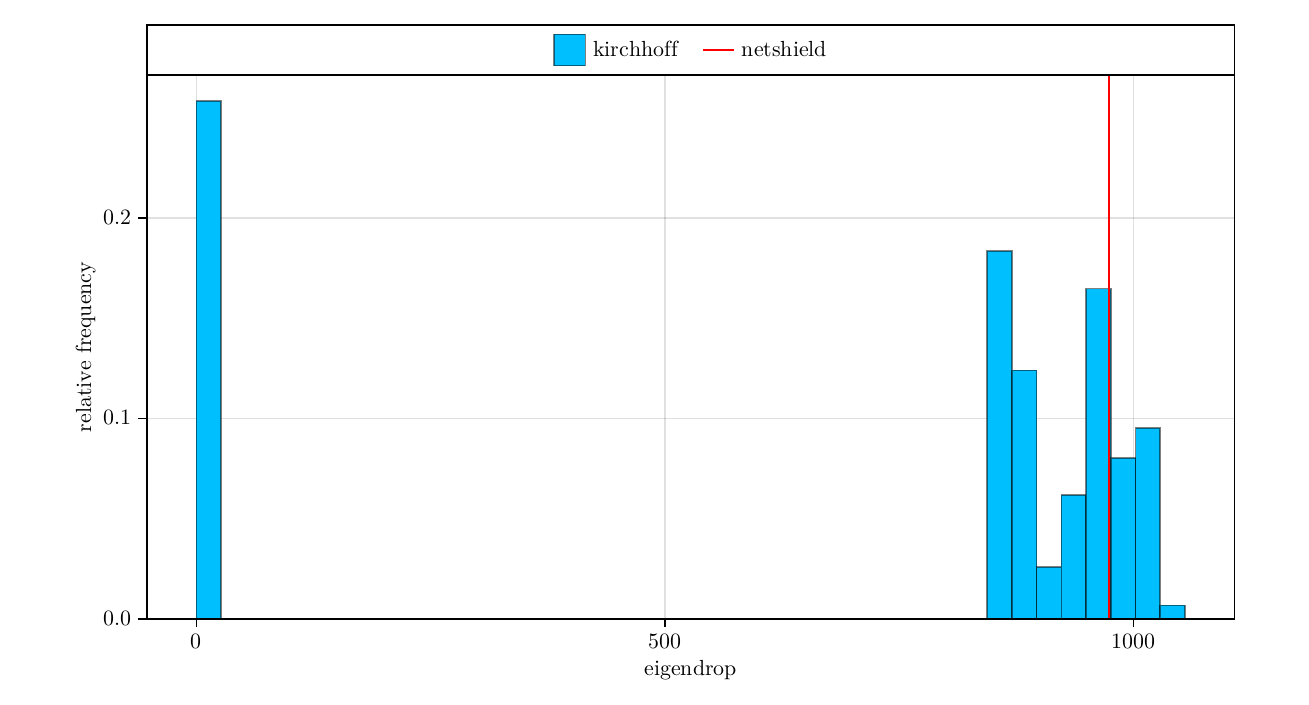}
          \caption{$k = 10$}
      \end{subcaptionblock}
      \begin{subcaptionblock}{0.45\textwidth}
          \includegraphics[width=\textwidth]{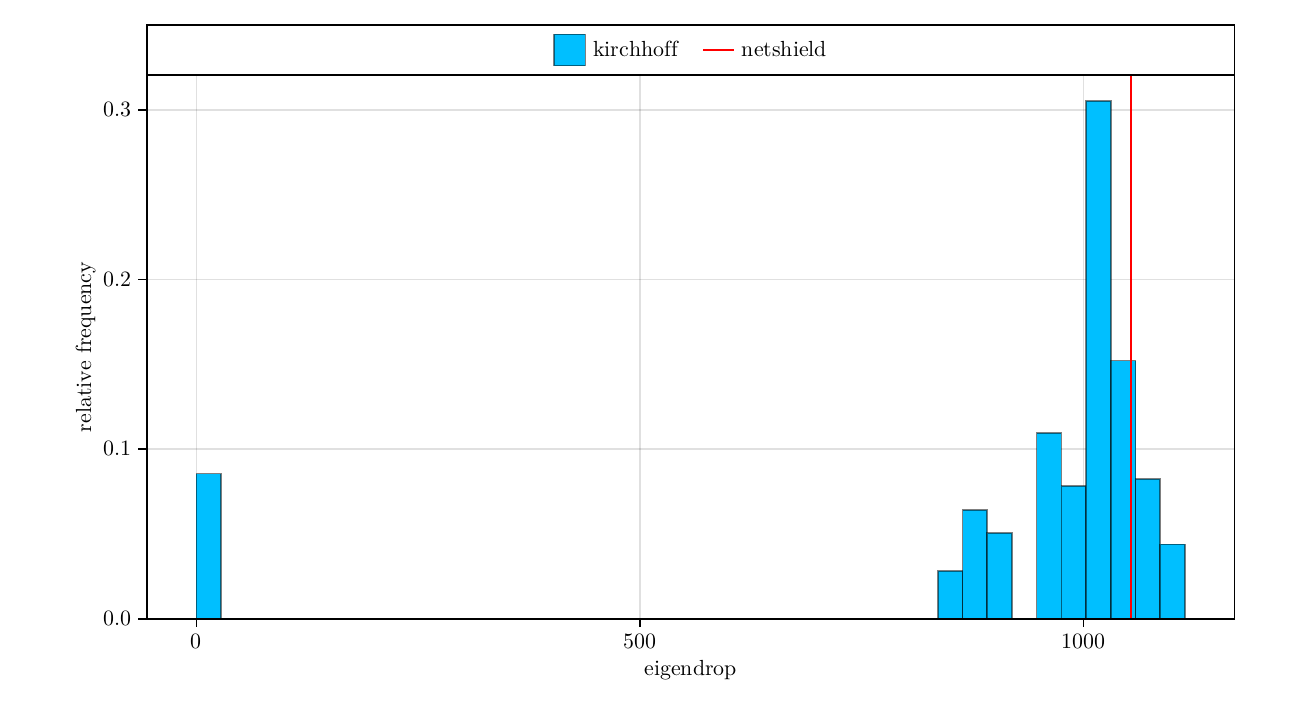}
          \caption{$k = 24$}
      \end{subcaptionblock}
\end{figure}

\begin{figure}
    \centering $ $
    \caption{Graph: ``airport 1'' (non-weighted) - eigendrop distribution}
      \begin{subcaptionblock}{0.45\textwidth}
          \includegraphics[width=\textwidth]{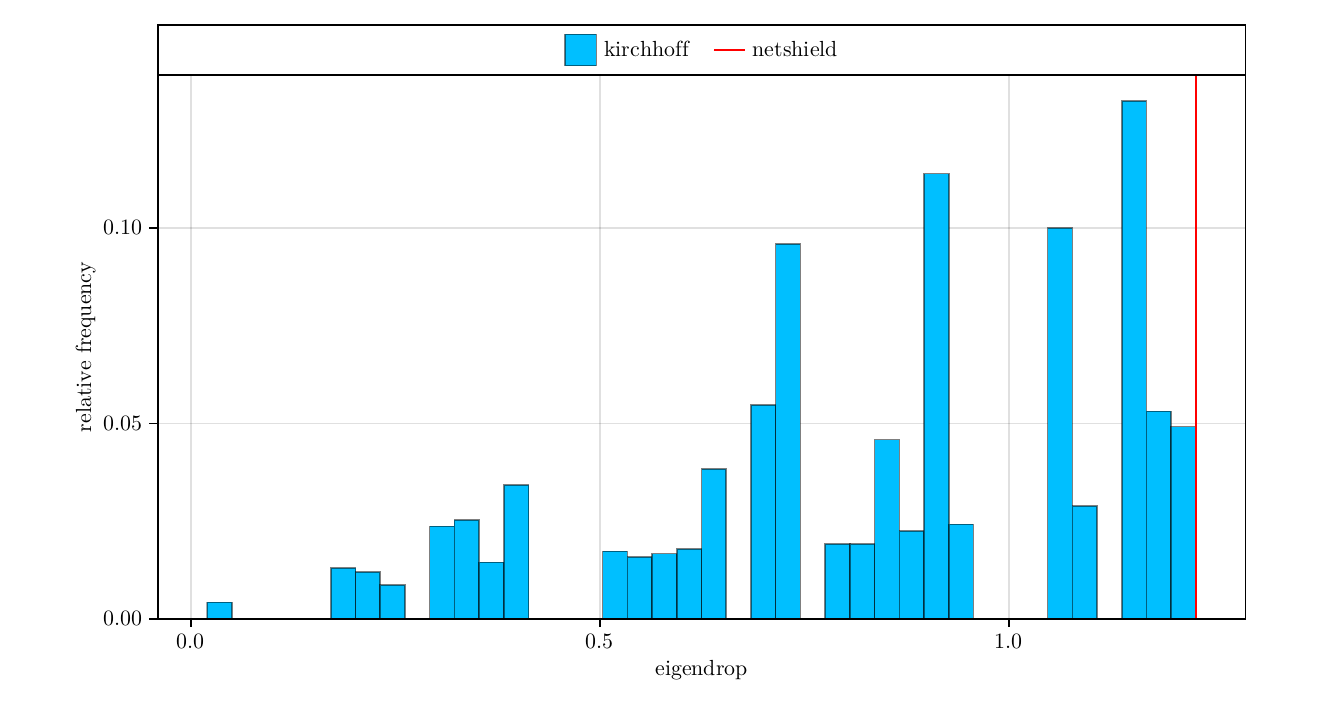}
          \caption{$k = 1$}
      \end{subcaptionblock}
      \begin{subcaptionblock}{0.45\textwidth}
          \includegraphics[width=\textwidth]{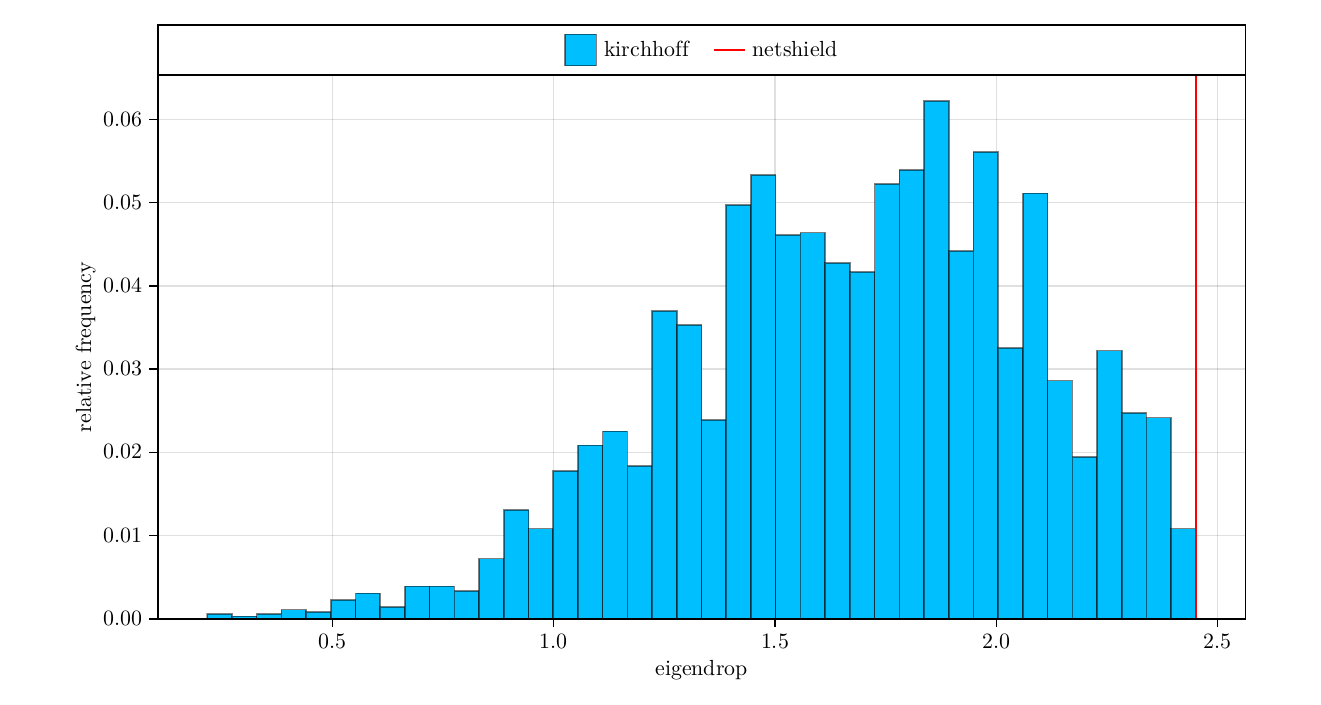}
          \caption{$k = 2$}
      \end{subcaptionblock}
      \\
      \begin{subcaptionblock}{0.45\textwidth}
          \includegraphics[width=\textwidth]{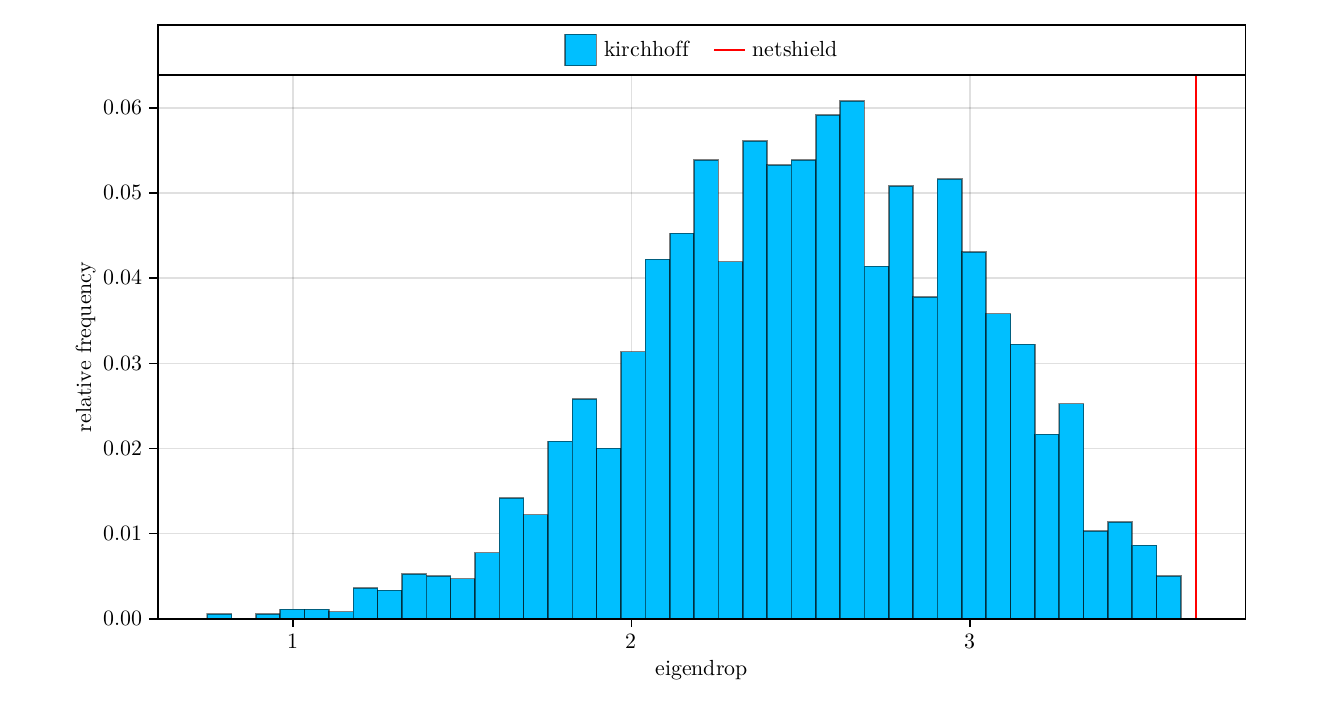}
          \caption{$k = 3$}
      \end{subcaptionblock}
      \begin{subcaptionblock}{0.45\textwidth}
          \includegraphics[width=\textwidth]{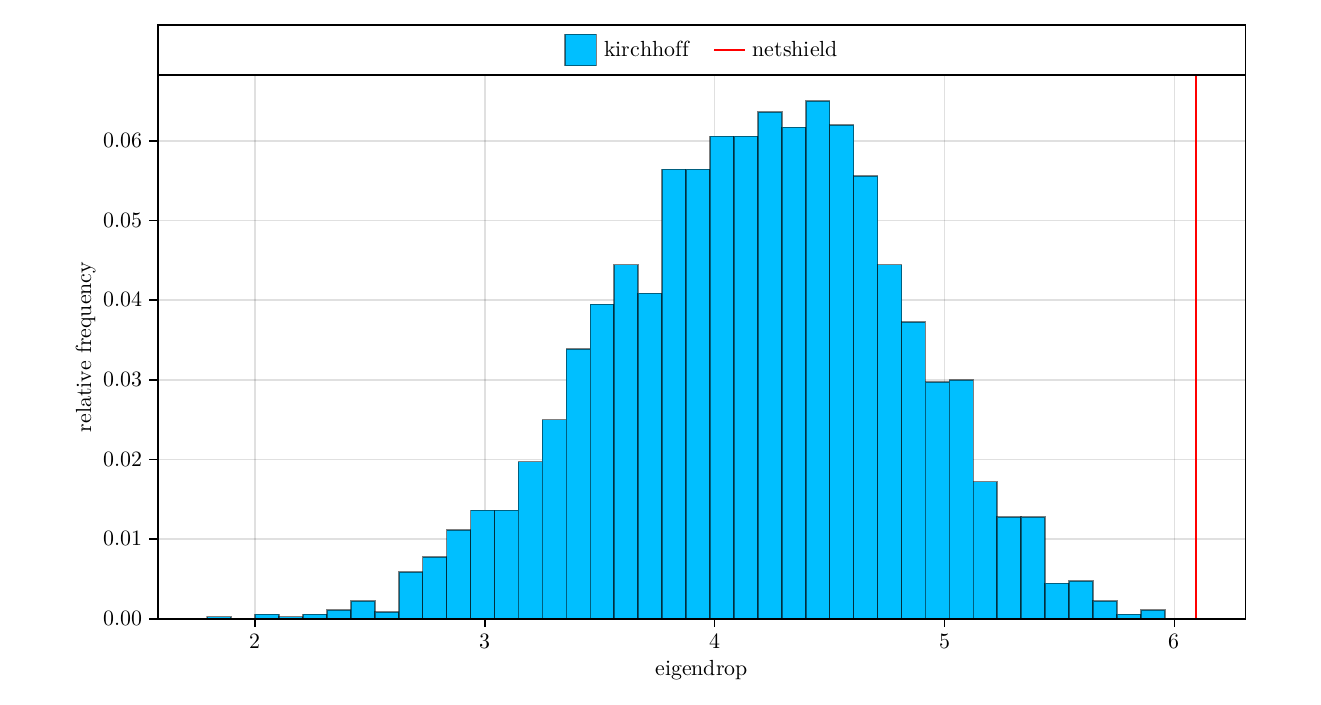}
          \caption{$k = 5$}
      \end{subcaptionblock}
      \\
      \begin{subcaptionblock}{0.45\textwidth}
          \includegraphics[width=\textwidth]{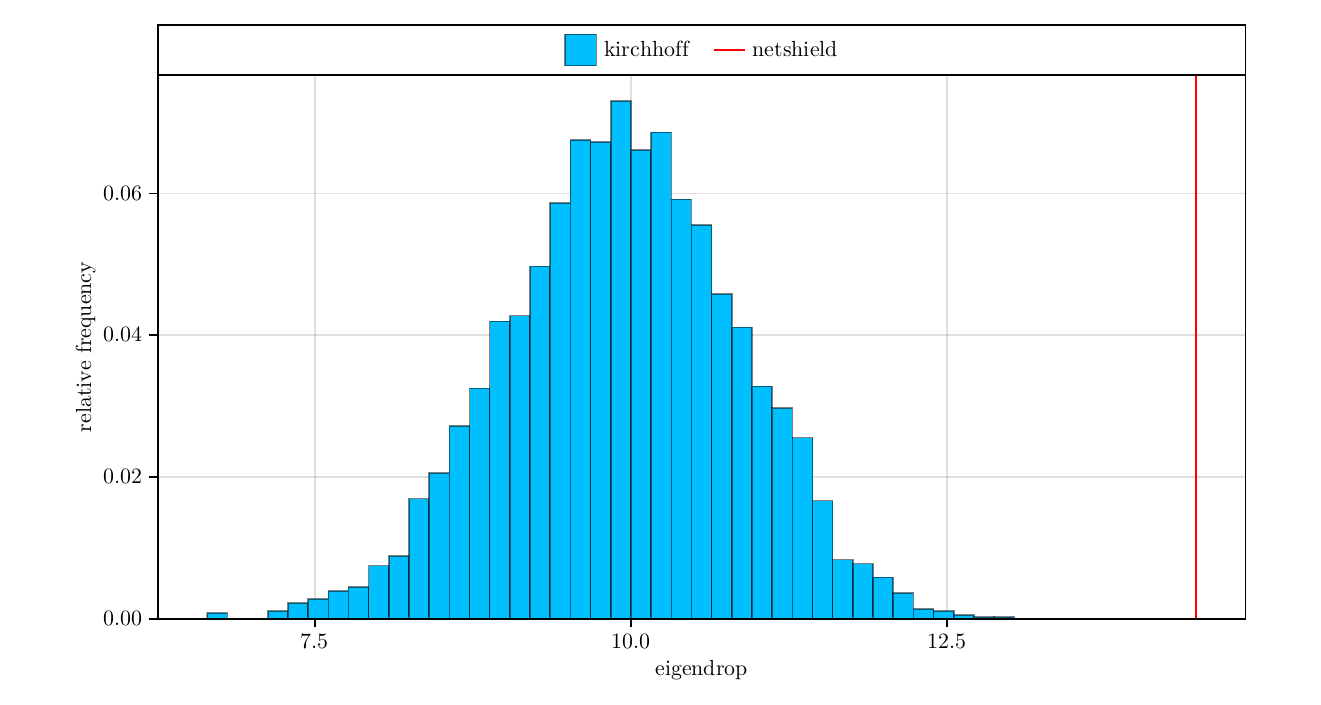}
          \caption{$k = 12$}
      \end{subcaptionblock}
\end{figure}

\begin{figure}
    \centering $ $
    \caption{Graph: ``airport 2'' (non-weighted) - eigendrop distribution}
      \begin{subcaptionblock}{0.45\textwidth}
          \includegraphics[width=\textwidth]{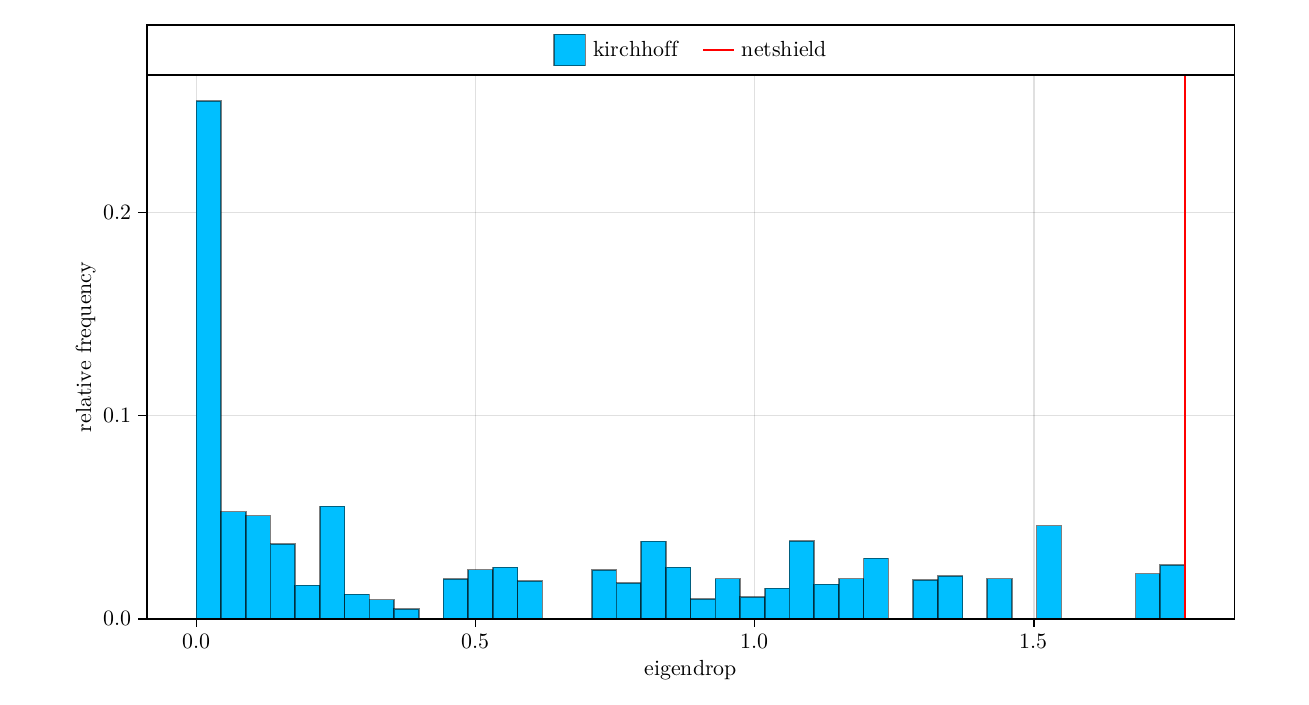}
          \caption{$k = 1$}
      \end{subcaptionblock}
      \begin{subcaptionblock}{0.45\textwidth}
          \includegraphics[width=\textwidth]{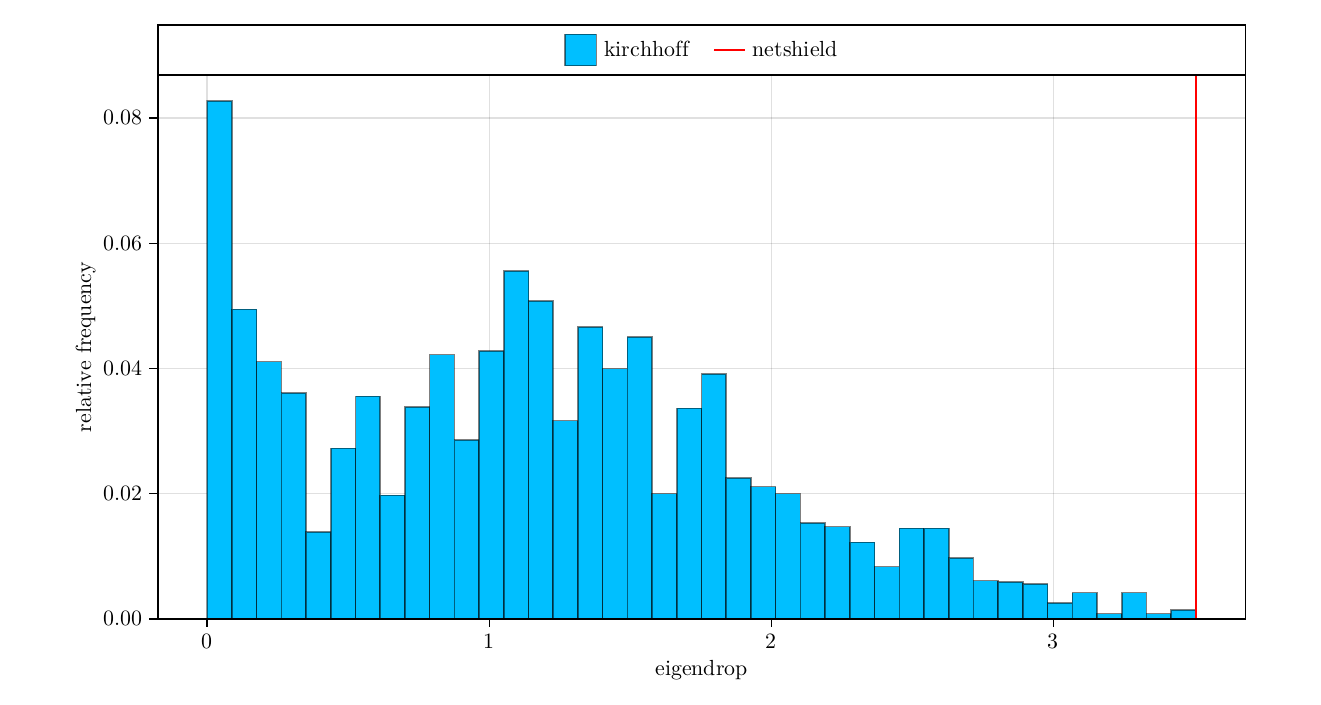}
          \caption{$k = 2$}
      \end{subcaptionblock}
      \\
      \begin{subcaptionblock}{0.45\textwidth}
          \includegraphics[width=\textwidth]{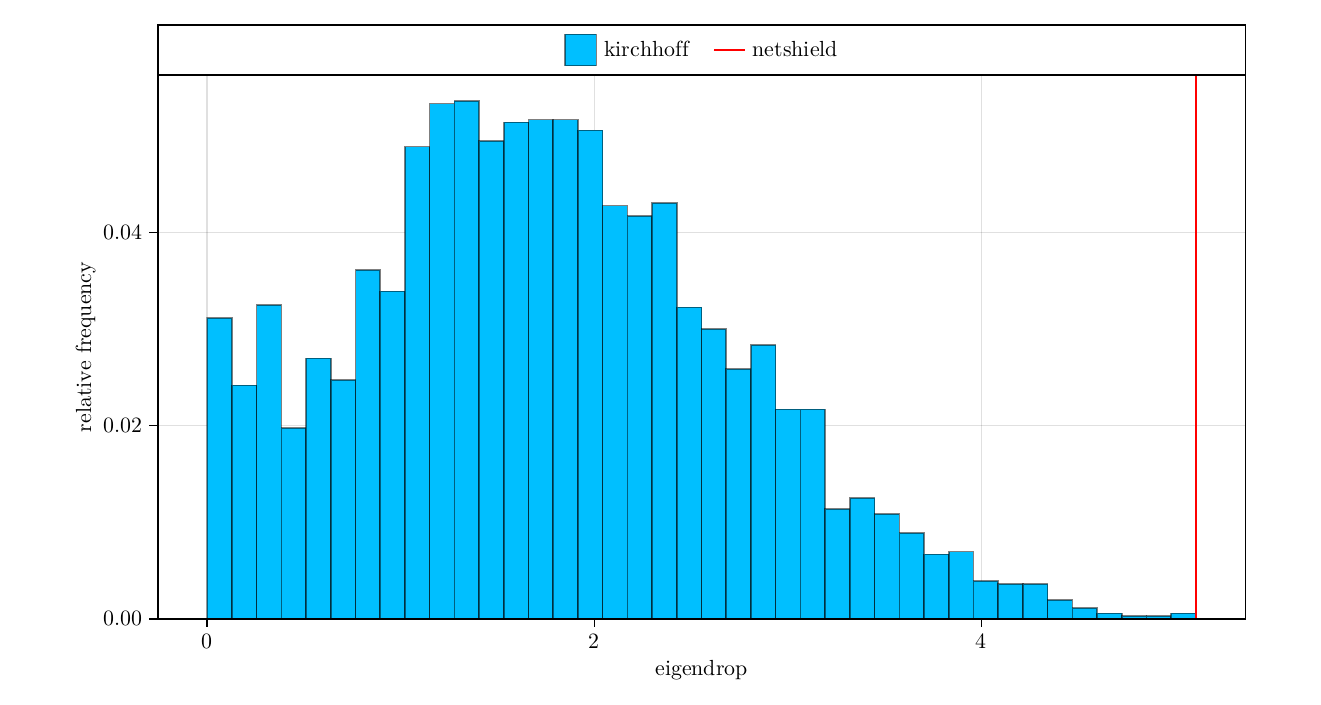}
          \caption{$k = 3$}
      \end{subcaptionblock}
      \begin{subcaptionblock}{0.45\textwidth}
          \includegraphics[width=\textwidth]{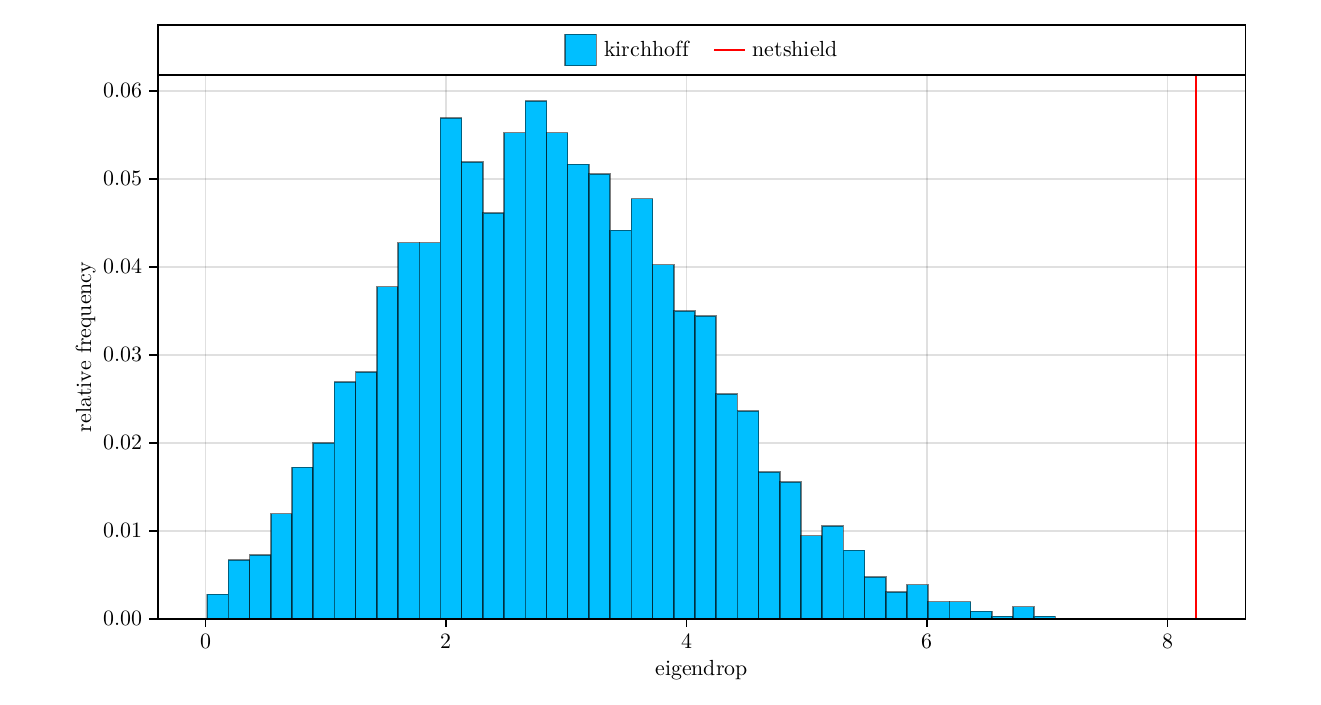}
          \caption{$k = 5$}
      \end{subcaptionblock}
      \\
      \begin{subcaptionblock}{0.45\textwidth}
          \includegraphics[width=\textwidth]{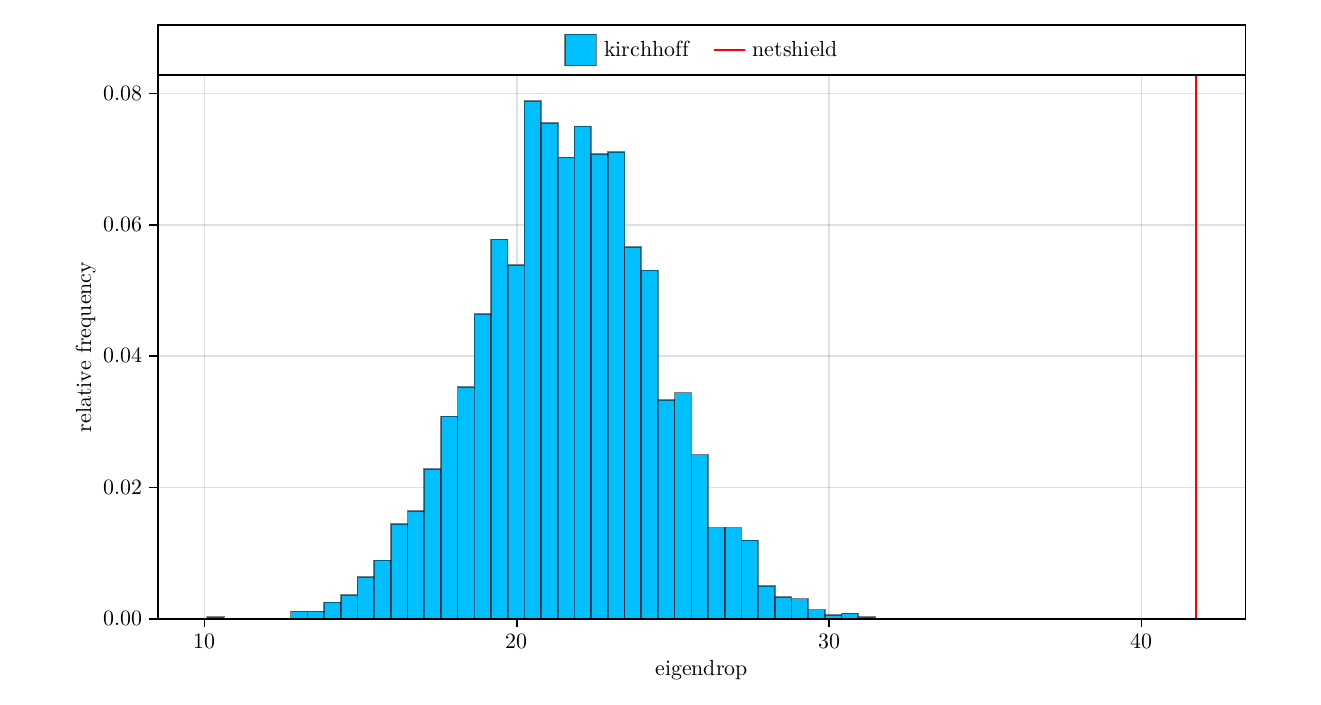}
          \caption{$k = 50$}
      \end{subcaptionblock}
      \begin{subcaptionblock}{0.45\textwidth}
          \includegraphics[width=\textwidth]{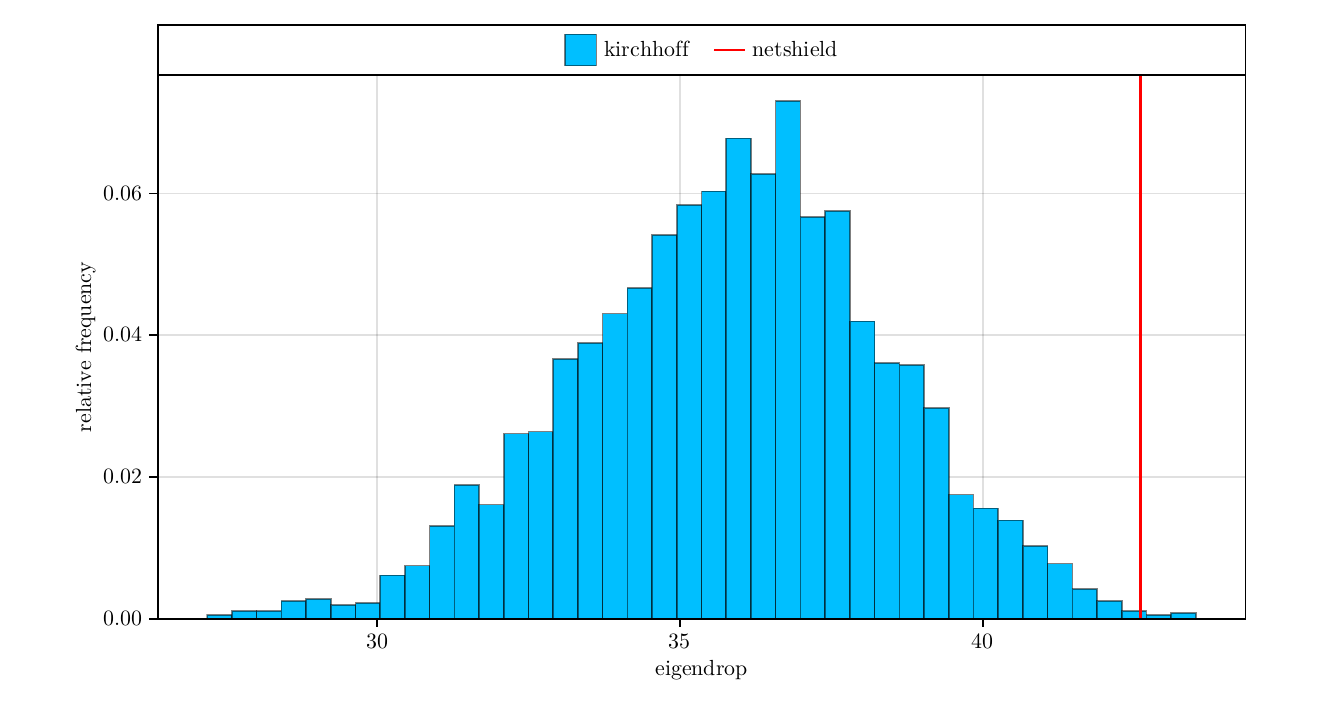}
          \caption{$k = 125$}
      \end{subcaptionblock}
\end{figure}

\begin{figure}
    \centering $ $
    \caption{Graph: ``airport 2'' (weighted) - eigendrop distribution}
      \begin{subcaptionblock}{0.45\textwidth}
          \includegraphics[width=\textwidth]{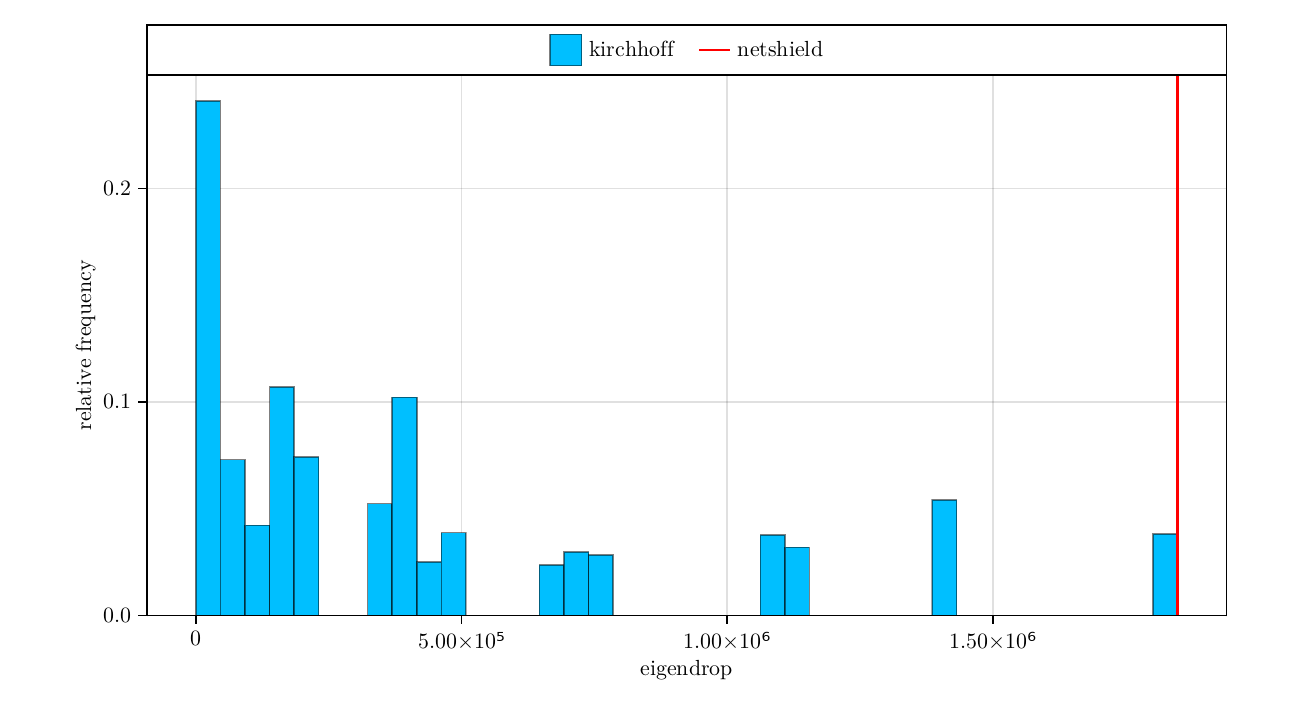}
          \caption{$k = 1$}
      \end{subcaptionblock}
      \begin{subcaptionblock}{0.45\textwidth}
          \includegraphics[width=\textwidth]{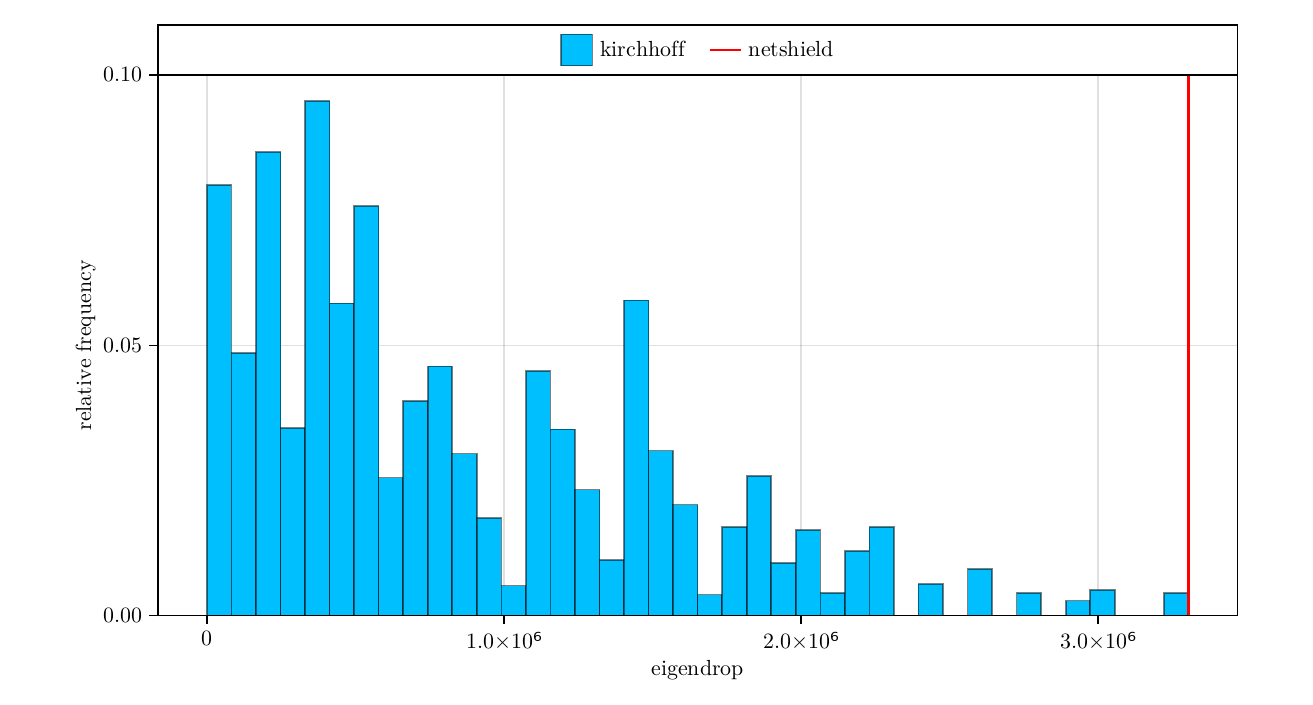}
          \caption{$k = 2$}
      \end{subcaptionblock}
      \\
      \begin{subcaptionblock}{0.45\textwidth}
          \includegraphics[width=\textwidth]{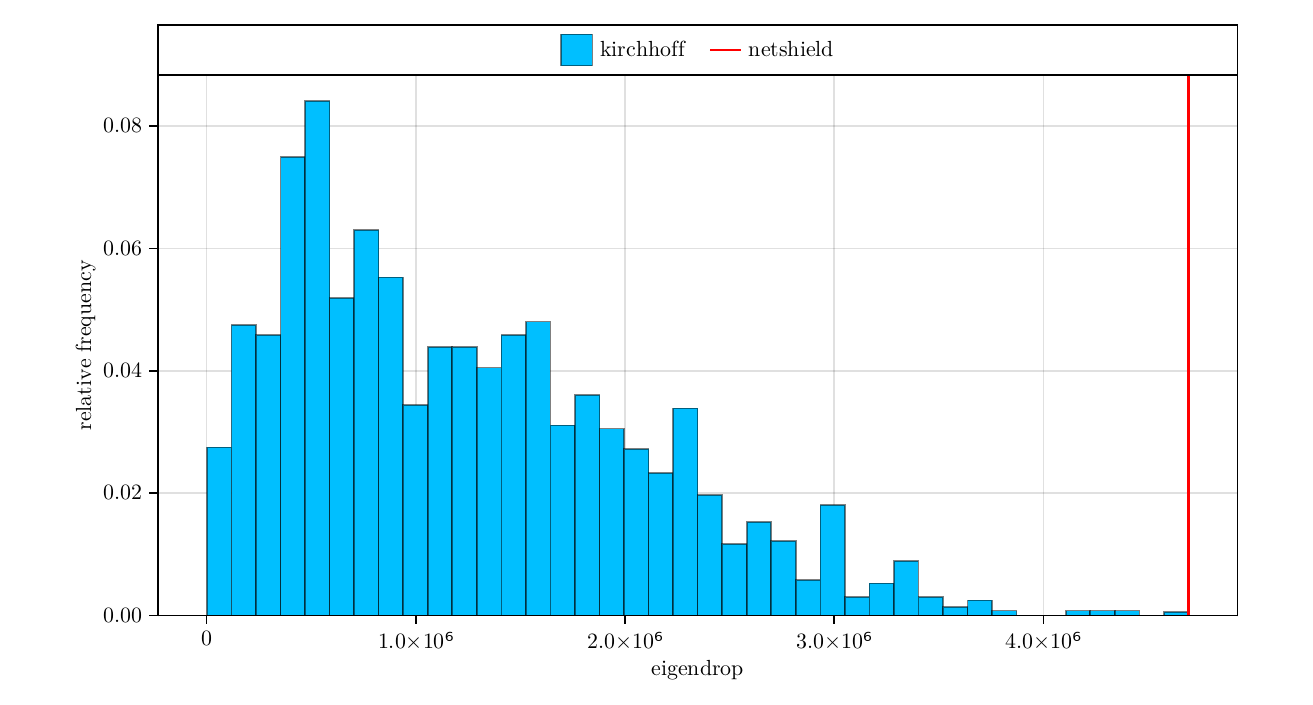}
          \caption{$k = 3$}
      \end{subcaptionblock}
      \begin{subcaptionblock}{0.45\textwidth}
          \includegraphics[width=\textwidth]{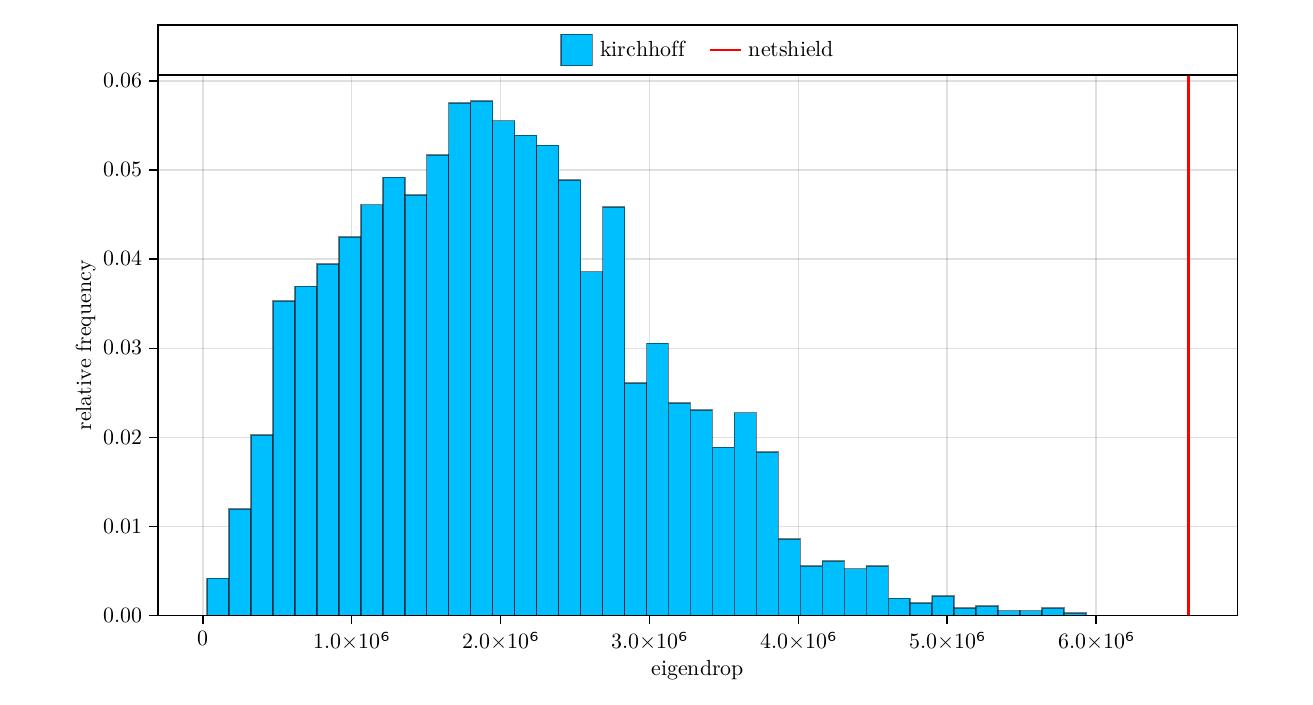}
          \caption{$k = 5$}
      \end{subcaptionblock}
      \\
      \begin{subcaptionblock}{0.45\textwidth}
          \includegraphics[width=\textwidth]{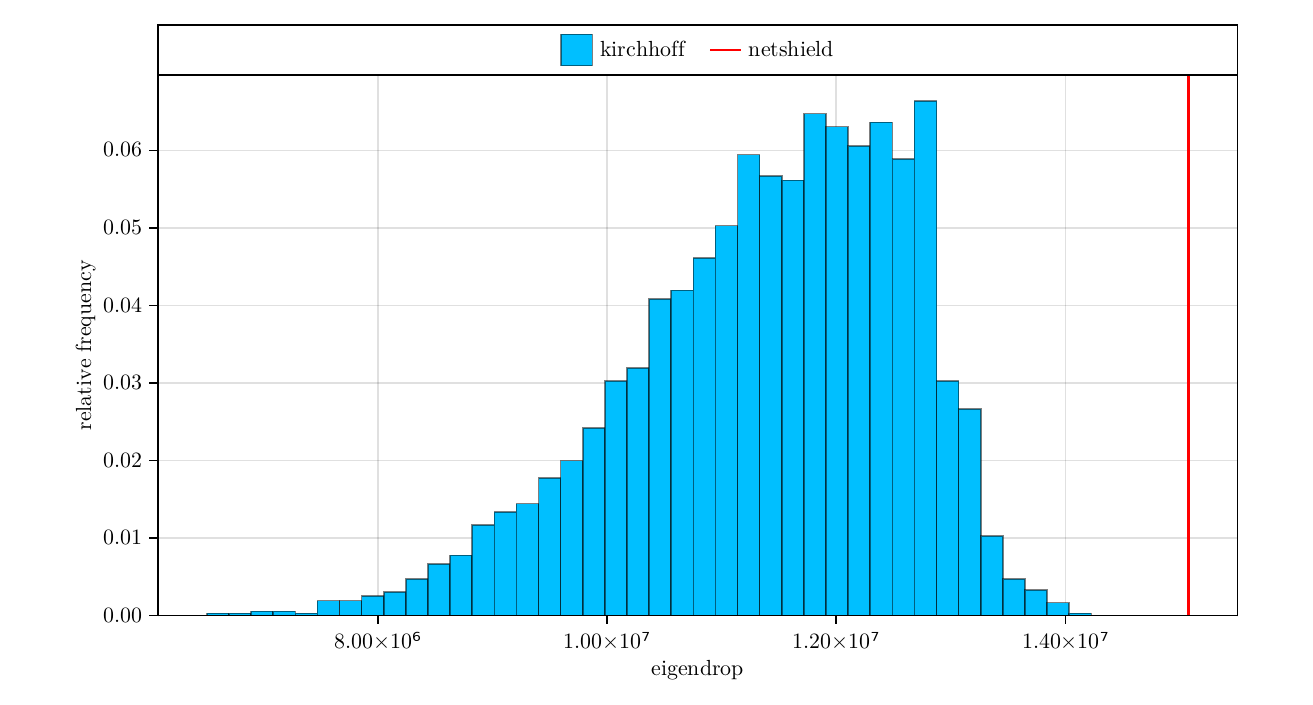}
          \caption{$k = 50$}
      \end{subcaptionblock}
      \begin{subcaptionblock}{0.45\textwidth}
          \includegraphics[width=\textwidth]{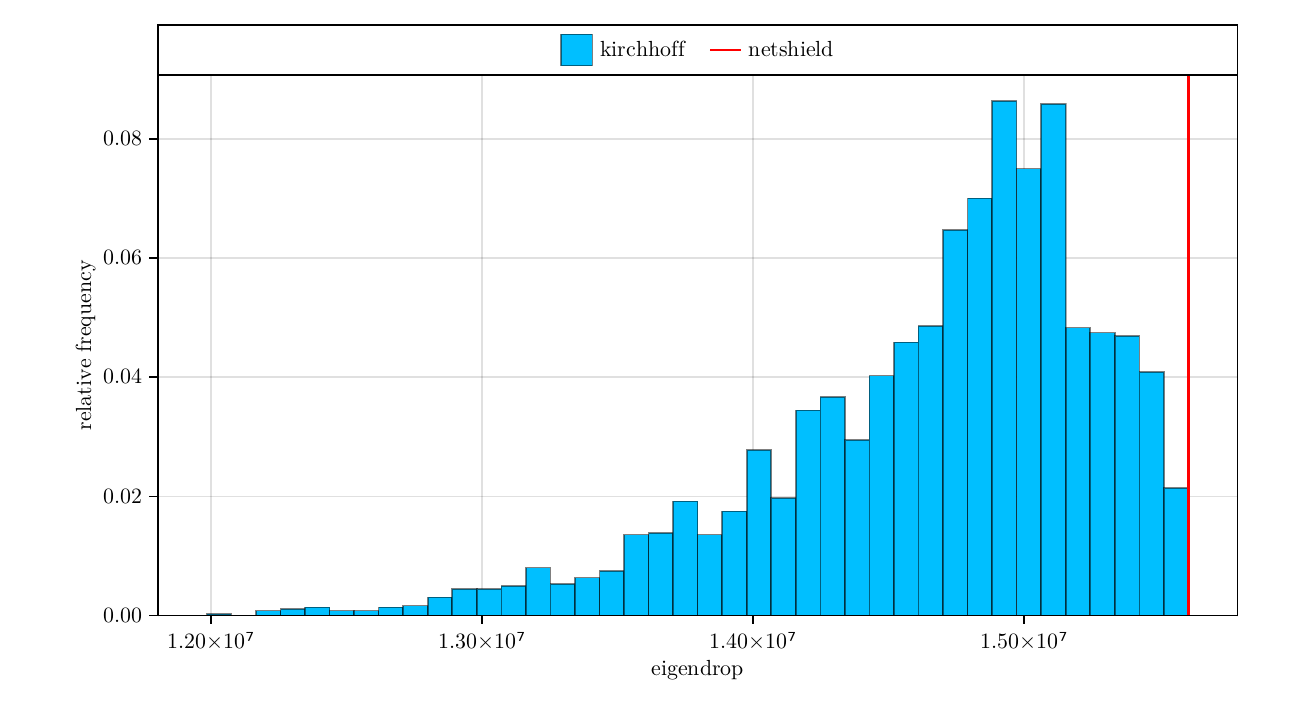}
          \caption{$k = 125$}
      \end{subcaptionblock}
\end{figure}

\begin{figure}
    \centering $ $
    \caption{Graph: ``airport 3'' (non-weighted) - eigendrop distribution}
      \begin{subcaptionblock}{0.45\textwidth}
          \includegraphics[width=\textwidth]{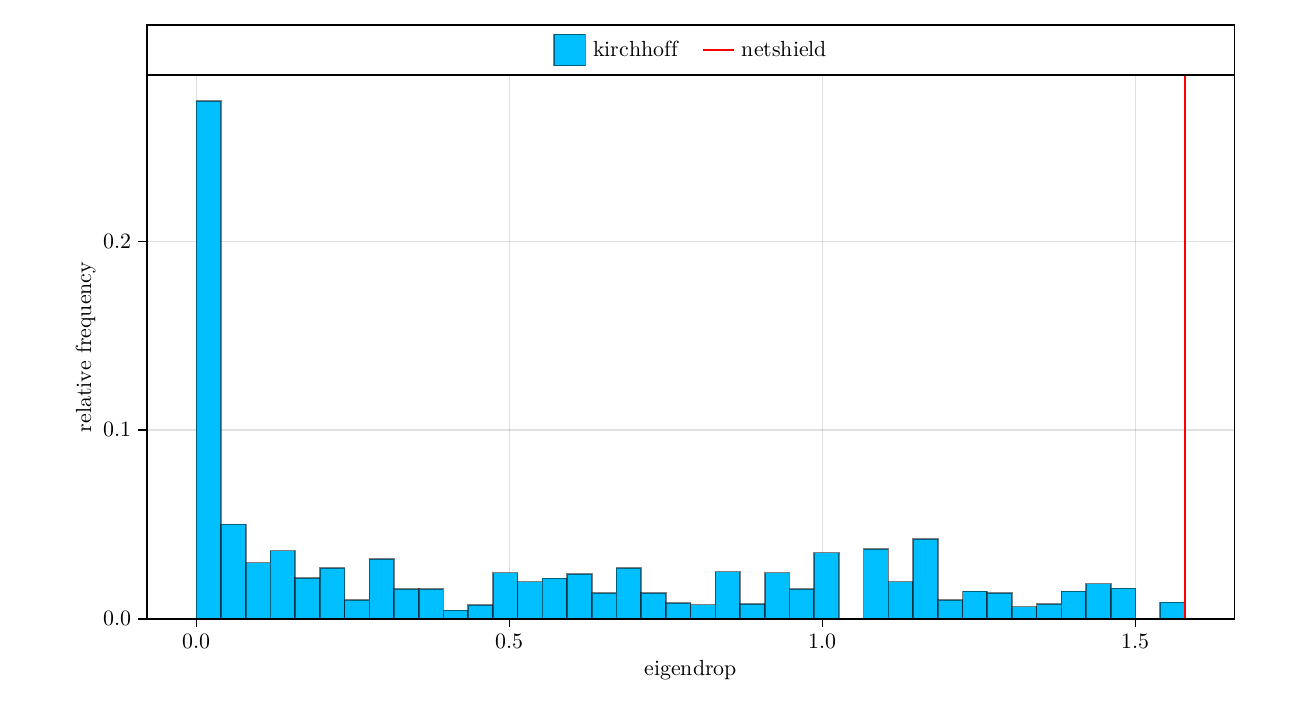}
          \caption{$k = 1$}
      \end{subcaptionblock}
      \begin{subcaptionblock}{0.45\textwidth}
          \includegraphics[width=\textwidth]{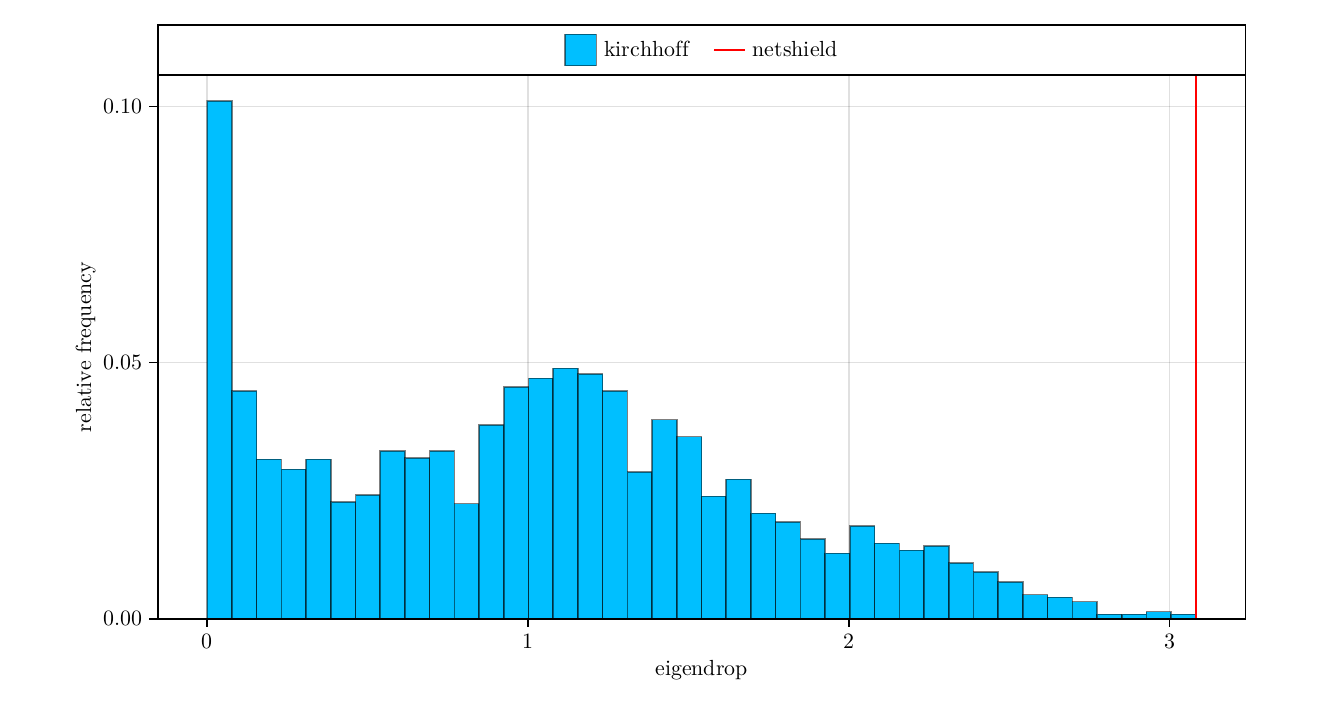}
          \caption{$k = 2$}
      \end{subcaptionblock}
      \\
      \begin{subcaptionblock}{0.45\textwidth}
          \includegraphics[width=\textwidth]{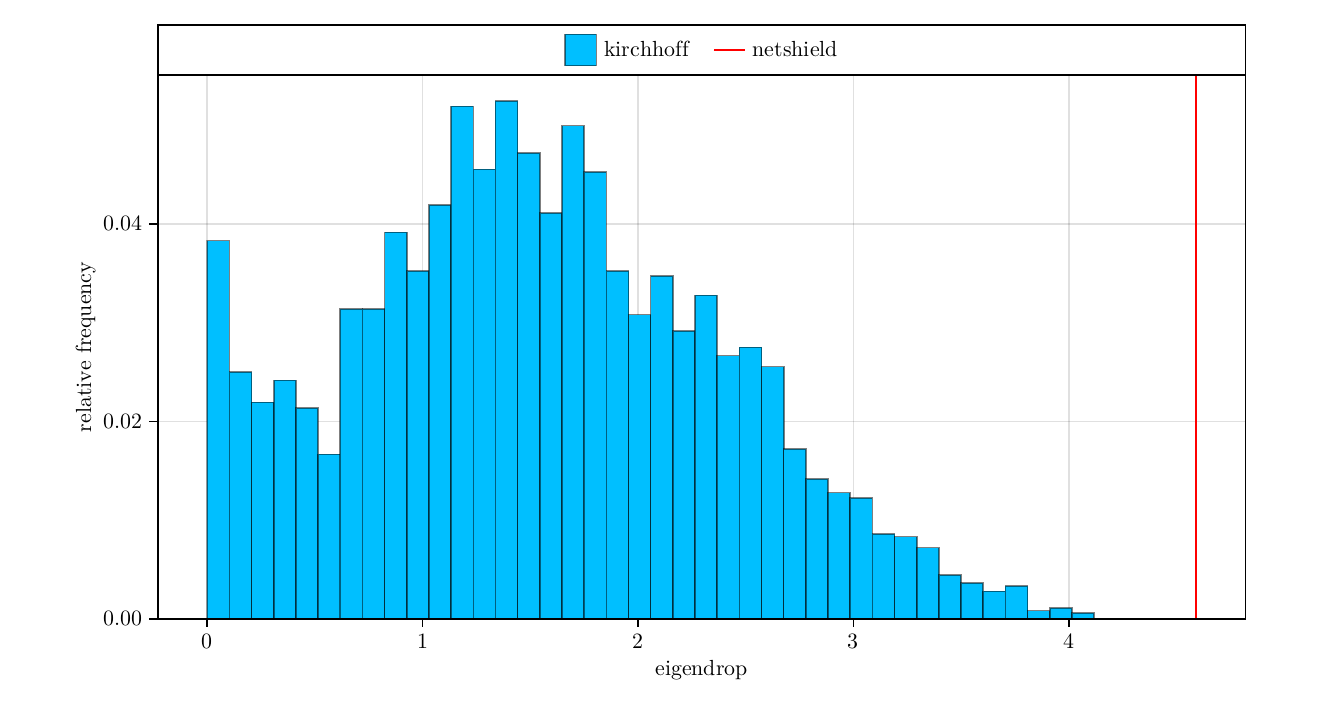}
          \caption{$k = 3$}
      \end{subcaptionblock}
      \begin{subcaptionblock}{0.45\textwidth}
          \includegraphics[width=\textwidth]{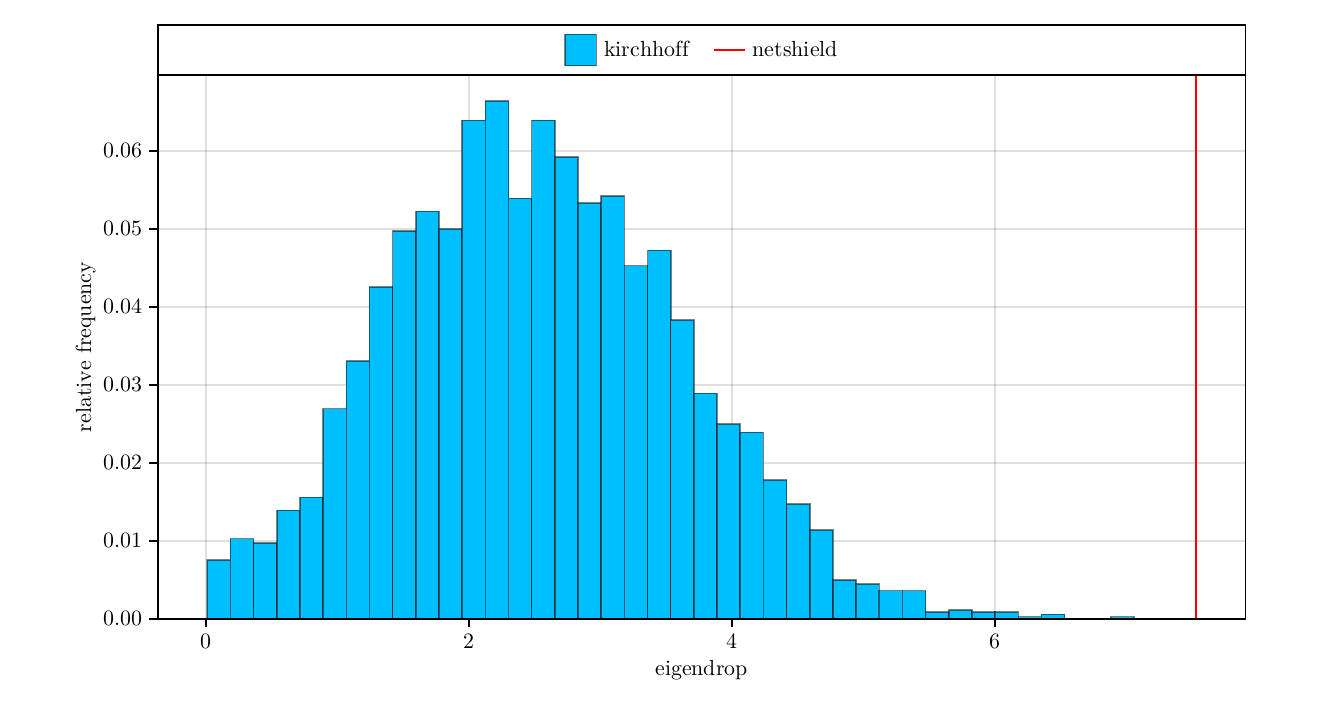}
          \caption{$k = 5$}
      \end{subcaptionblock}
      \\
      \begin{subcaptionblock}{0.45\textwidth}
          \includegraphics[width=\textwidth]{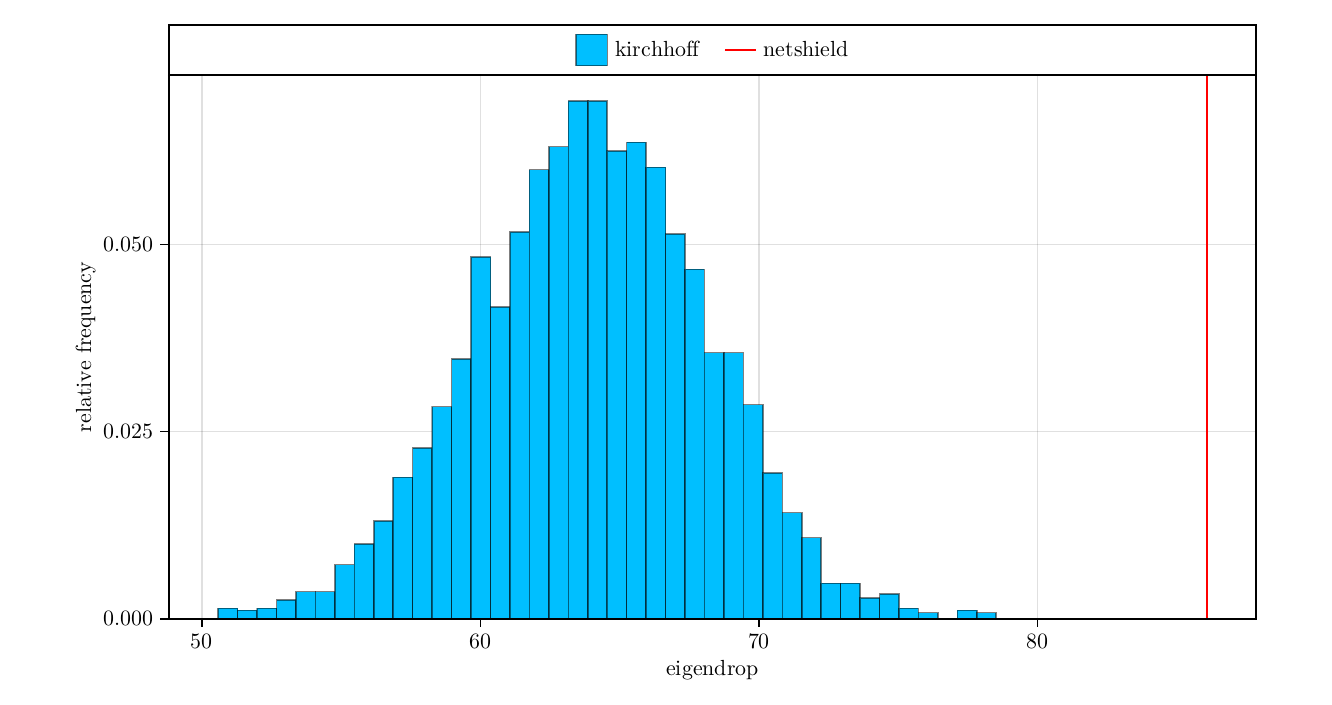}
          \caption{$k = 186$}
      \end{subcaptionblock}
      \begin{subcaptionblock}{0.45\textwidth}
          \includegraphics[width=\textwidth]{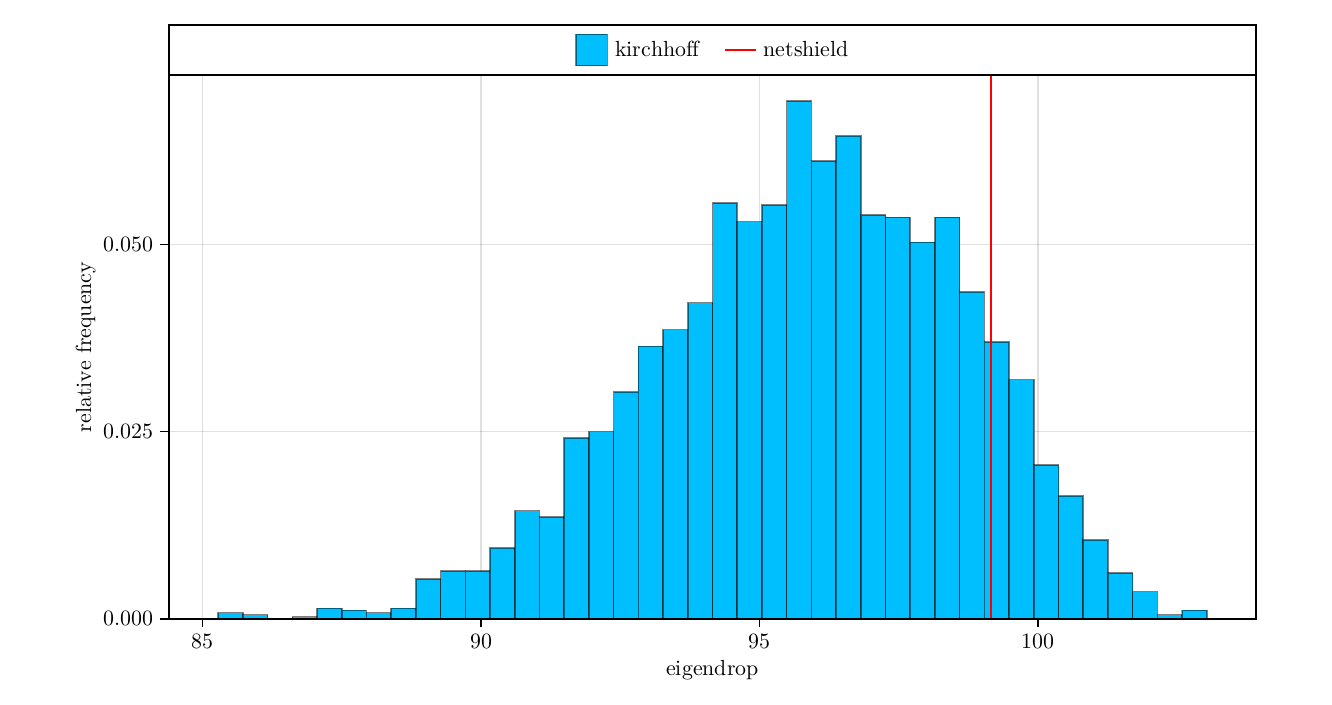}
          \caption{$k = 464$}
      \end{subcaptionblock}
\end{figure}

\begin{figure}
    \centering $ $
    \caption{Graph: ``airport 4'' (non-weighted) - eigendrop distribution}
      \begin{subcaptionblock}{0.45\textwidth}
          \includegraphics[width=\textwidth]{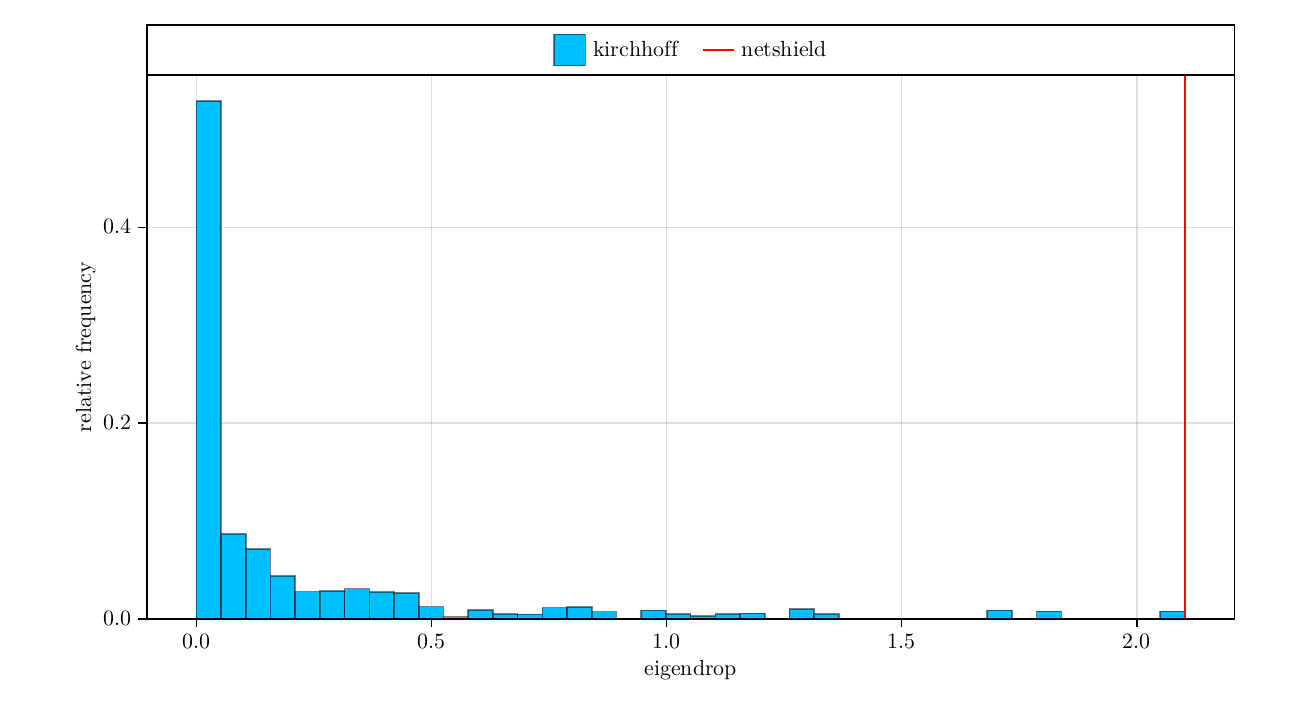}
          \caption{$k = 1$}
      \end{subcaptionblock}
      \begin{subcaptionblock}{0.45\textwidth}
          \includegraphics[width=\textwidth]{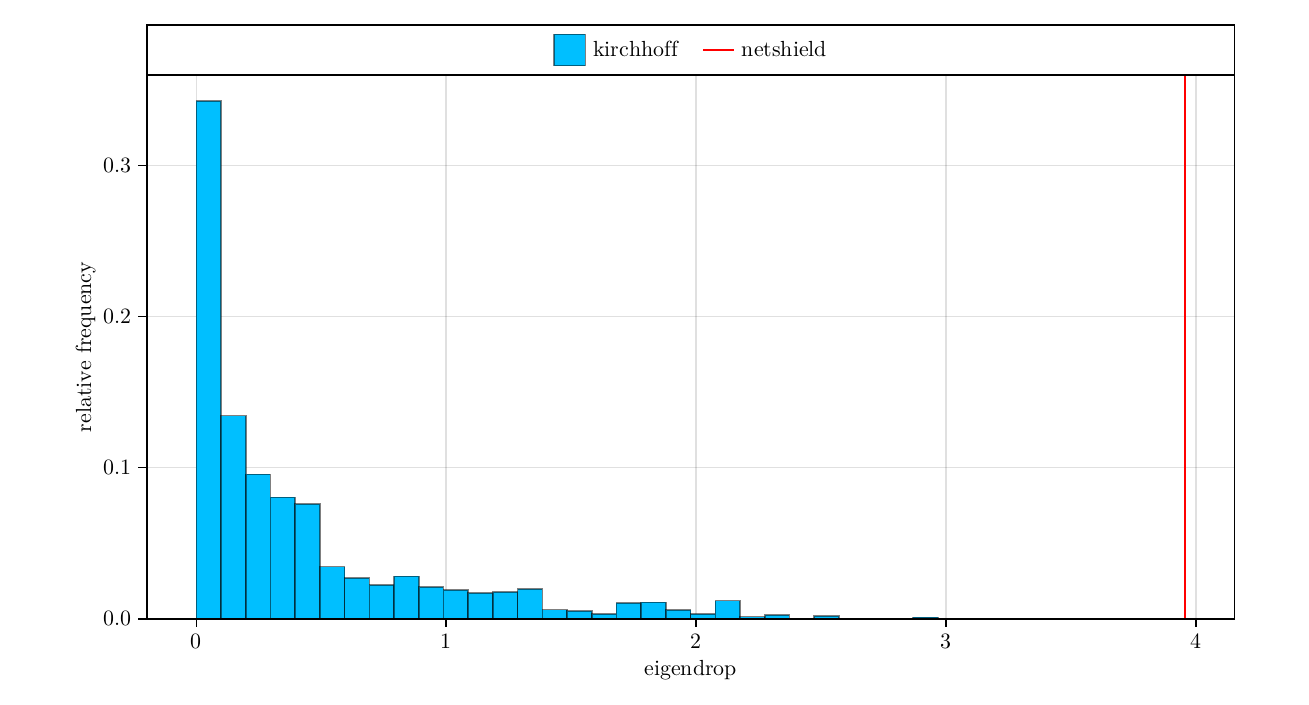}
          \caption{$k = 2$}
      \end{subcaptionblock}
      \\
      \begin{subcaptionblock}{0.45\textwidth}
          \includegraphics[width=\textwidth]{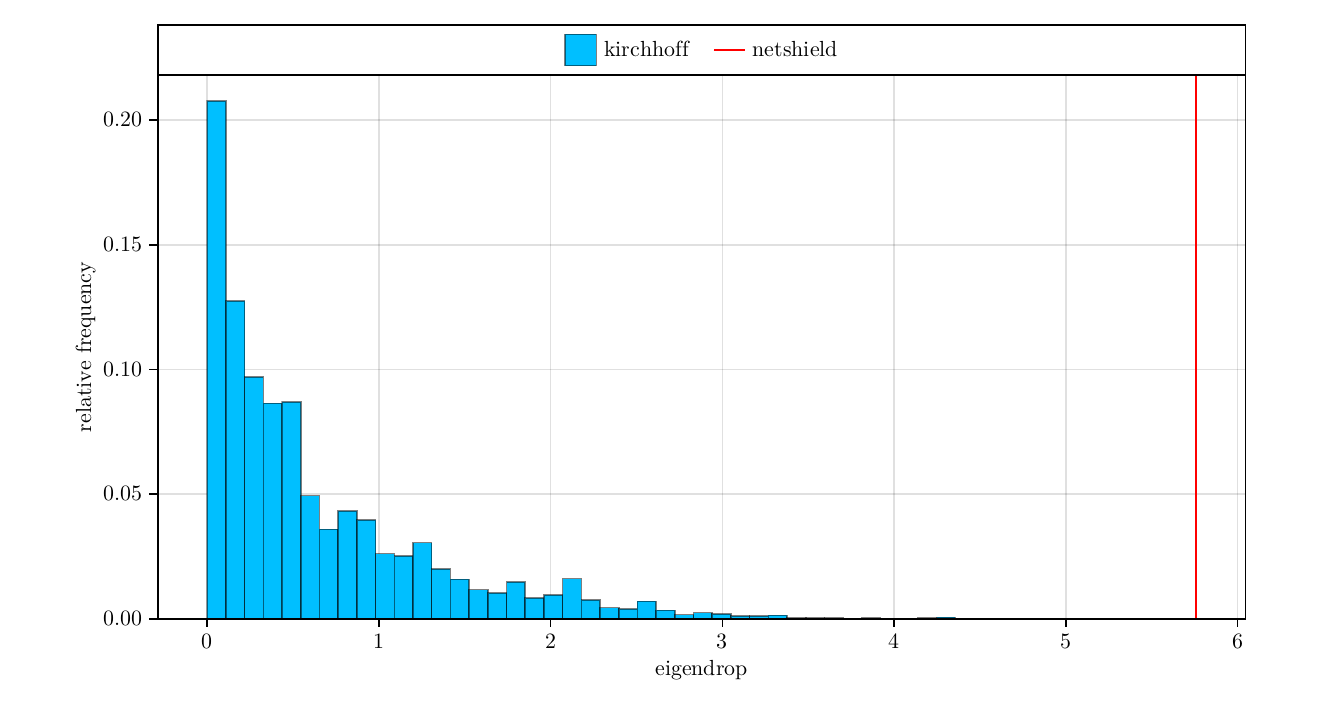}
          \caption{$k = 3$}
      \end{subcaptionblock}
      \begin{subcaptionblock}{0.45\textwidth}
          \includegraphics[width=\textwidth]{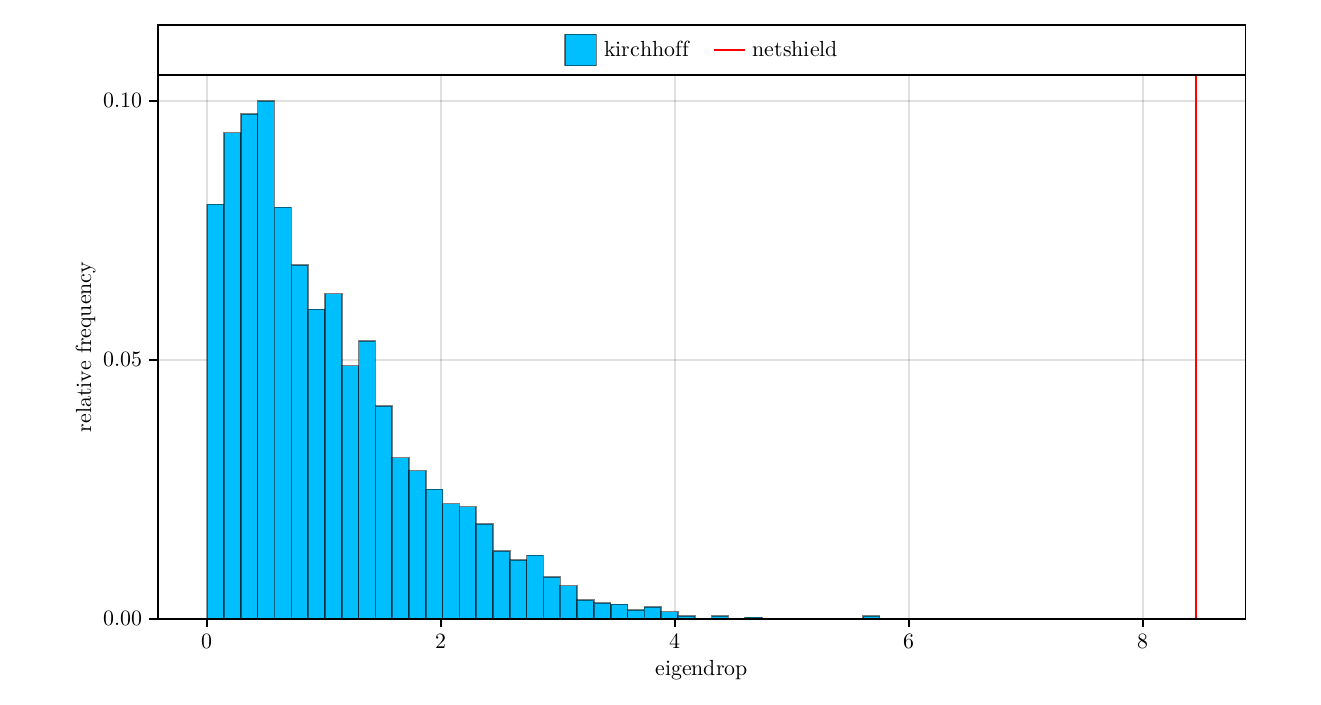}
          \caption{$k = 5$}
      \end{subcaptionblock}
      \\
      \begin{subcaptionblock}{0.45\textwidth}
          \includegraphics[width=\textwidth]{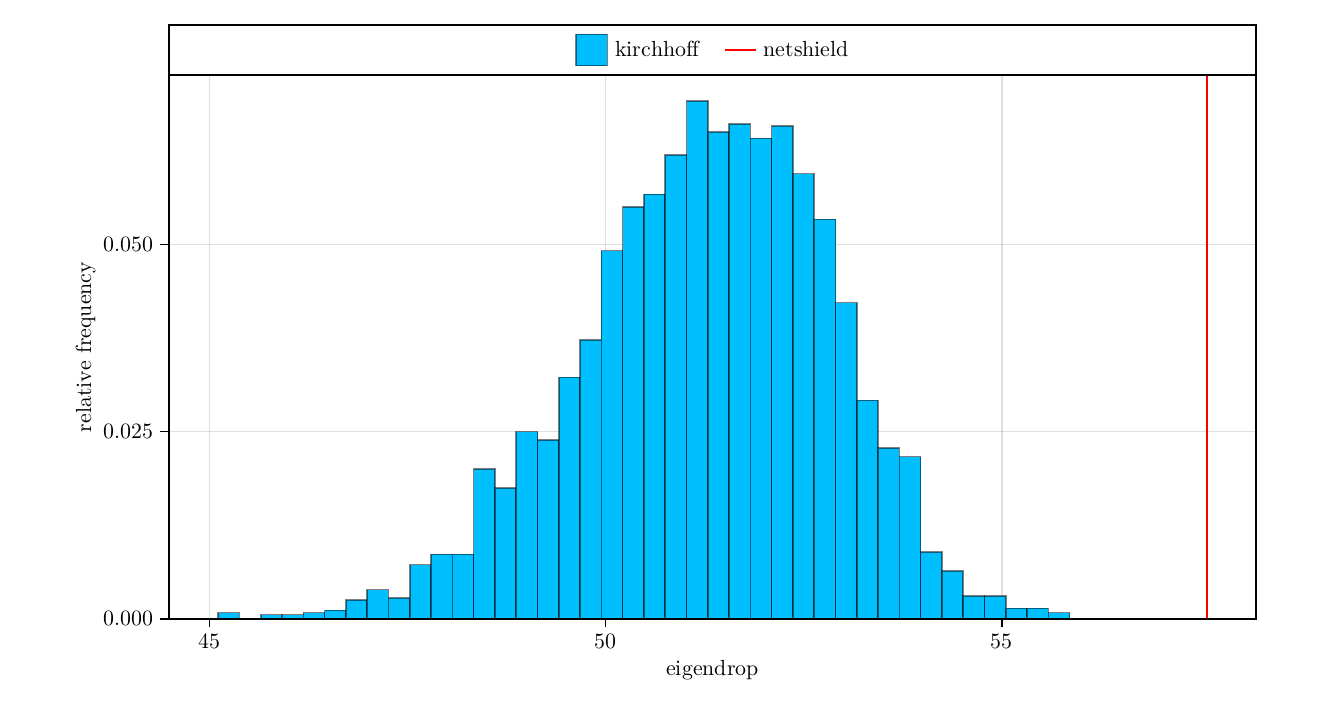}
          \caption{$k = 798$}
      \end{subcaptionblock}
      \begin{subcaptionblock}{0.45\textwidth}
          \includegraphics[width=\textwidth]{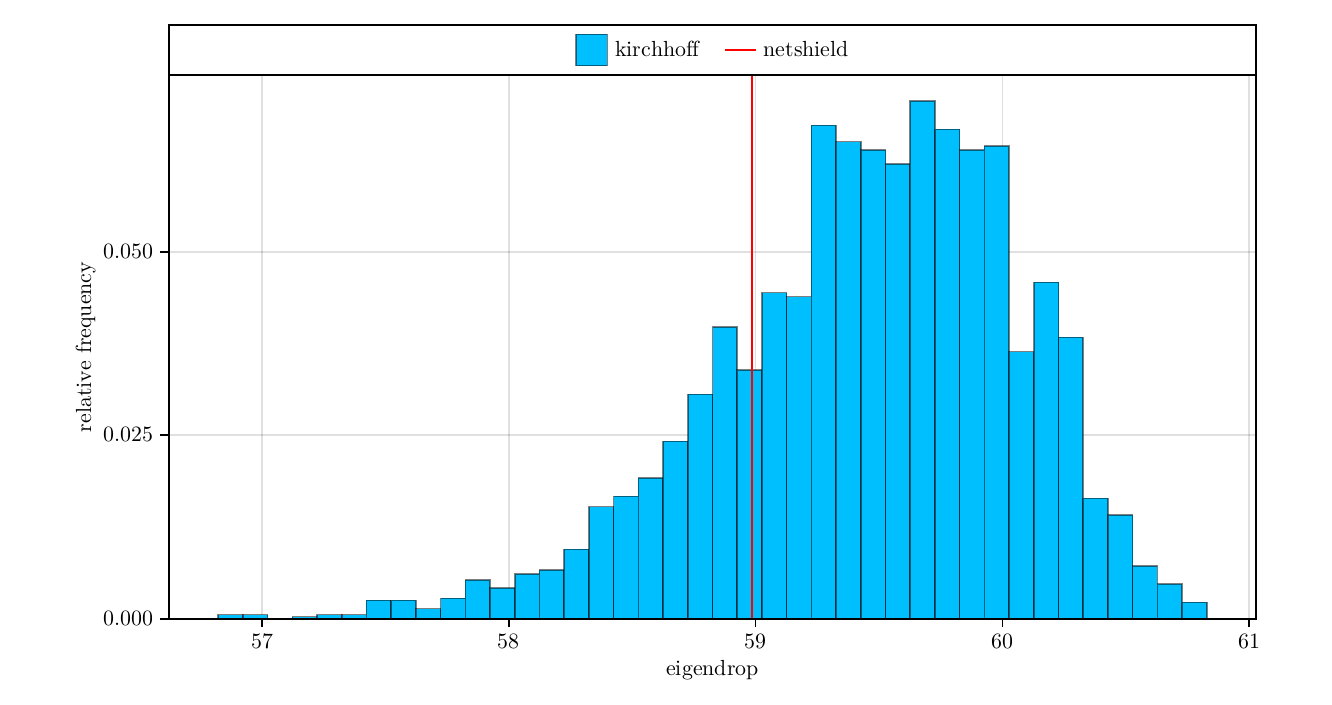}
          \caption{$k = 1994$}
      \end{subcaptionblock}
\end{figure}

\begin{figure}
    \centering $ $
    \caption{Graph: ``airport 4'' (weighted) - eigendrop distribution}
      \begin{subcaptionblock}{0.45\textwidth}
          \includegraphics[width=\textwidth]{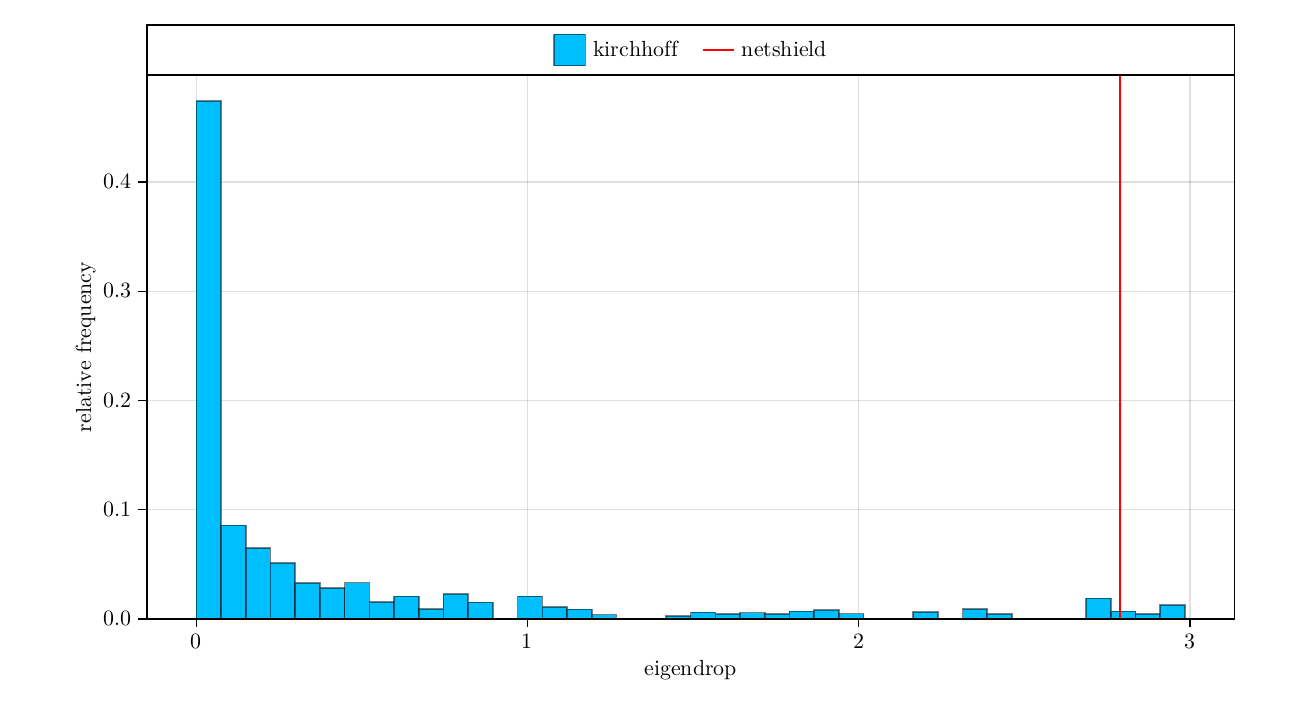}
          \caption{$k = 1$}
      \end{subcaptionblock}
      \begin{subcaptionblock}{0.45\textwidth}
          \includegraphics[width=\textwidth]{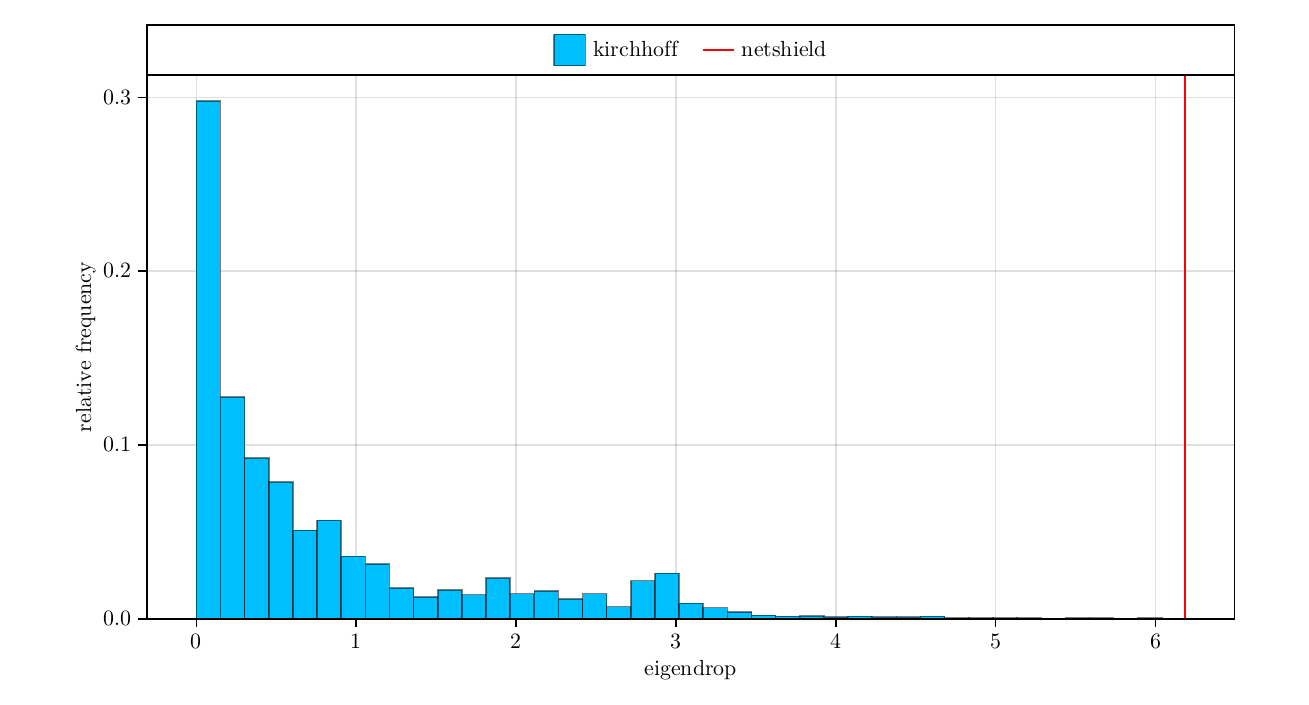}
          \caption{$k = 2$}
      \end{subcaptionblock}
      \\
      \begin{subcaptionblock}{0.45\textwidth}
          \includegraphics[width=\textwidth]{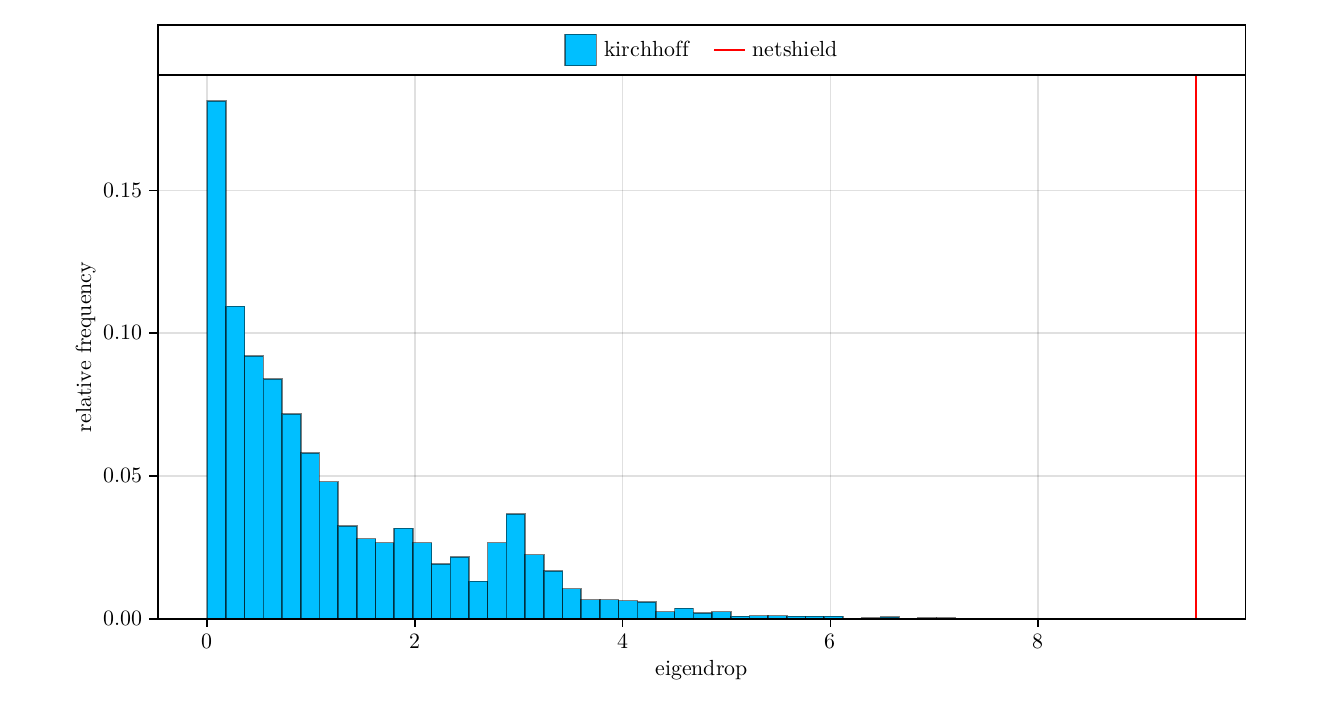}
          \caption{$k = 3$}
      \end{subcaptionblock}
      \begin{subcaptionblock}{0.45\textwidth}
          \includegraphics[width=\textwidth]{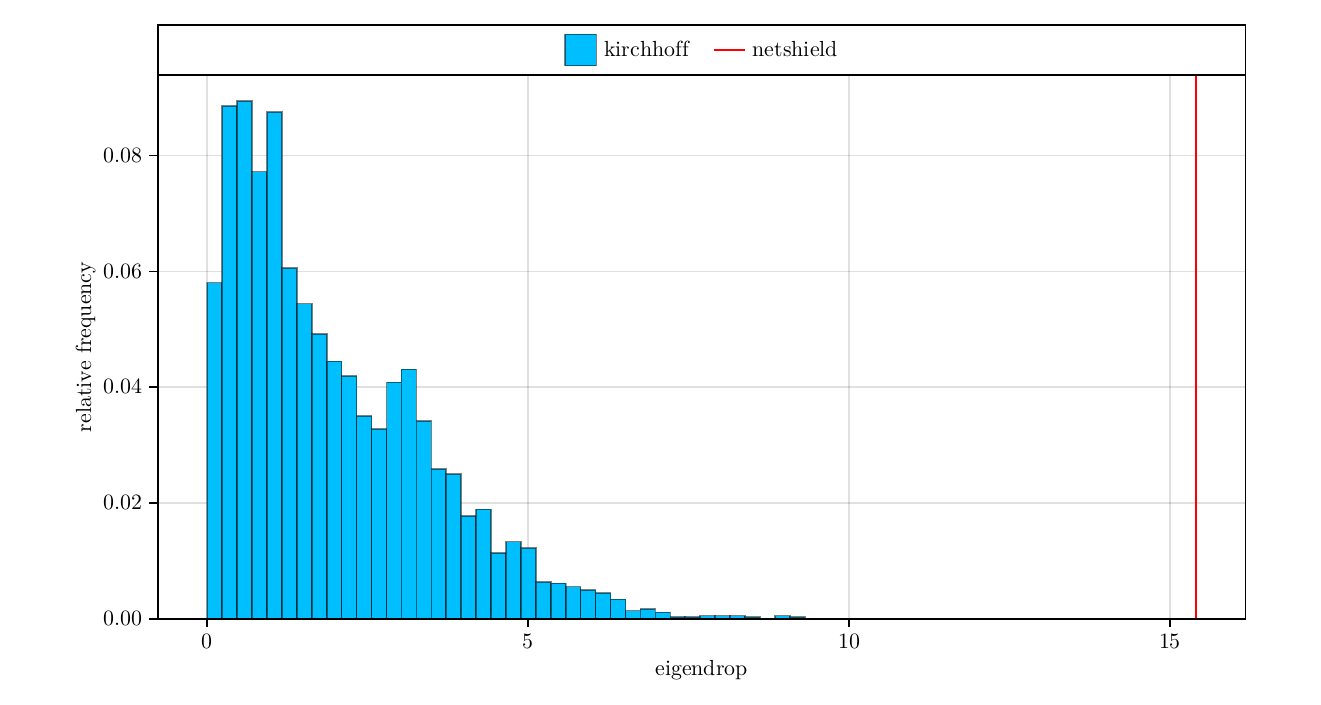}
          \caption{$k = 5$}
      \end{subcaptionblock}
      \\
      \begin{subcaptionblock}{0.45\textwidth}
          \includegraphics[width=\textwidth]{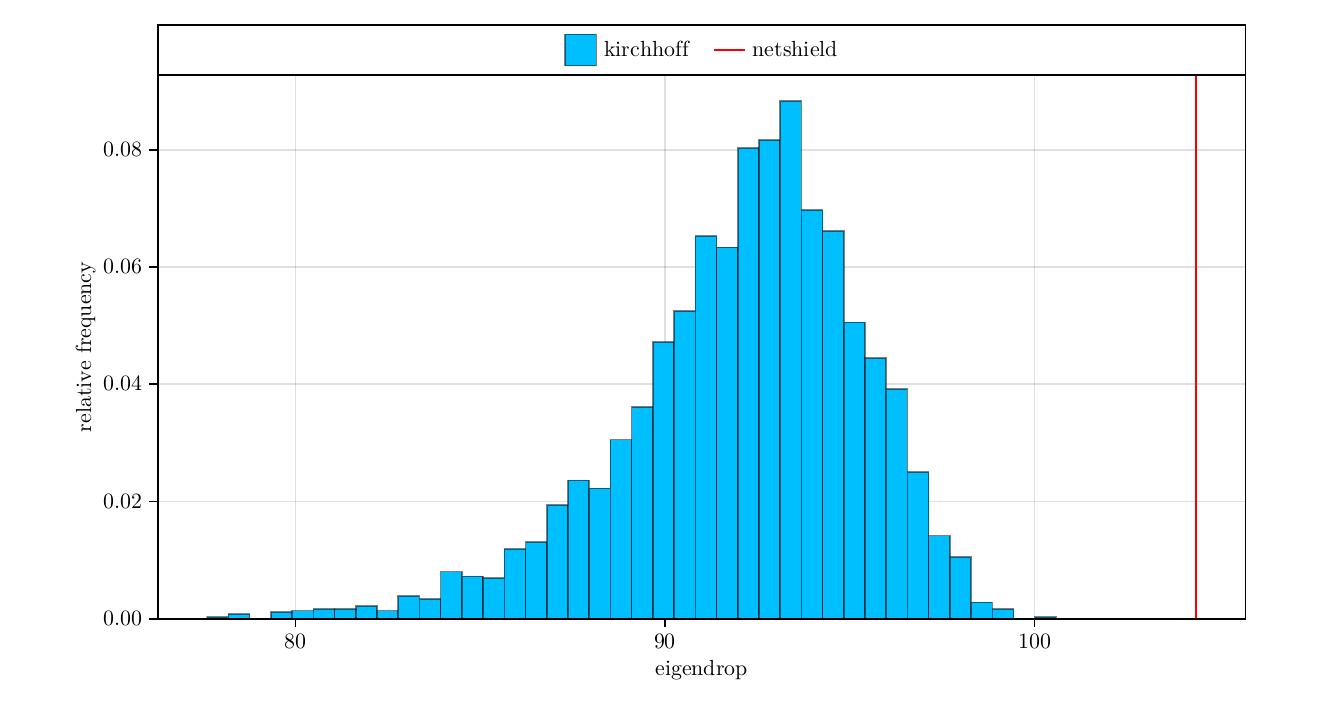}
          \caption{$k = 798$}
      \end{subcaptionblock}
      \begin{subcaptionblock}{0.45\textwidth}
          \includegraphics[width=\textwidth]{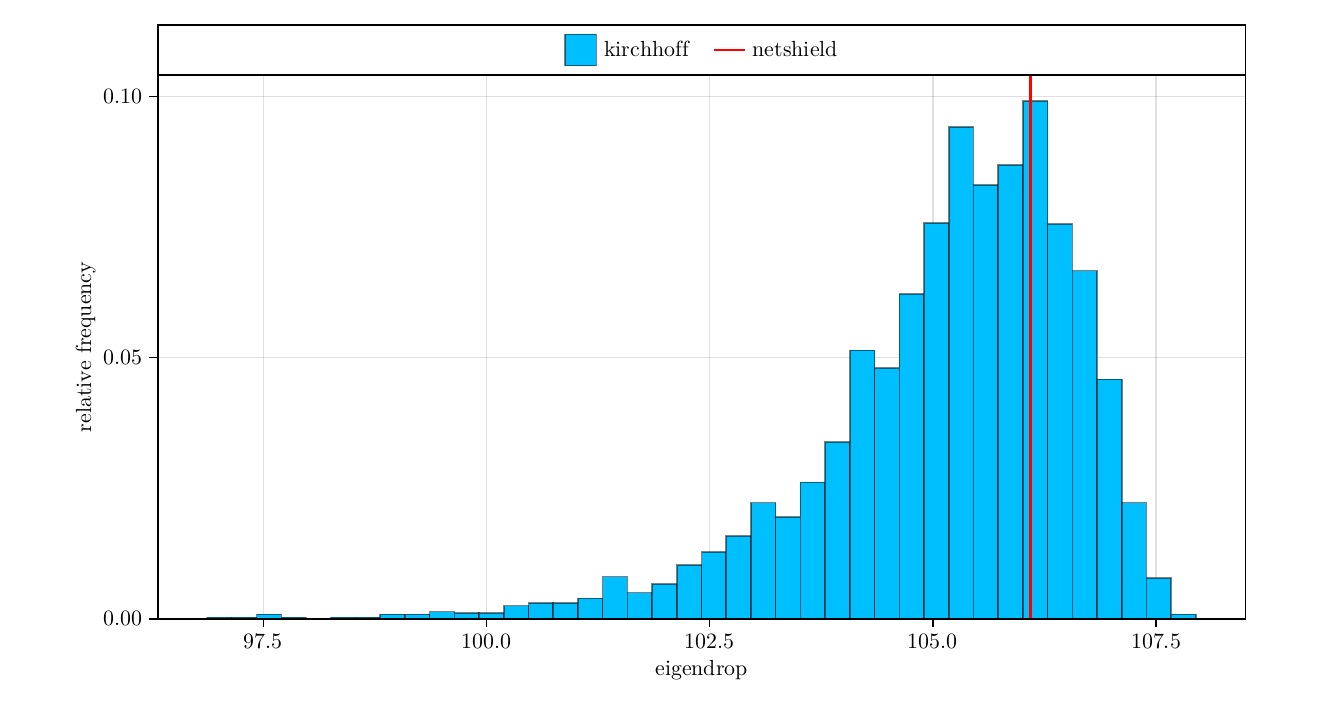}
          \caption{$k = 1994$}
      \end{subcaptionblock}
\end{figure}

\begin{figure}
    \centering $ $
    \caption{Graph: ``rfid'' (non-weighted) - eigendrop distribution}
      \begin{subcaptionblock}{0.45\textwidth}
          \includegraphics[width=\textwidth]{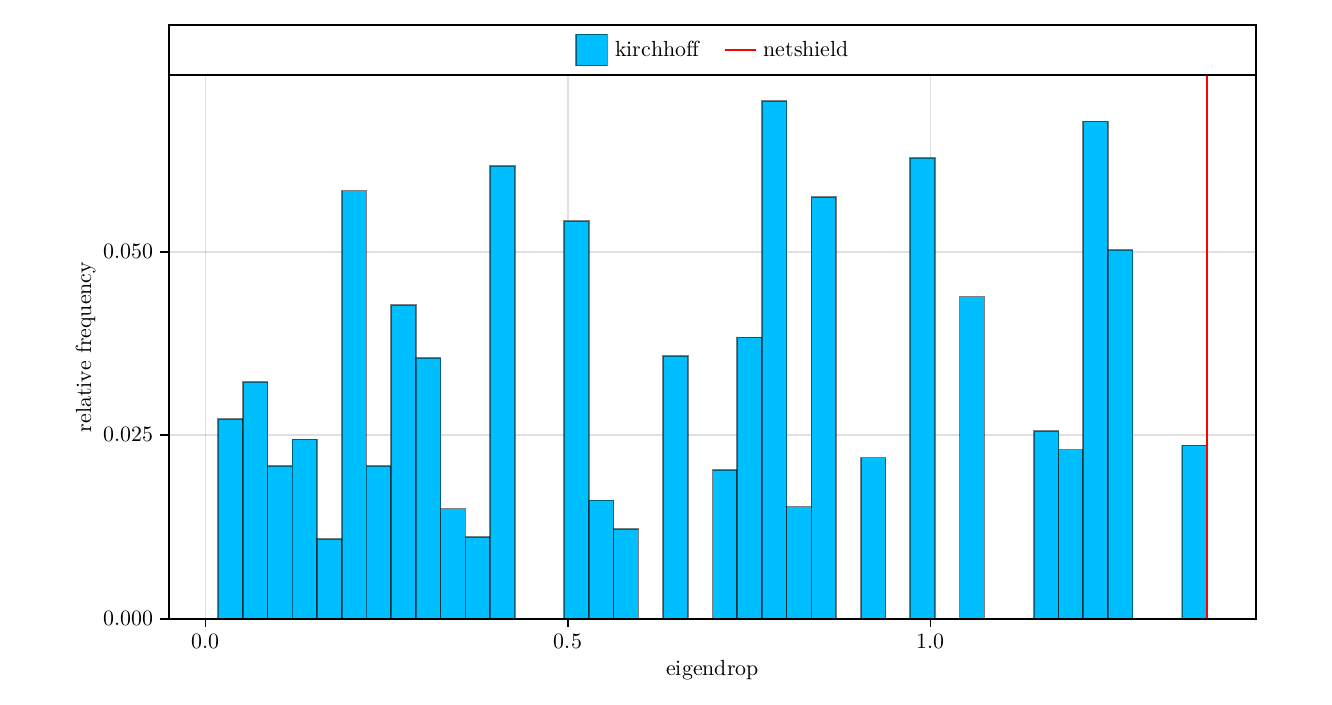}
          \caption{$k = 1$}
      \end{subcaptionblock}
      \begin{subcaptionblock}{0.45\textwidth}
          \includegraphics[width=\textwidth]{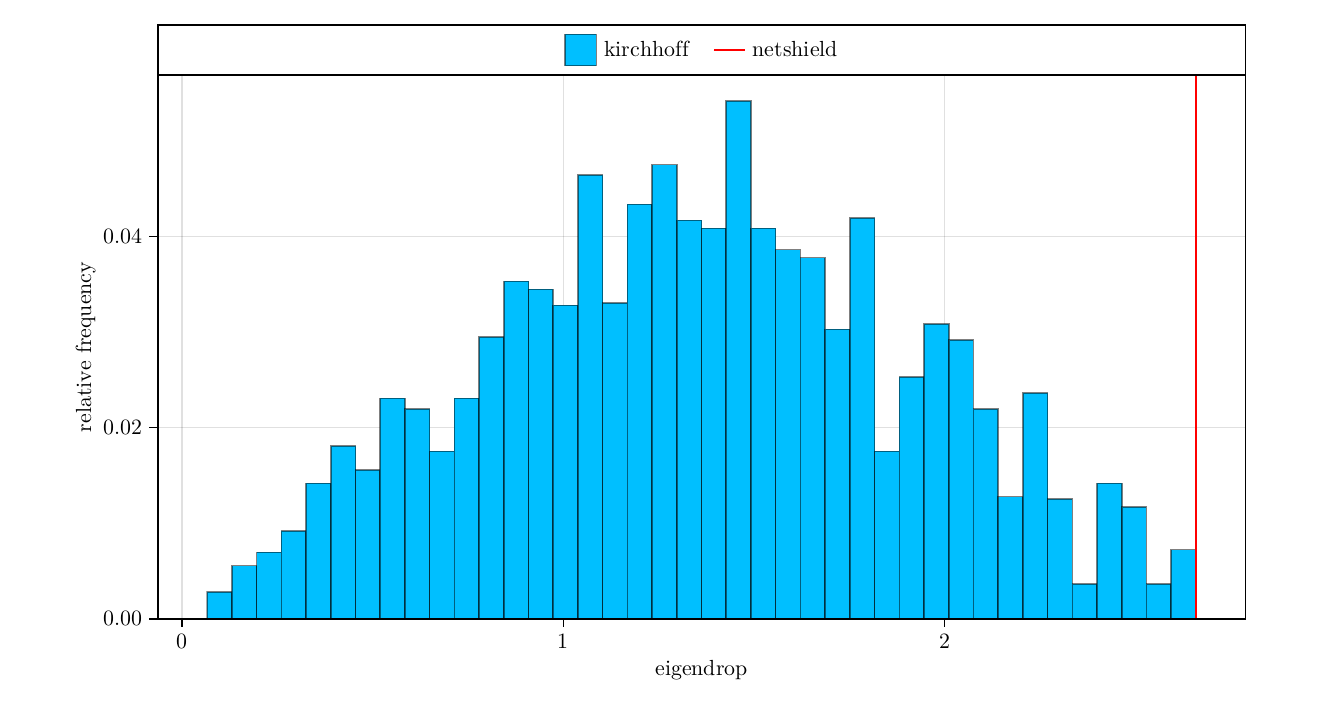}
          \caption{$k = 2$}
      \end{subcaptionblock}
      \\
      \begin{subcaptionblock}{0.45\textwidth}
          \includegraphics[width=\textwidth]{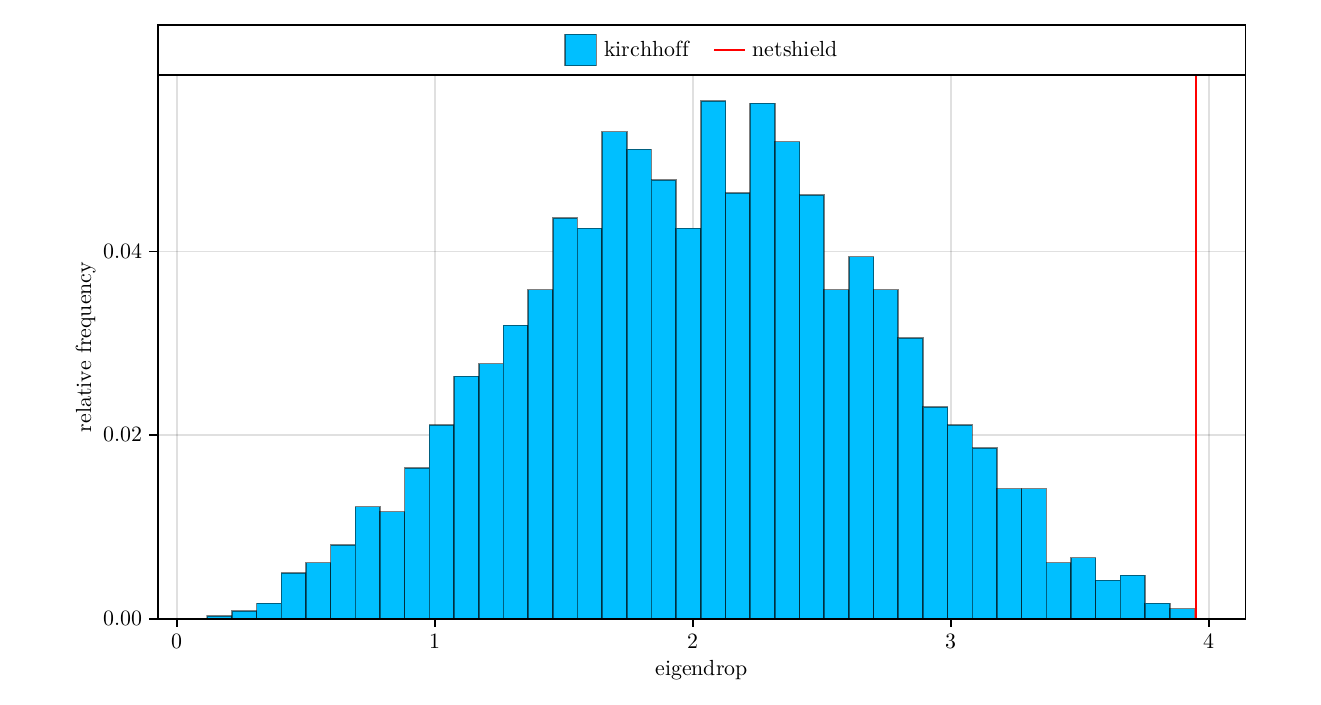}
          \caption{$k = 3$}
      \end{subcaptionblock}
      \begin{subcaptionblock}{0.45\textwidth}
          \includegraphics[width=\textwidth]{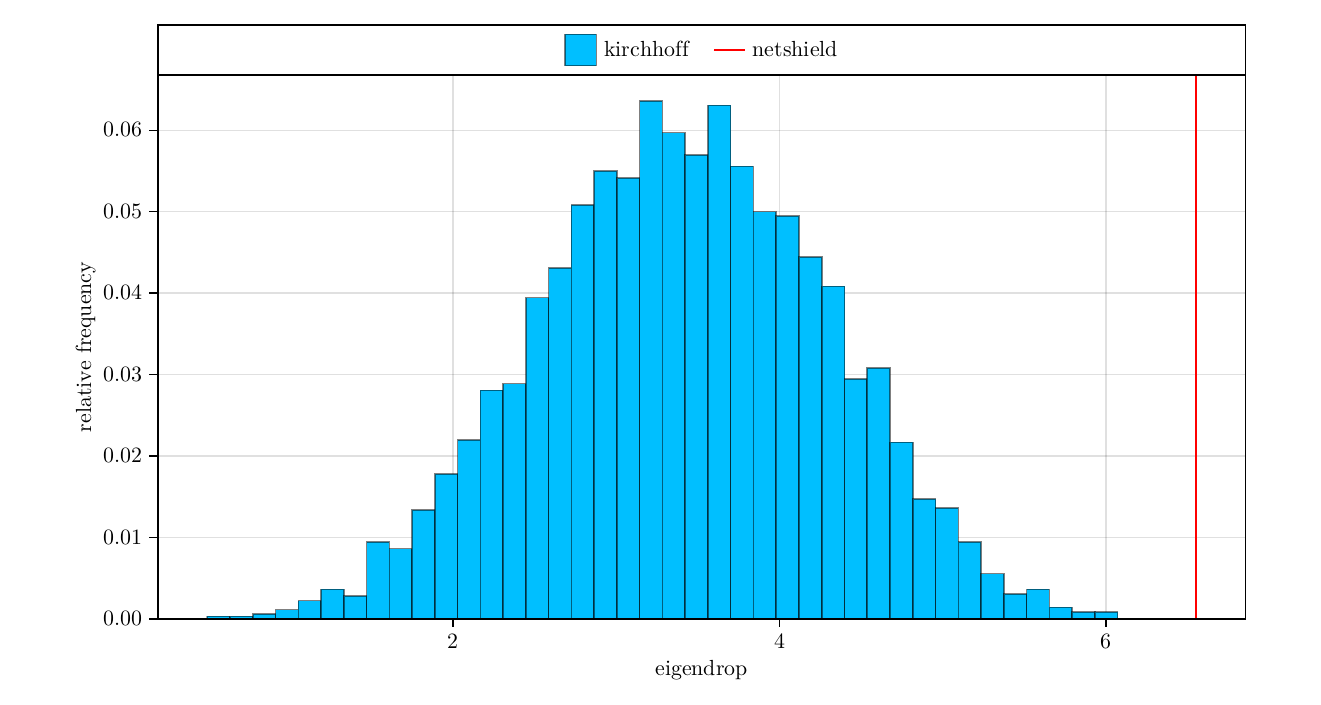}
          \caption{$k = 5$}
      \end{subcaptionblock}
      \\
      \begin{subcaptionblock}{0.45\textwidth}
          \includegraphics[width=\textwidth]{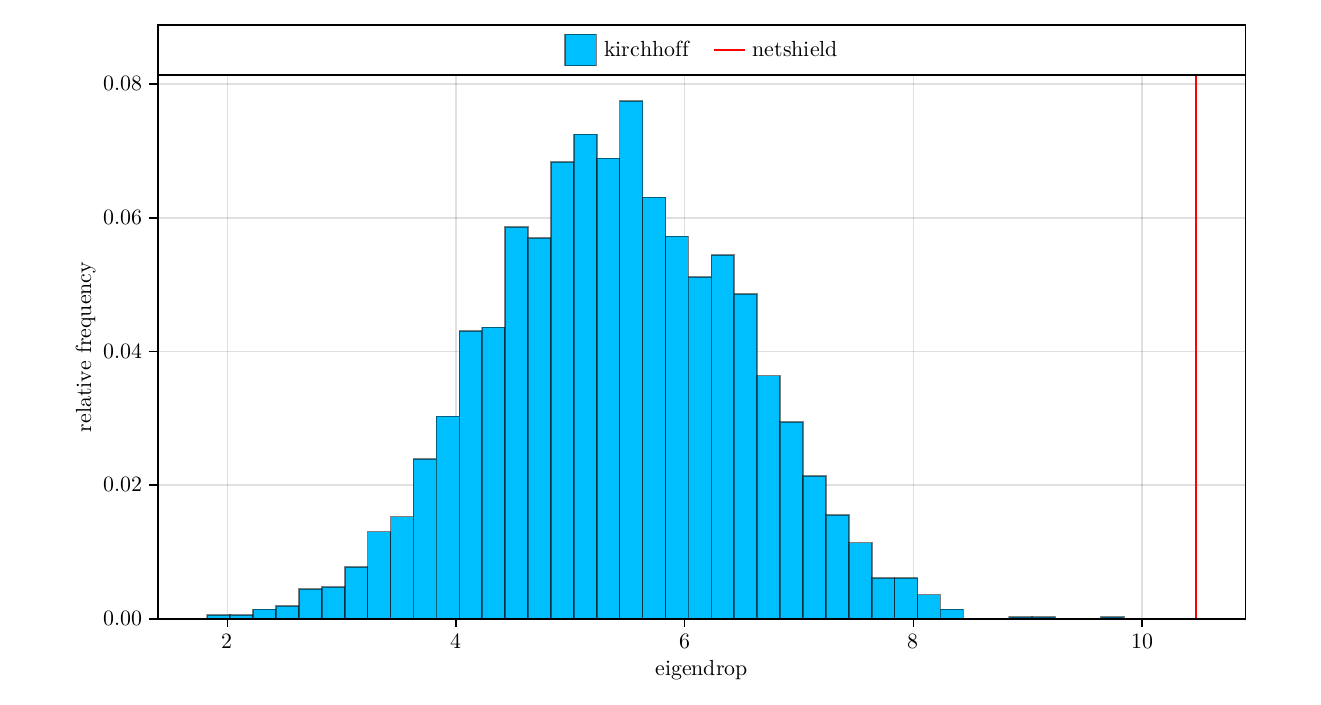}
          \caption{$k = 8$}
      \end{subcaptionblock}
      \begin{subcaptionblock}{0.45\textwidth}
          \includegraphics[width=\textwidth]{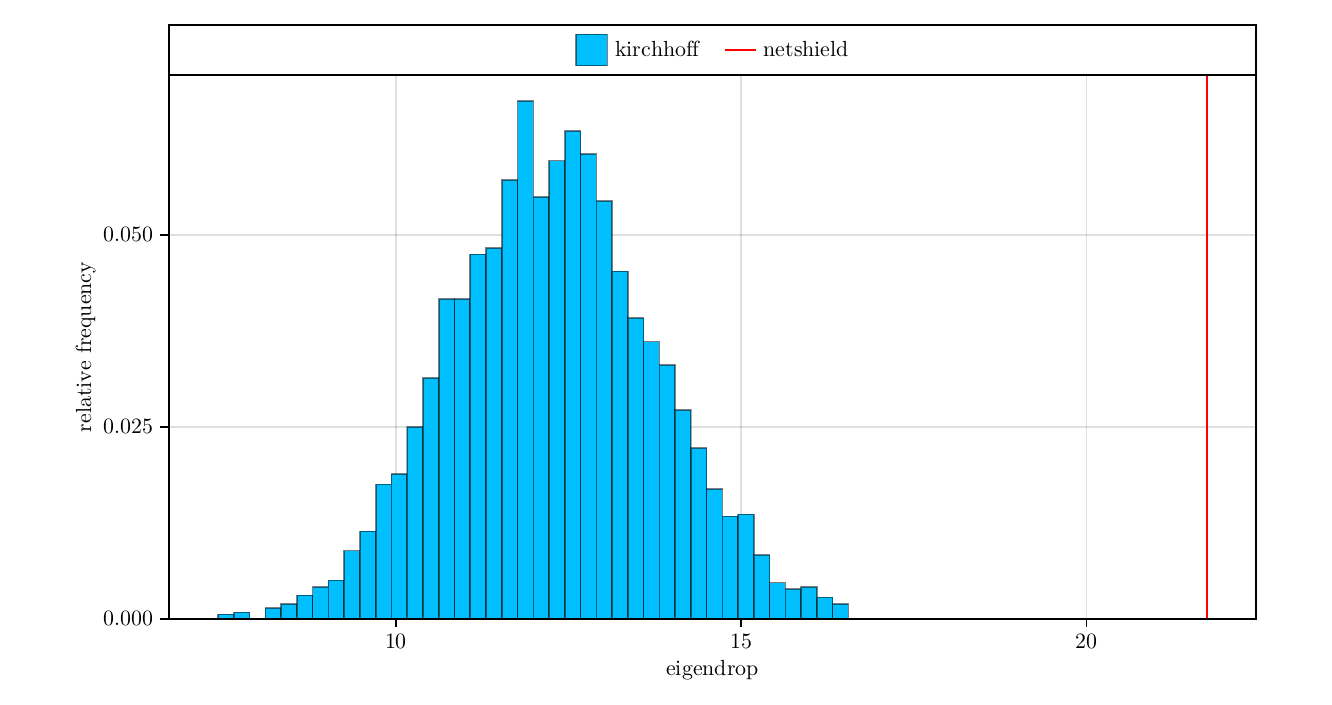}
          \caption{$k = 19$}
      \end{subcaptionblock}
\end{figure}

\begin{figure}
    \centering $ $
    \caption{Graph: ``rfid'' (weighted) - eigendrop distribution}
      \begin{subcaptionblock}{0.45\textwidth}
          \includegraphics[width=\textwidth]{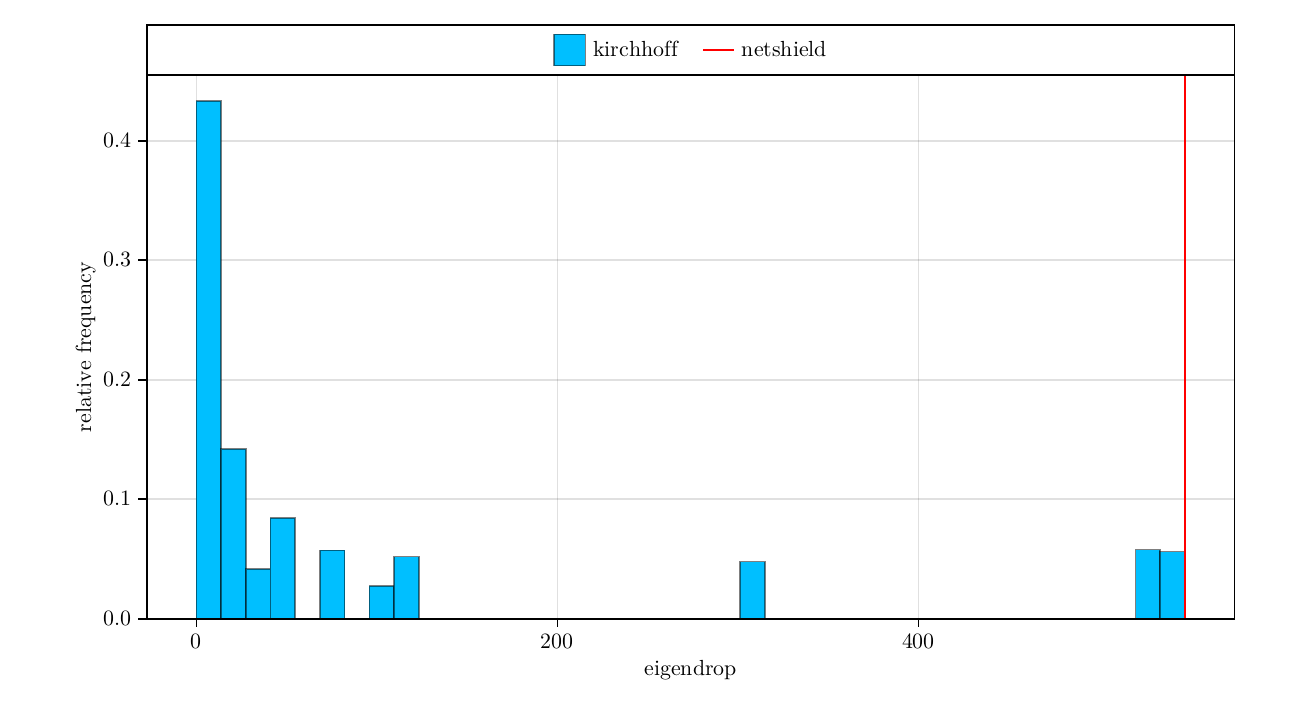}
          \caption{$k = 1$}
      \end{subcaptionblock}
      \begin{subcaptionblock}{0.45\textwidth}
          \includegraphics[width=\textwidth]{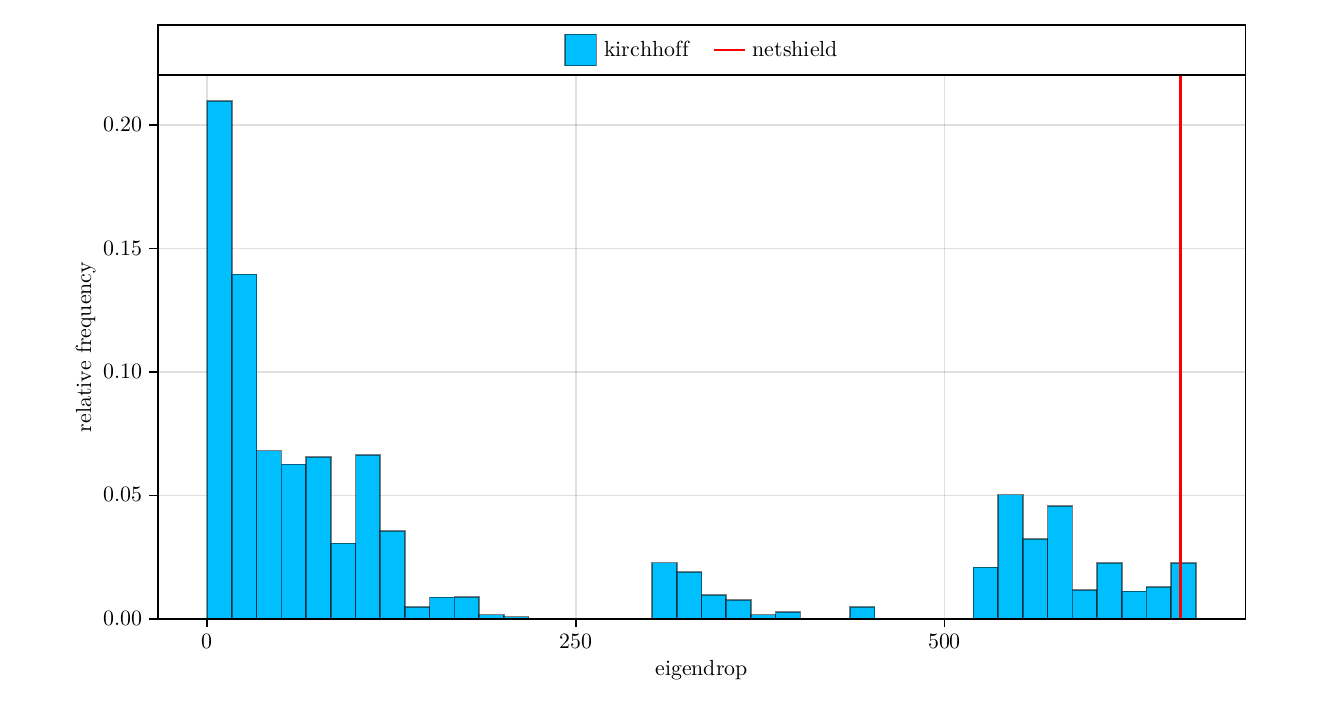}
          \caption{$k = 2$}
      \end{subcaptionblock}
      \\
      \begin{subcaptionblock}{0.45\textwidth}
          \includegraphics[width=\textwidth]{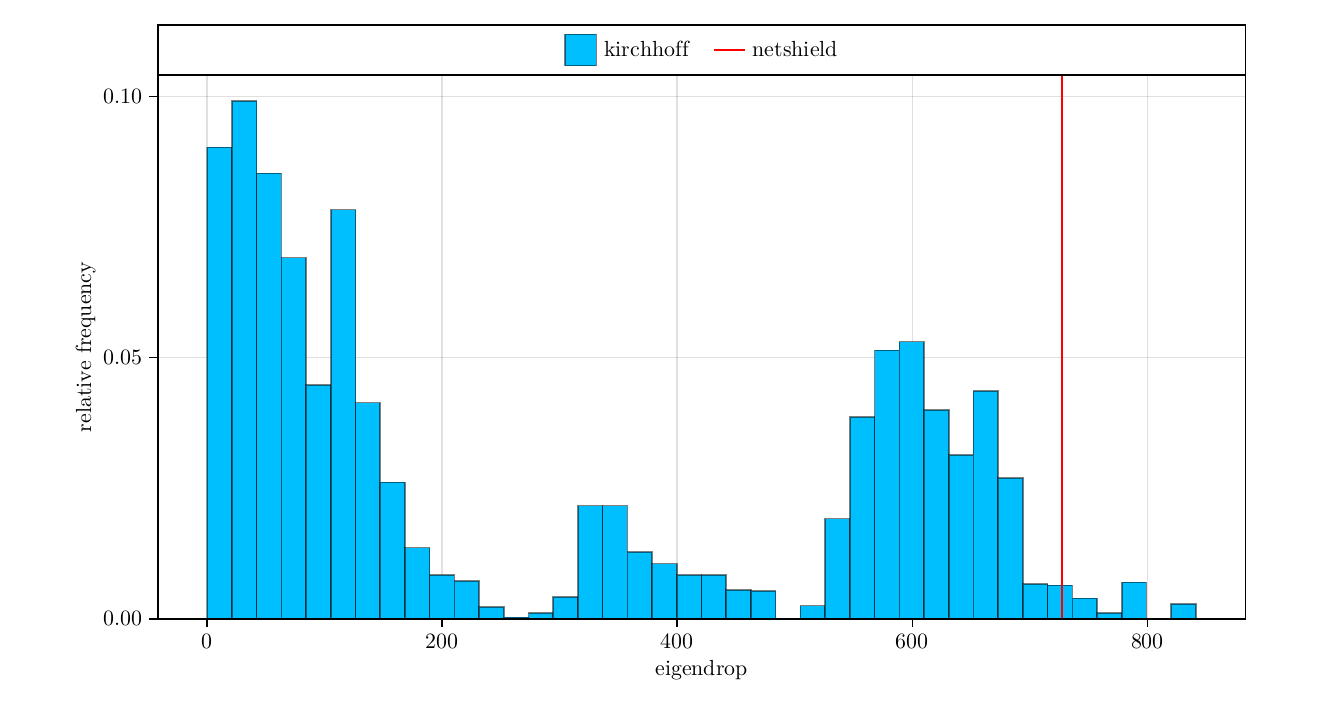}
          \caption{$k = 3$}
      \end{subcaptionblock}
      \begin{subcaptionblock}{0.45\textwidth}
          \includegraphics[width=\textwidth]{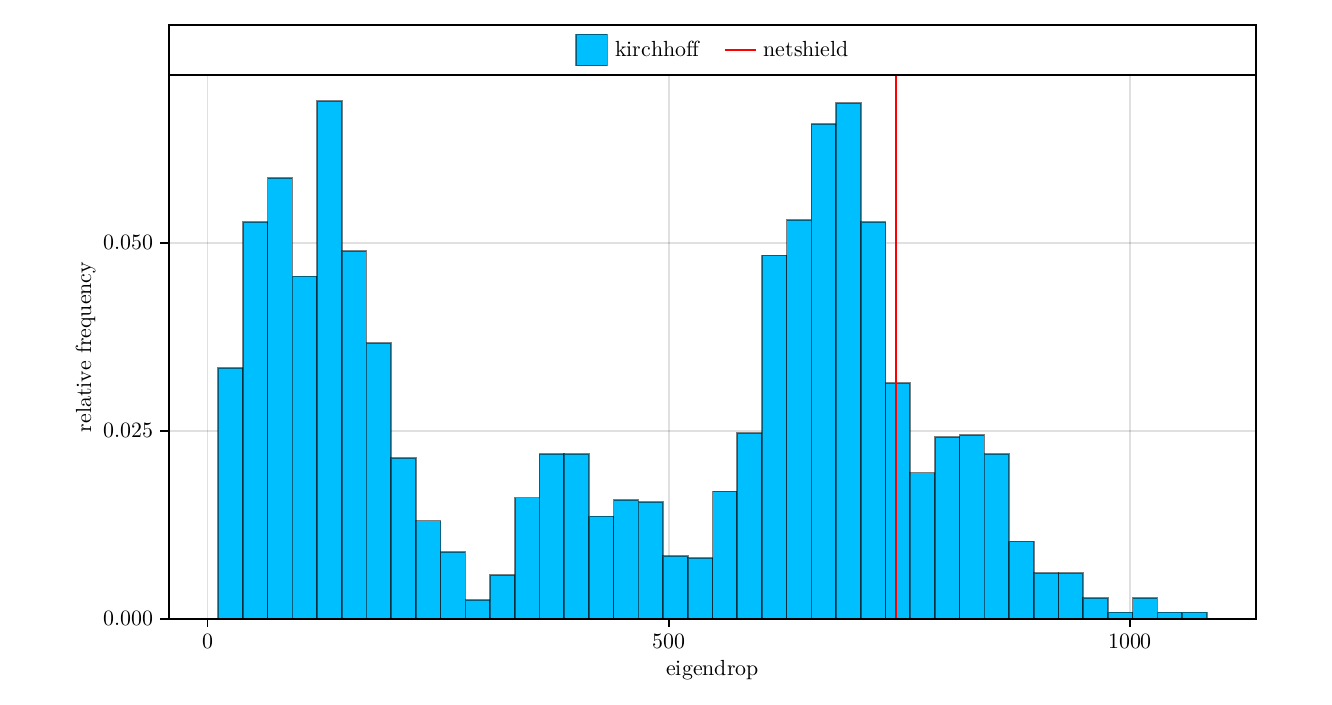}
          \caption{$k = 5$}
      \end{subcaptionblock}
      \\
      \begin{subcaptionblock}{0.45\textwidth}
          \includegraphics[width=\textwidth]{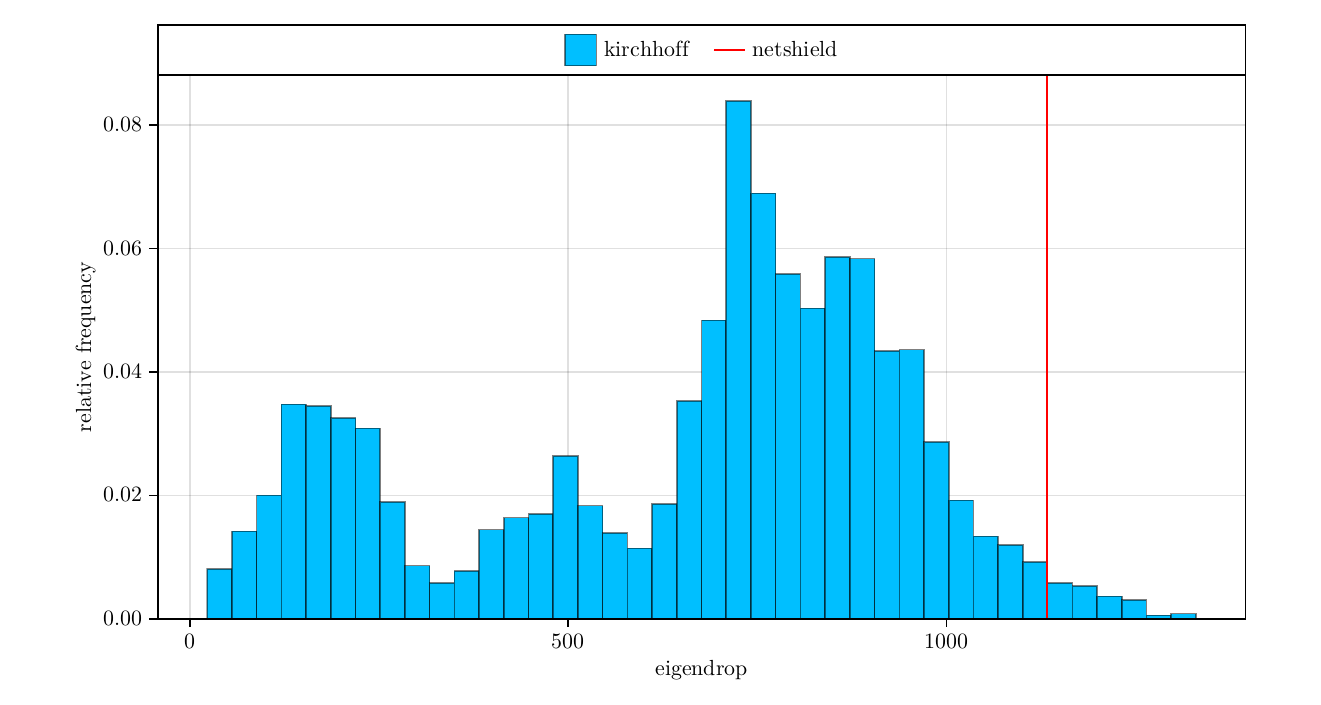}
          \caption{$k = 8$}
      \end{subcaptionblock}
      \begin{subcaptionblock}{0.45\textwidth}
          \includegraphics[width=\textwidth]{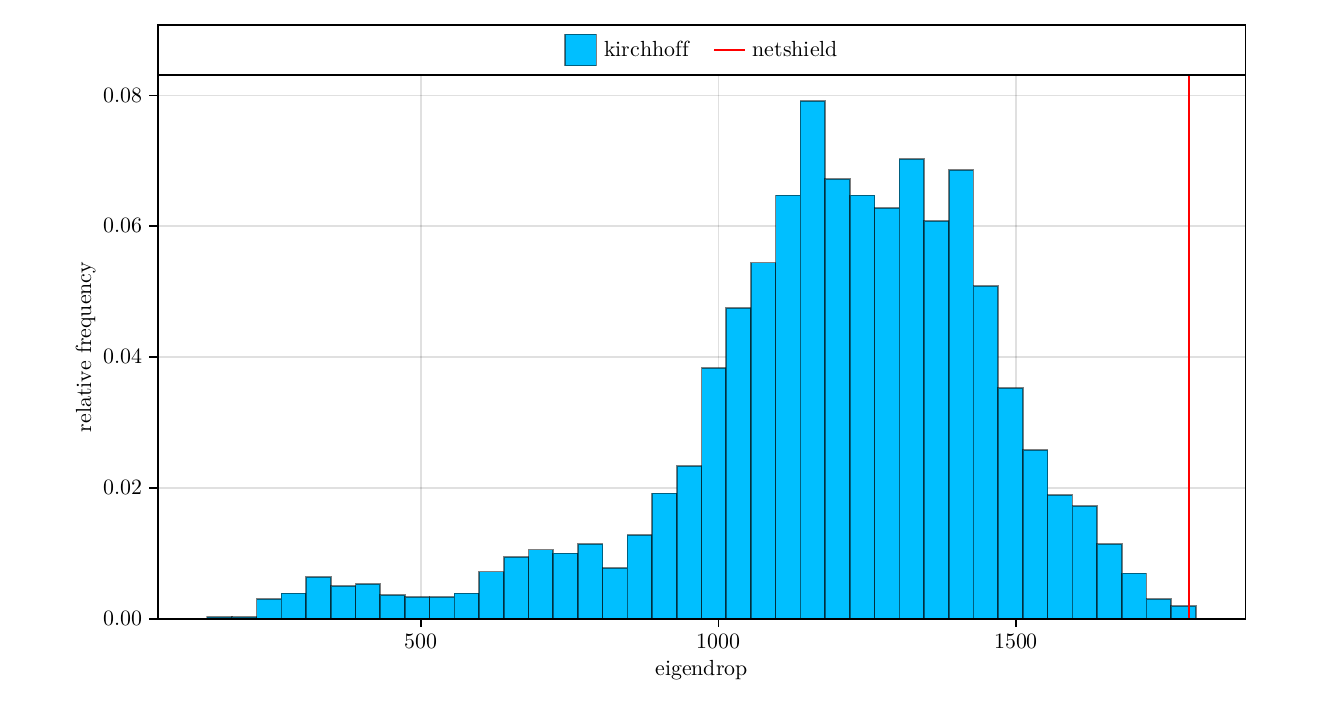}
          \caption{$k = 19$}
      \end{subcaptionblock}
\end{figure}

\begin{figure}
    \centering $ $
    \caption{Graph: ``UKfaculty'' (non-weighted) - eigendrop distribution}
      \begin{subcaptionblock}{0.45\textwidth}
          \includegraphics[width=\textwidth]{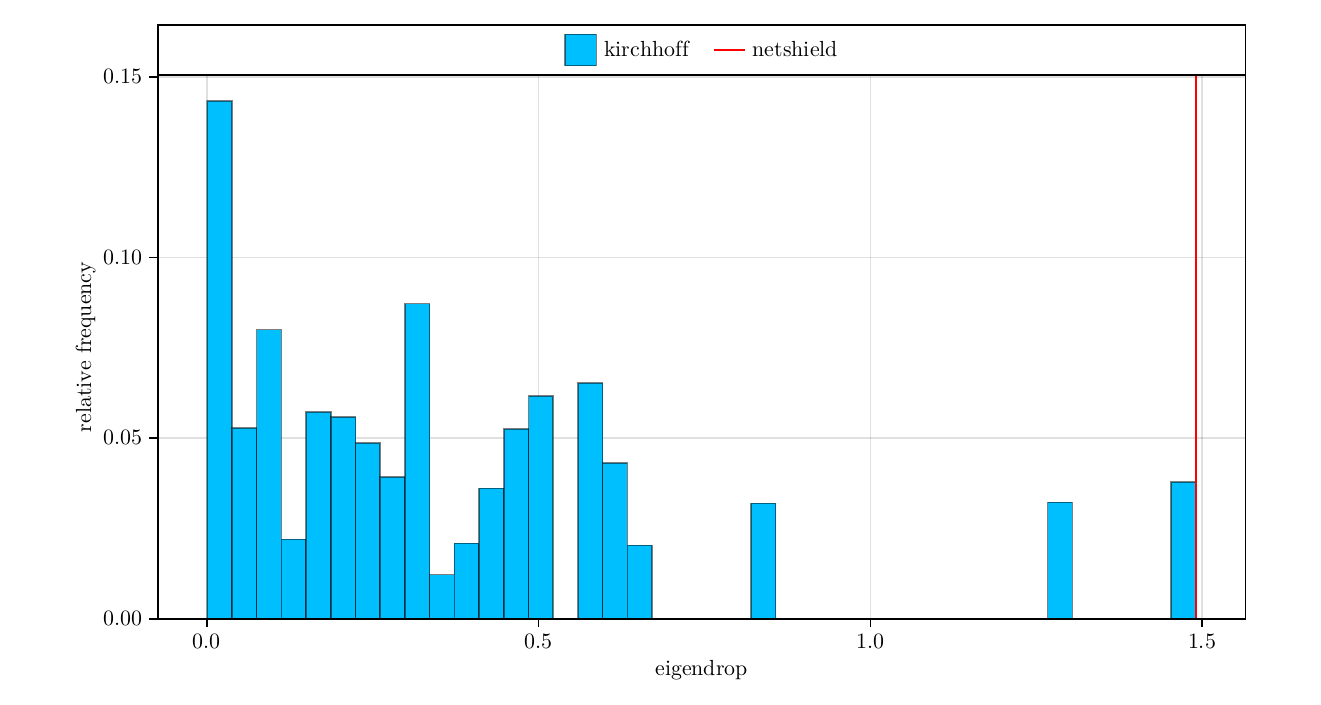}
          \caption{$k = 1$}
      \end{subcaptionblock}
      \begin{subcaptionblock}{0.45\textwidth}
          \includegraphics[width=\textwidth]{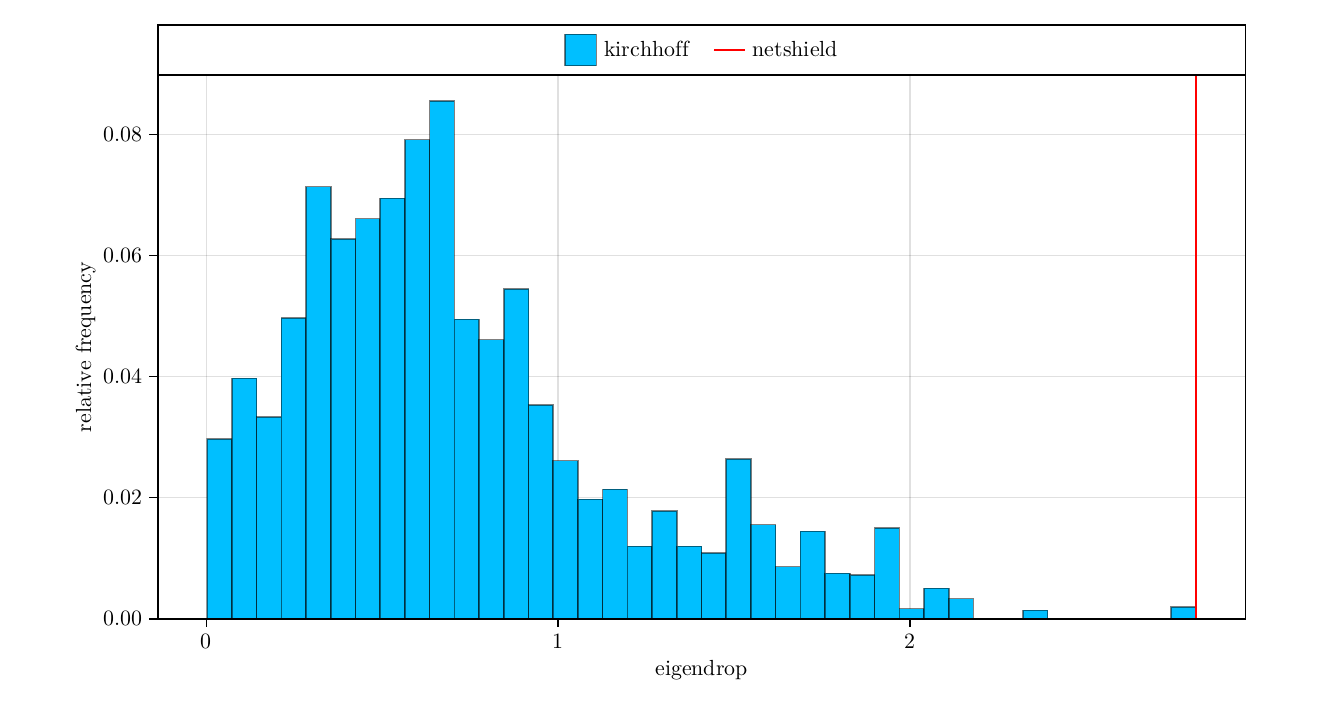}
          \caption{$k = 2$}
      \end{subcaptionblock}
      \\
      \begin{subcaptionblock}{0.45\textwidth}
          \includegraphics[width=\textwidth]{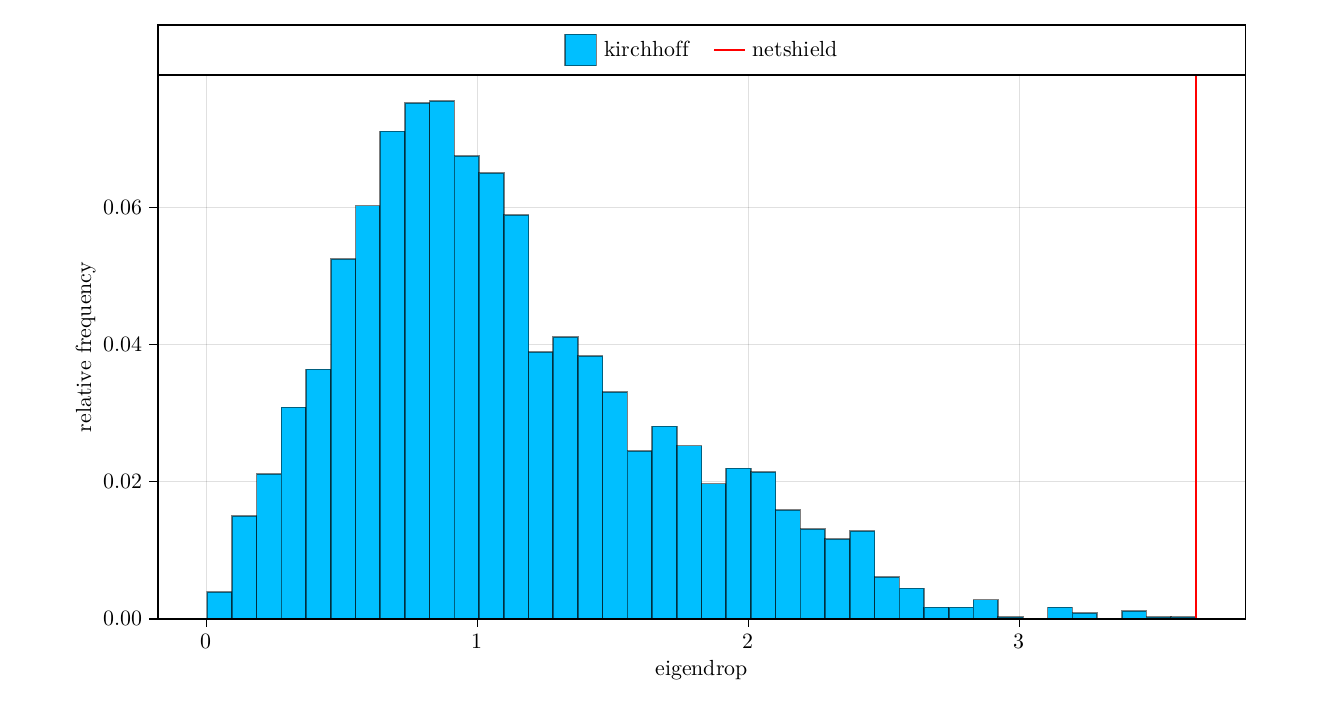}
          \caption{$k = 3$}
      \end{subcaptionblock}
      \begin{subcaptionblock}{0.45\textwidth}
          \includegraphics[width=\textwidth]{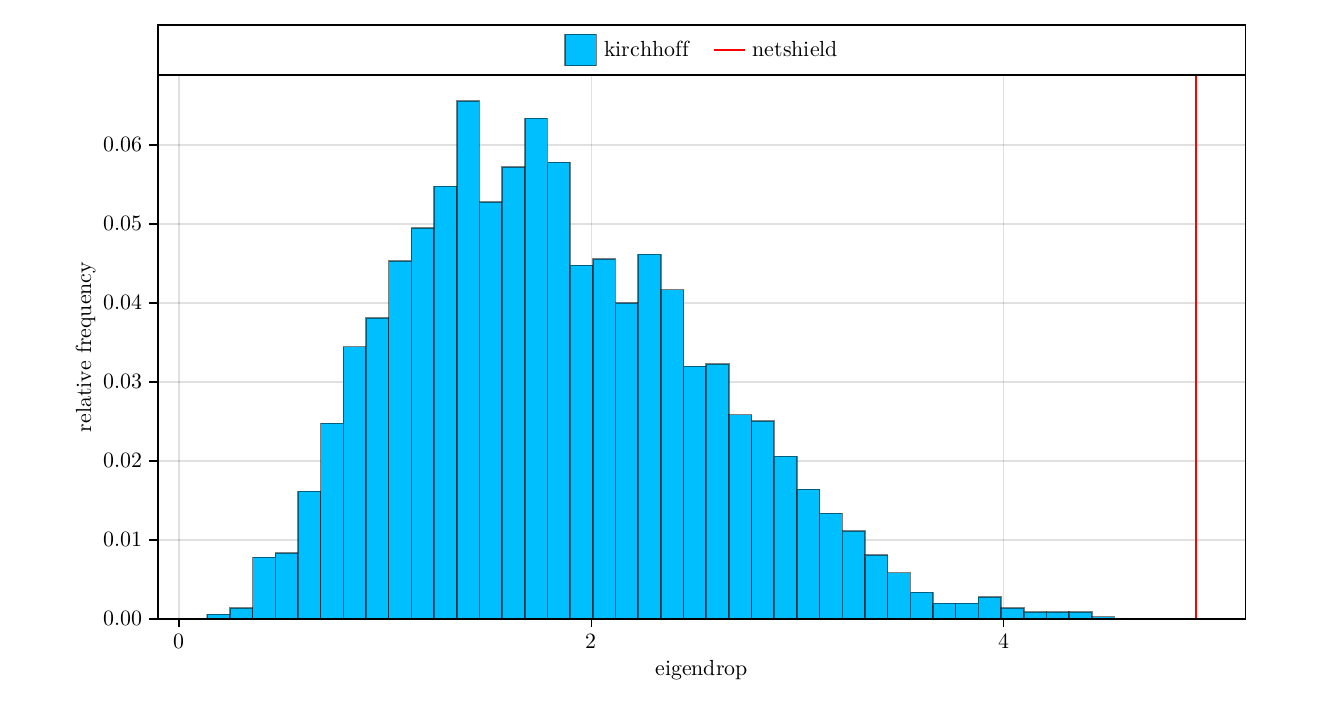}
          \caption{$k = 5$}
      \end{subcaptionblock}
      \\
      \begin{subcaptionblock}{0.45\textwidth}
          \includegraphics[width=\textwidth]{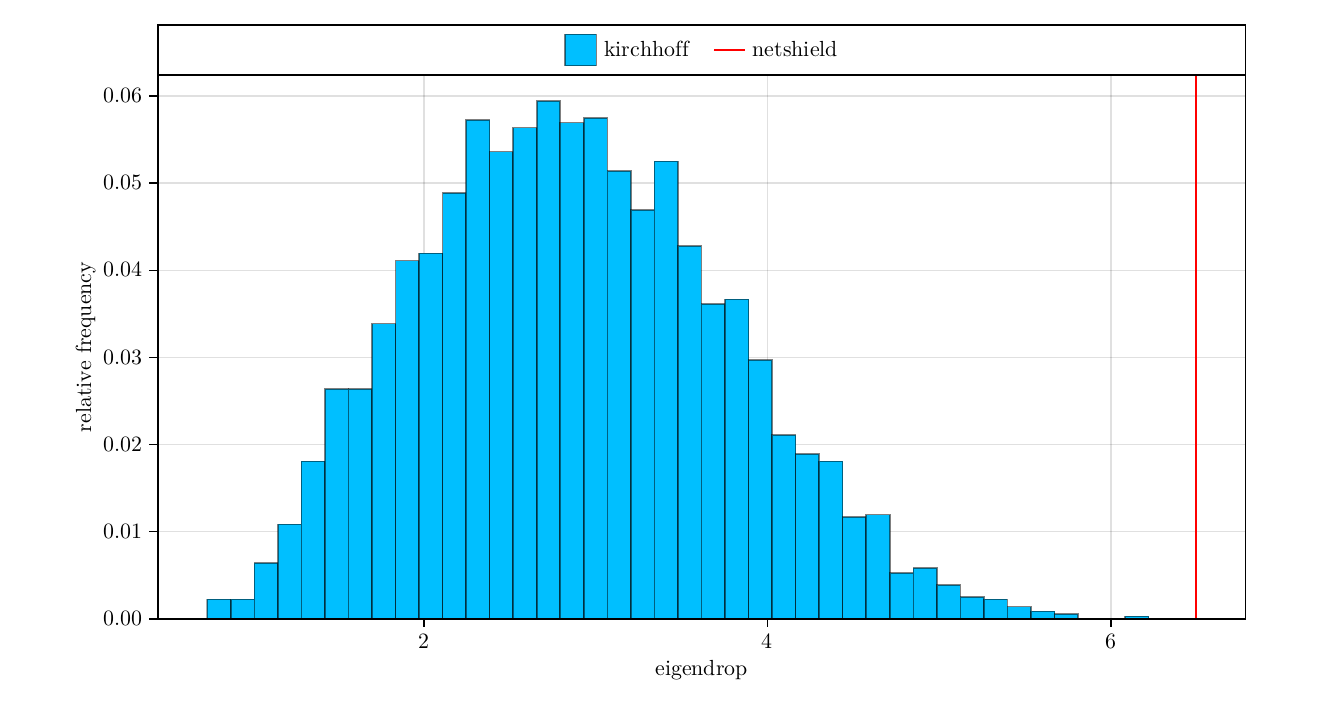}
          \caption{$k = 8$}
      \end{subcaptionblock}
      \begin{subcaptionblock}{0.45\textwidth}
          \includegraphics[width=\textwidth]{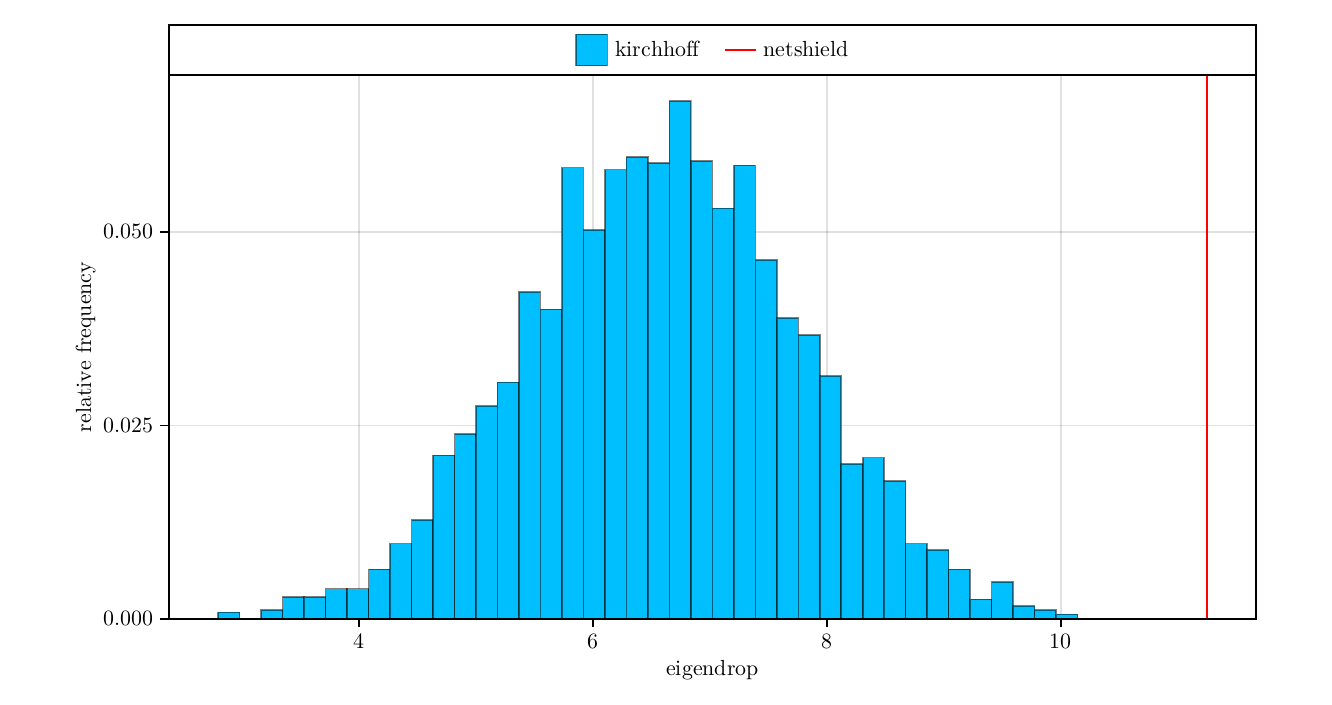}
          \caption{$k = 20$}
      \end{subcaptionblock}
\end{figure}

\begin{figure}
    \centering $ $
    \caption{Graph: ``UKfaculty'' (weighted) - eigendrop distribution}
      \begin{subcaptionblock}{0.45\textwidth}
          \includegraphics[width=\textwidth]{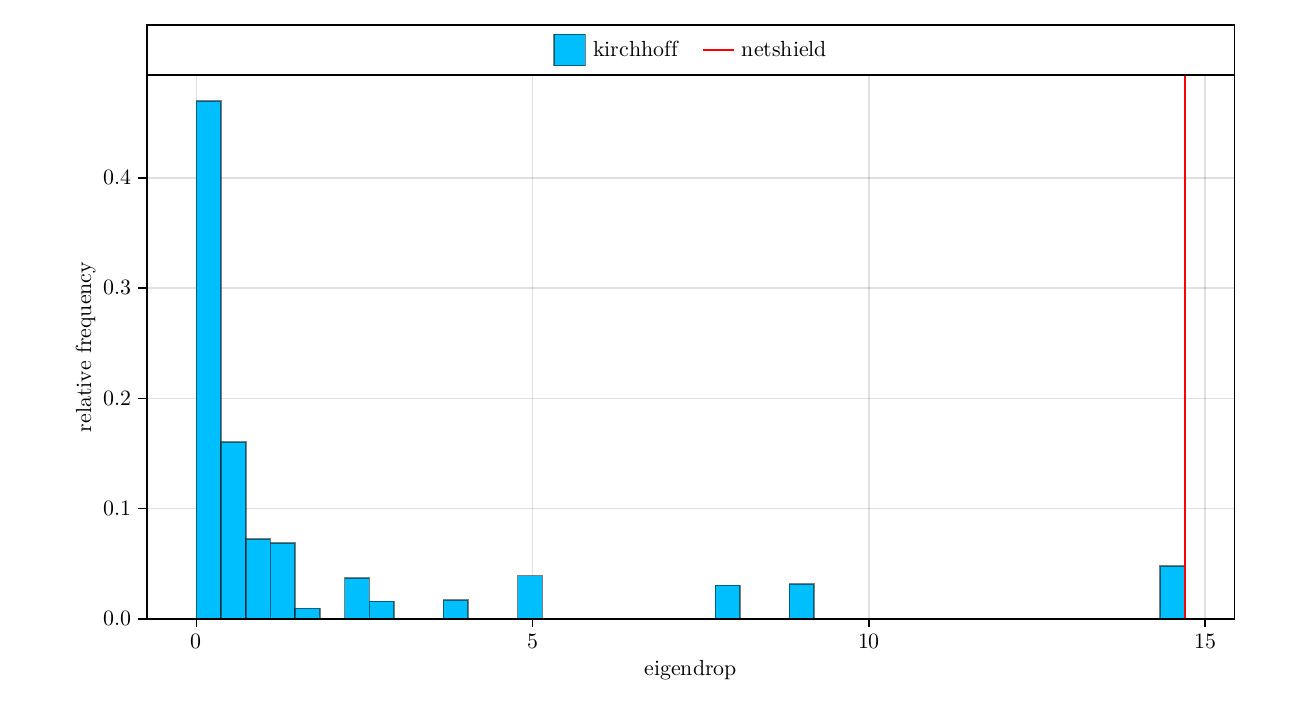}
          \caption{$k = 1$}
      \end{subcaptionblock}
      \begin{subcaptionblock}{0.45\textwidth}
          \includegraphics[width=\textwidth]{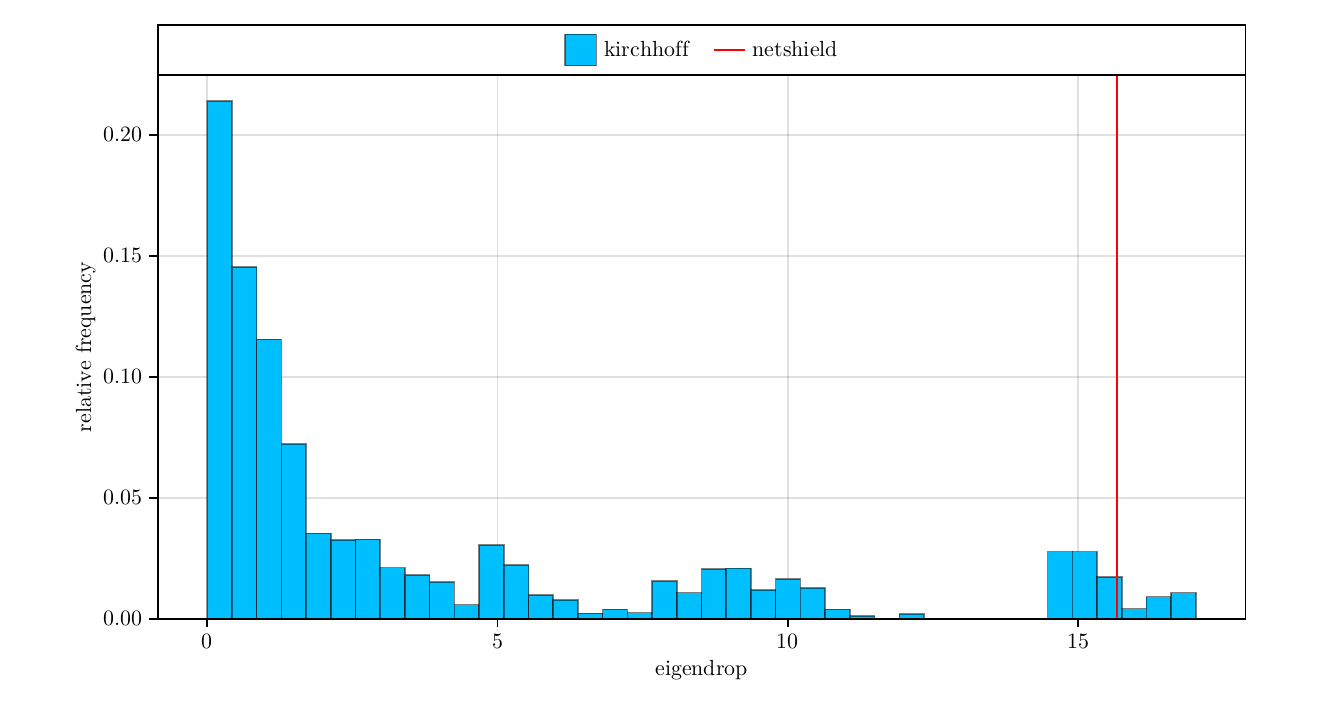}
          \caption{$k = 2$}
      \end{subcaptionblock}
      \\
      \begin{subcaptionblock}{0.45\textwidth}
          \includegraphics[width=\textwidth]{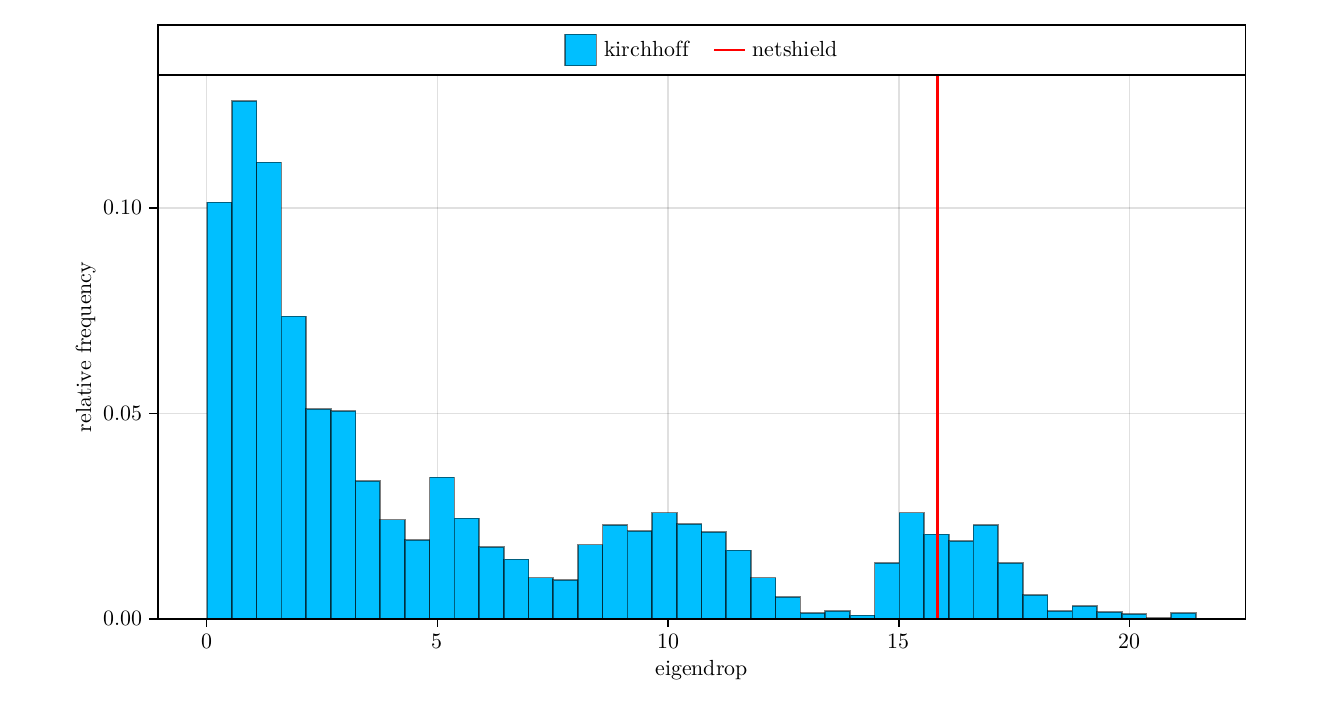}
          \caption{$k = 3$}
      \end{subcaptionblock}
      \begin{subcaptionblock}{0.45\textwidth}
          \includegraphics[width=\textwidth]{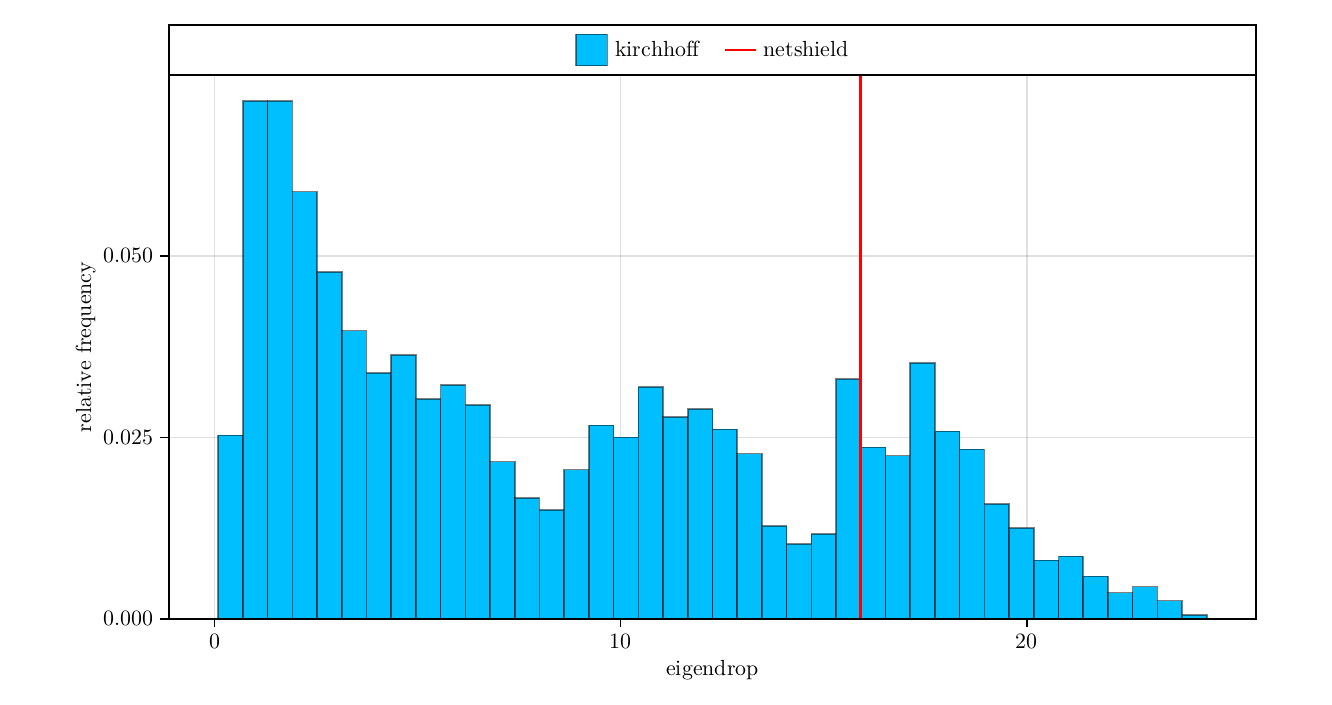}
          \caption{$k = 5$}
      \end{subcaptionblock}
      \\
      \begin{subcaptionblock}{0.45\textwidth}
          \includegraphics[width=\textwidth]{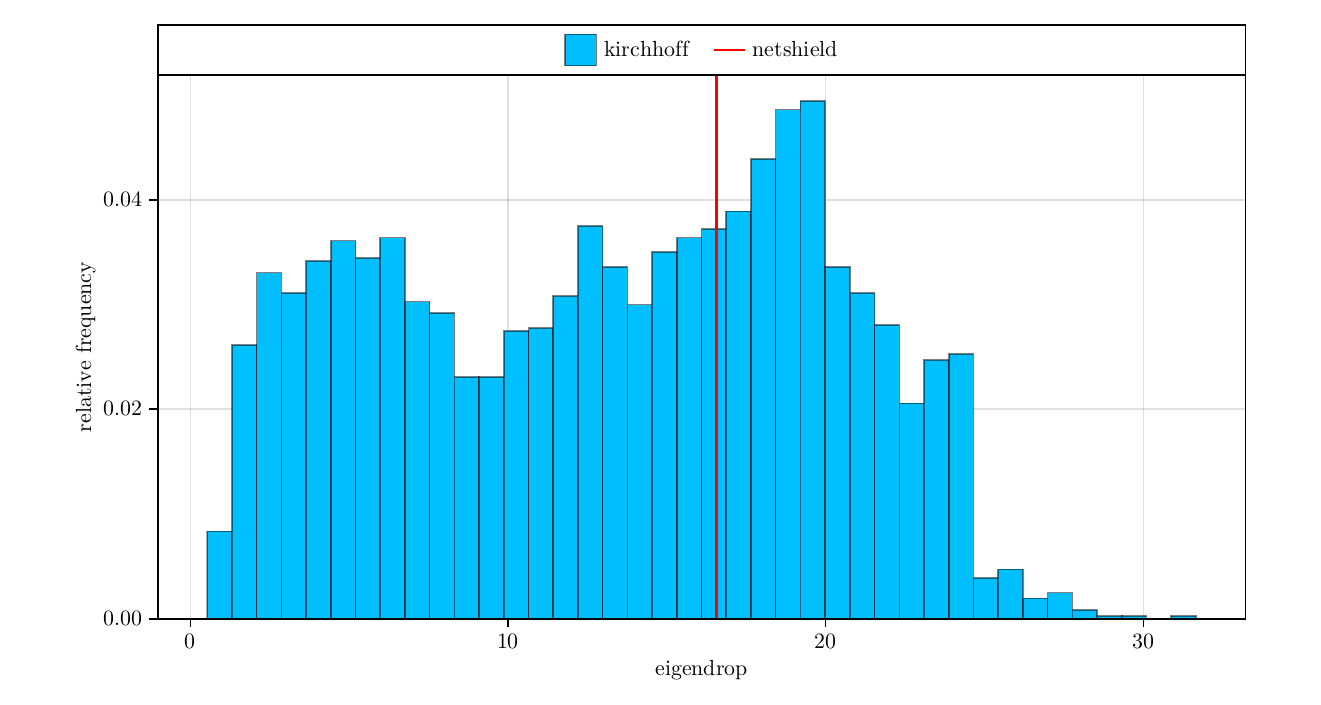}
          \caption{$k = 8$}
      \end{subcaptionblock}
      \begin{subcaptionblock}{0.45\textwidth}
          \includegraphics[width=\textwidth]{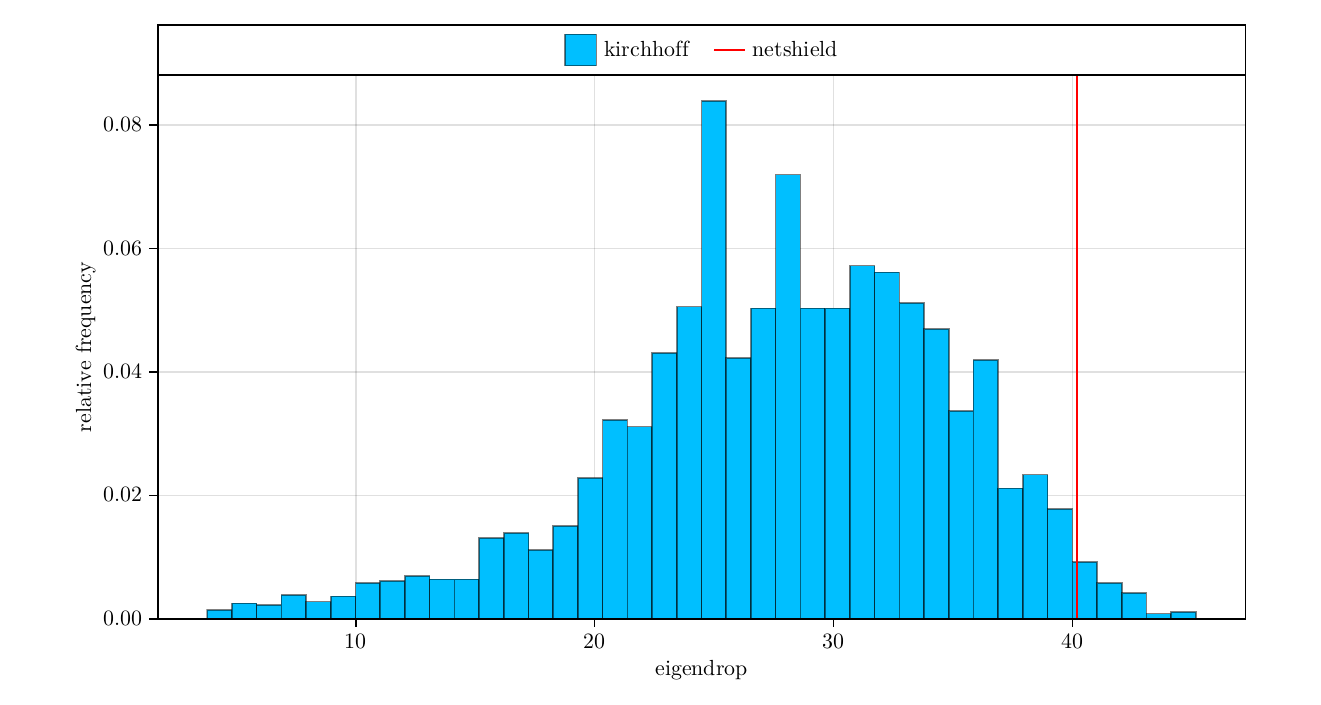}
          \caption{$k = 20$}
      \end{subcaptionblock}
\end{figure}

\begin{figure}
    \centering $ $
    \caption{Graph: ``enron'' (non-weighted) - eigendrop distribution}
      \begin{subcaptionblock}{0.45\textwidth}
          \includegraphics[width=\textwidth]{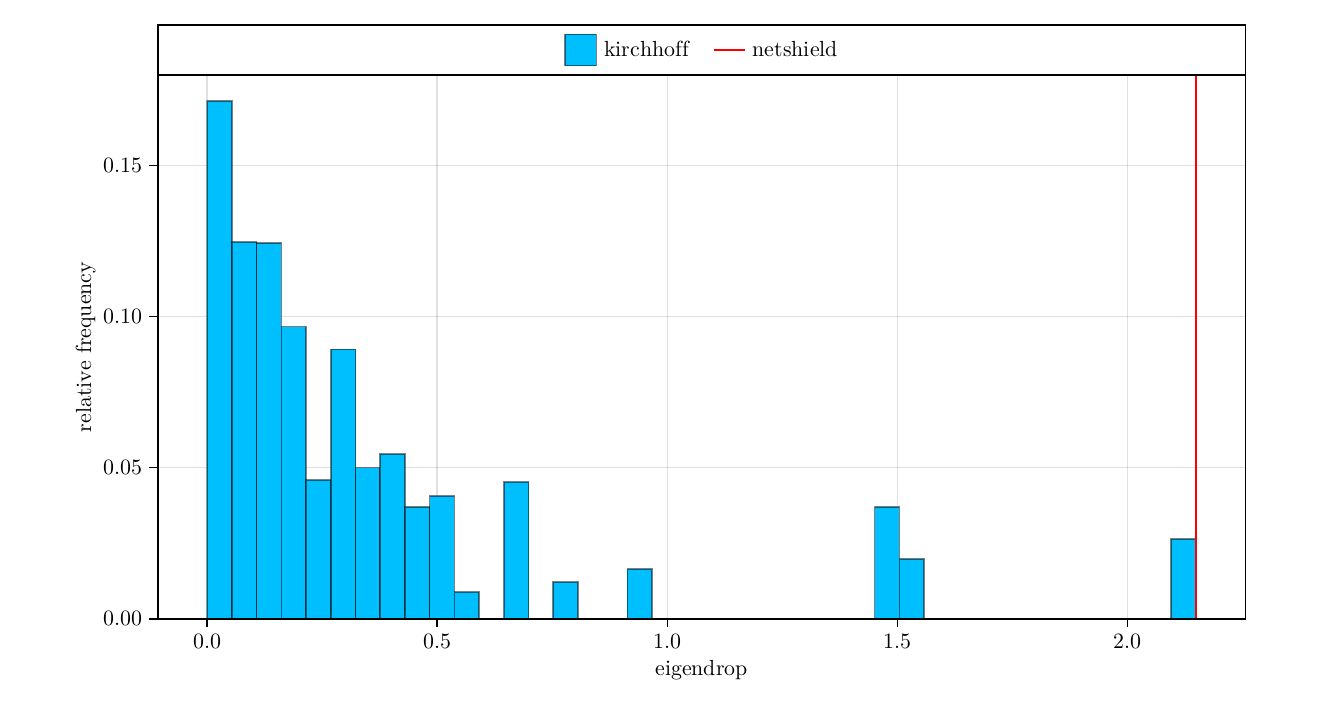}
          \caption{$k = 1$}
      \end{subcaptionblock}
      \begin{subcaptionblock}{0.45\textwidth}
          \includegraphics[width=\textwidth]{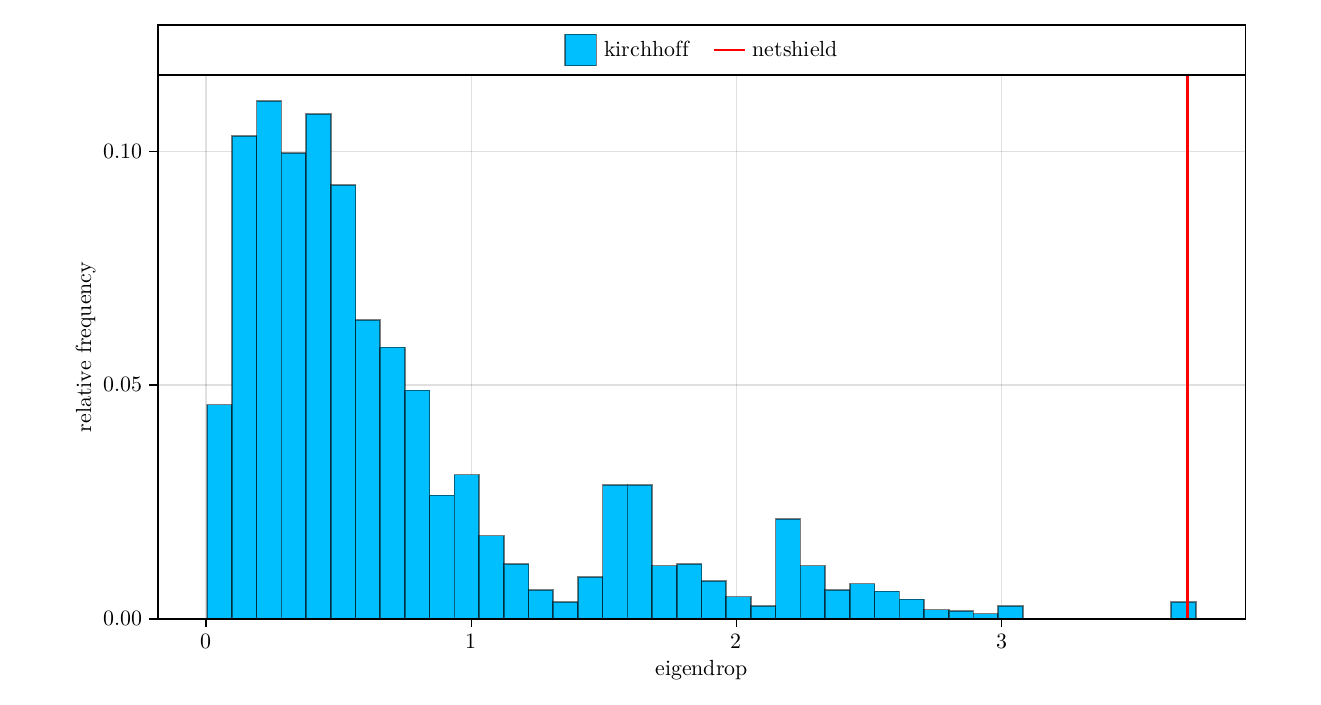}
          \caption{$k = 2$}
      \end{subcaptionblock}
      \\
      \begin{subcaptionblock}{0.45\textwidth}
          \includegraphics[width=\textwidth]{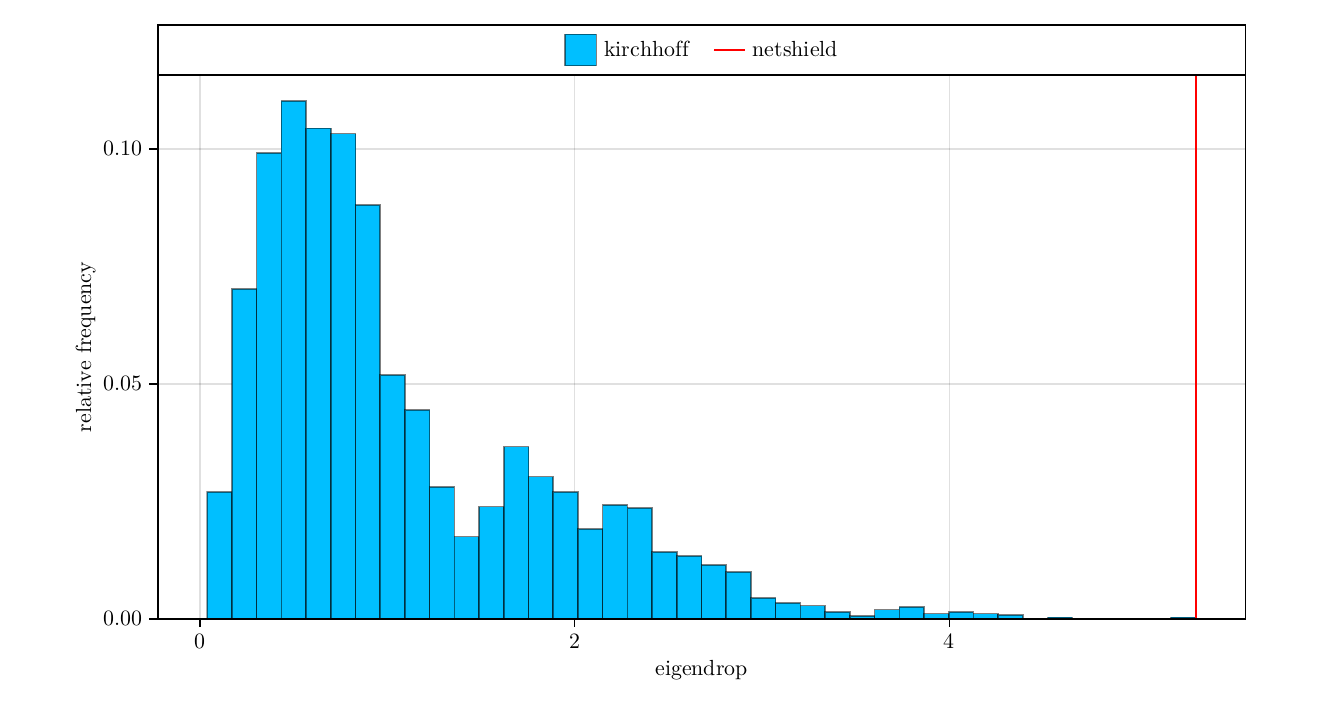}
          \caption{$k = 3$}
      \end{subcaptionblock}
      \begin{subcaptionblock}{0.45\textwidth}
          \includegraphics[width=\textwidth]{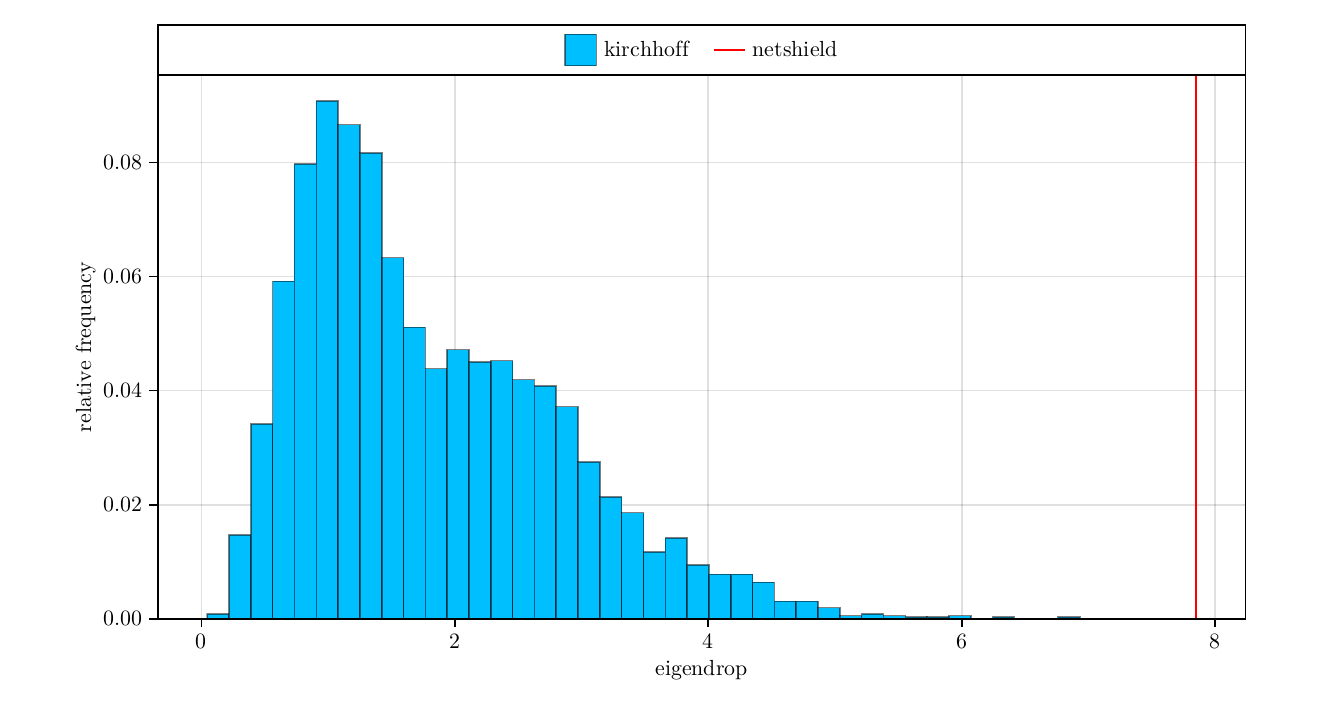}
          \caption{$k = 5$}
      \end{subcaptionblock}
      \\
      \begin{subcaptionblock}{0.45\textwidth}
          \includegraphics[width=\textwidth]{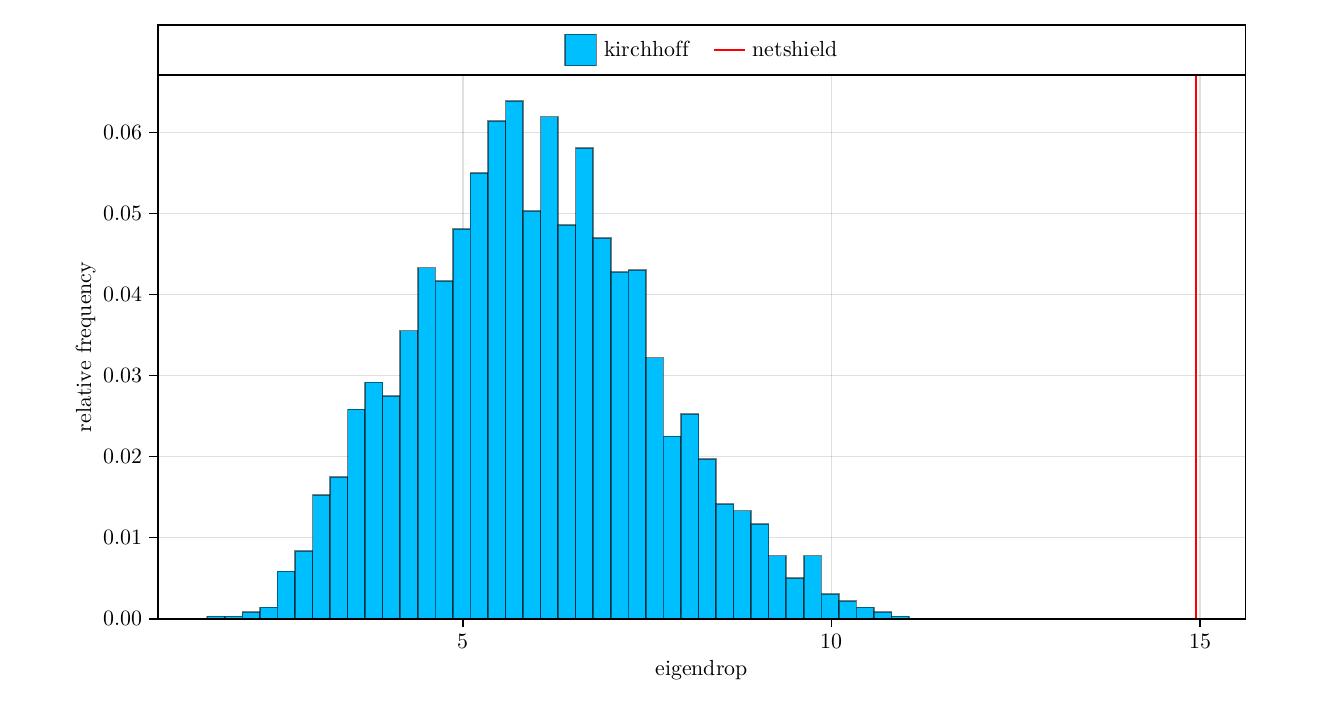}
          \caption{$k = 18$}
      \end{subcaptionblock}
      \begin{subcaptionblock}{0.45\textwidth}
          \includegraphics[width=\textwidth]{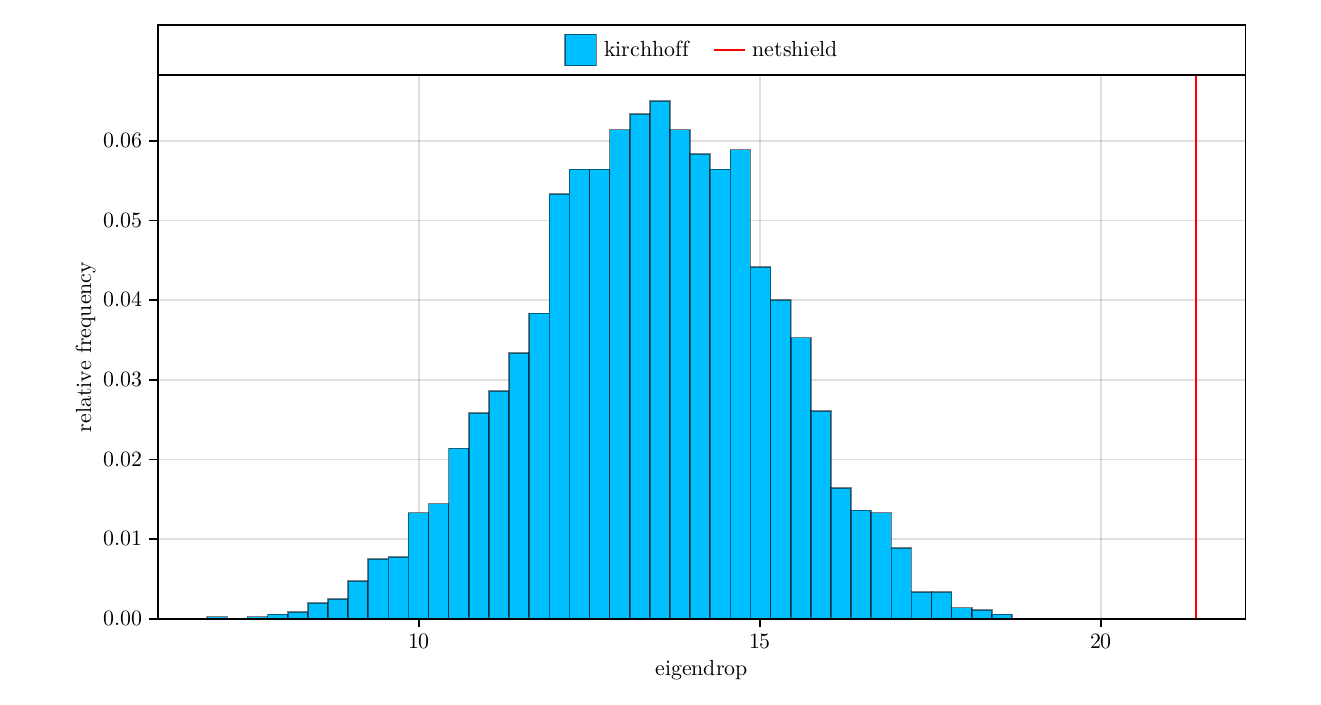}
          \caption{$k = 46$}
      \end{subcaptionblock}
\end{figure}

\end{document}